\documentclass[12pt]{article}
\usepackage[charter]{mathdesign}

\usepackage[T1]{fontenc}
\usepackage[latin9]{inputenc}
\usepackage{float}
\usepackage{booktabs}
\usepackage{mathrsfs}
\usepackage{amsmath}
\usepackage{amsthm}
\usepackage{graphicx}
\usepackage[letterpaper]{geometry}
\usepackage{setspace}
\usepackage[authoryear]{natbib}
\usepackage{microtype}
\usepackage[bookmarks=false,
 breaklinks=false,pdfborder={0 0 1},backref=section,colorlinks=false]
 {hyperref}

\usepackage{epigraph}
\makeatletter

\providecommand{\tabularnewline}{\\}

\theoremstyle{plain}
\newtheorem{assumption}{\protect\assumptionname}
\theoremstyle{plain}
\newtheorem{lem}{\protect\lemmaname}
\theoremstyle{plain}
\newtheorem{prop}{\protect\propositionname}
\theoremstyle{plain}
\newtheorem{cor}{\protect\corollaryname}
\theoremstyle{remark}
\newtheorem{rem}{\protect\remarkname}
\theoremstyle{plain}
\newtheorem{thm}{\protect\theoremname}

\usepackage{amsfonts}
\usepackage{eurosym}
\usepackage{ulem}
\usepackage{graphicx}
\usepackage{caption}
\usepackage{color}
\usepackage{sectsty}
\usepackage{comment}
\usepackage{footmisc}
\usepackage{caption}
\usepackage{pdflscape}
\usepackage{subfigure}
\usepackage{array}
\usepackage{upgreek}
\usepackage{bbm}

\usepackage{amsthm}
\usepackage{graphicx}
\usepackage{verbatim}
\usepackage{ulem}
\usepackage{textpos}
\usepackage{changepage}
\usepackage{url}

\newcommand{\Perp}{\perp \! \! \! \perp}

\def\bs{\boldsymbol}

\newcolumntype{L}[1]{>{\raggedright\let\newline\\arraybackslash\hspace{0pt}}m{#1}}
\newcolumntype{C}[1]{>{\centering\let\newline\\arraybackslash\hspace{0pt}}m{#1}}
\newcolumntype{R}[1]{>{\raggedleft\let\newline\\arraybackslash\hspace{0pt}}m{#1}}

\makeatother

\providecommand{\assumptionname}{Assumption}
\providecommand{\corollaryname}{Corollary}
\providecommand{\lemmaname}{Lemma}
\providecommand{\propositionname}{Proposition}
\providecommand{\remarkname}{Remark}
\providecommand{\theoremname}{Theorem}

\begin{document}
\title{{\Large Identification and Estimation of Production Function and Consumer
Demand Function under Monopolistic Competition from Revenue Data\thanks{{Acknowledgement: We are grateful for comments made at seminars
and conferences regarding an earlier version of this work. We thank
Zheng Han, Kohei Hashinokuchi, Yoko Sakamoto, and Makoto Tanaka for
excellent research assistance. Sugita acknowledges financial support
from JSPS KAKENHI (grant numbers: 17H00986, 19H01477, 19H00594, 20H01498,
22H00060). Hiroyuki Kasahara acknowledges financial support from the
SSHRC Insight Grant. }} }}
\author{Chun Pang Chow\thanks{Department of Economics, University of British Columbia, Canada. (Email:
alexccp@student.ubc.ca)}\and Hiroyuki Kasahara\thanks{Department of Economics, University of British Columbia, Canada. (Email:
hkasahar@mail.ubc.ca)} \and Yoichi Sugita\thanks{Faculty of Business and Commerce, Keio University, Japan. (E-mail:
ysugita@fbc.keio.ac.jp)}}
\date{\today }

\maketitle
 

\begin{abstract}
We establish nonparametric identification of production functions, total factor productivity (TFP), price markups, and firms' output prices and quantities, as well as consumer demand, using firm-level revenue data, without observing output quantity, in a monopolistically competitive environment with a fully nonparametric demand system. This result overturns the widely held view---formalized by  \citet*{bond2021some}---that output elasticities and markups are not nonparametrically identifiable from revenue data without quantity information. Under the additional restriction that demand satisfies the homothetic single-aggregator (HSA) structure of \citet{matsuyama2017beyond}, we further nonparametrically identify the representative consumer's utility function from firm-level revenue data. This new identification result enables counterfactual welfare analysis without parametric assumptions on preferences. We propose a semiparametric estimator that is feasible for standard firm-level datasets under a Cobb--Douglas production specification. Monte Carlo simulations show that the estimator performs well, while treating revenue as output induces substantial bias. Applying the estimator to Chilean manufacturing data, we reject the CES specification in favor of HSA, and find that market power reduces welfare by approximately 3\%--6\% of industry revenue in the three largest manufacturing industries in 1996.
\end{abstract}

\section{Introduction}

\epigraph{\itshape The derivatives of the function $\phi$ [with respect to inputs] will be random for \dots\ differences in the prices paid or received by various firms if we drop, as we realistically must, the assumption of perfect competition.}{--- Marschak and Andrews (1944, p.~145)}

The estimation of production functions and markups is a central tool
in empirical analyses of market outcomes, underpinning research on
firm-level productivity (\citealp{bartelsman2000understanding}; \citealp{syverson2011determines}),
aggregate productivity and misallocation (\citealp{olley1996dynamics};
\citealp{hsieh2009misallocation}), technological change (\citealp{van2003productivity};
\citealp{doraszelski2018measuring}), and the evolution of market
power (\citealp{hall1988relation}; \citealp{de2012markups}; \citealp{de2020rise}).\footnote{\citet{griliches_mairesse_1999}
and \citet*{ACKERBERG20074171} provide excellent surveys of production
function estimation methods.} A common assumption underlying many estimation methods is that firms\textquoteright{}
output quantities are observable. In practice, however, most firm-level
datasets contain only revenue, and researchers typically deflate revenue
using an industry-level price deflator.\footnote{A few studies employ
datasets with firm-level quantity information (e.g., \citealp*{foster2008reallocation};\citealp*{doraszelski2013, doraszelski2018measuring};\citealp*{de2016prices}; \citealp{lu2015trade}; \citealp{nishioka2019measuring}),
but such data are typically limited to specific countries, industries,
and time periods and remain inaccessible to most researchers.} 


Following \citet{ma44ecma}'s  pioneering critique,\footnote{Marschak and Andrews were acutely aware that economic data typically come in the form of monetary values (sales, value added) rather than physical counts (tons, bushels). In the introduction to \citet{ma44ecma}, they explicitly flag the danger of conflating these concepts. In Footnote~3, attached to the definition of net output ($x_0$), they write: ``[S]ome changes in notation will be needed later to distinguish between physical output; (gross) revenue; net revenue (`net output'): see \S\S4 and 9.''}  a large body of subsequent research has shown that replacing output quantity with revenue can severely bias estimates of production function parameters (e.g., 
\citealp{klette1996jae}; \citealp*{de2011product}), TFP (e.g., \citealp*{foster2008reallocation}; \citealp*{katayama2009firm}), and markups (\citealp*{bond2021some}). \citet*{bond2021some} formalize this difficulty, arguing that "in the usual setting in which the researcher observes only revenue, and does not have separate information on the price and quantity of output, the output elasticity for a flexible input is not identified non-parametrically from estimation of the revenue production function" (p.~2). Under a nonparametric demand function, the identification challenge has two dimensions: revenue becomes a nonparametric function of inputs, unobserved TFP, and an unobserved demand shock; and flexible inputs may be correlated with both TFP and the demand shock.\footnote{As noted by Klette and Griliches (1996), Marschak and Andrews (1944) were the first to recognize these two identification challenges and to criticize the practice of replacing output quantities with revenue in production function estimation.} Despite these concerns, researchers continue to use revenue in place of quantity due to the scarcity of firm-level quantity data.\footnote{Researchers also rely on revenue when products differ in quality, since physical output alone may not reflect true production, though such practices often lack theoretical foundations.}


In this paper, we establish that, under monopolistic competition,
nonparametric identification of the production function, total factor
productivity, consumer demand, and counterfactual welfare effects
is in fact possible using firm-level revenue data, without observing output quantity. We consider
the same demand-side setting as \citet{bond2021some}, in which an
individual firm faces a nonparametric demand function that depends
on its output, observable characteristics, and an unobserved demand
shock; for identification, we allow this shock to be transitory.\footnote{In
the appendix, we also consider identification under persistent demand
shocks (e.g., AR(1)) by utilizing lagged firm characteristics (such
as R\&D) as instruments or by explicitly modeling persistent unobserved
quality heterogeneity.}  Our identification proof is constructive: it yields closed-form mappings from observables to the objects of interest. The proof combines standard assumptions maintained in the proxy variable literature \citep[e.g.,][]{levinsohn2003estimating}, such as strict monotonicity of input demand, with the first-order condition approach of \citet*{doraszelski2013,doraszelski2018measuring} and \citet{gandhi2020identification}, while imposing additional nonparametric restrictions on firms' demand functions.


To address the two identification challenges described above, we develop
a three-step approach that combines the control function method of
\citet{olley1996dynamics}, \citet{levinsohn2003estimating}, and
\citet*{ackerberg2015identification} with the first-order condition
approach of \citet*{doraszelski2013, doraszelski2018measuring} and \citet*{gandhi2020identification} in a novel way. The
key insight is to treat the control function---the inverse of a material
demand function, which serves as a proxy for TFP---not merely as
an auxiliary device, but as an object of nonparametric identification
in its own right. In the first step, we identify the unobserved transitory
demand shock that nonlinearly affects revenue by combining the control
function as in \citet{ackerberg2015identification} with the instrumental
variable quantile regression of \citet{chernozhukov2005iv}. Intuitively,
this step separates demand-side variation from productivity variation
in revenue, exploiting the fact that lagged information shifts inputs and TFP
but not the transitory demand shock. In the second step, we identify
the control function for TFP by applying the nonparametric identification
of transformation models (e.g., \citealp{horowitz1996semiparametric})
examined by \citet*{ekeland2004identification} and \citet*{chiappori2015nonparametric}.
This step recovers TFP (up to normalization) from the dynamics of
inputs and the demand shock, without requiring output quantity data,
because the control function maps observable inputs and the identified
demand shock into unobserved productivity. In the third step, we identify
the production function, markups, and the demand function using the
first-order condition for materials and the control function identified
in the second step. 


Our method identifies several key objects from revenue data. In our
main setting, markups and output elasticities are identified up to
scale, while the output price, output quantity, TFP, gross production
function, and consumer demand function are identified up to scale
and location---normalizations that are standard in nonparametric settings. Proposition~\ref{prop:equivalence} in Appendix~\ref{app:equivalence_class} formally characterizes this equivalence class, showing that any existing identification result for output elasticities or markup levels from revenue data necessarily imposes location and/or scale normalizations, whether explicitly or implicitly.
Identification is cross-sectional, allowing these objects to vary
over time. With the additional assumption of \emph{local} constant
returns to scale, we identify the levels of markups and output elasticities,
and identify the output price, output quantity, TFP, production function,
and consumer demand function up to location.\footnote{\citet*{flynn2019measuring}
impose \emph{global} constant returns to scale to identify a production
function. In Subsection~\ref{subsec:LCRS}, we clarify the distinction
between \emph{local} and \emph{global} constant returns to scale.}

By assuming the homothetic single-aggregator (HSA) demand system of
\citet{matsuyama2017beyond}, we further identify the firms\textquoteright{}
demand system and the representative consumer\textquoteright s utility
function nonparametrically, without imposing parametric functional
form restrictions.\footnote{In the Appendix, we establish identification
of two additional families of homothetic demand systems proposed by
\citet{matsuyama2017beyond}, the homothetic demand systems with direct
implicit additivity (HDIA) and those with indirect implicit additivity
(HIIA), and discuss how to conduct counterfactual analyses using them.}
The HSA class---which nests CES while permitting variable markups,
incomplete pass-through, and non-monotonic relationships between firm
size and markups (see \citealp{matsuyama2023non}; \citealp{matsuyama2025homothetic}
for comprehensive reviews)---has become a leading framework for
monopolistic competition with heterogeneous firms. To our knowledge,
the nonparametric identification of the consumer\textquoteright s
utility function within the HSA class from firm-level data is a new
result; it enables counterfactual welfare analysis under flexible
demand without the specification errors that arise from imposing CES.

Our result bridges two literatures that have largely operated in
isolation: the production function literature, which recovers supply-side
objects from revenue data, and the demand estimation literature, which
recovers consumer-side objects from market-level data. In the existing
production function literature, empirical measures of productivity
or markups constructed from revenue data neither permit welfare evaluation nor counterfactual analysis,
because the underlying consumer demand system and utility function
remain unidentified. Our result enables counterfactual  welfare analysis even when
only revenue data are available. In particular, we show how to compute
a counterfactual marginal-cost-pricing equilibrium and compare it
with the observed monopolistic-competition equilibrium, thereby enabling
a structural evaluation of firms\textquoteright{} market power and
its welfare consequences through counterfactual changes in prices,
quantities, consumer utility, and firm profits.


 While the nonparametric identification results establish informational sufficiency in principle, practical estimation with moderate sample sizes requires additional structure. We develop a semiparametric estimator that assumes a Cobb-Douglas production function but leaves the demand system unrestricted. The estimation proceeds in three steps. First, we nonparametrically estimate the transitory demand shock using the smooth GMM IV quantile regression of \citet*{firpo2022gmm}, which ensures quantile monotonicity. Second, we estimate the control function via the profile likelihood estimator of \citet*{linton2008estimation}. Third, we recover the production function, markups, and TFP. This three-step procedure provides a standalone estimate of the production function without parametric demand assumptions. In a fourth step, for counterfactual welfare analysis and testing the CES restriction, we estimate the CoPaTh-HSA demand system of \citet{matsuyama2020constant}.

Simulation results show that our estimator performs well in recovering
structural parameters, markups, and TFP. Applying the estimator to
Chilean plant-level data from the three largest manufacturing industries
(SIC 31, 32, and 38), we find evidence of misspecification under the
CES demand system in favor of the HSA demand system. Our counterfactual
welfare analysis reveals that market power results in welfare losses
of approximately 3\%--6\% of industry revenue in the three largest
Chilean manufacturing industries in 1996. To put this in context,
these losses exceed standard Harberger-triangle calculations and are
broadly consistent with the welfare costs of markups estimated by
\citet{edmond2023costly} and the productivity losses from misallocation
analyzed by \citet{baqaee2020productivity}, though our estimates are
derived from a different structural framework.

Our analysis contributes to a growing literature that addresses the
revenue-versus-quantity problem in production function estimation.
\citet{klette1996jae}, \citet{de2011product}, and \citet{gandhi2020identification} impose a CES demand
structure under which log revenue is linear in log inputs and log
TFP; in contrast, we identify the demand system and the representative
consumer\textquoteright s utility function nonparametrically, without
relying on parametric assumptions. \citet*{flynn2019measuring} achieve
identification under global constant returns to scale; we require
only local constant returns to scale for certain normalizations. Recent
work by \citet{demirer2025factor} studies factor misallocation using
revenue data, \citet{doraszelski2021reexamining} reexamine the internal
consistency of the \citet{de2012markups} method under market power,
\citet{deridder2025hitchhiker} provides practical guidance
on production function estimation, and \citet{kirov2025measuring}
develops alternative approaches to measuring markups. Our framework
complements these contributions by providing nonparametric identification
of both the production and demand sides from revenue data. On the
demand side, our identification of the HSA utility function contributes
to the growing literature on non-CES demand systems under monopolistic
competition (\citealp{matsuyama2017beyond}; \citealp{matsuyama2023non};
\citealp{matsuyama2025homothetic}), providing a new empirical foundation
for counterfactual welfare analysis within this class.\footnote{One
frequently sees within the literature an assumption of market structure
for the identification of demand and supply side objects. For example,
\citet*{blp1995ecta} identify firm-level marginal costs by specifying
oligopolistic competition; meanwhile, \citet*{ekeland2004identification}
and \citet*{heckman2010nonparametric} identify various demand and
supply side objects of a hedonic model by exploiting the properties
of perfect competition.}

The remainder of this paper is organized as follows. Subsection \ref{subsec:Setting}
introduces the model and setting, while Subsection \ref{subsec:example}
illustrates our three-step identification strategy through a parametric
example. Subsection \ref{subsec:Nonparametric} establishes the main
nonparametric identification results, and Subsection \ref{sec:extensions}
discusses alternative settings. Subsection \ref{subsec:Normalization}
introduces additional assumptions to fix scale and location normalizations,
and Subsection \ref{subsec:demand} discusses the identification of
the demand system and counterfactual analysis. Section \ref{sec:estimator}
presents our semiparametric estimator. Section \ref{sec:simulation}
reports simulation results. Section \ref{sec:Chilean} provides an
empirical application. Section \ref{sec:Concluding-Remarks} concludes.
The Appendix contains proofs and simulation details, and presents
identification results for several extensions, including endogenous
labor input, endogenous or discrete firm characteristics, persistent
demand shocks, unobserved quality heterogeneity, and two additional
families of homothetic demand systems proposed by \citet{matsuyama2017beyond}.
\section{Identification}

\label{sec:Identification}


\subsection{Setting}

\label{subsec:Setting}

We denote the logarithm of physical output, material, capital, and
labor as $y_{it}$, $m_{it}$, $k_{it}$, and $l_{it}$, respectively,
with their respective supports denoted as $\mathcal{Y}$, $\mathcal{M}$,
$\mathcal{K}$, and $\mathcal{L}$. We collect the three inputs (material,
capital, and labor) into a vector as $x_{it}:=(m_{it},k_{it},l_{it})'\in\mathcal{X}:=\mathcal{M}\times\mathcal{K}\times\mathcal{L}$.

At time $t$, output $y_{it}$ is related to inputs $x_{it}=(m_{it},k_{it},l_{it})'$
through the production function 
\begin{equation}
y_{it}=f_{t}\!\left(x_{it},z_{it}^{s}\right)+\omega_{it},\label{prod}
\end{equation}
where $z_{it}^{s}$ is a vector of exogenous characteristics with
support $\mathcal{Z}_{s}$ that may affect either the functional form
of $f_{t}(\cdot)$ or the level of total factor productivity (TFP)
(e.g., ownership status).\footnote{Throughout the paper, the subscript $t$ on functions (e.g., $f_t$, $\psi_t$) indicates that the structural form is time-varying (common to all firms at time $t$), while the subscript $it$ denotes firm-specific realizations.} Firm-level productivity $\omega_{it}$ follows
a first-order Markov process given by 
\begin{align}
\omega_{it} & =h_{t}\!\left(\omega_{it-1},z_{it-1}^{h}\right)+\eta_{it},\qquad\eta_{it}\stackrel{iid}{\sim}G_{\eta_{t}},\label{omega}
\end{align}
where $\eta_{it}$ is an innovation to productivity that is serially
uncorrelated, and $z_{it-1}^{h}$ is a vector of lagged characteristics
with support $\mathcal{Z}_{h}$ that may affect the productivity process
(e.g., previous import status as in \citealp{kasahara2008}). 


The demand function for a firm's product is strictly decreasing in
its price, and its inverse demand function is given by 
\[
p_{it}=\tilde{\psi}_{t}(y_{it},z_{it}^{d},\epsilon_{it}),
\]
where $z_{it}^{d}$ is an observable firm characteristic with support
$\mathcal{Z}_{d}$ that affects firm's demand (e.g., firm's export
status in \citet{de2012markups}) while $\epsilon_{it}$ represents
an unobserved demand shock.

We assume the demand shock $\epsilon_{it}$ has limited persistence to facilitate identification via lagged instruments, and interpret $\epsilon_{it}$ as the demand fluctuation remaining after conditioning on observable characteristics.
Specifically, $\epsilon_{it}$ is generated by
\begin{equation}
\epsilon_{it}=\Upsilon_{t}(\zeta_{it},\zeta_{it-1},...,\zeta_{it-\upsilon}),\qquad\zeta_{it-s}\overset{iid}{\sim}F_{\zeta_{t}}\quad\text{for }s=0,1,..,\upsilon.\label{eq:epsilon}
\end{equation}
Therefore, conditional on $z_{it}^{d}$, the underlying innovation $\zeta_{it}$ has a transitory effect on the demand shock $\epsilon_{it}$. Consequently, $\epsilon_{it}$ is serially correlated over $\upsilon$ periods but its persistence is limited: $\epsilon_{it}$ is independent of $\epsilon_{i,t-s}$ for $s \geq \upsilon + 1$. In contrast, an innovation to productivity $\eta_{it}$ has a permanent effect on future productivity in (\ref{omega}). This difference between the demand and supply shock specifications in (\ref{omega}) and (\ref{eq:epsilon}) captures the idea that demand shocks are temporary while supply shocks are permanent \citep[e.g.,][]{nelson1982trends}. Appendix \ref{app:persistent_demand} provides an alternative identification approach and shows that identification remains possible under persistent demand shocks when a supply-side instrument is available. Appendix \ref{app:quality} discusses how productivity $\omega_{it}$ may capture persistent differences in product quality across firms, where the firm's output $y_{it}$ can be interpreted as a quality-adjusted measure of output quantity; from this perspective, heterogeneous demands attributable to quality differences are captured by $\omega_{it}$.

As shown in \citet{matzkin2003nonparametric}, the identification
of a non-additive unobservable $\epsilon_{it}$ has to be up to its
monotonic transformation. Let $F_{\epsilon_{t}}$ be the c.d.f. of
$\epsilon_{it}$. Without loss of generality, we transform $\epsilon_{it}$
to a uniform variable, using $u_{it}:=F_{\epsilon_{t}}(\epsilon_{it})$,
\begin{align}
p_{it} & =\tilde{\psi}_{t}(y_{it},z_{it}^{d},F_{\epsilon_{t}}^{-1}(u_{it}))=\psi_{t}(y_{it},z_{it}^{d},u_{it}),\,\,\,u_{it}\sim\text{Unif}(0,1).\label{eq:inverse_demand}
\end{align}
Given $t$, $u_{it}$ cross-sectionally follows an independent and
identical uniform distribution.

The inverse demand function (\ref{eq:inverse_demand}) is non-parametrically
specified and generalizes the constant-elasticity demand function
examined by \citet{ma44ecma}, \citet{klette1996jae}, and \citet{de2011product}.
Equation (\ref{eq:inverse_demand}) implicitly imposes two key assumptions.
First, $\psi_{t}(\cdot,z_{it}^{d},u_{it})$ is common across firms
once we control for observed demand characteristics $z_{it}^{d}$
and a transitory scalar unobserved demand shock $u_{it}$. Second,
$\psi_{t}(\cdot,z_{it}^{d},u_{it})$ represents the demand curve that
each individual firm takes as given. This assumption is satisfied
under monopolistic competition (without free entry), where $\psi_{t}$
can be expressed as $\psi_{t}(y_{it},z_{it}^{d},u_{it},a_{t})$, with
$a_{t}$ denoting a vector of aggregate price and quantity indices
which each firm treats as exogenous.

Let $r_{it}$ and $\mathcal{R}$ be the logarithm of revenue and its
support, respectively. Then, from (\ref{prod}), the observed revenue
relates to output and input as follows: 
\begin{align}
r_{it} & =\varphi_{t}(y_{it},z_{it}^{d},u_{it})\label{rev-0}\\
 & =\varphi_{t}\left(f_{t}(m_{it},k_{it},l_{it},z_{it}^{s})+\omega_{it},z_{it}^{d},u_{it}\right)\label{rev}
\end{align}
where $\varphi_{t}(y_{it},z_{it}^{d},u_{it}):=\psi_{t}(y_{it},z_{it}^{d},u_{it})+y_{it}.$

We make the following timing assumption. \begin{assumption}\label{A-0}
(a) $(l_{it},k_{it})$ is determined at the end of period $t-1$ and
is independent of $\eta_{is}$ and $\zeta_{is}$ for $s\geq t$. (b)
$m_{it}$ is determined after firm's observing $(\omega_{it},u_{it},z_{it}^{s},z_{it}^{d})$
but is independent of $\eta_{is}$ and $\zeta_{is}$ for $s\geq t+1$.
(c) $\left(z_{it}^{s},z_{it}^{d},z_{it-1}^{h}\right)$ is continuous
and independent of $u_{is}$ and $\eta_{is}$ for $s\geq t$. (d)
each firm is a price-taker for material input. \end{assumption}

Assumptions \ref{A-0}(a)(b) specify the timing structure, which is
similar to that in \citet{gandhi2020identification}.\footnote{Stochastic independence between $(l_{it},k_{it})$ and $\eta_{it}$ is stronger than the standard mean independence assumption $E[\eta_{it}\mid\mathcal{I}_{it}]=0$, where $\mathcal{I}_{it}$ is the firm's information set at the beginning of period~$t$. Most identification results below require only mean independence; full stochastic independence is used in the IVQR step (Proposition~\ref{lem:step2}) and the control function argument (Proposition~\ref{P-step1}).}
In Appendix
\ref{subsec:Endogenous-Labor-Input}, we present identification results
when $l_{it}$ is also endogenous. The continuity requirement in Assumption~\ref{A-0}(c)
can be relaxed, but the exogeneity of $\left(z_{it}^{s},z_{it}^{d},z_{it-1}^{h}\right)$
remains an important---albeit potentially strong---assumption, though
it is commonly maintained in the empirical literature.\footnote{In Appendix~\ref{app:Alternative-Settings}, we further discuss identification
when these variables are discrete and endogenous, under the availability
of suitable instruments.} Assumption~\ref{A-0}(d) is also standard in most empirical applications.\footnote{ 
We treat deflated expenditures on materials as measures of inputs.
This abstracts from unobserved heterogeneity in material prices.
For instance, geographically segmented input markets may induce
systematic price differences across regions.} Appendix \ref{subsec:Identification-Expenditure} shows that the
identification strategy extends to such environments when material prices
vary across firms as a function of observed characteristics.

Under Assumption \ref{A-0}, the firm chooses $m_{it}=\mathbb{M}_{t}\left(\omega_{it},k_{it},l_{it},z_{it}^{s},z_{it}^{d},u_{it}\right)$
at time $t$ to maximize the profit: 
\begin{equation}
\mathbb{M}_{t}\left(\omega_{it},k_{it},l_{it},z_{it}^{s},z_{it}^{d},u_{it}\right)\in\arg\max_{m}\exp(\varphi_{t}(f_{t}(m,k_{it},l_{it},z_{it}^{s})+\omega_{it},z_{it}^{d},u_{it}))-\exp(p_{t}^{m}+m),\label{eq:profit_maximization}
\end{equation}
where $p_{t}^{m}$ denotes the logarithm of the material price
at time $t$.

Equation (\ref{rev}) highlights two identification issues, originally
raised by \citet{ma44ecma}. First, $m_{it}$ correlates with two
unobservables $\omega_{it}$ and $u_{it}$. Second, $r_{it}$ relates
to $x_{it}=(m_{it},k_{it},l_{it})$ via two unknown nonlinear functions
$\varphi_{t}(\cdot,z_{it}^{d},u_{it})$ and $f_{t}(\cdot)$. From
the first-order condition for profit maximization, $P_{it}\left(1+\partial\psi_{t}(y_{it},z_{it}^{d},u_{it})/\partial y_{it}\right)=MC_{it}$,
where $P_{it}$ and $MC_{it}$ denote the price and marginal cost
of output, respectively, the elasticity of revenue with respect to
output equals the inverse of the markup: 
\begin{equation}
\frac{\partial\varphi_{t}(y_{it},z_{it}^{d},u_{it})}{\partial y_{it}}=\frac{MC_{it}}{P_{it}}.\label{eq:markup_foc}
\end{equation}
Thus, the revenue elasticity relates to the output elasticity via
markup: 
\[
\frac{\partial\varphi_{t}(f_{t}(x_{it})+\omega_{it},z_{it}^{d},u_{it})}{\partial v_{it}}=\frac{MC_{it}}{P_{it}}\frac{\partial f_{t}(x_{it})}{\partial v_{it}}\text{ for }v_{it}\in\{m_{it},k_{it},l_{it}\}.
\]

For identification, we make the following assumptions. \begin{assumption}
\label{A-1} (a) $f_{t}(\cdot)$ is continuously differentiable with
respect to $(m,k,l,z^{s})$ on $\mathcal{M}\times\mathcal{K}\times\mathcal{L}\times\mathcal{Z}_{s}$
and strictly increasing in $m$. (b) For every $\left(z^{d},u\right)\in\mathcal{Z}_{d}\times[0,1]$,
$\varphi_{t}(\cdot,z^{d},u)$ is strictly increasing and invertible
with its inverse $\varphi_{t}^{-1}(r,z^{d},u)$, which is continuously
differentiable with respect to $(r,z^{d},u)$ on $\mathcal{R}\times\mathcal{Z}_{d}\times[0,1]$.
(c) For every $(k,l,z^{s},z^{d},u)\in\mathcal{K}\times\mathcal{L}\times\mathcal{Z}_{s}\times\mathcal{Z}_{d}\times[0,1]$,
$\mathbb{M}_{t}(\cdot,k,l,z^{s},z^{d},u)$ is strictly increasing
and invertible with its inverse $\mathbb{M}_{t}^{-1}(m,k,l,z^{s},z^{d},u)$,
which is continuously differentiable with respect to $(m,k,l,z^{s},z^{d},u)$
on $\mathcal{M}\times\mathcal{K}\times\mathcal{L}\times\mathcal{Z}_{s}\times\mathcal{Z}_{d}\times[0,1]$.
(d) $(\zeta_{it},...,\zeta_{it-\upsilon})$ are independent from $\eta_{it}$.
\end{assumption} Assumptions \ref{A-1}(a)(b) are standard assumptions
about smooth production and demand functions. In Assumption \ref{A-1}(b),
the condition $\partial\varphi_{t}(y_{it},z_{it}^{d},u_{it})/\partial y_{it}>0$
is equivalent to that the elasticity of demand with respect to price,
$-\left(\partial\psi_{t}(y_{it},z_{it}^{d},u_{it})/\partial y_{it}\right)^{-1}$,
being greater than 1; this necessarily holds under profit maximization.
Assumption \ref{A-1}(c) is a standard assumption in the control function
approach that uses material as a control function for TFP \citep{levinsohn2003estimating,ackerberg2015identification}.
Assumption \ref{A-1}(d) requires the demand shock and the productivity
shock are independent.

Let $w_{it}:=(k_{it},l_{it},z_{it}^{s},z_{it}^{d})$ be observable
exogenous variables at $t$. The inverse function of the material
demand function with respect to TFP 
\[
\omega_{it}=\mathbb{M}_{t}^{-1}(m_{it},w_{it},u_{it})
\]
is used as a control function for $\omega_{it}$. Since $\partial\varphi_{t}(y_{it},z_{it}^{d},u_{it})/\partial y_{it}>0$,
there exists the inverse function $\varphi_{t}^{-1}(\cdot,z_{it}^{d},u_{it})$
so that the revenue function $r_{it}=\varphi_{t}(f_{t}(x_{it},z_{it}^{s})+\omega_{it},z_{it}^{d},u_{it})$
can be written as: 
\begin{align}
\varphi_{t}^{-1}\left(r_{it},z_{it}^{d},u_{it}\right)=f_{t}(x_{it},z_{it}^{s})+\mathbb{M}_{t}^{-1}(m_{it},w_{it},u_{it}).\label{eq:model_step2}
\end{align}

Let 
$v_{it}:=(w_{it},u_{it},m_{it-1},w_{it-1},u_{it-1},z_{it-1}^{h})'\in\mathcal{V}:=\mathcal{W}\times[0,1]\times\mathcal{M}\times\mathcal{W}\times[0,1]\times\mathcal{Z}_{h}$,
where $\mathcal{W}:=\mathcal{K}\times\mathcal{L}\times\mathcal{Z}_{s}\times\mathcal{Z}_{d}$.
We assume that the data constitute a random sample of $N$ firms observed
over multiple periods, 
$\{\{r_{is},m_{is},v_{is}\}_{s=t-\upsilon-2}^{t}\}_{i=1}^{N}$, 
drawn from the population. Given a sufficiently large $N$, the econometrician
can consistently recover the corresponding population joint distributions.
\begin{assumption} \label{A-data} An econometrician is assumed to
know the following objects: (a) the population joint distribution
of $\{r_{is},m_{is},v_{is}\}_{s=t-\upsilon-2}^{t}$; and (b) the material
input cost for each firm, $\exp(p_{t}^{m}+m_{it})$. \end{assumption}
Our objective is to identify $\{\varphi_{t}^{-1}(\cdot),f_{t}(\cdot),\mathbb{M}_{t}^{-1}(\cdot)\}$
from the population joint distribution of $\{r_{is},m_{is},v_{is}\}_{s=t-\upsilon-2}^{t}$.
Let $\{\varphi_{t}^{*-1}(\cdot),f_{t}^{*}(\cdot),\mathbb{M}_{t}^{*-1}(\cdot)\}$
be the true model structure that satisfies (\ref{eq:model_step2}).
Then, for any $(a_{1t},a_{2t},b_{t})\in\mathbb{R}^{2}\times\mathbb{R}_{++}$,
\begin{align}
\varphi_{t}^{-1}\left(r_{it},z_{it}^{d},u_{it}\right) & =(a_{1t}+a_{2t})+b_{t}\varphi_{t}^{*-1}\left(r_{it},z_{it}^{d},u_{it}\right),\ f_{t}(x_{it},z_{it}^{s})=a_{1t}+b_{t}f_{t}^{*}(x_{it},z_{it}^{s}),\nonumber \\
 & \text{{and}\ }\mathbb{M}_{t}^{-1}(m_{it},w_{it},u_{it})=a_{2t}+b_{t}\mathbb{M}_{t}^{*-1}(m_{it},w_{it},u_{it})\label{eq:equivalence}
\end{align}
also satisfy (\ref{eq:model_step2}). 
Hence, the true structure $\{\varphi_{t}^{*-1}(\cdot),f_{t}^{*}(\cdot),\mathbb{M}_{t}^{*-1}(\cdot)\}$
is observationally equivalent to the structure (\ref{eq:equivalence}) and is therefore identified only up to location and scale normalization $(a_{1t},a_{2t},b_{t})$
from restriction (\ref{eq:model_step2}).

Normalization is unavoidable absent further assumptions. Because the unobserved output level $y_{it}$ enters the revenue function through the unknown nonlinear function $\varphi_t$ in \eqref{rev-0}, it has no natural scale or location, so the structural functions are identified only up to a class of transformations. Proposition~\ref{prop:equivalence} in Appendix~\ref{app:equivalence_class} formally characterizes this equivalence class for $(\varphi_t^{-1}, f_t, \mathbb{M}_t^{-1})$; accordingly, any existing identification result for output elasticities or markup level from revenue data either explicitly or implicitly imposes location and/or scale normalizations.

We may fix $(a_{1t},a_{2t},b_{t})$ in (\ref{eq:equivalence}) by fixing
the values of $\{\varphi_{t}^{-1}(\cdot),f_{t}(\cdot),\mathbb{M}_{t}^{-1}(\cdot)\}$
at some points. Specifically, choosing two points $(m_{t0}^{*},w_{t}^{*},u_{t}^{*})$
and $(m_{t1}^{*},w_{t}^{*},u_{t}^{*})$ on the support $\mathcal{M}\times\mathcal{W}\times[0,1]$
where $m_{t0}^{*}<m_{t1}^{*}$, we denote 
\begin{equation}
c_{1t}:=f_{t}(m_{t0}^{*},k_{t}^{*},l_{t}^{*},z_{t}^{s*}),\ c_{2t}=\mathbb{M}_{t}^{-1}(m_{t0}^{*},w_{t}^{*},u_{t}^{*}),\ \text{{and}\ }c_{3t}:=\mathbb{M}_{t}^{-1}(m_{t1}^{*},w_{t}^{*},u_{t}^{*}).\label{eq:norm}
\end{equation}
Note that $\partial\mathbb{M}_{t}^{-1}/\partial m_{it}>0$ implies
that $c_{2t}<c_{3t}$. Then, there exists a unique one-to-one mapping
between $(c_{1t},c_{2t},c_{3t})$ in (\ref{eq:norm}) and $(a_{1t},a_{2t},b_{t})$
in (\ref{eq:equivalence}) such that $b_{t}=\left(c_{3t}-c_{2t}\right)/\left(\mathbb{M}_{t}^{*-1}(m_{t1}^{*},w_{t}^{*},u_{t}^{*})-\mathbb{M}_{t}^{*-1}(m_{t0}^{*},w_{t}^{*},u_{t}^{*})\right)$,
$a_{1t}=c_{1t}-b_{t}f_{t}^{*}(m_{t0}^{*},k_{t}^{*},l_{t}^{*},z_{t}^{s*})$
and $a_{2t}=c_{2t}-b_{t}\mathbb{M}_{t}^{*-1}(m_{t0}^{*},w_{t}^{*},u_{t}^{*})$.
Thus, we can fix the value of $(a_{1t},a_{2t},b_{t})$ by choosing
arbitrary values $(c_{1t},c_{2t},c_{3t})\in\mathbb{R}^{3}$ that satisfies
$c_{2t}<c_{3t}$. In particular, we impose the following normalization
that corresponds to (N2) in \citet{chiappori2015nonparametric}. \begin{assumption}
(Normalization) \label{A-2} The support $\mathcal{M}\times\mathcal{W}\times[0,1]$
includes two points $(m_{t0}^{*},w_{t}^{*},u_{t}^{*})$ and $(m_{t1}^{*},w_{t}^{*},u_{t}^{*})$
such that $c_{1t}=c_{2t}=0$ and $c_{3t}=1$ in (\ref{eq:norm}).
\end{assumption} As \citet{chiappori2015nonparametric} demonstrates,
this choice of normalization makes the identification proofs transparent.
In Section \ref{subsec:Normalization}, we discuss how we can use
additional restrictions and data to identify the normalization parameters
$(a_{1t},a_{2t},b_{t})$.

%

\subsection{Identification in a Parametric Example: Generalized CES Demand with
Heterogeneity}

\label{subsec:example}

Before presenting the nonparametric identification results, we demonstrate
our identification approach by applying it to a simple parametric
example without exogenous covariates, i.e., where $(z_{it}^{d},z_{it}^{s},z_{it}^{h})$
is empty. Consider a monopolistically competitive market where each
firm $i$ faces the following constant elastic inverse demand function
with heterogeneity: 
\begin{equation}
p_{it}=\alpha_{t}(u_{it})+(\rho(u_{it})-1)y_{it},\label{eq:demand}
\end{equation}
where $\alpha_{t}(\cdot)$ and $\rho(\cdot)$ are unknown functions,
where $0<\rho(\cdot)\le1$.\footnote{The demand function \eqref{eq:demand} can be derived from a constant
elasticity of substitution (CES) utility function, where the elasticity
of substitution parameter is heterogenous across firms, depending
on $u$. The term $\alpha_{t}$ implicitly captures aggregate expenditure
and an aggregate price index.} We assume that $\rho'(u)<0$, which implies that the markup $1/\rho(u)$
is increasing in $u$.

Firm $i$ has a Cobb--Douglas production function with the TFP $\omega_{it}$
that follows a first-order autoregressive (AR(1)) process: 
\begin{align}
y_{it} & =\theta_{0}+\theta_{m}m_{it}+\theta_{k}k_{it}+\theta_{l}l_{it}+\omega_{it},\quad\omega_{it}=h_{1}\omega_{it-1}+\eta_{it},\label{eq:AR1}
\end{align}
where $\{\theta_{0},\theta_{m},\theta_{k},\theta_{l},h_{1}\}$ are
unknown parameters. The firm's revenue function is expressed as: 
\begin{equation}
r_{it}=\alpha_{t}(u_{it})+\rho(u_{it})\theta_{0}+\rho(u_{it})\theta_{m}m_{it}+\rho(u_{it})\theta_{k}k_{it}+\rho(u_{it})\theta_{l}l_{it}+\rho(u_{it})\omega_{it}.\label{eq:r_t}
\end{equation}
Denote the ratio of material cost to revenue as 
$s_{it}^{m}:=\frac{\exp(p_{t}^{m}+m_{it})}{\exp(r_{it})}$. 
Then, the first-order condition for (\ref{eq:profit_maximization})
can be written as 
\begin{equation}
\rho(u_{it})\theta_{m}=s_{it}^{m},\label{eq:foc}
\end{equation}
which, in turn, determines the control function for $\omega_{it}$
as 
\begin{align}
\omega_{it} & =\mathbb{M}_{t}^{-1}(m_{it},k_{it},l_{it},u_{it})=\beta_{t}(u_{it})+\beta_{m}(u_{it})m_{it}+\beta_{k}k_{it}+\beta_{l}l_{it}\label{eq:omega}
\end{align}
where $\beta_{t}(u_{it})=\left(p_{t}^{m}-\alpha_{t}(u_{it})-\rho(u_{it})\theta_{0}-\ln\rho(u_{it})\theta_{m}\right)/\rho(u_{it})$,
$\beta_{m}(u_{it})=\left(1-\rho(u_{it})\theta_{m}\right)/\rho(u_{it})>0$,
$\beta_{k}=-\theta_{k}$ and $\beta_{l}=-\theta_{l}$.

For notational brevity, assume that the support $\mathcal{X}$ includes
two points $(m_{t0}^{*},k_{t}^{*},l_{t}^{*})=(0,0,0)$ and $(m_{t1}^{*},k_{t}^{*},l_{t}^{*})=(1,0,0)$.
Following Assumption \ref{A-2}, we fix the location and scale of
$f_{t}(\cdot)$ and $\mathbb{M}_{t}^{-1}(\cdot)$ by imposing the
following normalization: 
\begin{align}
0 & =f_{t}(0,0,0)=\theta_{0},\,0=\mathbb{M}_{t}^{-1}(0,0,0,0.5)=\beta_{t}(0.5),\nonumber \\
1 & =\mathbb{M}_{t}^{-1}(1,0,0,0.5)=\beta_{t}(0.5)+\beta_{m}(0.5)\label{eq:normalization_example}
\end{align}
which implies $\theta_{0}=0$, $\beta_{t}(0.5)=0$, and $\beta_{m}(0.5)=1$.

Our identification approach follows three steps.

\subsubsection{Step 1: Identification of the Demand Shocks}

The first step identifies the demand shock $u_{it}$. Substituting
$\omega_{it}=\mathbb{M}_{t}^{-1}(m_{it},k_{it},l_{it},u_{it})$ and
using $\theta_{0}=0$, we obtain 
\begin{align}
r_{it}= & \left(\alpha_{t}(u_{it})+\rho(u_{it})\beta_{t}(u_{it})\right)+\rho(u_{it})\left(\theta_{m}+\beta_{m}(u_{it})\right)m_{it}\nonumber \\
 & +\rho(u_{it})\left(\theta_{k}+\beta_{k}\right)k_{it}+\rho(u_{it})\left(\theta_{l}+\beta_{l}\right)l_{it}\label{eq:expr1}\\
= & \tilde{\phi}_{t}(u_{it})+m_{it},\label{eq:expr2}
\end{align}
where the second equality uses $\beta_{k}=-\theta_{k}$ and $\beta_{l}=-\theta_{l}$
from (\ref{eq:omega}), so that $\rho(u_{it})(\theta_{k}+\beta_{k})=\rho(u_{it})(\theta_{l}+\beta_{l})=0$,
and $\tilde{\phi}_{t}(u_{it}):=\alpha_{t}(u_{it})+\rho(u_{it})\beta_{t}(u_{it})$
with $\tilde{\phi}'_{t}(u)=-\theta_{m}\rho'(u)/\rho(u)>0$ for all $u$.


From (\ref{eq:expr2}), we have 
$\Pr[r_{it}-m_{it}\leq\tilde{\phi}_{t}\left(u\right)]=u\quad\,\text{for all }u\in[0,1]$
because $\Pr[r_{it}-m_{it}\leq\tilde{\phi}_{t}\left(u\right)]=\Pr[\tilde{\phi}_{t}\left(u_{it}\right)\leq\tilde{\phi}_{t}\left(u\right)]=u$
by the monotonicity of $\tilde{\phi}_{t}(\cdot)$. Therefore, the
quantile of $r_{it}-m_{it}$ identifies $u_{it}$ while the moment
condition $E\left[1\left\{ r_{it}-m_{it}\le\tilde{\phi}_{t}\left(u\right)\right\} -u\right]=0$
for $u\in[0,1]$ identifies $\tilde{\phi}_{t}(\cdot)$.

Alternatively, from the first-order condition (\ref{eq:foc}) and
the monotonicity of $\rho(\cdot)$ with $\rho'(\cdot)<0$, the demand
shock $u_{it}$ is identified as the quantile of $1/s_{it}^{m}$.
This equivalence arises because the quantile of $r_{it}-m_{it}$ coincides
with that of $1/s_{it}^{m}=\exp(r_{it}-m_{it}-p_{t}^{m})$.

\paragraph{Step 2: Identification of Control Function and TFP}

The second step identifies the control function $\mathbb{M}_{t}^{-1}(\cdot)$.
Substituting (\ref{eq:omega}) into the AR(1) process (\ref{eq:AR1})
leads to 
\begin{equation}
\mathbb{M}_{t}^{-1}(m_{it},k_{it},l_{it},u_{it})=h_{1}\mathbb{M}_{t-1}^{-1}(m_{it-1},k_{it-1},l_{it-1},u_{it-1})+\eta_{it}.\label{eq:model_ex1}
\end{equation}
Since $\mathbb{M}_{t}^{-1}(m_{it},k_{it},l_{it},u_{it})$ is linear
in $m_{it}$ from (\ref{eq:omega}), we can rearrange (\ref{eq:model_ex1})
as: 
\begin{align}
m_{it} & =\gamma(u_{it},u_{t-1})+\gamma_{k}(u_{it})k_{it}+\gamma_{l}(u_{it})l_{it}+\delta_{m}(u_{it},u_{it-1})m_{it-1}\nonumber \\
 & +\delta_{k}(u_{it})k_{it-1}+\delta_{l}(u_{it})l_{it-1}+\tilde{\eta}_{it},\label{eq:m_model}
\end{align}
where 
\begin{align}
 & \gamma_{k}(u_{it})=-\frac{\beta_{k}}{\beta_{m}(u_{it})},\ \gamma_{l}(u_{it})=-\frac{\beta_{l}}{\beta_{m}(u_{it})},\ \delta_{k}(u_{it})=\frac{h_{1}\beta_{k}}{\beta_{m}(u_{it})},\ \delta_{l}(u_{it})=\frac{h_{1}\beta_{l}}{\beta_{m}(u_{it})},\label{eq:parameters1}\\
 & \gamma(u_{it},u_{it-1})=\frac{-\beta_{t}(u_{it})+h_{1}\beta_{t-1}(u_{it-1})}{\beta_{m}(u_{it})},\label{eq:parameters2}
\end{align}
$\tilde{\eta}_{it}={\eta_{it}}/{\beta_{m}(u_{it})},$ and $\delta_{m}(u_{it},u_{it-1})={h_{1}\beta_{m}(u_{it-1})}/{\beta_{m}(u_{it})}$.
For a given $(u_{it},u_{it-1})$, (\ref{eq:m_model}) is a linear
model. Since $E\left[\left.\tilde{\eta}_{it}\right|v_{it}\right]=E\left[\left.\eta_{it}\right|v_{it}\right]/\beta_{m}(u_{it})=0$,
where $v_{it}:=(k_{it},l_{it},x_{it-1},u_{it},u_{it-1})$, we can
identify $\{\gamma(\cdot,\cdot)$, $\gamma_{k}(\cdot)$, $\gamma_{l}(\cdot)$,
$\delta_{m}(\cdot,\cdot)$, $\delta_{k}(\cdot)$, $\delta_{l}(\cdot)\}$
in (\ref{eq:m_model}) from the conditional moment restriction $E\left[\left.\tilde{\eta}_{it}\right|v_{it}\right]=0$.

Then, because $\beta_{m}(0.5)=1$, we can recover $(\theta_{k},\theta_{l},h_{1})$
from (\ref{eq:parameters1}) as 
\[
\theta_{k}=-\beta_{k}=\gamma_{k}(0.5),\ \theta_{l}=-\beta_{l}=\gamma_{l}(0.5),\text{ and }h_{1}=-\frac{\delta_{k}(0.5)}{\gamma_{k}(0.5)}=-\frac{\delta_{l}(0.5)}{\gamma_{l}(0.5)}.
\]
Also, applying the normalization (\ref{eq:normalization_example})
to (\ref{eq:parameters2}), we have $\gamma(0.5,u_{t-1})=h_{1}\beta_{t-1}(u_{t-1})$.
Then, $\beta_{m}(u)$ and $\beta_{t}(u)$ are identified from (\ref{eq:parameters1})-(\ref{eq:parameters2})
as 
\[
\beta_{m}(u)=\frac{\gamma_{k}(0.5)}{\gamma_{k}(u)}=\frac{\gamma_{l}(0.5)}{\gamma_{l}(u)}\text{ and }\beta_{t}(u)=\gamma(0.5,u_{t-1})-\frac{\gamma(u,u_{t-1})\gamma_{k}(0.5)}{\gamma_{k}(u)}.
\]
Given the identification of $(\beta_{k},\beta_{l},\beta_{m}(\cdot),\beta_{t}(\cdot))$,
we can identify $\omega_{it}$ from (\ref{eq:omega}).

\paragraph{Step 3: Identification of Production Function, Markup, and Demand
Function}

The identification of $\rho(u)$ follows from substituting (\ref{eq:foc})
into $\beta_{m}(u)=\left(1-\rho(u)\theta_{m}\right)/\rho(u)$, and
rearranging the terms, which yields 
\[
\rho(u_{it})=\frac{1-s_{it}^{m}}{\beta_{m}(u_{it})}=\frac{1-s_{it}^{m}}{\gamma_{k}(0.5)/\gamma_{k}(u_{it})}.
\]
Therefore, the markup $1/\rho(u_{it})$ is identified.

The first order condition (\ref{eq:foc}) implies that the revenue
share of material expenditure is a function of $u_{it}$, which we
denote by $s(u)$, such that $s_{it}^{m}=s(u_{it})$. In particular,
$s(0.5)$ represents the median revenue share of material expenditure.
Then, the identification of $\theta_{m}$ follows from the identification
of $\rho(u)$ and the first order condition (\ref{eq:foc}) as 
\begin{equation}
\theta_{m}=\frac{s(0.5)}{\rho(0.5)}=\frac{s(0.5)}{1-s(0.5)}.\label{eq:theta_m}
\end{equation}

Appendix~\ref{app:ces} further discusses counterfactual analysis
under the CES specification. 

\paragraph{The Identification under Normalization}

In view of the first-order condition $\rho(u_{it})\theta_{m}=s_{it}^{m}$,
it is clear from the argument above that the markup level cannot be
separately identified from the material input coefficient $\theta_{m}$
without imposing the normalization restriction $\beta_{m}(0.5)=1$.

More generally, the parameters are identified under the scale and
location normalization of $f_{t}(\cdot)$ and $\mathbb{M}_{t}^{-1}(\cdot)$
in (\ref{eq:normalization_example}). Let $\theta_{i}$ ($i=0,m,k,l$)
and $\beta_{j}(u_{t})$ ($j=t,m,k,l$) be those parameters identified
above and let $\theta_{j}^{*}$ and $\beta_{i}^{*}(u_{t})$ be the
true parameters. Then, there exist unknown normalization parameters
$(a,b)\in\mathbb{R}\times\mathbb{R}_{+}$ such that 
\[
\theta_{0}=a+b\theta_{0}^{*},\ \beta_{t}=a+b\beta_{t}^{*},\ \theta_{i}=b\theta_{i}^{*},\ \beta_{j}(u_{t})=b\beta_{j}^{*}(u_{t}).
\]
We can fix the normalization by imposing further restrictions. For
instance, if constant returns to scale $\theta_{m}^{*}+\theta_{k}^{*}+\theta_{l}^{*}=1$
holds, then the scale parameter $b$ can be identified as 
\[
b=b\left(\theta_{m}^{*}+\theta_{k}^{*}+\theta_{l}^{*}\right)=\theta_{m}+\theta_{k}+\theta_{l}=\frac{s(0.5)}{1-s(0.5)}-\beta_{k}-\beta_{l}.
\]
We discuss in subsection \ref{subsec:Normalization} additional assumptions
for fixing normalization.

The above identification argument is illustrative but relies on the
linearity of $\mathbb{M}_{t}^{-1}(m_{it},k_{it},l_{it},u_{it})$ under
restrictive parametric assumptions. The next subsection establishes
nonparametric identification in a more general framework presented
in Section~\ref{subsec:Setting}.


\subsection{Nonparametric Identification}

\label{subsec:Nonparametric}

\subsubsection{Step 1: Identification of the Demand Shocks}


Substituting $\omega_{it}=\mathbb{M}_{t}^{-1}(m_{it},w_{it},u_{it})$
into the revenue function (\ref{rev}), we can rewrite it as 
\begin{align}
r_{it} & =\varphi_{t}\left(f_t(m_{it},k_{it},l_{it},z_{it}^{s})+\mathbb{M}_{t}^{-1}(m_{it},w_{it},u_{it}),z_{it}^{d},u_{it}\right)\nonumber \\
 & =:\phi_{t}\left(m_{it},w_{it},u_{it}\right),\,\,\,u_{it}\sim\text{Unif}(0,1).\label{eq:Phi}
\end{align}
We impose the following assumptions. \begin{assumption} \label{assu: revenue monotonicity}
(a) (\textit{Monotonicity}) ${\displaystyle {\partial\phi_{t}(m,w,u)}/{\partial u}>0}$
for all $(m,w,u)\in\mathcal{M}\times\mathcal{W}\times[0,1]$. \\
 (b) (\textit{Completeness}) The conditional distribution of $(m_{it},w_{it})$
given $(m_{it-\upsilon-1},w_{it-\upsilon})$ is complete in the sense
of \citet{chernozhukov2005iv}; that is, for any measurable function
$g(m,w)$, 
\[
E[g(m_{it},w_{it})\mid m_{it-\upsilon-1},w_{it-\upsilon}]=0\text{ a.s. }\Rightarrow g(m_{it},w_{it})=0\text{ a.s.}
\]
\end{assumption}

Assumption \ref{assu: revenue monotonicity}(a) implicitly imposes
restrictions on the shape of the demand function. The Appendix  \ref{app:proof-5} shows
that Assumption \ref{assu: revenue monotonicity}(a) holds if and
only if $\frac{\partial\varphi_{t}}{\partial u}\frac{\partial\sigma_{t}}{\partial y}>\frac{\partial\varphi_{t}}{\partial y}\frac{\partial\sigma_{t}}{\partial u}$,
where $\sigma_{t}(y,z^{d},u):=-1/\left(\frac{\partial\psi_{t}(y,z^{d},u)}{\partial y}\right)>0$
denotes the demand elasticity. Since $\frac{\partial\varphi_{t}}{\partial u}>0$
and $\frac{\partial\varphi_{t}}{\partial y}>0$, a sufficient condition
for Assumption \ref{assu: revenue monotonicity}(a) is that an increase
in the demand shock $\epsilon_{it}$ makes demand less elastic (i.e.,
increases the markup), while an increase in consumption makes demand
more elastic (i.e., decreases the markup).

Under Assumption \ref{assu: revenue monotonicity}(a), given values
of $(m,w)$, $\phi_{t}(m,w,\cdot)$ in (\ref{eq:Phi}) can be interpreted
as the quantile function of revenue $r$. Although $m_{it}$ is endogenous
and correlated with $u_{it}$, Assumption \ref{A-0}(a)--(c) and equation
(\ref{eq:epsilon}) imply that $u_{it}$ is independent of $(m_{it-\upsilon-1},w_{it-\upsilon})$
while $u_{it}$ is serially correlated with $u_{is}$ for $s=1,...,v$.
Then, we have 
$\Pr[r_{it}\leq\phi_{t}\left(m_{it},w_{it},u\right)|m_{it-\upsilon-1},w_{it-\upsilon}]=u$
for all $u\in[0,1]$.\footnote{This follows because 
\[
\begin{aligned}\Pr[r_{it}\leq\phi_{t}\left(m_{it},w_{it},u\right)|m_{it-\upsilon-1},w_{it-\upsilon}]= & \Pr[\phi_{t}\left(m_{it},w_{it},u_{it}\right)\leq\phi_{t}\left(m_{it},w_{it},u\right)|m_{it-\upsilon-1},w_{it-\upsilon}]\\
= & \Pr[u_{it}\leq u|m_{it-\upsilon-1},w_{it-\upsilon}]\\
= & u,
\end{aligned}
\]
where the second equality follows from the monotonicity of $\phi_{t}\left(m,w,\cdot\right)$
while the last equality holds because $u_{it}\Perp(m_{it-\upsilon-1},w_{it-\upsilon})$.} 

Assumption \ref{assu: revenue monotonicity}(b), referred to as the
completeness condition, implies the following uniqueness property:
for any two candidate functions $\phi_{t}^{1}$ and $\phi_{t}^{2}$
and any fixed $u\in[0,1]$, $E\left[\left.1\left\{ r_{it}\le\phi_{t}^{1}\left(m_{it},w_{it},u\right)\right\} \right|m_{it-\upsilon-1},w_{it-\upsilon}\right]=E\left[\left.1\left\{ r_{it}\le\phi_{t}^{2}\left(m_{it},w_{it},u\right)\right\} \right|m_{it-\upsilon-1},w_{it-\upsilon}\right]$
a.s. implies that $\phi_{t}^{1}(\cdot,\cdot,u)=\phi_{t}^{2}(\cdot,\cdot,u)$
almost surely. Then, following \citet{chernozhukov2005iv}, the moment
condition 
\begin{equation}
E\left[\left.1\left\{ r_{it}\le\phi_{t}\left(m_{it},w_{it},u\right)\right\} -u\right|m_{it-\upsilon-1},w_{it-\upsilon}\right]=0\,\,\,\text{for }u\in[0,1]\label{eq: IVQR moment condition}
\end{equation}
identifies $\phi_{t}(\cdot)$, and the demand shock $u_{it}$ is identified
as $u_{it}=\phi_{t}^{-1}(r_{it},m_{it},w_{it})$ under Assumption
\ref{assu: revenue monotonicity}. 
\begin{prop}
\label{lem:step2} Under Assumptions \ref{A-0}, \ref{A-1}, \ref{A-data},
and \ref{assu: revenue monotonicity} hold, $\phi_{t}(\cdot)$ and
$u_{it}$ are identified. 
\end{prop}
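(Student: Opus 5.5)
The proof follows the IVQR identification argument of \citet{chernozhukov2005iv}, adapted to the instrument set $(m_{it-\upsilon-1},w_{it-\upsilon})$. It has three steps: the true $\phi_t$ satisfies the moment condition (\ref{eq: IVQR moment condition}); no other function satisfies it; and $u_{it}$ is then recovered by inversion.

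\textbf{Step 1: the true $\phi_t$ solves (\ref{eq: IVQR moment condition}).} I would first verify that $u_{it}$ is independent of $(m_{it-\upsilon-1},w_{it-\upsilon})$. By (\ref{eq:epsilon}), $u_{it}=F_{\epsilon_t}(\epsilon_{it})$ depends only on $(\zeta_{it},\dots,\zeta_{it-\upsilon})$. By Assumption~\ref{A-0}(a), $(k_{it-\upsilon},l_{it-\upsilon})$ is chosen at the end of $t-\upsilon-1$ and is independent of $\zeta_{is}$ for $s\ge t-\upsilon$. By Assumption~\ref{A-0}(b), $m_{it-\upsilon-1}$ is independent of $\zeta_{is}$ for $s\ge t-\upsilon$. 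By Assumption~\ref{A-0}(c), $(z^s_{it-\upsilon},z^d_{it-\upsilon})$ is independent of $u_{is}$, $s\ge t-\upsilon$.

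Some care is needed to pass from these marginal statements to joint independence of the full vector $(\zeta_{it},\dots,\zeta_{it-\upsilon})$ from the instruments. I would argue that the instruments are measurable with respect to information dated at or before the end of $t-\upsilon-1$, together with the $\eta$'s, which are independent of the $\zeta$'s by Assumption~\ref{A-1}(d). With that in hand, the footnote computation applies. Assumption~\ref{A-1} makes $\phi_t$ continuous, and Assumption~\ref{assu: revenue monotonicity}(a) makes it strictly increasing in $u$. Hence $\{r_{it}\le\phi_t(m_{it},w_{it},u)\}=\{u_{it}\le u\}$, and its conditional probability equals $u$.

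\textbf{Step 2: uniqueness.} Fix $u$ and suppose a candidate $\tilde\phi(\cdot,\cdot,u)$ also satisfies (\ref{eq: IVQR moment condition}). Define
\[
g(m,w)=\Pr[r_{it}\le\tilde\phi(m,w,u)\mid m,w]-\Pr[r_{it}\le\phi_t(m,w,u)\mid m,w].
\]
Subtracting the two moment conditions and applying iterated expectations gives $E[g(m_{it},w_{it})\mid m_{it-\upsilon-1},w_{it-\upsilon}]=0$. Completeness, Assumption~\ref{assu: revenue monotonicity}(b), then yields $g=0$ a.s.

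The remaining task is to turn $g=0$ into $\tilde\phi=\phi_t$. Conditional on $(m,w)$, $r_{it}=\phi_t(m,w,u_{it})$, and $\phi_t(m,w,\cdot)$ is strictly increasing and continuous. Therefore the conditional c.d.f.\ of $r_{it}$ given $(m,w)$, evaluated at a point $\phi_t(m,w,u')$, equals $\Pr[u_{it}\le u'\mid m,w]$. The equality $g=0$ then forces $\tilde\phi(m,w,u)=\phi_t(m,w,u)$, provided this conditional c.d.f.\ is strictly increasing at that point. That holds if $u_{it}$ given $(m_{it},w_{it})$ has a positive density.

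\textbf{Main obstacle.} This last step is where I expect the difficulty. Completeness as stated is in the function $g$ of conditional probabilities, not in $\tilde\phi$ itself, so passing from the moment equality to equality of functions needs a full-support condition. \citet{chernozhukov2005iv} obtain it through bounded completeness together with a local condition on the conditional density. My plan is to invoke the paper's reading of Assumption~\ref{assu: revenue monotonicity}(b), namely the uniqueness property stated right after it, which directly asserts that the moment equality implies $\phi^1=\phi^2$ a.s. If needed, I would add a remark that the conditional density of $u_{it}$ given $(m_{it},w_{it})$ is positive, which follows from $m_{it}$ being a smooth invertible function of $(\omega_{it},w_{it},u_{it})$.

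\textbf{Step 3: recovering $u_{it}$.} With $\phi_t$ identified on the support for every $u\in[0,1]$, strict monotonicity in $u$ lets us invert and set
\[
u_{it}=\phi_t^{-1}(r_{it},m_{it},w_{it}).
\]
This is a known function of observables under Assumption~\ref{A-data}.
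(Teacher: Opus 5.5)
Your overall route is the paper's. The moment condition (\ref{eq: IVQR moment condition}) holds at the true $\phi_t$ because $u_{it}$ is independent of the lagged instruments and $\phi_t$ is strictly monotone in $u$ (the paper's footnote). Uniqueness is attributed to completeness following \citet{chernozhukov2005iv}, and $u_{it}$ is recovered by inverting $\phi_t$. Steps 1 and 3 are fine at the paper's level of rigor. The problem lies where you go beyond the paper and try to derive uniqueness from completeness in Step 2. As a result, the ``main obstacle'' you flag is not the real one.

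Write $Z_{it}:=(m_{it-\upsilon-1},w_{it-\upsilon})$ and fix $u$. Iterated expectations give
\[
E\bigl[1\{r_{it}\le\tilde\phi(m_{it},w_{it},u)\}-1\{r_{it}\le\phi_t(m_{it},w_{it},u)\}\mid Z_{it}\bigr]=E\bigl[h(m_{it},w_{it},Z_{it})\mid Z_{it}\bigr],
\]
where $h$ is the difference of the conditional c.d.f.s of $r_{it}$ given $(m_{it},w_{it},Z_{it})$, not given $(m_{it},w_{it})$. Replacing $h$ by your $g(m,w)=E[h\mid m,w]$ requires $u_{it}\Perp Z_{it}\mid(m_{it},w_{it})$, and that fails here. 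Conditional on $(m_{it},w_{it})$, $u_{it}$ is tied to $\omega_{it}=\mathbb{M}_t^{-1}(m_{it},w_{it},u_{it})$, and $\omega_{it}$ is correlated with $m_{it-\upsilon-1}$ through the Markov process. This is exactly the endogeneity the instrument is meant to handle. So $E[g(m_{it},w_{it})\mid Z_{it}]=0$ does not follow, and Assumption \ref{assu: revenue monotonicity}(b), which concerns only functions of $(m,w)$, cannot be applied to $g$.

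The standard repair is Chernozhukov and Hansen's weighted condition. By the mean value theorem, $h=\Delta(m,w)\,\bar f_\Delta(m,w,Z)$, where $\Delta:=\tilde\phi(\cdot,\cdot,u)-\phi_t(\cdot,\cdot,u)$ and $\bar f_\Delta:=\int_0^1 f_{r\mid m,w,Z}(\phi_t+s\Delta\mid m,w,Z)\,ds$. One then assumes (bounded) completeness of this $Z$-dependent weighted family: $E[\Delta\,\bar f_\Delta\mid Z_{it}]=0$ a.s.\ implies $\Delta=0$ a.s. This yields $\tilde\phi=\phi_t$ directly, and positivity of $\bar f_\Delta$ takes the place of your separate c.d.f.-monotonicity step.

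Your justification of that positivity from invertibility of $\mathbb{M}_t$ would not work in any case. It is a support condition on $(\omega_{it},u_{it})$ given $w_{it}$, which invertibility does not supply.

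Falling back on the uniqueness property stated after Assumption \ref{assu: revenue monotonicity} leaves you exactly where the paper is. The paper likewise asserts, rather than derives, that the unweighted condition delivers that property. Your derivation through $g$, however, should be dropped or replaced by the weighted completeness condition.
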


Hereafter, $\phi_{t}(\cdot)$ and $u_{it}$ are assumed to be known.

\subsubsection{Step 2: Identification of Control Function and TFP}

From (\ref{omega}), the control function $\omega_{it}=\mathbb{M}_{t}^{-1}(m_{it},w_{it},u_{it})$
satisfies 
\begin{align}
\mathbb{M}_{t}^{-1}(m_{it},w_{it},u_{it}) & =\bar{h}_{t}\left(m_{it-1},w_{it-1},u_{it-1},z_{it-1}^{h}\right)+\eta_{it},\label{eq:model}
\end{align}
where $\bar{h}_{t}\left(m_{it-1},w_{it-1},u_{it-1},z_{it-1}^{h}\right):=h_{t}\left(\mathbb{M}_{t-1}^{-1}(m_{it-1},w_{it-1},u_{it-1}),z_{it-1}^{h}\right)$.
As $\partial\mathbb{M}_{t}^{-1}/\partial m_{it}>0$, given the values
of $(w_{it},u_{it})$, the dependent variable in (\ref{eq:model})
is a monotonic transformation of $m_{it}$. Therefore, the model (\ref{eq:model})
belongs to a class of transformation models, the identification of
which \citet{chiappori2015nonparametric} analyze.

We make the following assumption, which corresponds to Assumptions
A1--A3, A5, and A6 in \citet{chiappori2015nonparametric}.\footnote{Assumption \ref{A-1} (c) corresponds to Assumption A4 of \citet{chiappori2015nonparametric}.}


\begin{assumption} \label{A-3} 
(a) The distribution $G_{\eta_{t}}(\cdot)$ of $\eta_{it}$ is absolutely continuous with a density function $g_{\eta_{t}}(\cdot)$ that is continuous on its support. 
(b) $\eta_{it}$ is independent of $v_{it}:=(w_{it},u_{it},m_{it-1},w_{it-1},u_{it-1},z_{it-1}^{h})'$ with $E[\eta_{it}|v_{it}]=0$. 
(c) The support $\mathcal{M}\times \mathcal{V}$ of the vector $(m_{it}, v_{it}')$ is an open, connected subset of Euclidean space, and the normalization point defined in Assumption \ref{A-2} lies in the interior of $\mathcal{M}\times \mathcal{V}$. 
(d) The support $\varOmega$ of $\omega_{it}$ is an interval $[\text{\underbar{\ensuremath{\omega}}},\bar{\omega}]\subset\mathbb{R}$, where $\text{\underbar{\ensuremath{\omega}}}<0$ and $1<\bar{\omega}$. 
(e) $h(\cdot)$ is continuously differentiable with respect to $\left(\omega,z_{h}\right)$ on $\Omega\times\mathcal{Z}_{h}$. 
(f) Let $\mathcal{S}_{m,w,u}$ denote the joint support of $(m_{it}, w_{it}, u_{it})$. The set 
\[
\mathcal{A}_{q_{t-1}}:=\left\{ (m_{it-1},w_{it-1},u_{it-1},z_{it-1}^{h})\in\text{Proj}_{v}(\mathcal{M}\times \mathcal{V}):\frac{\partial G_{m_{t}|v_{t}}(m_{it}|v_{it})}{\partial q_{it-1}}\neq0\text{ for all }(m_{it},w_{it},u_{it})\in\mathcal{S}_{m,w,u}\right\}
\]
is nonempty for some $q_{it-1}\in\{m_{it-1},k_{it-1},l_{it-1},z_{it-1}^{s},z_{it-1}^{d},u_{it-1},z_{it-1}^{h}\}$, where $\text{Proj}_{v}(\mathcal{M}\times \mathcal{V}) := \left\{ v \in \mathbb{R}^{d_v} : \exists m \text{ such that } (m, v) \in \mathcal{M}\times \mathcal{V} \right\}$.
\end{assumption}

We can relax Assumption \ref{A-3}(b) by allowing $z_{it}^{h}$ and $l_{it}$ to correlate with $\eta_{it}$ as discussed in Appendix \ref{app:Alternative-Settings}. The sign restriction in Assumption \ref{A-3}(d) holds without loss of generality because we can choose any two points in place of $\{0,1\}$ on the support of $\omega_{it}$ without changing the essence of our argument.

Assumption \ref{A-3}(c) relaxes the "full support" condition (i.e., that the support is a Cartesian product of intervals) typically required in identification proofs of this type \citep[e.g.,][]{chiappori2015nonparametric}. By requiring only that the support $\mathcal{M}\times \mathcal{V}$ be an open, connected set, we accommodate data structures where inputs are highly persistent (e.g., $k_{it} \approx k_{it-1}$), which often results in "diagonal" support shapes. As long as the support is connected, identification is achieved via line integration as detailed in the proof.

Assumption \ref{A-3}(f) can be interpreted as a generalized rank condition. Suppose $g_{\eta_{t}}\left(\eta_{it}\right)>0$ for all $\eta_{it}\in\mathbb{R}$. Then, as will be shown below in (\ref{eq:dG3}), Assumption \ref{A-3}(f) holds if either 
\begin{align*}
\frac{\partial\bar{h}_{t}\left(\tilde{m}_{it-1},\tilde{w}_{it-1},\tilde{u}_{it-1},\tilde{z}_{it-1}^{h}\right)}{\partial\tilde{z}_{it-1}^{h}} & \neq0\text{ or }\\
\frac{\partial h_{t}\left(\mathbb{M}_{t-1}^{-1}(\tilde{m}_{it-1},\tilde{w}_{it-1},\tilde{u}_{it-1}),\tilde{z}_{it-1}^{h}\right)}{\partial\omega_{it-1}}\frac{\partial\mathbb{M}_{t-1}^{-1}(\tilde{m}_{it-1},\tilde{w}_{it-1},\tilde{u}_{it-1})}{\partial q_{it-1}} & \neq0
\end{align*}
holds for some vector of lagged variables $\left(\tilde{m}_{it-1},\tilde{w}_{it-1},\tilde{u}_{it-1},\tilde{z}_{it-1}^{h}\right)$ in the projection $\text{Proj}_{v}(\mathcal{M}\times \mathcal{V})$ and some instrument $q_{it-1}\in\{m_{it-1},k_{it-1},l_{it-1},z_{it-1}^{s},z_{it-1}^{d},u_{it-1},z_{it-1}^{h}\}$. The latter condition is equivalent to (i) $\omega_{it-1}$ has a causal impact on $\omega_{it}$ ($\partial h/\partial\omega_{it-1}\neq0$) and (ii) $q_{it-1}$ has a causal impact on $\omega_{it-1}$ ($\partial\mathbb{M}_{it-1}^{-1}/\partial q_{it-1}\neq0$). These conditions must be satisfied for at least one exogenous variable $q_{it-1}$ at some point in the support.

Proposition \ref{P-step1} shows that the control function is identified from the distribution of $(m_{it},v_{it})$. The identification of TFP also follows from Proposition \ref{P-step1} as $\omega_{it}=\mathbb{M}_{t}^{-1}(m_{it},w_{it},u_{it})$. 
\begin{prop}
\label{P-step1} Suppose that Assumptions \ref{A-0}--\ref{A-3} hold. Then, we can identify $\mathbb{M}_{t}^{-1}(\cdot)$ up to scale and location and $G_{\eta_{t}}(\cdot)$ up to the scale normalization of $\eta_{it}$. 
\end{prop}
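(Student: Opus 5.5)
Taking $u_{it}$ and $u_{it-1}$ as known from Proposition~\ref{lem:step2}, model (\ref{eq:model}) is a nonparametric transformation model in the sense of \citet{chiappori2015nonparametric}. The dependent variable is $T(m_{it};w_{it},u_{it}):=\mathbb{M}_t^{-1}(m_{it},w_{it},u_{it})$, which is strictly increasing in $m_{it}$ by Assumption~\ref{A-1}(c). The regression function is $\bar h_t(\cdot)$ of the lagged vector, and the error $\eta_{it}$ is independent of $v_{it}$ by Assumption~\ref{A-3}(b). I will adapt their derivative-ratio argument, but replace the full-support step by line integration over the connected support of Assumption~\ref{A-3}(c).

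\textbf{Step 1: derivative ratios.} Independence and monotonicity give
\begin{equation*}
G_{m_t|v_t}(m\mid v)=G_{\eta_t}\bigl(\mathbb{M}_t^{-1}(m,w,u)-\bar h_t(m_{-1},w_{-1},u_{-1},z^h_{-1})\bigr).
\end{equation*}
The left-hand side is identified from the data. Differentiating with respect to $m$, to each component $s$ of $(w,u)$, and to a lagged instrument $q_{it-1}$ gives
\begin{equation*}
\partial_m G=g_{\eta_t}(\cdot)\,\partial_m\mathbb{M}_t^{-1},\qquad
\partial_s G=g_{\eta_t}(\cdot)\,\partial_s\mathbb{M}_t^{-1},\qquad
\partial_{q_{t-1}} G=-g_{\eta_t}(\cdot)\,\partial_{q_{t-1}}\bar h_t .
\end{equation*}
Fix a lagged point in $\mathcal{A}_{q_{t-1}}$ (nonempty by Assumption~\ref{A-3}(f)). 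There $\partial_{q_{t-1}}G\neq0$ for all $(m,w,u)$, so the density factor cancels in the ratio
\begin{equation*}
\frac{\partial_m G(m\mid v)}{\partial_{q_{t-1}} G(m\mid v)}=-\frac{\partial_m\mathbb{M}_t^{-1}(m,w,u)}{\partial_{q_{t-1}}\bar h_t(\text{lagged point})}.
\end{equation*}
The same holds with $\partial_s$ in place of $\partial_m$. The denominator $\partial_{q_{t-1}}\bar h_t$ depends only on the fixed lagged point, so it is a single unknown constant $\kappa\neq 0$. Hence the gradient $\nabla_{(m,w,u)}\mathbb{M}_t^{-1}$ is identified up to the common factor $\kappa$ on all of $\mathcal{S}_{m,w,u}$.

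\textbf{Step 2: integrate and normalize.} Because the support is open and connected (Assumption~\ref{A-3}(c)), any point can be joined to the normalization point $(m^*_{t0},w^*_t,u^*_t)$ by a piecewise smooth path inside the support. Integrating the identified gradient along that path recovers $\mathbb{M}_t^{-1}(\cdot)/\kappa$ up to an additive constant. The result does not depend on the path, because the identified field is the gradient of a genuine $C^1$ function (Assumption~\ref{A-1}(c)). The location normalization $c_{2t}=0$ fixes the additive constant. The scale normalization $c_{3t}=1$ at $(m^*_{t1},w^*_t,u^*_t)$ then pins down $\kappa$, since the unnormalized difference between the two points is nonzero by strict monotonicity in $m$. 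So $\mathbb{M}_t^{-1}$ is identified under Assumption~\ref{A-2}, i.e.\ up to location and scale in general, and $\omega_{it}=\mathbb{M}_t^{-1}(m_{it},w_{it},u_{it})$ follows.

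\textbf{Step 3: $\bar h_t$ and $G_{\eta_t}$.} With $\mathbb{M}_t^{-1}$ known, the composite $\omega_{it}$ is observed. The restriction $E[\eta_{it}\mid v_{it}]=0$ identifies $\bar h_t$ as $E[\omega_{it}\mid v_{it}]$. Then $\eta_{it}=\omega_{it}-\bar h_t$ is identified, and so is its distribution $G_{\eta_t}$, up to the same scale normalization. As a cross-check, $G_{\eta_t}$ can also be read off the relation $G_{m_t|v_t}=G_{\eta_t}(\cdot)$ along $m$ with $v$ fixed; Assumption~\ref{A-3}(d) guarantees that the normalization values $0,1$ lie in the support.

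\textbf{Main obstacle.} I expect the hard step to be going from local identification of the gradient to global identification of the function on a non-rectangular support. Two things need checking. First, the single lagged point from Assumption~\ref{A-3}(f) works uniformly over all of $\mathcal{S}_{m,w,u}$; this is exactly why that assumption is stated "for all $(m,w,u)$". Second, the density factor $g_{\eta_t}$ must not vanish where we take ratios; it is nonzero wherever $\partial_{q_{t-1}}G\neq 0$. If a single lagged point were not available, I would patch overlapping local identifications together along the connected support.
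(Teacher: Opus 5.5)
Your proposal is correct and follows essentially the same route as the paper: the conditional-CDF representation, derivative ratios at a fixed lagged point in $\mathcal{A}_{q_{t-1}}$, path-independent line integration over the connected support, and recovery of $\bar h_t$ and $G_{\eta_t}$ via $E[\omega_{it}\mid v_{it}]$. The only cosmetic difference is the order of steps: the paper first pins down $\partial\bar h_t/\partial q_{t-1}$ by integrating the $m$-ratio between $m^*_{t0}$ and $m^*_{t1}$, whereas you carry $\kappa$ through the line integral and fix it at the end.
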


\begin{proof}
The proof of Proposition \ref{P-step1} follows the identification strategy of \citet{chiappori2015nonparametric}, adapted to the connected support assumption.

In view of equation (\ref{eq:model}), the conditional distribution of $m_{it}$ given $v_{it}$ satisfies
\begin{align*}
G_{m_{t}|v_{t}}(m_{it}|v_{it}) & =G_{\eta_{t}}\left(\mathbb{M}_{t}^{-1}(m_{it},w_{it},u_{it})-\bar{h}_{t}\left(m_{it-1},w_{it-1},u_{it-1},z_{it-1}^{h}\right)\right),
\end{align*}
where the equality follows from $\eta_{it} \perp v_{it}$ in Assumption \ref{A-3}(b). Let $q_{it}\in\{m_{it},k_{it},l_{it},z_{it}^{s},z_{it}^{d},u_{it}\}$ and $q_{it-1}\in\{m_{it-1},k_{it-1},l_{it-1},z_{it-1}^{s},z_{it-1}^{d},z_{it-1}^{h},u_{it-1}\}$. The derivatives of $G_{m_{t}|v_{t}}(m_{it}|v_{it})$ are
\begin{align}
\frac{\partial G_{m_{t}|v_{t}}\left(m_{it}|v_{it}\right)}{\partial q_{it}} & =\frac{\partial\mathbb{M}_{t}^{-1}(m_{it},w_{it},u_{it})}{\partial q_{it}}g_{\eta_{t}}\left(\eta_{it}\right),\label{eq:dG2}\\
\frac{\partial G_{m_{t}|v_{t}}\left(m_{it}|v_{it}\right)}{\partial q_{it-1}} & =-\frac{\partial\bar{h}_{t}\left(m_{it-1},w_{it-1},u_{it-1},z_{it-1}^{h}\right)}{\partial q_{it-1}}g_{\eta_{t}}\left(\eta_{it}\right),\label{eq:dG3}
\end{align}
where $\eta_{it}=\mathbb{M}_{t}^{-1}(m_{it},w_{it},u_{it})-\bar{h}_{t}\left(m_{it-1},w_{it-1},u_{it-1},z_{it-1}^{h}\right)$.
Using Assumption \ref{A-3}(f), we can choose $q_{it-1}$ and $(\tilde{m}_{it-1},\tilde{w}_{it-1},\tilde{u}_{it-1},\tilde{z}_{it-1}^{h})\in\mathcal{A}_{q_{t-1}}$ such that both sides of (\ref{eq:dG3}) are non-zero. Dividing (\ref{eq:dG2}) by (\ref{eq:dG3}) yields:
\begin{align}
\frac{\partial\mathbb{M}_{t}^{-1}(m_{it},w_{it},u_{it})}{\partial q_{it}} & =-\frac{\partial\bar{h}_{t}\left(\tilde{m}_{it-1},\tilde{w}_{it-1},\tilde{u}_{it-1},\tilde{z}_{it-1}^{h}\right)}{\partial q_{it-1}} \nonumber\\
&\quad \times\frac{\partial G_{m_{t}|v_{t}}\left(m_{it}|w_{it},u_{it},\tilde{m}_{it-1},\tilde{w}_{it-1},\tilde{u}_{it-1},\tilde{z}_{it-1}^{h}\right)/\partial q_{it}}{\partial G_{m_{t}|v_{t}}\left(m_{it}|w_{it},u_{it},\tilde{m}_{it-1},\tilde{w}_{it-1},\tilde{u}_{it-1},\tilde{z}_{it-1}^{h}\right)/\partial q_{it-1}}.\label{eq:dM/dq}
\end{align}
Evaluating (\ref{eq:dM/dq}) for $q_{it}=m_{it}$ and applying the normalization Assumption \ref{A-2}, we identify the scaling factor:
\begin{align*}
1 & =\mathbb{M}_{t}^{-1}(m_{t1}^{*},w_t^*,u_t^*)-\mathbb{M}_{t}^{-1}(m_{t0}^{*},w_t^*,u_t^*)=-\frac{1}{S_{q_{t-1}}}\frac{\partial\bar{h}_{t}\left(\tilde{m}_{it-1},\tilde{w}_{it-1},\tilde{u}_{it-1},\tilde{z}_{it-1}^{h}\right)}{\partial q_{t-1}}, 
\end{align*}
where $$S_{q_{t-1}}:=\left(\int_{m_{t0}^*}^{m_{t1}^*} 
\frac{\partial G_{m_{t}|v_{t}}\left(m|w_t^*,u^*_t,\tilde{m}_{it-1},\dots\right)/\partial q_{it}}{\partial G_{m_t|v_{t}}\left( m|w_t^*,u^*_t,\tilde{m}_{it-1},\dots\right)/\partial q_{it-1}} dm\right)^{-1}.
$$ Thus, we identify $\partial\bar{h}_{t}/\partial q_{it-1}=-S_{q_{t-1}}$. Substituting this back into (\ref{eq:dM/dq}), the partial derivative for any argument $q_{it}$ is identified as:
\begin{align}
\frac{\partial\mathbb{M}_{t}^{-1}(m_{it},w_{it},u_{it})}{\partial q_{it}} & =S_{q_{t-1}}\frac{\partial G_{m_{t}|v_{t}}\left(m_{it}|w_{it},u_{it},\tilde{m}_{it-1},\dots\right)/\partial q_{t}}{\partial G_{m_{t}|v_{t}}\left(m_{it}|w_{it},u_{it},\tilde{m}_{it-1},\dots\right)/\partial q_{t-1}}.\label{eq:dM_dq_identified}
\end{align}

To recover the level of the function $\mathbb{M}_{t}^{-1}(\cdot)$, we rely on the Connected Support Assumption \ref{A-3}(c). Let $\mathbf{x}_0 := (m_{t0}^{*},k_{t}^{*},l_{t}^{*},z_{t}^{s*},z_{t}^{d*},u_{t}^{*})$ be the normalization point where $\mathbb{M}_{t}^{-1}(\mathbf{x}_0) = 0$. For any target point $\mathbf{x} := (m_{t},k_{t},l_{t},z_{t}^{s},z_{t}^{d},u_{t})$ in the interior of the support $\text{int}(\mathcal{S}_{m,w,u})$, there exists a piecewise smooth path $\gamma: [0,1] \to \text{int}(\mathcal{S}_{m,w,u})$ such that $\gamma(0) = \mathbf{x}_0$ and $\gamma(1) = \mathbf{x}$, lying entirely within the support.

By the Fundamental Theorem of Line Integrals \citep[][p. 1075]{stewart2012calculus}, we identify $\mathbb{M}_{t}^{-1}(\mathbf{x})$ as:
\begin{align}
\mathbb{M}_{t}^{-1}(\mathbf{x}) & = \mathbb{M}_{t}^{-1}(\mathbf{x}_0) + \int_{\gamma} \nabla \mathbb{M}_{t}^{-1}(\mathbf{z}) \cdot d\mathbf{z} \nonumber \\
& = 0 + \int_{0}^{1} \left[ \sum_{q \in \{m, k, l, z^s, z^d, u\}} \frac{\partial \mathbb{M}_{t}^{-1}(\gamma(\tau))}{\partial q_{it}} \frac{d\gamma_{q}(\tau)}{d\tau} \right] d\tau, \label{eq:M_t_line_integral}
\end{align}
where $\gamma_q(\tau)$ denotes the $q$-th component of $\gamma(\tau)$. Since the vector field $\nabla \mathbb{M}_{t}^{-1}$ is conservative, the integral is path-independent. This ensures uniquely identified values for $\mathbb{M}_{t}^{-1}(\cdot)$ regardless of the specific path chosen, provided the path remains within the connected support of the data.

Finally, from $\omega_{it}=\mathbb{M}_{t}^{-1}(m_{it},w_{it},u_{it})$, we identify $\bar{h}_{t}(\cdot) = E[\omega_{it}|\cdot]$ and the residual $\eta_{it}$, allowing for the identification of the distribution $G_{\eta_{t}}(\cdot)$.
\end{proof}

\begin{rem}[Support Requirements and Persistence]
\label{rem:limited_support}
Assumption \ref{A-3}(f) and the integration in (\ref{eq:M_t_line_integral}) require the support of observables to be \emph{connected}, relaxing the "full support" condition discussed in \cite{chiappori2015nonparametric}. While inputs like capital are persistent and have bounded conditional support, identification remains valid if the union of these supports forms a connected set. Local variation identifies partial derivatives pointwise; global identification is then achieved by integrating ("stitching") these derivatives along any path connecting the normalization point to the point of interest. Thus, high persistence does not preclude identification provided there are no isolated "islands" of data. 
\end{rem}

\subsubsection{Step 3: Identification of Production Function, Markup, and Demand Function}

The final step identifies the production function, markup, and demand function. From $\text{\ensuremath{\phi_{t}}(\ensuremath{m_{it}},\ensuremath{w_{it}},\ensuremath{u_{it}})}=\varphi_{t}\left(f_{t}(x_{it},z_{it}^{s})+\mathbb{M}_{t}^{-1}\left(m_{it},w_{it},u_{it}\right),z_{it}^{d},u_{it}\right)$ and the monotonicity of $\varphi_{t}$, differentiating $\varphi_{t}^{-1}(\phi_{t}(m_{it},w_{it},u_{it}),z_{it}^{d},u_{it})=f_{t}(x_{it},z_{it}^{s})+\mathbb{M}_{t}^{-1}\left(m_{it},w_{it},u_{it}\right)$ with respect to $q_{it}^{s}\in\{m_{it},k_{it},l_{it},z_{it}^{s}\}$ and $q_{it}^{d}\in\{z_{it}^{d},u_{it}\}$ gives: 
\begin{align}
\frac{\partial\varphi_{t}^{-1}(r_{it},z_{it}^{d},u_{it})}{\partial r_{it}}\frac{\partial\phi_{t}(m_{it},w_{it},u_{it})}{\partial q_{it}^{s}} & =\frac{\partial f_{t}(x_{it},z_{it}^{s})}{\partial q_{it}^{s}}+\frac{\partial\mathbb{M}_{t}^{-1}\left(m_{it},w_{it},u_{it}\right)}{\partial q_{it}^{s}},\label{eq:phi_qt}\\
\frac{\partial\varphi_{t}^{-1}(r_{it},z_{it}^{d},u_{it})}{\partial r_{it}}\frac{\partial\phi_{t}(m_{it},w_{it},u_{it})}{\partial q_{it}^{d}} & =\frac{\partial\mathbb{M}_{t}^{-1}\left(m_{it},w_{it},u_{it}\right)}{\partial q_{it}^{d}}-\frac{\partial\varphi_{t}^{-1}(r_{it},z_{it}^{d},u_{it})}{\partial q_{it}^{d}}.\label{eq:phi_zt}
\end{align}
Note that $\partial\varphi_{t}^{-1}(r_{it},z_{it}^{d},u_{it})/\partial r_{it}=\left(\partial\varphi_{t}(y_{it},z_{it}^{d},u_{it})/\partial y_{it}\right)^{-1}$ represents the markup from (\ref{eq:markup_foc}). If the markup $\partial\varphi_{t}^{-1}(r_{it},z_{it}^{d},u_{it})/\partial r_{it}$ were known, then equations (\ref{eq:phi_qt}) and (\ref{eq:phi_zt}) could identify $\partial f_{t}(x_{it},z_{it}^{s})/\partial q_{it}^{s}$ and $\partial\varphi_{t}^{-1}(r_{it},z_{it}^{d},u_{it})/\partial q_{it}^{d}$ given that $\mathbb{M}_{t}^{-1}(m_{it},w_{it},u_{it})$ is identified. However, since the markup is unknown, identification requires further restriction. Following \citet*{doraszelski2013,doraszelski2018measuring}  and \citet{gandhi2020identification}, we use the first-order condition with respect to the material as an additional restriction. 

\begin{assumption} \label{A-FOC} The first-order condition with respect to material for the profit maximization problem (\ref{eq:profit_maximization})
\begin{equation}
\frac{\partial f_{t}(x_{it},z_{it}^{s})}{\partial m_{it}}=\frac{\partial\varphi_{t}^{-1}(r_{it},z_{it}^{d},u_{it})}{\partial r_{it}}\text{\ensuremath{\frac{\exp(p_{t}^{m}+m_{it})}{\exp\left(r_{it}\right)}}}\label{eq:foc_m}
\end{equation}
holds for all firms. \end{assumption}


Rearranging the first-order condition, we obtain the markup equation used by \citet{de2012markups}:
\begin{equation}
\frac{\partial\varphi_{t}^{-1}(r_{it},z_{it}^{d},u_{it})}{\partial r_{it}}=\frac{\partial f_{t}(x_{it},z_{it}^{s})/\partial m_{it}}{\exp(p_{t}^{m}+m_{it})/\exp\left(r_{it}\right)}.\label{eq:DLW}
\end{equation}

We establish the following proposition. 
\begin{prop}
\label{P-step3} Suppose that Assumptions \ref{A-0}--\ref{A-FOC} hold. Then, we can identify $\varphi_{t}^{-1}(\cdot)$, $f_{t}(\cdot)$, and $\psi_{t}(\cdot)$ up to scale and location and each firm's markup $\partial\varphi_{t}^{-1}(r_{it},z_{it}^{d},u_{it})/\partial r_{it}$ up to scale.
\end{prop}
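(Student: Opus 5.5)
We take $\phi_t$, $u_{it}$ (Proposition~\ref{lem:step2}) and $\mathbb{M}_t^{-1}$ under the normalization of Assumption~\ref{A-2} (Proposition~\ref{P-step1}) as known. The plan is to combine the material first-order condition (\ref{eq:foc_m}) with equation (\ref{eq:phi_qt}) for $q_{it}^s=m_{it}$. Write $\mu_{it}:=\partial\varphi_t^{-1}(r_{it},z_{it}^d,u_{it})/\partial r_{it}$ for the markup and $s_{it}^m:=\exp(p_t^m+m_{it}-r_{it})$ for the observed material share (Assumption~\ref{A-data}(b)). Substituting $\partial f_t/\partial m=\mu_{it}s_{it}^m$ into (\ref{eq:phi_qt}) gives
\begin{equation*}
\mu_{it}\left(\frac{\partial\phi_t(m_{it},w_{it},u_{it})}{\partial m_{it}}-s_{it}^m\right)=\frac{\partial\mathbb{M}_t^{-1}(m_{it},w_{it},u_{it})}{\partial m_{it}}.
\end{equation*}
The right-hand side is identified and strictly positive by Assumption~\ref{A-1}(c), and $\mu_{it}>0$. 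Hence the bracket is nonzero, and the markup is identified pointwise as $\mu_{it}=\big(\partial\mathbb{M}_t^{-1}/\partial m\big)/\big(\partial\phi_t/\partial m-s^m_{it}\big)$. Because $\mathbb{M}_t^{-1}$ is pinned down only up to the scale $b_t$ in (\ref{eq:equivalence}), $\mu_{it}$ is identified up to scale. This division is the key step, and the nonvanishing of the denominator is exactly what profit maximization together with $\partial\mathbb{M}_t^{-1}/\partial m>0$ delivers.

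With $\mu_{it}$ in hand, (\ref{eq:phi_qt}) gives every partial derivative of $f_t$:
\begin{equation*}
\frac{\partial f_t}{\partial q^s}=\mu\,\frac{\partial\phi_t}{\partial q^s}-\frac{\partial\mathbb{M}_t^{-1}}{\partial q^s},\qquad q^s\in\{m,k,l,z^s\}.
\end{equation*}
Here $\mu$, as a function of $(m,w,u)$, is identified because $r=\phi_t(m,w,u)$. These derivatives must not depend on $(z^d,u)$; this is an overidentifying restriction, and we may evaluate them at any $(z^d,u)$, say $(z^{d*}_t,u_t^*)$. We then recover $f_t$ in levels by the line-integration argument of Proposition~\ref{P-step1}, starting from $f_t(m_{t0}^*,k_t^*,l_t^*,z_t^{s*})=c_{1t}=0$ and using the connected support in Assumption~\ref{A-3}(c). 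Since $f_t$ is identified up to $(a_{1t},b_t)$, the output elasticities are identified up to scale.

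Next we identify $\varphi_t^{-1}$ from the level equation $\varphi_t^{-1}(\phi_t(m,w,u),z^d,u)=f_t(x,z^s)+\mathbb{M}_t^{-1}(m,w,u)$. For a given $(r,z^d,u)$ in the support, we pick any $(m,k,l,z^s)$ with $\phi_t(m,w,u)=r$ and evaluate the right-hand side. The result is well defined because the equation holds for the true structure. Identified output then follows as $y_{it}=\varphi_t^{-1}(r_{it},z_{it}^d,u_{it})$, so $y_{it}$, $f_t$, and $\varphi_t^{-1}$ share the location $a_{1t}+a_{2t}$ and scale $b_t$. Alternatively, (\ref{eq:phi_zt}) identifies $\partial\varphi_t^{-1}/\partial q^d$ and one can integrate, which gives the same answer.

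Finally, $\varphi_t(\cdot,z^d,u)$ is obtained by inverting $\varphi_t^{-1}$ in $r$, which is possible by the strict monotonicity in Assumption~\ref{A-1}(b). From $\varphi_t(y,z^d,u)=\psi_t(y,z^d,u)+y$ we get $\psi_t(y,z^d,u)=\varphi_t(y,z^d,u)-y$, and the observed log price is $p_{it}=r_{it}-y_{it}$. Because $y$ is identified only up to location and scale, $\psi_t$ and prices inherit identification up to location and scale, and the markup $1/(\partial\varphi_t/\partial y)$ inherits identification up to scale. I expect the main obstacle to be the support bookkeeping rather than the algebra. The level construction of $\varphi_t^{-1}$ requires that every $(r,z^d,u)$ of interest be attained by some $(m,w)$ in the support. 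The integration for $f_t$ requires connectedness of the projected support of $(m,k,l,z^s)$, which follows from connectedness of $\mathcal{M}\times\mathcal{V}$ since continuous projections preserve connectedness.
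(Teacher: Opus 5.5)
Your proposal is correct and follows the paper's proof essentially step for step. The markup comes from combining (\ref{eq:phi_qt}) at $q^s=m$ with the material first-order condition; the gradient of $f_t$ is then read off (\ref{eq:phi_qt}) and line-integrated from the normalization point over the connected support; $\varphi_t^{-1}$ is obtained from the level equation on the sets $B_t(r,z^d,u)$, and $\psi_t=\varphi_t-y$. Your extra remarks (why $\partial\phi_t/\partial m-s^m_{it}$ cannot vanish, why the level construction is well defined, and connectedness of the projected support) make explicit what the paper leaves implicit but do not change the argument.
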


\begin{proof}
From (\ref{eq:phi_qt}) and (\ref{eq:foc_m}), the markup $\partial\varphi_{t}^{-1}(r_{it},z_{it}^{d},u_{it})/\partial r_{it}$ is identified for each firm as
\begin{equation}
\frac{\partial\varphi_{t}^{-1}(r_{it},z_{it}^{d},u_{it})}{\partial r_{it}}=\frac{\partial\mathbb{M}_{t}^{-1}\left(m_{it},w_{it},u_{it}\right)}{\partial m_{it}}\left(\frac{\partial\phi_{t}(m_{it},w_{it},u_{it})}{\partial m_{it}}-\frac{\exp(p_{t}^{m}+m_{it})}{\exp\left(r_{it}\right)}\right)^{-1}.\label{eq:markup_identified}
\end{equation}
From $\phi_{t}$ and (\ref{eq:markup_identified}), the markup is expressed as a function of $(m_{it},w_{it},u_{it})$ as
\begin{align}
\mu_{t}(m_{it},w_{it},u_{it}) & :=\frac{\partial\varphi_{t}^{-1}(\phi_{t}(m_{it},w_{it},u_{it}),z_{it}^{d},u_{it})}{\partial r_{it}},\label{eq:markup_function}
\end{align}
which we can identify from data and the identified markups. Substituting (\ref{eq:markup_function}) into (\ref{eq:phi_qt}), we identify $\partial f_{t}(x_{it},z_{it}^{s})/\partial q_{it}$ for $q_{it}^{s}\in\{m_{it},k_{it},l_{it},z_{it}^{s}\}$ as follows:
\begin{align}
\frac{\partial f_{t}(x_{it},z_{it}^{s})}{\partial q_{it}} & =\mu_{t}(m_{it},w_{it},u_{it})\frac{\partial\phi_{t}(m_{it},w_{it},u_{it})}{\partial q_{it}}-\frac{\partial\mathbb{M}_{t}^{-1}\left(m_{it},w_{it},u_{it}\right)}{\partial q_{it}}.\label{eq:output_elasticities_pro2}
\end{align}
Consequently, the gradient field $\nabla f_{t}$ is identified everywhere on the support. 

To recover the production function level $f_{t}(x_{t},z_{t}^{s})$ from these identified derivatives, we apply the connected support condition. Let $\mathbf{v}_{it} := (m_{it}, k_{it}, l_{it}, z_{it}^s)$ denote the vector of production function arguments. Let $\mathbf{v}^* := (m_{t0}^{*}, k_{t}^{*}, l_{t}^{*}, z_{t}^{s*})$ be the normalization point where $f_t(\mathbf{v}^*) = 0$ (Assumption \ref{A-2}). For any point $\mathbf{v}_{it}$ in the connected support, there exists a path $\gamma$ connecting $\mathbf{v}^*$ to $\mathbf{v}_{it}$. By the Fundamental Theorem of Line Integrals, we identify $f_t$ as:
\begin{align}
f_{t}(\mathbf{v}_{it}) & = f_t(\mathbf{v}^*) + \int_{\gamma} \nabla f_{t}(\mathbf{z}) \cdot d\mathbf{z} \nonumber \\
& = 0 + \int_{0}^{1} \left[ \sum_{q \in \{m, k, l, z^s\}} \frac{\partial f_{t}(\gamma(\tau))}{\partial q_{it}} \frac{d\gamma_{q}(\tau)}{d\tau} \right] d\tau. \label{eq:f_line_integral}
\end{align}

Let $\mathcal{R}:=\{r_{t}:r_{t}=\phi_{t}(m_{t},w_{t},u_{t})\ \text{{for}\ \text{{some}\ }}(m_{t},w_{t},u_{t})\in\mathcal{X}\times\mathcal{Z}\times[0,1]\}$ be the support of $r_{t}$. For given $(r_{t},z_{t}^{d})\in\mathcal{R}\times\mathcal{Z}_{d}$, $B_{t}(r_{t},z_{t}^{d},u_{t}):=\{\left(x_{t},z_{t}^{s}\right)\in\mathcal{X}\times\mathcal{Z}_{s}:\phi_{t}(x_{t},z_{t}^{s},z_{t}^{d},u_{t})=r_{t}\}$ is non-empty by the construction of $\mathcal{R}$. Then, because $f_{t}(x_{t},z_{t}^{s})$ and $\mathbb{M}_{t}^{-1}(m_{t},w_{t},u_{t})$ are identified, the output quantity $\varphi_{t}^{-1}(r_{t},z_{t},u_{t})$ for any $(r_{t},z_{t},u_{t})\in\mathcal{R}\times\mathcal{Z}\times[0,1]$ is identified by 
\[
\varphi_{t}^{-1}(r_{t},z_{t}^{d},u_{t})=f_{t}(x_{t},z_{t}^{s})+\mathbb{M}_{t}^{-1}(m_{t},w_{t},u_{t})\text{ for }\left(x_{t},z_{t}^{s}\right)\in B_{t}(r_{t},z_{t}^{d},u_{t}).
\]
By monotonicity, $\varphi_{t}(y_{t},z_{t}^{d},u_{t})$ is identified from $\varphi_{t}^{-1}(r_{t},z_{t}^{d},u_{t})$.  Then, we can identify $\psi_{t}(y_{t},z_{t}^{d},u_{t})$ as $\psi_{t}(y_{t},z_{t}^{d},u_{t})=\varphi_{t}(y_{t},z_{t}^{d},u_{t})-y_{t}$.
\end{proof}
The output quantity and price for individual firms are identified
as $y_{it}=\varphi_{t}^{-1}(r_{it},z_{it}^{d},u_{it})$ and $p_{it}=\psi_{t}(y_{it},z_{it}^{d},u_{it})=r_{it}-\varphi_{t}^{-1}(r_{it},z_{it}^{d},u_{it})$,
respectively. 
\begin{cor}
Suppose that Assumptions \ref{A-0}--\ref{A-FOC} hold. Then, the
production function, the demand function, output quantities, output
prices, and TFP are identified up to scale and location; markups and
output elasticities are identified up to scale. 
\end{cor}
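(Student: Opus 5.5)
The corollary follows by chaining Propositions \ref{lem:step2}, \ref{P-step1}, and \ref{P-step3}, and then tracking how the normalization $(a_{1t},a_{2t},b_t)$ in (\ref{eq:equivalence}) propagates to each object. First, Proposition \ref{lem:step2} identifies $\phi_t$ and each firm's $u_{it}=\phi_t^{-1}(r_{it},m_{it},w_{it})$ exactly, with no normalization, since $u_{it}$ is pinned down as a uniform rank. Second, Proposition \ref{P-step1} identifies $\mathbb{M}_t^{-1}$ under Assumption \ref{A-2}, so TFP $\omega_{it}=\mathbb{M}_t^{-1}(m_{it},w_{it},u_{it})$ is identified. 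Third, Proposition \ref{P-step3} delivers $f_t$, $\varphi_t^{-1}$, $\psi_t$, and the markup function $\mu_t$ under the same normalization. Output quantity is then $y_{it}=\varphi_t^{-1}(r_{it},z_{it}^d,u_{it})$ and price is $p_{it}=r_{it}-y_{it}$. Output elasticities are the gradient $\partial f_t/\partial q^s$ given in (\ref{eq:output_elasticities_pro2}).

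The substantive part is showing that, without Assumption \ref{A-2}, the remaining indeterminacy is exactly that stated. Take any admissible alternative normalization $(c_{1t},c_{2t},c_{3t})$, equivalently $(a_{1t},a_{2t},b_t)$, and use (\ref{eq:equivalence}).
\begin{itemize}
\item Levels: $f_t$ and $\mathbb{M}_t^{-1}$ (hence $\omega_{it}$) shift by constants and scale by $b_t$. Output $y_{it}$ becomes $(a_{1t}+a_{2t})+b_t y^*_{it}$.
\item Price: $p_{it}=r_{it}-y_{it}$ is therefore affine in $p^*_{it}$, since $p_{it}=(1-b_t)r_{it}+b_t p^*_{it}-(a_{1t}+a_{2t})$.
\item Demand function: the same affine rescaling applies to $\psi_t$ as a function of rescaled $y$. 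This is what ``up to scale and location'' means here.
\item Derivatives: location constants vanish, so $\partial\varphi_t^{-1}/\partial r=b_t\,\partial\varphi_t^{*-1}/\partial r$ and $\partial f_t/\partial q=b_t\,\partial f^*_t/\partial q$. Hence markups and output elasticities are identified up to the scale factor $b_t$ alone.
\end{itemize}
I would also check that (\ref{eq:markup_identified}) is consistent with this. Under rescaling, $\partial\mathbb{M}_t^{-1}/\partial m$ scales by $b_t$, while $\partial\phi_t/\partial m$ and the material share are invariant observables. So the identified markup scales by $b_t$, as claimed.

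The main obstacle is the interpretation of ``up to scale and location'' for prices and the demand function. With $y$ rescaled, $r-y$ is not a pure affine image of $p^*$ with unit coefficient on $p^*$; as computed above, a term $(1-b_t)r_{it}$ appears. I would handle this by stating the equivalence class explicitly and pointing to Proposition \ref{prop:equivalence} in Appendix \ref{app:equivalence_class}, rather than claiming invariance of a simpler form. The rest is bookkeeping on top of the three propositions.
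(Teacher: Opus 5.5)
Your proposal is correct and follows the paper's own route. The paper reads the corollary off Propositions~\ref{lem:step2}, \ref{P-step1}, and \ref{P-step3}, setting $\omega_{it}=\mathbb{M}_t^{-1}(m_{it},w_{it},u_{it})$, $y_{it}=\varphi_t^{-1}(r_{it},z_{it}^d,u_{it})$, $p_{it}=r_{it}-y_{it}$, and taking elasticities from (\ref{eq:output_elasticities_pro2}). Your scale bookkeeping for markups and elasticities under (\ref{eq:equivalence}) is right.

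Your flagged subtlety is also correct, and the paper glosses over it. Since $p_{it}=(1-b_t)r_{it}+b_tp^*_{it}-(a_{1t}+a_{2t})$, and likewise $\psi_t(a+by^*)=\psi^*_t(y^*)+(1-b_t)y^*-a$ with $a=a_{1t}+a_{2t}$, ``up to scale and location'' for prices and the demand function can only mean ``modulo $(a_{1t},a_{2t},b_t)$.'' It becomes a pure location shift once $b_t$ is pinned down, as in Proposition~\ref{P-CRS}.
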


\begin{rem}
Examination of the proofs reveals that we have over-identifying restrictions.
In particular, the proof of Proposition \ref{P-step1} goes through
with any choice of $q_{it-1}\in\{k_{it-1},l_{it-1},m_{it-1},z_{it-1}^{s},z_{it-1}^{d},u_{it-1},z_{it-1}^{h}\}$
in (\ref{eq:dM_dq_identified}). Furthermore, the proof of Proposition
\ref{P-step3} does not rely on the restriction in (\ref{eq:phi_zt})
for identifying $\varphi_{t}^{-1}(\cdot)$. These over-identifying
restrictions can be exploited to construct specification tests for
the model and to obtain more efficient estimation. 
\end{rem}

\subsubsection{Comparison to Existing Identification Approaches}

Our setup extends existing identification analyses of production functions
by allowing prices to depend on output through an inverse demand function
and by incorporating transitory unobserved demand shocks as a source
of heterogeneous markups. While our approach builds on existing identification
methods, our use of control functions and the IVQR framework differs
from conventional formulations.

First, because the model includes both productivity and demand shocks,
the standard control function approach cannot account for two sources
of unobserved heterogeneity. We therefore assume that demand shocks
are transitory while productivity shocks are persistent and use the
IVQR approach to identify demand shocks in Step 1.

Second, Step 2 identifies the control function from the dynamics of
input choices without relying on output measures, distinguishing our
approach from the standard control function framework (e.g.,\citealp{olley1996dynamics};
\citealp{levinsohn2003estimating}; \citealp{ackerberg2015identification}).

Third, \citet{ackerberg2015identification} identify a structural
value-added function, $y_{it}=\tilde{f}_{t}(k_{it},l_{it})+\omega_{it}$,
derived under perfect competition from a Leontief production function
$y_{it}=\min\{\tilde{f}_{t}(k_{it},l_{it})+\omega_{it},a+m_{it}\}$.
This formulation is difficult to apply under imperfect competition
because $y_{it}<\tilde{f}_{t}(k_{it},l_{it})+\omega_{it}$ can occur.
The maximum output capacity $y_{it}^{*}:=\tilde{f}_{t}(k_{it},l_{it})+\omega_{it}$
is determined before a firm chooses $m_{it}$ and $y_{it}$, so when
$y_{it}^{*}$ is large---e.g., due to a high productivity shock---a
profit-maximizing firm may produce $y_{it}<y_{it}^{*}$.\footnote{As noted by \citet{ackerberg2015identification}, under perfect competition
$y_{it}<y_{it}^{*}$ implies zero output, so only firms with $y_{it}=y_{it}^{*}$
are observed. Under imperfect competition, however, positive output
with $y_{it}<y_{it}^{*}$ is possible.} Intuitively, when TFP doubles, a firm may avoid a large price decline
by expanding output less than proportionally.

Fourth, our approach differs from \citet{gandhi2020identification}
in the use of the first-order condition for materials. Their method
identifies the material elasticity $\partial f_{t}(x_{it},z_{it}^{s})/\partial m_{it}$
from the first-order condition (\ref{eq:foc_m}) in their first step:
$\ln\frac{\partial f_{t}(x_{it},z_{it}^{s})}{\partial m_{it}}=\ln\text{\ensuremath{\frac{\exp(p_{t}^{m}+m_{it})}{\exp\left(r_{it}\right)}}}+\ln\frac{\partial\varphi_{t}^{-1}(r_{it},z_{it}^{d},u_{it})}{\partial r_{it}}$,
assuming perfect competition or identical constant markups where $\partial\varphi_{t}^{-1}(r_{it},z_{it}^{d},u_{it})/\partial r_{it}$
becomes a common constant for all $i$. Under imperfect competition
with variable markups, when the markup depends on revenue $r_{it}$,
$\partial f_{t}(x_{it},z_{it}^{s})/\partial m_{it}$ cannot be identified
solely from the first-order condition.

\subsection{Extensions to Alternative Settings}

\label{sec:extensions} We demonstrate in the Appendix that our identification
strategy extends to settings where the baseline assumptions are relaxed.
First, we establish identification with endogenous labor inputs in
Appendix \ref{subsec:Endogenous-Labor-Input} by incorporating labor
adjustment costs, which allow lagged labor to serve as a valid instrument.
Second, we address endogenous firm characteristics in Appendix \ref{subsec:Endogenous-Firm-Characteristics}
by employing a control function approach for triangular systems. Third,
Appendix \ref{subsec:Discrete_z} confirms that our identification
results---particularly the instrumental variable quantile regression
in Step 1---remain valid when firm characteristics are discrete rather
than continuous.

We also address the restrictiveness of assuming limited persistence
in demand shocks through two extensions. First, Appendix \ref{app:persistent_demand}
demonstrates that identification holds under persistent shocks (e.g.,
AR(1)) if a lagged supply shifter, such as supply-driven R\&D ($z_{it-1}^{h}$),
serves as the instrument. Because $z_{it-1}^{h}$ shifts productivity
but is excluded from demand, it satisfies the completeness condition
despite serial correlation. Second, Appendix \ref{app:quality} explicitly
models persistent unobserved quality. This captures the persistent
component of demand heterogeneity, allowing $\epsilon_{it}$ to represent
only the remaining high-frequency, transitory fluctuations.


\subsection{Fixing Normalization across Periods}

\label{subsec:Normalization}

Let $(\varphi_{t}^{-1}(\cdot),f_{t}(\cdot),\mathbb{M}_{t}^{-1}(\cdot))$
be a model structure for period $t$ identified by using Propositions
\ref{P-step1} and \ref{P-step3} under the normalization in Assumption
\ref{A-2}. Let $(\varphi_{t}^{*-1}(\cdot),f_{t}^{*}(\cdot),\mathbb{M}_{t}^{*-1}(\cdot))$
denote the true model structure. Since the structure is identified
up to scale and location normalization, there exist period-specific
location and scale parameters $(a_{1t},a_{2t},b_{t})\in\mathbb{R}^{2}\times\mathbb{R}_{+}$
such as 
\begin{align}
\varphi_{t}^{-1}(r_{it},z_{it}^{d},u_{it}) & =a_{1t}+a_{2t}+b_{t}\varphi_{t}^{*-1}(r_{it},z_{it}^{d},u_{it}),\ f_{t}(x_{it},z_{it}^{s})=a_{1t}+b_{t}f_{t}^{*}(x_{it},z_{it}^{s}),\ \nonumber \\
\mathbb{M}_{t}^{-1}(m_{it},w_{it},u_{it}) & =a_{2t}+b_{t}\mathbb{M}_{t}^{*-1}(m_{it},w_{it},u_{it}).\label{eq:norm-t}
\end{align}
Generally speaking, the location and scale normalization differ across
periods---that is, $(a_{1t},a_{2t},b_{t})\neq(a_{1t+1},a_{2t+1},b_{t+1})$.
For the identified objects to be comparable across periods, we need
to fix normalization across periods by assuming that some object in
the model is time-invariant. The subsection discusses these additional
assumptions.\footnote{\citet{klette1996jae} and \citet{de2011product} identify the levels
of markups and output elasticities from revenue data by using a functional
form property of a demand function. They consider a constant elastic
demand function leading to $\varphi_{t}(y_{it},z_{it})=\alpha y_{it}-(\alpha-1)z_{it}$
where $z_{it}$ is an aggregate demand shifter, which is an weighted
average of revenue across firms, and $\alpha$ is an unknown parameter.
This formulation implies $\varphi_{t}^{-1}(r_{it},z_{it})=(1/\alpha)r_{it}+(1-1/\alpha)z_{it}$
and imposes a linear restriction $\partial\varphi_{t}^{-1}(r_{it},z_{it})/\partial r_{it}+\partial\varphi_{t}^{-1}(r_{it},z_{it})/\partial z_{it}=1$,
which fixes the scale parameter $b_{t}$.}

\subsubsection{Scale Normalization}

From (\ref{eq:norm-t}), the ratio of identified markups across two
periods relates to the ratio of true markups as 
\[
\frac{\partial\varphi_{t+1}^{-1}(r_{it+1},z_{it+1}^{d},u_{it+1})/\partial r}{\partial\varphi_{t}^{-1}(r_{it},z_{it}^{d},u_{it})/\partial r}=\frac{b_{t+1}}{b_{t}}\frac{\partial\varphi_{t+1}^{*-1}(r_{it+1},z_{it+1}^{d},u_{it+1})/\partial r}{\partial\varphi_{t}^{*-1}(r_{it},z_{it}^{d},u_{it})/\partial r}.
\]
Therefore, the ability to identify how true markups change over two
periods requires identification of the ratio of scale parameters,
$b_{t+1}/b_{t}$. Similarly, the ratio of identified output elasticities
across periods and that of identified TFP deviation from the mean
are related to their true values via the ratio of scale parameters:
\begin{align*}
\frac{\partial f_{t+1}(x_{it+1},z_{it+1}^{s})/\partial q}{\partial f_{t}(x_{it},z_{it}^{s})/\partial q} & =\frac{b_{t+1}}{b_{t}}\frac{\partial f_{t+1}^{*}(x_{it+1},z_{it+1}^{s})/\partial q}{\partial f_{t}^{*}(x_{it},z_{it}^{s})/\partial q},\frac{\omega_{it+1}-E\left[\omega_{it+1}\right]}{\omega_{it}-E\left[\omega_{it}\right]}=\frac{b_{t+1}}{b_{t}}\left(\frac{\omega_{it+1}^{*}-E\left[\omega_{it+1}^{*}\right]}{\omega_{it}^{*}-E\left[\omega_{it}^{*}\right]}\right)
\end{align*}
for $q\in\{m,k,l,z^{s}\}$.

To identify $b_{t+1}/b_{t}$, we consider the following assumptions.
\begin{assumption} \label{A-period} At least one of the following
conditions (a)--(c) holds. (a) The unconditional variance of $\eta_{it}$
does not change over time. (b) For some known interval $\mathcal{B}$
of $\mathcal{X}$ and some known point $z^{s}\in\mathcal{Z}_{s}$,
the output elasticity of one of the inputs evaluated at $z_{it}^{s}=z^{s}$
does not change over time for all $x\in\mathcal{B}$. (c) For some
known interval $\mathcal{B}$ of $\mathcal{X}$ and some known point
$z^{s}\in\mathcal{Z}_{s}$, the sum of output elasticities of the
three inputs evaluated at $z_{it}^{s}=z^{s}$ does not change over
time for all $x\in\mathcal{B}$. \end{assumption} Assumption \ref{A-period}(a)
holds, for example, if the productivity shock $\omega_{it}$ follows
a stationary process because stationarity requires that the distribution
of $\eta_{it}$ does not change over time. Assumption \ref{A-period}(b)
assumes that the elasticity of output with respect to one input does
not change over time for some known interval; meanwhile, under Assumption
\ref{A-period}(c), returns to scale in production technology does
not change for some known interval of inputs. 
\begin{prop}
\label{P-3} Suppose that Assumptions \ref{A-0}--\ref{A-period}
hold for time $t$ and $t+1$. Then, we can identify the ratio of
markups between two periods $t$ and $t+1$, the ratio of output elasticities
between $t$ and $t+1$, and the ratio of TFP deviation from the mean
between $t$ and $t+1$. 
\end{prop}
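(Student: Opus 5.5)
The whole argument reduces to identifying the single number $b_{t+1}/b_{t}$. By Propositions \ref{P-step1} and \ref{P-step3}, applied separately in periods $t$ and $t+1$ under the normalization of Assumption \ref{A-2}, we know $(\varphi_{s}^{-1},f_{s},\mathbb{M}_{s}^{-1})$ for $s=t,t+1$. By (\ref{eq:norm-t}) these are related to the true structure through the unknown $(a_{1s},a_{2s},b_{s})$. The displayed relations just before Assumption \ref{A-period} show that the three target ratios equal the identified ratios divided by $b_{t+1}/b_{t}$:
\begin{itemize}
\item markups, since $\partial\varphi_{s}^{-1}/\partial r=b_{s}\,\partial\varphi_{s}^{*-1}/\partial r$;
\item output elasticities, since $\partial f_{s}/\partial q=b_{s}\,\partial f_{s}^{*}/\partial q$;
\item TFP deviations from the mean, since $\omega_{is}-E[\omega_{is}]=b_{s}(\omega_{is}^{*}-E[\omega_{is}^{*}])$, the location $a_{2s}$ cancelling after demeaning.
\end{itemize}
So it suffices to show that each of (a)--(c) in Assumption \ref{A-period} identifies $b_{t+1}/b_{t}$. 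Each case reduces to a ratio of identified objects whose true counterpart is time invariant.

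\emph{Case (a).} Proposition \ref{P-step1} identifies $G_{\eta_{s}}$ in the normalized units, so $\mathrm{Var}(\eta_{is})$ is identified for $s=t,t+1$. From (\ref{eq:norm-t}) we have $\omega_{is}=a_{2s}+b_{s}\omega_{is}^{*}$. Substituting into (\ref{omega}), the normalized law of motion is $\omega_{is}=\bar h_{s}(\cdot)+b_{s}\eta_{is}^{*}$, where the location shifts and any rescaling of $\omega_{is-1}$ are absorbed into the redefined $h_{s}$. Hence $\eta_{is}=b_{s}\eta_{is}^{*}$ and $\mathrm{Var}(\eta_{is})=b_{s}^{2}\mathrm{Var}(\eta_{is}^{*})$. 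With the true variance constant over time, $b_{t+1}/b_{t}=\{\mathrm{Var}(\eta_{it+1})/\mathrm{Var}(\eta_{it})\}^{1/2}$, taking the positive root since $b_{s}>0$. The one point to verify is that the $\eta_{it}$ recovered at the end of the proof of Proposition \ref{P-step1} (as $\omega_{it}-E[\omega_{it}|\cdot]$) is exactly $b_{t}\eta_{it}^{*}$. This holds because $E[\omega_{it}|v_{it}]$ is an affine function of $E[\omega_{it}^{*}|v_{it}]$ with the same $(a_{2t},b_{t})$.

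\emph{Cases (b) and (c).} For (b), evaluate the identified elasticity $\partial f_{s}(x,z^{s})/\partial q$ from (\ref{eq:output_elasticities_pro2}) at any $x\in\mathcal{B}$ in both periods. This gives $\partial f_{t+1}/\partial q\,\big/\,\partial f_{t}/\partial q=(b_{t+1}/b_{t})\cdot 1$, which yields the ratio provided the elasticity is nonzero at some point of $\mathcal{B}$; strict monotonicity in $m$ (Assumption \ref{A-1}(a)) guarantees this when $q=m$. For (c), the same argument applies to $\sum_{q\in\{m,k,l\}}\partial f_{s}/\partial q$, which is strictly positive whenever returns to scale are positive, and this has to be assumed or checked. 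Two requirements remain. First, $\mathcal{B}\times\{z^{s}\}$ must lie in the support in both periods, so that (\ref{eq:output_elasticities_pro2}) applies there. Second, the ratio must be constant across $x\in\mathcal{B}$, which gives an overidentifying check but is not needed.

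\emph{Main obstacle.} The hard step is case (a): making sure that the scale of $\eta_{is}$ produced by Proposition \ref{P-step1} is exactly $b_{s}$ times the true scale. Proposition \ref{P-step1} states that $G_{\eta_{s}}$ is identified only ``up to the scale normalization of $\eta$''. I would therefore make explicit that this normalization is the one inherited from $\mathbb{M}_{s}^{-1}$, with no independent free scale. I would do this by tracing the transformation (\ref{eq:norm-t}) through (\ref{eq:model}) and checking that $\bar h_{s}$ absorbs all location shifts but no additional rescaling of the innovation.

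Once $b_{t+1}/b_{t}$ is identified, dividing the identified ratios of markups, output elasticities, and demeaned TFP by it gives the ratios of the true quantities, which proves Proposition \ref{P-3}.
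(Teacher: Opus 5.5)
Your proposal is correct and follows the paper's proof essentially step by step. Like the paper, you reduce everything to identifying $b_{t+1}/b_t$ from (\ref{eq:norm-t}), obtaining $\sqrt{\mathrm{var}(\eta_{t+1})/\mathrm{var}(\eta_t)}$ under (a) and a ratio of identified single or summed output elasticities on $\mathcal{B}$ under (b) and (c); your extra checks (that the recovered innovation is exactly $b_t\eta_{it}^{*}$, that the denominators are nonzero, and that $\mathcal{B}$ lies in both periods' supports) are details the paper leaves implicit rather than a different route.
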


The proof is given in Appendix \ref{subsec:-proof-Pro6}.

\subsubsection{Local Constant Returns to Scale}

\label{subsec:LCRS}

We consider the following local constant returns to scale that strengthens
Assumption \ref{A-period}(c). \begin{assumption} \label{A-LCRS}
(Local Constant Returns to Scale) For some known interval $\mathcal{B}$
of $\mathcal{X}$and some known point $z^{s}\in\mathcal{Z}_{s}$,
the sum of the output elasticities of the three inputs evaluated at
$z_{it}^{s}=z_{t}^{s}$ equals to 1 for all $x\in\mathcal{B}$. \end{assumption}
Assumption \ref{A-LCRS} is stronger than Assumption \ref{A-period}(c)
but weaker than those used in some studies of markup estimation. In
particular, markups are often estimated as the ratio of revenue $\exp(r_{it})$
to total cost $TC_{it}$ under the assumption of a linear cost function
$TC_{it}=MC_{it}y_{it}$ with constant marginal cost $MC_{it}$. Such
a linear cost function requires stronger conditions than Assumption
\ref{A-LCRS}: (i) global constant returns to scale for all $x\in\mathcal{X}$,
(ii) full flexibility of all inputs, and (iii) price-taking behavior
in all input markets. By contrast, under Assumption \ref{A-LCRS},
marginal cost may increase with output, especially in the short run
when dynamic inputs such as capital entail adjustment costs.

With Assumption \ref{A-LCRS}, the scale normalization parameter $b_{t}$
can be identified for all periods as follows. Let $f_{t}(x_{t},z_{t}^{s})$
be the identified production function under Assumption \ref{A-2}
and $f_{t}^{*}(x_{t})$ be the true one where $f_{t}(x_{t},z_{t}^{s})=a_{t}+b_{t}f_{t}^{*}(x_{t},z_{t}^{s})$
from (\ref{eq:norm-t}). For $x\in\mathcal{B}$, we have 
\[
b_{t}=b_{t}\left(\frac{\partial f_{t}^{*}(x,z^{s})}{\partial m}+\frac{\partial f_{t}^{*}(x,z^{s})}{\partial k}+\frac{\partial f_{t}^{*}(x,z^{s})}{\partial l}\right)=\frac{\partial f_{t}(x,z^{s})}{\partial m}+\frac{\partial f_{t}(x,z^{s})}{\partial k}+\frac{\partial f_{t}(x,z^{s})}{\partial l}.
\]
Given that we have identified the scale parameter $b_{t}$ in (\ref{eq:norm-t}),
we have established the following proposition. 
\begin{prop}
\label{P-CRS} Suppose that Assumptions \ref{A-0}--\ref{A-FOC}
and \ref{A-LCRS} hold. Then, $\varphi_{t}(\cdot)$, $f_{t}(\cdot)$,
and $\psi_{t}(\cdot)$ can be identified up to location. The levels
of markup and output elasticities can be identified. Output quantity,
output price, and TFP can be identified up to location. \end{prop}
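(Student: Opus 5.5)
The plan is to start from Propositions~\ref{P-step1} and \ref{P-step3}. Under Assumption~\ref{A-2}, these deliver a structure $(\varphi_t^{-1},f_t,\mathbb{M}_t^{-1})$ related to the truth by (\ref{eq:norm-t}) for unknown $(a_{1t},a_{2t},b_t)\in\mathbb{R}^2\times\mathbb{R}_{++}$. Two consequences of (\ref{eq:norm-t}) are used throughout:
\[
\partial f_t/\partial q=b_t\,\partial f_t^*/\partial q, \qquad \partial\varphi_t^{-1}/\partial r=b_t\,\partial\varphi_t^{*-1}/\partial r .
\]
Locations do not enter derivatives, so all scale-dependent objects carry exactly the factor $b_t$. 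It therefore suffices to identify $b_t$ and then rescale.

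\textbf{Step 1: identify $b_t$.} Fix a point $x\in\mathcal{B}$ together with the known $z^s$ from Assumption~\ref{A-LCRS}. Evaluate the identified gradient of $f_t$ from (\ref{eq:output_elasticities_pro2}) at this point. Summing over $q\in\{m,k,l\}$ and using $\sum_q\partial f_t^*/\partial q=1$ gives
\[
b_t=\sum_{q\in\{m,k,l\}}\partial f_t(x,z^s)/\partial q .
\]
This expression is identified. Its positivity is guaranteed because $b_t>0$, and it does not depend on which $x\in\mathcal{B}$ is chosen, which gives a free overidentifying check. A technical point is that $\mathcal{B}$ (with $z^s$) must lie in the support where $\nabla f_t$ was identified. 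I take this to be implicit in "known interval of $\mathcal{X}$" intersected with the support in Assumption~\ref{A-3}(c).

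\textbf{Step 2: rescale the normalized objects.} Divide each normalized object by $b_t$:
\[
\tilde f_t:=f_t/b_t=f_t^*+a_{1t}/b_t, \qquad \tilde{\mathbb{M}}_t^{-1}:=\mathbb{M}_t^{-1}/b_t, \qquad \tilde\varphi_t^{-1}:=\varphi_t^{-1}/b_t .
\]
Each of these equals the true function up to an additive constant, so each is identified up to location. Markups $\partial\tilde\varphi_t^{-1}/\partial r$ and output elasticities $\partial\tilde f_t/\partial q$ then coincide exactly with their true levels, since additive constants vanish under differentiation. TFP is $\tilde\omega_{it}=\tilde{\mathbb{M}}_t^{-1}(m_{it},w_{it},u_{it})=\omega^*_{it}+a_{2t}/b_t$, identified up to location. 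Output quantity is $\tilde y_{it}=\tilde\varphi_t^{-1}(r_{it},z^d_{it},u_{it})=y^*_{it}+(a_{1t}+a_{2t})/b_t$, also identified up to location.

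\textbf{Step 3: demand-side objects.} Write $c:=(a_{1t}+a_{2t})/b_t$. Inverting $\tilde\varphi_t^{-1}$ in $r$ gives $\tilde\varphi_t(y,z^d,u)=\varphi_t^*(y-c,z^d,u)$, a horizontal shift of the truth. Setting $\tilde\psi_t(y,\cdot):=\tilde\varphi_t(y,\cdot)-y=\psi_t^*(y-c,\cdot)-c$ shows that the inverse demand is identified up to a location shift in both arguments. Output price follows as $\tilde p_{it}=r_{it}-\tilde y_{it}=p^*_{it}-c$, again up to location.

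\textbf{Main obstacle.} Once $b_t$ is pinned down, every remaining step is bookkeeping. The one point needing care is being precise about what "up to location" means for $\varphi_t$ and $\psi_t$. For these functions the location ambiguity is a shift of their argument $y$, coupled with a vertical shift for $\psi_t$, rather than a simple additive constant on the function value. I would state this explicitly, so the claim matches the transformation class described in Proposition~\ref{prop:equivalence}.
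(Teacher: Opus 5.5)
Your proposal is correct and follows the paper's own argument. The paper likewise identifies $b_t$ by summing the identified output elasticities over $\mathcal{B}$ at the known $z^s$, and then obtains all remaining claims by dividing the structure in (\ref{eq:norm-t}) by $b_t$, as in (\ref{eq:norm-tilde})--(\ref{eq:norm-t2}). Your explicit remark that the residual location ambiguity for $\varphi_t$ and $\psi_t$ is a shift of the argument $y$, together with a vertical shift for $\psi_t$, is a useful clarification that the paper leaves implicit.
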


\subsubsection{Location Normalization}

Suppose that scale normalization $b_{t}$ is already identified---for
example, from Proposition \ref{P-CRS}. Define 
\begin{align}
\tilde{\varphi}_{t}^{-1}(r_{it},z_{it}^{d},u_{it}) & :=\varphi_{t}^{-1}(r_{it},z_{it}^{d},u_{it})/b_{t},\ \tilde{f}_{t}(x_{it},z_{it}^{s}):=f_{t}(x_{it},z_{it}^{s})/b_{t},\ \tilde{\omega}_{it}:=\omega_{it}/b_{t},\nonumber \\
\ \tilde{a}_{1t} & :=a_{1t}/b_{t},\text{\ {and}}\ \tilde{a}_{2t}:=a_{2t}/b_{t}.\label{eq:norm-tilde}
\end{align}
Then, (\ref{eq:norm-t}) is written as 
\begin{align}
\tilde{\varphi}_{t}^{-1}(r_{it},z_{it}^{d},u_{it}) & =\tilde{a}_{1t}+\tilde{a}_{2t}+\varphi_{t}^{*-1}(r_{it},z_{it}^{d},u_{it}),\ \tilde{f}_{t}(x_{it},z_{it}^{s})=\tilde{a}_{1t}+f_{t}^{*}(x_{it},z_{it}^{s}),\ \tilde{\omega}_{it}=\tilde{a}_{2t}+\omega_{it}^{*}.\label{eq:norm-t2}
\end{align}
From (\ref{eq:norm-t}), the growth rates (log differences) of the
identified output and TFP between $t$ and $t+1$ are related to their
true values as follows: 
\begin{align}
\tilde{\varphi}_{t+1}^{-1}(r_{it+1},z_{it+1}^{d},u_{it+1})-\tilde{\varphi}_{t}^{-1}(r_{it},z_{it}^{d},u_{it}) & =\tilde{a}_{1t+1}+\tilde{a}_{2t+1}-\tilde{a}_{1t}-\tilde{a}_{2t}\nonumber \\
 & \quad+\varphi_{t+1}^{*-1}(r_{it+1},z_{it+1}^{d},u_{it+1})-\varphi_{t}^{*-1}(r_{it},z_{it}^{d},u_{it}),\nonumber \\
\tilde{f}_{t+1}(x_{it+1},z_{it+1}^{s})-\tilde{f}_{t}(x_{it},z_{it}^{s}) & =\tilde{a}_{1t+1}-\tilde{a}_{1t}+f_{t}^{*}(x_{it+1},z_{it+1}^{s})-f_{t}^{*}(x_{it},z_{it}^{s}),\nonumber \\
\tilde{\omega}_{it+1}-\tilde{\omega}_{it} & =\tilde{a}_{2t+1}-\tilde{a}_{2t}+\omega_{it+1}^{*}-\omega_{it}^{*}.\label{eq:growth}
\end{align}
Therefore, to identify the growth rates of output and TFP, we need
to identify the changes in the location parameters. To do so, we can
use an industry-level producer price index $P_{t}^{*}$, which is
often available as data, to identify the change in the location parameters.
Suppose that $P_{t}^{*}$ is a Laspeyres index 
\begin{equation}
P_{t}^{*}:=\frac{\sum_{i\in\tilde{N}}\exp(p_{it}^{*}+y_{i0}^{*})}{\sum_{i\in\tilde{N}}\exp(p_{i0}^{*}+y_{i0}^{*})},\label{eq:PPI}
\end{equation}
where $\tilde{N}$ is a known set (or a random sample) of products.
$p_{i0}^{*}$ and $y_{i0}^{*}$ are firm $i$'s log true price and
log true output at the base period, respectively. The following argument
holds for forms of a price index (other than Laspeyres) as long as
the price index is a known function of prices that is homogenous of
degree $1$, which is typically satisfied. \begin{assumption} \label{A-location}(a)
The industry-level producer price index $P_{t}^{*}$ is known as data.
(b) For some known point $\left(\bar{x},\bar{z}^{s}\right)\in\mathcal{X}\times\mathcal{Z}_{s}$
and the true production functions of $t$ and $t+1$, $f_{t}^{*}(\cdot)$
and $f_{t+1}^{*}(\cdot)$, satisfy $f_{t}^{*}(\bar{x},\bar{z}^{s})=f_{t+1}^{*}(\bar{x},\bar{z}^{s})$.
\end{assumption} Assumption \ref{A-location}(b) is innocuous, implying
that any output change between $t$ and $t+1$ when inputs are fixed
at $\bar{x}$ is attributed to a TFP change.

Using the aggregate price index, we can identify the change in the
location parameters and identify the growth of TFP and output. 
\begin{prop}
\label{P-location} Suppose Assumptions \ref{A-0}--\ref{A-FOC},
\ref{A-LCRS}, and \ref{A-location} hold. Then, the true growth rate
of output $\varphi_{t+1}^{*-1}(r_{it+1},z_{it+1}^{d},u_{it+1})-\varphi_{t}^{*-1}(r_{it},z_{it}^{d},u_{it})$
and that of TFP $\omega_{it+1}^{*}-\omega_{it}^{*}$ can be identified
for each firm. 
\end{prop}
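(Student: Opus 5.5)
By Proposition~\ref{P-CRS}, $b_t$ and $b_{t+1}$ are identified. We can therefore work with the rescaled objects $\tilde{\varphi}_t^{-1}$, $\tilde{f}_t$, $\tilde{\omega}_{it}$ in (\ref{eq:norm-tilde}), which differ from the truth only by the location shifts $(\tilde a_{1t},\tilde a_{2t})$ in (\ref{eq:norm-t2}). By (\ref{eq:growth}), it suffices to identify two differences: $\Delta \tilde a_1:=\tilde a_{1t+1}-\tilde a_{1t}$ and $\Delta \tilde a_2:=\tilde a_{2t+1}-\tilde a_{2t}$. Once both are known, subtracting them from the identified differences $\tilde{\varphi}_{t+1}^{-1}-\tilde{\varphi}_t^{-1}$ and $\tilde\omega_{it+1}-\tilde\omega_{it}$ yields the true growth rates firm by firm.

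First I identify $\Delta\tilde a_1$ from Assumption~\ref{A-location}(b). Evaluating $\tilde f_s(\bar x,\bar z^s)=\tilde a_{1s}+f_s^*(\bar x,\bar z^s)$ at $s=t,t+1$ and using $f_t^*(\bar x,\bar z^s)=f_{t+1}^*(\bar x,\bar z^s)$ gives
\[
\Delta\tilde a_1=\tilde f_{t+1}(\bar x,\bar z^s)-\tilde f_t(\bar x,\bar z^s),
\]
which is identified because $\tilde f_t$ and $\tilde f_{t+1}$ are identified. This step needs $(\bar x,\bar z^s)$ to lie in the support in both periods, so that the line-integral construction in Proposition~\ref{P-step3} reaches it.

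Second I identify the combined shift $\Delta(\tilde a_1+\tilde a_2)$ from the price index. The identified log price is $\tilde p_{is}=r_{is}-\tilde\varphi_s^{-1}(r_{is},z_{is}^d,u_{is})$. By (\ref{eq:norm-t2}), this equals $p_{is}^*-(\tilde a_{1s}+\tilde a_{2s})$, since true revenue $r_{is}=p^*_{is}+y^*_{is}$ is observed. I then form the index (\ref{eq:PPI}) from identified prices. The weights $y_{i0}^*$ are unknown, but $\tilde y_{i0}=y^*_{i0}+c_0$ with a common constant $c_0$. That constant factors out of the numerator and denominator of the Laspeyres ratio, so it cancels.

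The identified index is then
\[
\tilde P_s=\frac{\sum_i \exp(\tilde p_{is}+\tilde y_{i0})}{\sum_i\exp(\tilde p_{i0}+\tilde y_{i0})}=P^*_s\exp\big(-(\tilde a_{1s}+\tilde a_{2s})+(\tilde a_{10}+\tilde a_{20})\big),
\]
using homogeneity of degree one in prices. Taking logs at $s=t$ and $s=t+1$ and differencing eliminates the base-period constants:
\[
\Delta(\tilde a_1+\tilde a_2)=\ln P^*_{t+1}-\ln P^*_t-(\ln\tilde P_{t+1}-\ln\tilde P_t),
\]
which is identified from the data in Assumption~\ref{A-location}(a) together with the identified prices. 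Subtracting $\Delta\tilde a_1$ from this expression gives $\Delta\tilde a_2$.

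The main obstacle is this price-index step, specifically making sure that all the unknown normalizations enter multiplicatively and cancel. The base-period location $c_0$ in the quantity weights cancels by the ratio structure. The period-specific price location cancels by degree-one homogeneity, which is why the argument extends to any homogeneous index. It also requires that the products in $\tilde N$ be observed, with $(r,z^d,u)$ in the identified support, in periods $0$, $t$ and $t+1$. Given that, substituting $\Delta\tilde a_1$ and $\Delta\tilde a_2$ into (\ref{eq:growth}) delivers the true output and TFP growth rates for every firm.
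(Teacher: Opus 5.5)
Your proposal is correct and follows essentially the same route as the paper. The Laspeyres index built from identified prices $\tilde p_{is}$ and base-period quantities $\tilde y_{i0}$ identifies $(\tilde a_{1t+1}+\tilde a_{2t+1})-(\tilde a_{1t}+\tilde a_{2t})$, Assumption~\ref{A-location}(b) at $(\bar x,\bar z^{s})$ identifies $\tilde a_{1t+1}-\tilde a_{1t}$, and (\ref{eq:growth}) then delivers the output and TFP growth rates. The only differences are the order of these two steps and your explicit statement of the support and sample requirements ($(\bar x,\bar z^{s})$ and the products in $\tilde N$ being covered in periods $0$, $t$, $t+1$), which the paper leaves implicit.
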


The proof is given in Appendix \ref{subsec:-Proof-pro8}.

\subsection{Identification of HSA Demand System, Utility Function, and Counterfactual
Welfare Effects}

\label{subsec:demand}


Given that we have identified each firm's output price and quantity,
it is possible to nonparametrically identify a Homothetic Single Aggregator
(HSA) system of demand functions and the associated utility function
of a representative consumer under an additional homotheticity shape
restriction. Furthermore, we may identify the welfare effects of counterfactual
experiments. \citet{matsuyama2017beyond} further propose two additional
families of homothetic demand systems---the homothetic demand system
with direct implicit additivity (HDIA) and that with indirect implicit
additivity (HIIA). In the Appendix \ref{app:alternative-demand},
we show that the analysis in this subsection extends to the HDIA and
HIIA cases. Notably, the HDIA family nests the Kimball aggregator,
which is widely used in Macroeconomics, as a special case.

\subsubsection{Identification of the HSA demand system and consumer preference}

\paragraph{The HSA demand system}

We adopt the Homothetic Single Aggregator (HSA) demand system proposed
by \citet{matsuyama2017beyond}.\footnote{The HSA system can be expressed as a system of direct demand functions
or of inverse demand functions. These two systems are self-dual, meaning
either can be derived from the other. \citet{matsuyama2023non} and
\citet{matsuyama2025homothetic} provide comprehensive reviews of
flexible extensions to the CES demand system, including the HSA framework.} Let $N_{t}$ denote the number of firms in the industry. The HSA
demand system is characterized by the \textit{structural} budget-share
function $\mathfrak{s}_{t}^{*}(\cdot,z_{it}^{d},u_{it})$, which returns
the log market revenue share of product $i$ as a function of its
log relative output quantity, defined by $y_{it}-q_{t}(\mathbf{y}_{t},\mathbf{z}_{t}^{d},\mathbf{u}_{t})$.
This relationship is expressed as: 
\begin{equation}
\ln\left(\frac{\exp(r_{it})}{\sum_{j=1}^{N_{t}}\exp\left(r_{jt}\right)}\right)=\mathfrak{s}_{t}^{*}\left(y_{it}-q_{t}\left(\mathbf{y}_{t},\mathbf{z}_{t}^{d},\mathbf{u}_{t}\right),z_{it}^{d},u_{it}\right)\quad\text{for }i=1,...,N_{t},\label{eq:structural-bs}
\end{equation}
where $\mathbf{y}_{t}:=(y_{1t},...,y_{N_{t}t})\in\mathcal{Y}^{N_{t}}$
is a vector of consumption, $\mathbf{z}_{t}^{d}:=(z_{1t}^{d},...,z_{N_{t}t}^{d})$
is a vector of observable demand shifters, $\mathbf{u}_{t}:=(u_{1t},...,u_{N_{t}t})$
is a vector of demand shocks, and $q_{t}(\mathbf{y}_{t},\mathbf{z}_{t}^{d},\mathbf{u}_{t})$
is the aggregate quantity index summarizing interactions across products.\footnote{If the utility function is CES, $U(\mathbf{y}_{t})=\left[\sum_{i=1}^{N_{t}}\exp\left(\rho y_{it}\right)\right]^{1/\rho}$,
the inverse demand becomes $p_{it}=\rho\left(y_{it}-\ln U(\mathbf{y}_{t})\right)+\Phi_{t}-y_{it}$.
Here, $\ln U(\mathbf{y}_{t})=q_{t}\left(\mathbf{y}_{t},\mathbf{z}_{t}^{d},\mathbf{u}_{t}\right)$,
meaning the quantity index coincides with the utility function, though
they generally differ in the HSA framework.} In equilibrium, the quantity index $q_{t}(\mathbf{y}_{t},\mathbf{z}_{t}^{d},\mathbf{u}_{t})$
is uniquely determined by the adding-up constraint on market shares:
\begin{align}
1 & =\sum_{i=1}^{N_{t}}\exp\left(\mathfrak{s}_{t}^{*}\left(y_{it}-q_{t}(\mathbf{y}_{t},{\mathbf{z}}_{t}^{d},\mathbf{u}_{t}),z_{it}^{d},u_{it}\right)\right).\label{eq:market_share_constraint}
\end{align}
Because $\mathfrak{s}_{t}^{*}(\cdot)$ is nonparametric, the HSA framework
nests various demand systems established in the literature, including
the CES and symmetric translog demand systems (\citealp{feenstra2003homothetic};
\citealp{feenstra2017globalization}).\footnote{See \citet{matsuyama2020constant} for details on how the HSA nests
translog demand.}

\paragraph{Identification of the HSA demand system}

To identify the demand system, we maintain the following assumptions.
Let $\Phi_{t}$ denote the log of aggregate expenditure for a representative
consumer, where $\Phi_{t}=\ln\left(\sum_{i=1}^{N_{t}}\exp(r_{it})\right)$
in equilibrium.

\begin{assumption} \label{A-demand} (a) The observed data are generated
from the equilibrium corresponding to the baseline aggregate state
$\left(\mathbf{y}_{t},\mathbf{z}_{t}^{d},\mathbf{u}_{t}\right)$ under
the HSA demand system. (b) The product market is characterized by
monopolistic competition (without free entry); specifically, each
firm takes the quantity index $q_{t}\left(\mathbf{y}_{t},\mathbf{z}_{t}^{d},\mathbf{u}_{t}\right)$
and the log of aggregate expenditure $\Phi_{t}$ as given. (c) The
quantity index at the baseline aggregate state is normalized as ${q}_{t}(\mathbf{y}_{t},\mathbf{z}_{t}^{d},\mathbf{u}_{t})=0$.
(d) $\varphi_{t}(\cdot)$ is known over its support. \end{assumption}

Assumption \ref{A-demand}(a) explicitly specifies the baseline aggregate
state $\left(\mathbf{y}_{t},\mathbf{z}_{t}^{d},\mathbf{u}_{t}\right)$
under which the \textit{reduced-form} functions are defined. Assumption
\ref{A-demand}(b) imposes the standard framework of monopolistic
competition \citep[cf.][]{klette1996jae,de2011product}, implying
that $N_{t}$ is sufficiently large such that single-firm strategic
decisions have a negligible impact on aggregate quantities. Assumption
\ref{A-demand}(c) provides the normalization condition required to
uniquely determine the \textit{structural} function $\mathfrak{s}_{t}^{*}(\cdot)$,
as discussed below. Assumption \ref{A-demand}(d) holds when the conditions
of Proposition \ref{P-CRS} hold.

Treating $q_{t}(\mathbf{y}_{t},\mathbf{z}_{t}^{d},\mathbf{u}_{t})$
and $\Phi_{t}$ as fixed, we define the \textit{reduced-form} revenue
function (\ref{eq:reduced-rev}) from the structural budget-share
function: 
\begin{equation}
r_{it}=\Phi_{t}+\mathfrak{s}_{t}^{*}\left(y_{it}-q_{t}\left(\mathbf{y}_{t},\mathbf{z}_{t}^{d},\mathbf{u}_{t}\right),z_{it}^{d},u_{it}\right)=:\varphi_{t}\left(y_{it},z_{it}^{d},u_{it}\right).\label{eq:reduced-rev}
\end{equation}
Correspondingly, the \textit{reduced-form} inverse demand function
is related to $\mathfrak{s}_{t}^{*}\left(\cdot\right)$ by: 
\[
p_{it}=\Phi_{t}+\mathfrak{s}_{t}^{*}\left(y_{it}-q_{t}\left(\mathbf{y}_{t},\mathbf{z}_{t}^{d},\mathbf{u}_{t}\right),z_{it}^{d},u_{it}\right)-y_{it}=:\psi_{t}\left(y_{it},z_{it}^{d},u_{it}\right).
\]
In this context, the function 
\begin{equation}
\mathfrak{s}_{t}\left(y_{it},z_{it}^{d},u_{it}\right):=\varphi_{t}\left(y_{it},z_{it}^{d},u_{it}\right)-\Phi_{t}\label{eq:reduced-bs}
\end{equation}
represents the \textit{reduced-form} budget-share function, which
is known conditional on $\Phi_{t}$ and $\varphi_{t}\left(\cdot\right)$.

Conditional on the aggregate state $(\mathbf{y}_{t},\mathbf{z}_{t}^{d},\mathbf{u}_{t})$,
the \textit{reduced-form} revenue $\varphi_{t}(\cdot)$ and budget-share
function $\mathfrak{s}_{t}(\cdot)$ are identified up to location
normalization via Proposition \ref{P-CRS}. When evaluating these
reduced-form functions (\ref{eq:reduced-rev}) and (\ref{eq:reduced-bs})
across different values of $(y_{it},z_{it}^{d},u_{it})$, the quantity
index $q_{t}(\mathbf{y}_{t},\mathbf{z}_{t}^{d},\mathbf{u}_{t})$ is
held constant at the baseline state $(\mathbf{y}_{t},\mathbf{z}_{t}^{d},\mathbf{u}_{t})$.
Consequently, these reduced-form functions are not invariant to changes
in the aggregate state. As indicated by (\ref{eq:reduced-rev}), $\varphi_{t}(\cdot)$
shifts in response to changes in $\mathbf{y}_{t}$ because the quantity
index $q_{t}(\mathbf{y}_{t},\mathbf{z}_{t}^{d},\mathbf{u}_{t})$ depends
on the consumption vector $\mathbf{y}_{t}$.

Nonetheless, we may identify the structural function $\mathfrak{s}_{t}^{*}\left(\cdot\right)$
from the reduced-form function $\mathfrak{s}_{t}\left(\cdot\right)$
in (\ref{eq:reduced-bs}) under the equilibrium constraint (\ref{eq:market_share_constraint})
as follows. Evaluating (\ref{eq:structural-bs}) and (\ref{eq:reduced-bs})
at the fixed baseline state $\left(\mathbf{y}_{t},\mathbf{z}_{t}^{d},\mathbf{u}_{t}\right)$,
the \textit{reduced-form} and \textit{structural} budget share functions
are related as: 
\begin{equation}
\mathfrak{s}_{t}\left(y_{it},z_{it}^{d},u_{it}\right)=\mathfrak{s}_{t}^{*}\left(y_{it}-{q}_{t}(\mathbf{y}_{t},{\mathbf{z}}_{t}^{d},\mathbf{u}_{t}),z_{it}^{d},u_{it}\right).\label{eq:comparison}
\end{equation}

Let $\Delta{q}_{t}(\tilde{\mathbf{y}}_{t},\tilde{\mathbf{z}}_{t}^{d},\tilde{\mathbf{u}}_{t}):={q}_{t}(\tilde{\mathbf{y}}_{t},\tilde{\mathbf{z}}_{t}^{d},\tilde{\mathbf{u}}_{t})-{q}_{t}(\mathbf{y}_{t},\mathbf{z}_{t}^{d},\mathbf{u}_{t})$
denote the change in the quantity index induced by a shift from the
baseline state $(\mathbf{y}_{t},\mathbf{z}_{t}^{d},\mathbf{u}_{t})$
to a new state $(\tilde{\mathbf{y}}_{t},\tilde{\mathbf{z}}_{t}^{d},\tilde{\mathbf{u}}_{t})$.
Evaluating (\ref{eq:comparison}) at the argument $(y_{it},z_{it}^{d},u_{it})=(\tilde{y}_{it}-\Delta{q}_{t}(\tilde{\mathbf{y}}_{t},\tilde{\mathbf{z}}_{t}^{d},\tilde{\mathbf{u}}_{t}),\tilde{z}_{it}^{d},\tilde{u}_{it})$
while keeping ${q}_{t}(\mathbf{y}_{t},{\mathbf{z}}_{t}^{d},\mathbf{u}_{t})$
as constant, we have: 
\begin{equation}
\mathfrak{s}_{t}\left(\tilde{y}_{it}-\Delta{q}_{t}(\tilde{\mathbf{y}}_{t},\tilde{{\mathbf{z}}}_{t}^{d},\tilde{\mathbf{u}}_{t}),\tilde{z}_{it}^{d},\tilde{u}_{it}\right)=\mathfrak{s}_{t}^{*}\left(\tilde{y}_{it}-{q}_{t}(\tilde{\mathbf{y}}_{t},\tilde{{\mathbf{z}}}_{t}^{d},\tilde{\mathbf{u}}_{t}),\tilde{z}_{it}^{d},\tilde{u}_{it}\right),\label{eq: s-id}
\end{equation}
for any alternative state $(\tilde{\mathbf{y}}_{t},\tilde{{\mathbf{z}}}_{t}^{d},\tilde{\mathbf{u}}_{t})$.
By substituting (\ref{eq: s-id}) into the constraint (\ref{eq:market_share_constraint}),
we can uniquely identify $\Delta{q}_{t}(\tilde{\mathbf{y}}_{t},\tilde{\mathbf{z}}_{t}^{d},\tilde{\mathbf{u}}_{t})$
by solving: 
\begin{align}
1 & =\sum_{i=1}^{N_{t}}\exp\left(\mathfrak{s}_{t}\left(\tilde{y}_{it}-\Delta{q}_{t}(\tilde{\mathbf{y}}_{t},\tilde{\mathbf{z}}_{t}^{d},\tilde{\mathbf{u}}_{t}),\tilde{z}_{it}^{d},\tilde{u}_{it}\right)\right).\label{eq:market_share2}
\end{align}
Given the identified quantity index change $\Delta q_{t}(\cdot)$,
we are able to identify the \textit{structural} budget share function
$\mathfrak{s}_{t}^{*}\left(\cdot\right)$ from the \textit{reduced-form}
function $\mathfrak{s}_{t}\left(\cdot\right)$ via (\ref{eq: s-id}),
up to the unknown value of the baseline index ${q}_{t}(\mathbf{y}_{t},{\mathbf{z}}_{t}^{d},\mathbf{u}_{t})$.

The preceding discussion demonstrates that $\mathfrak{s}_{t}^{*}(\cdot)$
cannot be separately identified from the baseline aggregate quantity
index ${q}_{t}(\mathbf{y}_{t},\mathbf{z}_{t}^{d},\mathbf{u}_{t})$.
Specifically, given the identified reduced-form function $\mathfrak{s}_{t}(\cdot)$
and the change in the index $\Delta{q}_{t}(\cdot)$, there exist multiple
pairs $\{\mathfrak{s}_{t}^{*}(\cdot),{q}_{t}(\mathbf{y}_{t},\mathbf{z}_{t}^{d},\mathbf{u}_{t})\}$
that satisfy (\ref{eq: s-id}). For this reason, we impose Assumption
\ref{A-demand}(c) to uniquely determine $\mathfrak{s}_{t}^{*}(\cdot)$
but the fundamental properties of the preference structure to be characterized
in Proposition \ref{P-demand} are invariant to this specific normalization.

Applying the result of \citet{matsuyama2017beyond} (Proposition 1
and Remark 3), the following proposition establishes that the HSA
demand system constructed above can be derived from a unique representative
consumer preference, and that it is possible to identify an associated
utility function. Appendix \ref{subsec:Demand} supplies the proof. 
\begin{prop}
\label{P-demand} Suppose that the conditions of Proposition \ref{P-CRS}
and Assumption \ref{A-demand} hold. Then: (a) There exists a unique
monotone, convex, and homothetic rational preference $\succsim$ over
$\mathcal{Y}^{N_{t}}$ that generates an HSA demand system $\left\{ \mathfrak{s}_{t}^{*}(\cdot),{q}_{t}(\cdot)\right\} $.
(b) This preference $\succsim$ is represented by a homothetic utility
function defined by 
\[
\ln U_{t}(\tilde{\mathbf{y}}_{t},\tilde{{\mathbf{z}}}_{t}^{d},\tilde{\mathbf{u}}_{t})=\Delta{q}_{t}(\tilde{\mathbf{y}}_{t},\tilde{{\mathbf{z}}}_{t}^{d},\tilde{\mathbf{u}}_{t})+\sum_{i=1}^{N_{t}}\int_{b_{i}}^{\tilde{y}_{it}-\Delta{q}_{t}(\tilde{\mathbf{y}}_{t},\tilde{{\mathbf{z}}}_{t}^{d},\tilde{\mathbf{u}}_{t})}\exp\left(\mathfrak{s}_{t}\left(\zeta,\tilde{z}_{it}^{d},\tilde{u}_{it}\right)\right)d\zeta
\]
for some constant vector $\mathbf{b}=(b_{1},...,b_{N_{t}})$, which
is identified from $\{\mathfrak{s}_{t}\left(\cdot\right),\Delta q_{t}(\cdot)\}$.
(c) The identified demand system and preference $\succsim$ do not
depend on the location normalization of $\varphi_{t}^{-1}(r_{t},z_{t}^{d},u_{t})$. 
\end{prop}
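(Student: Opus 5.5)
The plan is to reduce Proposition \ref{P-demand} to Matsuyama and Ushchev's characterization of HSA demand (their Proposition 1 and Remark 3), using the identified reduced-form objects. By Proposition \ref{P-CRS} and Assumption \ref{A-demand}(d), $\varphi_{t}$ is known up to location, so the reduced-form budget-share function $\mathfrak{s}_{t}$ in (\ref{eq:reduced-bs}) is known once $\Phi_{t}=\ln\sum_{j}\exp(r_{jt})$ is computed from the observed revenues.

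\textbf{Step 1 (the index change).} For any alternative state $(\tilde{\mathbf{y}}_{t},\tilde{\mathbf{z}}_{t}^{d},\tilde{\mathbf{u}}_{t})$, I will show that (\ref{eq:market_share2}) has a unique solution $\Delta q_{t}$. The right-hand side is $\sum_{i}\exp(\mathfrak{s}_{t}(\tilde{y}_{it}-\Delta,\cdot))$. Assumption \ref{A-1}(b) gives $\partial\varphi_{t}/\partial y>0$, so each term is strictly decreasing in $\Delta$. I also need the sum to take the value $1$ somewhere. At the baseline state, $\Delta=0$ gives exactly $1$. For other states I will assume the relevant range is covered, which is part of what Assumption \ref{A-demand}(a) presumes. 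With $\Delta q_{t}$ in hand and Assumption \ref{A-demand}(c), relation (\ref{eq: s-id}) identifies $\mathfrak{s}_{t}^{*}(x,z,u)=\mathfrak{s}_{t}(x,z,u)$.

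\textbf{Step 2 (parts (a) and (b)).} The HSA primitive $\mathfrak{s}_{t}^{*}$ is now a known function with $\partial\mathfrak{s}^{*}/\partial x<1$; this is where $p$ strictly decreases in $y$, i.e.\ demand elasticity exceeds one. By Matsuyama--Ushchev Proposition 1, a budget-share function of this type, together with the aggregator implicitly defined by (\ref{eq:market_share_constraint}), is integrable. It is generated by a unique monotone, convex, homothetic preference.

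To get the explicit utility in (b), I will verify directly the formula from their Remark 3. Take $\ln U=\Delta q+\sum_{i}\int_{b_{i}}^{\tilde{y}_{i}-\Delta q}e^{\mathfrak{s}_{t}}$ and differentiate with respect to $\tilde{y}_{j}$. The terms involving $\partial\Delta q/\partial\tilde{y}_{j}$ have coefficient $1-\sum_{i}e^{\mathfrak{s}_{t}(\tilde{y}_{i}-\Delta q)}=0$ by (\ref{eq:market_share2}), so they vanish. What remains is $\partial\ln U/\partial\tilde{y}_{j}=e^{\mathfrak{s}_{t}(\tilde{y}_{j}-\Delta q)}$, which is the budget share. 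This is exactly the envelope condition for a homothetic utility, since log-marginal utility times quantity equals the share. Homogeneity of degree one follows because shifting all $\tilde{y}_{i}$ by $c$ shifts $\Delta q$ by $c$. The constants $b_{i}$ are chosen so that $\ln U=0$ at the baseline, e.g.\ $b_{i}=y_{it}$, which makes them functions of identified objects.

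\textbf{Step 3 (part (c)).} A location change $\varphi_{t}^{-1}\mapsto\varphi_{t}^{-1}+a$ shifts the identified $y$'s by $a$, so $\mathfrak{s}_{t}(y,\cdot)$ becomes $\mathfrak{s}_{t}(y-a,\cdot)$. Then $\Delta q$ shifts by $a$ and $\tilde{y}_{i}-\Delta q$ is unchanged. Hence $\mathfrak{s}^{*}$ and $U$ change only by a monotone (additive in logs) transformation, and the preference is the same.

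\textbf{Main obstacle.} I expect the hard part to be the convexity and integrability claim in (a). The plan is to rely on Matsuyama--Ushchev's sufficient condition ($\partial\mathfrak{s}^{*}/\partial x<1$, equivalently demand elasticity above one, guaranteed by profit maximization) rather than re-derive quasiconcavity. A secondary technical point is existence of $\Delta q$ for states outside the identified support, which I will restrict away.
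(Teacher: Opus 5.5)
Your proposal is correct and follows the paper's proof. It solves the adding-up constraint for $\Delta q_t$, uses $q_t=0$ to get $\mathfrak{s}_t^*=\mathfrak{s}_t$, checks the Matsuyama--Ushchev conditions from $0<\partial\varphi_t/\partial y<1$, and reads $U_t$ off their Remark~3; your direct differentiation, in which the $\partial\Delta q_t/\partial\tilde y_j$ terms cancel by adding-up, is a useful check beyond the paper's change of variables $\zeta=\ln\xi$.

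Two small fixes:
\begin{itemize}
\item The bound $\partial\mathfrak{s}_t^*/\partial x<1$ comes from downward-sloping inverse demand ($\partial\psi_t/\partial y<0$), not from demand elasticity above one. Elasticity above one (Assumption~\ref{A-1}(b)) instead gives $\partial\mathfrak{s}_t^*/\partial x>0$, which is what delivers the second condition $S_i'S_j'\ge 0$, so cite both conditions.
\item In Step~3, if you replace $\mathfrak{s}_t$ by $\mathfrak{s}_t(\cdot-a,\cdot)$ and shift the state by $a$, then $\Delta q_t$ is unchanged and the shift in $\tilde y_i-\Delta q_t$ is absorbed by the shifted share function. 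The paper instead keeps $\mathfrak{s}_t$ fixed and uses $\Delta q_t(\tilde{\mathbf y}_t+a)=\Delta q_t(\tilde{\mathbf y}_t)+a$. Do one or the other, not both.
\end{itemize}
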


\subsubsection{Counterfactual analysis}

We conduct a short-run partial equilibrium counterfactual analysis
by maintaining the following assumptions. \begin{assumption}\label{A-short}
(a) The pre-determined factor inputs $\left(k_{it},l_{it}\right)$,
factor prices, and exogenous variables $(z_{it}^{d},z_{it}^{s},u_{it},\omega_{it},p_{t}^{m})$
are fixed before and after counterfactual interventions. (b) The log
of aggregate expenditure $\Phi_{t}$ is fixed before and after counterfactual
interventions. \end{assumption} Under Assumption \ref{A-short}(a),
firms respond to counterfactual interventions by adjusting their material
inputs, outputs, and prices in the short run. Assumption \ref{A-short}(b)
implies that the aggregate income of the representative consumer remains
invariant to the counterfactual interventions. This is a reasonable
approximation when analyzing a specific industry that is small relative
to the aggregate economy. For large-scale interventions, however,
this assumption can be relaxed to allow aggregate income to adjust
endogenously in response to changes in aggregate profits induced by
the intervention.

\paragraph{Monopolistic Competition Equilibrium}

Using the identified HSA demand system $\{\mathfrak{s}_{t}(\cdot),\Delta q_{t}(\cdot)\}$,
we calculate a monopolistic competition equilibrium (MCE) as follows.
Define the inverse production function $m_{it}=\chi_{it}\left(y_{it}\right)$
such that $y_{it}=f_{t}\left(\chi_{it}\left(y_{it}\right),k_{it},l_{it},z_{it}^{s}\right)+\omega_{it}$
for given $(k_{it},l_{it},\omega_{it},z_{it}^{s})$; namely, $\chi_{it}(y_{it}):=f_{t}^{-1}\left(y_{it}-\omega_{it},k_{it},l_{it},z_{it}^{s}\right)$.

Equilibrium outputs and the quantity index $(\mathbf{y}_{t}^{m},\Delta q_{t}^{m})$
in an MCE are obtained from the first order condition for profit maximization
in (\ref{eq:profit_maximization}), equivalently written in view of
(\ref{eq:reduced-bs})-(\ref{eq: s-id}), and the market share condition
(\ref{eq:market_share2}): 
\begin{align}
\underbrace{\exp\left(\mathfrak{s}_{t}\left(y_{it}^{m}-\Delta q_{t}^{m},z_{it}^{d},u_{it}\right)+\Phi_{t}-y_{it}^{m}\right)\frac{\partial\mathfrak{s}_{t}\left(y_{it}^{m}-\Delta q_{t}^{m},z_{it}^{d},u_{it}\right)}{\partial y_{it}}}_{\text{Marginal Revenue}} & =\underbrace{\frac{\exp\left(p_{t}^{m}+\chi_{it}\left(y_{it}^{m}\right)\right)}{\exp\left(y_{it}^{m}\right)}\frac{\partial\chi_{it}(y_{it}^{m})}{\partial y_{it}}}_{\text{Marginal Cost}},\nonumber \\
\sum_{i=1}^{N_{t}}\exp\left(\mathfrak{s}_{t}\left(y_{it}^{m}-\Delta q_{t}^{m},z_{it}^{d},u_{it}\right)\right) & =1,\label{eq:MCE}
\end{align}
for $i=1,...,N_{t}$, where $\frac{\partial\chi_{it}(y_{it}^{m})}{\partial y_{it}}=\left(\frac{\partial f_{t}}{\partial m}\right)^{-1}$.
The above system can be extended to incorporate policies such as taxes
and subsidies to investigate their effects.

\paragraph{Welfare Costs of Firm's Market Power}

In the empirical section below, we quantify the deadweight loss attributable
to firm's market power by considering the transition to a counterfactual
Marginal Cost Pricing Equilibrium (MCPE), in which firms set their
prices equal to their marginal costs. Specifically, equilibrium outputs
and quantity index $(\mathbf{y}_{t}^{c},\Delta q_{t}^{c})$ are obtained
by solving: 
\begin{align}
\underbrace{\exp\left(\mathfrak{s}_{t}\left(y_{it}^{c}-\Delta q_{t}^{c},z_{it}^{d},u_{it}\right)+\Phi_{t}-y_{it}^{c}\right)}_{\text{Price}} & =\underbrace{\frac{\exp\left(p_{t}^{m}+\chi_{it}\left(y_{it}^{c}\right)\right)}{\exp\left(y_{it}^{c}\right)}\frac{\partial\chi_{it}(y_{it}^{c})}{\partial y_{it}}}_{\text{Marginal Cost}},\nonumber \\
\sum_{i=1}^{N_{t}}\exp\left(\mathfrak{s}_{t}\left(y_{it}^{c}-\Delta q_{t}^{c},z_{it}^{d},u_{it}\right)\right) & =1,\label{eq:MCPE}
\end{align}

for $i=1,...,N_{t}$. 

The consumer welfare cost of firm's market power can be calculated
as the utility change: 
\begin{align*}
\ln U_{t}(\mathbf{y}_{t}^{c},\mathbf{z}_{t}^{d},\mathbf{u}_{t})-\ln U_{t}(\mathbf{y}_{t}^{m},\mathbf{z}_{t}^{d},\mathbf{u}_{t}) & =\Delta q_{t}^{c}-\Delta q_{t}^{m}+\sum_{i=1}^{N_{t}}\int_{y_{it}^{m}-\Delta q_{t}^{m}}^{y_{it}^{c}-\Delta q_{t}^{c}}\exp\left(\mathfrak{s}_{t}\left(\zeta,z_{it}^{d},u_{it}\right)\right)d\zeta.
\end{align*}

An alternative welfare measure is the Compensating Variation (CV),
which is constructed in monetary terms as follows. For a given counterfactual
log income $\Phi_{t}^{c}$, define the counterfactual output and quantity
index $(\mathbf{y}_{t}^{c*}(\Phi_{t}^{c}),\Delta q^{c*}(\Phi_{t}^{c}))$
that solve the price-taking condition: 
\[
\underbrace{\exp\left(\mathfrak{s}_{t}\left(y_{it}^{c*}-\Delta q_{t}^{c*},z_{it}^{d},u_{it}\right)+\Phi_{t}^{c}-y_{it}^{c*}\right)}_{\text{Price}}=\underbrace{\frac{\exp\left(p_{t}^{m}+\chi_{it}\left(y_{it}^{c*}\right)\right)}{\exp(y_{it}^{c*})}\frac{\partial\chi_{it}(y_{it}^{c*})}{\partial y_{it}}}_{\text{Marginal Cost}}
\]
for $i=1,...,N_{t}$ with $\sum_{i=1}^{N_{t}}\exp\left(\mathfrak{s}_{t}\left(y_{it}^{c*}-\Delta q_{t}^{c*},z_{it}^{d},u_{it}\right)\right)=1$.
We then find the counterfactual income $\Phi_{t}^{c*}$ that achieves
the same utility as in the benchmark MCE: 
\begin{equation}
\ln U_{t}(\mathbf{y}_{t}^{c*}(\Phi_{t}^{c*}),\mathbf{z}_{t}^{d},\mathbf{u}_{t})-\ln U_{t}(\mathbf{y}_{t}^{m},\mathbf{z}_{t}^{d},\mathbf{u}_{t})=0.\label{eq:CV}
\end{equation}
The compensating variation is defined as $CV_{t}:=\exp\left(\Phi_{t}^{c*}\right)-\exp\left(\Phi_{t}\right)$.
It measures the change in consumer welfare when moving from an MCPE
to an MCE and quantifies the consumer's loss due to firms' market
power.  Since consumers require less income under competitive pricing to
achieve the same utility, $CV_{t}<0$: the magnitude $|CV_{t}|$ is the
monetary gain from eliminating market power, and the consumer's welfare
improvement from transitioning to an MCPE is $-CV_{t}>0$.

To evaluate the overall welfare change from an MCE to an MCPE, the
consumer gain, measured by $-CV_{t}$, can be compared with firms'
profit loss. Under Assumption \ref{A-short}(b), aggregate revenue
is fixed. Thus, the change in total profits is solely the negative
change in total material costs: 
\begin{equation}
\Delta\Pi_{t}:=\Pi_{t}^{c}-\Pi_{t}^{m}=-\sum_{i=1}^{N_{t}}\exp(p_{t}^{m})\left\{ \exp\left(\chi_{it}\left(y_{it}^{c}\right)\right)-\exp\left(\chi_{it}\left(y_{it}^{m}\right)\right)\right\} .\label{eq:profit change}
\end{equation}
Note that if $\Pi_{t}^{c}<0$, this counterfactual implies firms operate
at a loss, representing a short-run equilibrium before firm exit occurs.
The overall welfare change associated with a transition from an MCE
to an MCPE is therefore given by $\Delta\Pi_{t}-CV_{t}$. 

\section{Semiparametric Estimator}

\label{sec:estimator}


 The identification results in Section~\ref{sec:Identification} are nonparametric: revenue data alone suffice to recover the production function, demand system, and welfare-relevant objects without functional-form restrictions. While a fully nonparametric sieve estimator could in principle be constructed from the identified equations, it would demand sample sizes far exceeding typical manufacturing datasets. We therefore impose Cobb-Douglas production and AR(1) productivity dynamics, following the standard practice of pairing nonparametric identification with parametric specifications \citep[see, e.g.,][]{olley1996dynamics,levinsohn2003estimating,ackerberg2015identification,gandhi2020identification}.

We develop a semiparametric estimator that is applicable for the panel
data with $T\geq4$. We assume the Cobb-Douglas production function:
\begin{align}
f_{t}(m_{it},k_{it},l_{it}) & =\theta_{m}m_{it}+\theta_{k}k_{it}+\theta_{l}l_{it},\label{eq:CobbDouglas}
\end{align}
and TFP follows an AR(1) process: 
\begin{align}
\omega_{it} & =\rho\omega_{it-1}+\eta_{it}.\label{eq:ar-1}
\end{align}

We introduce assumptions on the production function to normalize scale
and location parameters. First, (\ref{eq:CobbDouglas}) normalizes
one location parameter by $f_{t}(0,0,0)=0$. As another location normalization,
we assume $E\left[\omega_{it}\right]=0$, which naturally arises from
a stationary AR1 process (\ref{eq:ar-1}). Finally, we assume the
constant returns to scale, $\theta_{m}+\theta_{k}+\theta_{l}=1$,
which normalizes the scale parameter. 

The control function becomes separable under the Cobb-Douglas assumption:
\begin{equation}
\mathbb{M}_{t}^{-1}(m_{it},k_{it},l_{it},u_{it})=\lambda_{t}(m_{it},u_{it})-\theta_{k}k_{it}-\theta_{l}l_{it}.\label{eq:control}
\end{equation}
Substituting (\ref{eq:control}) into the revenue function, we obtain
\begin{align}
\varphi_{t}\left(f_{t}(m_{it},k_{it},l_{it})+\omega_{it},u_{it}\right)=\varphi_{t}\left(\theta_{m}m_{it}+\lambda_{t}(m_{it},u_{it}),u_{it}\right)=\phi_{t}(m_{it},u_{it}),\label{eq:ivqr}
\end{align}
where $\phi_{t}$ depends only on $(m_{t},u_{it})$ and increases
in $m_{t}$ and $u_{t}$. In Appendix \ref{subsec:Derivation-of-the},
we derive a separable control function similar to (\ref{eq:control}) in a
more general setting and show that (\ref{eq:ivqr}) holds when the production function is separable with
respect to $m_{it}$.  \\

\noindent Substituting (\ref{eq:control}) into (\ref{eq:ar-1}),
we obtain the transformation model as 
\begin{align}
\lambda_{t}(m_{it},u_{it}) & =\theta_{k}(k_{it}-\rho k_{it-1})+\theta_{l}(l_{it}-\rho l_{it-1})+\rho\lambda_{t-1}(m_{it-1},u_{it-1})+\eta_{it}.\label{eq:transformation}
\end{align}

\paragraph{Step 1: Estimation of the Quantile of Demand Shocks}

The first step estimates $\phi_{t}\left(m_{it},u_{it}\right)$ and
$u_{it}$ by IV quantile regression. A traditional approach to IV
quantile regression estimates $\phi_{t}\left(\cdot,u\right)$ from
the moment condition (\ref{eq: IVQR moment condition}) for a fixed
quantile point $u$. This approach often yields a non-monotonic and
non-smooth function in $u$, which is problematic for our identification
using uniquely identified $u_{it}$ and derivatives of $\phi_{t}$.
To overcome this, we use the smoothed GMM quantile regression of \citet*{firpo2022gmm}.
Their approach stacks moment conditions over all quantile points so
we can estimate the smooth sieve function and impose $\partial\phi_{t}/\partial u_{it}>0$.
For the approximation of $\phi_{t}(m_{it},u_{it})$, we employ the
basis $B_{\phi}(m_{it},\tau)$ that consists of a constant term, a
B-spline basis of degree 3 with 2 interior knots in $m_{it}$, a cubic
polynomial in $u_{it}$, and interactions of the B-spline in $m_{it}$
with $u_{it}$ and $u_{it}^{2}$. \citet{firpo2022gmm} also replace
the indicator in (\ref{eq: IVQR moment condition}) with a smooth
kernel CDF to ease computation.

We partition $[0,1]$ into $L=100$ equal parts and let $\mathbb{T}=\{0.01,\ldots,0.99\}$.
The moment condition is 
\begin{equation}
E\left[\left(\mathcal{K}_{1}\left(\frac{B_{\phi}(m_{it},\tau)^{T}\alpha_{t}-r_{it}}{b_{n1}}\right)-\tau\right)B_{IV}(m_{it-2})\right]=0\quad\text{for }\tau\in\mathbb{T},\label{eq:GMM moment}
\end{equation}
where $\mathcal{K}_{1}(\cdot)$ is a smooth kernel CDF with bandwidth
$b_{n1}$ and $B_{IV}(m_{it-2})$ is the B-spline basis of degree
$S_{1}=3$ with $K_{1}=2$ interior knots in $m_{it-2}$ as instruments.
Following \citet{firpo2022gmm}, we use the rule-of-thumb bandwidth
and the kernel CDF of \citet{horowitz1998bootstrap}: 
\[
\mathcal{K}_{1}(s):=\left[\frac{1}{2}+\frac{105}{64}\left(s-\frac{5}{3}s^{3}+\frac{7}{5}s^{5}-\frac{3}{7}s^{7}\right)\right]1\{s\in[-1,1]\}+1\{s>1\}.
\]
The number of moment conditions (\ref{eq:GMM moment}) is the number
of IVs times the number of quantile $(S_{1}+K_{1}+1)\times(L-1)$.
As $L$ is usually a large number, the moment condition (\ref{eq:GMM moment})
typically overidentifies $\alpha_{t}$ so that we use GMM. \citet{firpo2022gmm}
derive a expression of the optimal GMM weight matrix and showed it
does not depend on the parameter $\alpha_{t}$ so that its estimation
completes in one step. Monotonicity in $m_{it}$ and $u_{it}$ is
imposed via linear constraints on the derivatives of the basis functions.
The demand shocks $\hat{u}_{it}$ are then estimated by numerically
inverting $\hat{\phi}_{t}(m_{it},\hat{u}_{it})=r_{it}$. The same
procedure is applied to $t-1$ to estimate $\hat{u}_{it-1}$.

\paragraph{Step 2: Estimation of the control function}

The second step estimates the transformation model (\ref{eq:transformation}).
We use the Profile Likelihood (PL) estimator developed by \citet*{linton2008estimation}.
From (\ref{eq:dG2}) for $q_{it}=m_{it}$ and (\ref{eq:control}),
the conditional density of $m_{it}$ given $v_{it}$ is written as
\begin{align*}
g_{m_{t}|v_{t}}(m_{it}|v_{it}) & =g_{\eta_{t}}\left(\eta_{it}\right)\frac{\partial\lambda_{t}(m_{it},u_{it})}{\partial m_{it}}.
\end{align*}
To approximate $\lambda_{t}(m_{it},u_{it})$, we use the basis $B_{\lambda}(m_{it},u_{it})$
that is the Kronecker product of B-spline bases of degree 3 with 1
interior knot in $m_{it}$ and $u_{it}$. We do not assume a parametric
distribution on $\eta_{it}$. Let $\partial_{m}B_{\lambda}(m_{it},u_{it})$
be the derivatives of the B-spline bases $B_{\lambda}(m_{it},u_{it})$
so that $\partial_{m}B_{\lambda}(m_{it},u_{it})^{T}c_t$ approximates
$\frac{\partial\lambda_{t}(m_{it},u_{it})}{\partial m_{it}}$. Thus,
the log-likelihood function is written as 
\[
\sum_{i=1}^{n}\left\{ \ln g_{m_{t}|v_{t}}\left(m_{it}|v_{it}\right)\right\} =\sum_{i=1}^{n}\left\{ \ln g_{\eta_{t}}\left(\eta_{it}\right)+\ln\partial_{m}B_{\lambda}(m_{it},u_{it})^{T}c_t\right\} .
\]
where ${g}_{\eta_{t}}(\eta)$ is the corresponding (Gaussian) kernel
density  with the bandwidth chosen by Silverman's Rule. We obtain estimates of $\eta_{it}$ as follows. {For given
$(c_t,\rho)$, define $\lambda_{it}(c_t):=B_{\lambda}(m_{it},u_{it})^{T}c_t$, $\tilde k_{it}(\rho):=k_{it}-\rho k_{it-1}$, $\tilde l_{it}(\rho):=l_{it}-\rho l_{it-1}$, and $\tilde B_{\lambda}(m_{it-1},u_{it-1},\rho):=\rho B_{\lambda}(m_{it-1},u_{it-1})$, and  (\ref{eq:transformation}) implies that
\begin{align}
\lambda_{it}(c_t) &  = \theta_{k}\tilde k_{it}(\rho)+\theta_{l}\tilde l_{it}(\rho)+\tilde B_{\lambda}(m_{it-1},u_{it-1},\rho)^{T} c_{t-1}+\eta_{it}. \label{eq:TRmodel}
\end{align}
Therefore, for each $(c_t,\rho)$, let $\eta_{it}(c_t,\rho)$ be the residual from 
regressing $\lambda_{it}(c_t)$ on $\tilde k_{it}(\rho)$, $\tilde l_{it}(\rho)$, and $\tilde B_{\lambda}(m_{it-1},u_{it-1},\rho)$. 
Then, the PL estimator $(\tilde{c}_{t},\tilde\rho)$ is defined as 
\begin{align*}
(\tilde{c}_{t},\tilde{\rho}) & \in\arg\max_{c_t,\rho}\sum_{i=1}^{n}\left\{ \ln{g}_{\eta,{t}}\left(\eta_{it}(c_t,\rho)\right)+\ln\partial_{m}B_{\lambda}(m_{it},\hat{u}_{it})^{T}c_t\right\} \text{ subject to }\partial_{m}B_{\lambda}(m_{it},\hat{u}_{it})^{T}c_t>0.
\end{align*}}

\paragraph{Step 3: Estimation of production function, markup, TFP, and output}

With estimated $(\tilde{c}_{t},\tilde\rho)$, 
we estimate $(\theta_{k},\theta_{l},c_{t-1})$ by regressing $B_{\lambda}(m_{it},\hat{u}_{it})^{T}\tilde{c}_{t}$ on $\tilde k_{it}(\tilde\rho)$, $\tilde l_{it}(\tilde\rho)$, and $\tilde B_{\lambda}(m_{it-1},u_{it-1},\tilde\rho)$:
\begin{align*}
(\tilde{\theta}_{k},\tilde{\theta}_{l},\tilde c_{t-1})\in\arg\min_{\theta_{k},\theta_{l},c_{t-1}}\sum_{i=1}^{n}\left\{B_{\lambda}(m_{it},\hat{u}_{it})^{T}\tilde{c}_{t}-\theta_{k}\tilde k_{it}(\tilde\rho)-\theta_{l}\tilde l_{it}(\tilde\rho)-\tilde B_{\lambda}(m_{it-1},u_{it-1},\rho)^{T} c_{t-1}\right\} ^{2}.
\end{align*}
From (\ref{eq:markup_identified}) and (\ref{eq:output_elasticities_pro2}),
the material elasticity is identified as 
$\theta_{m}=\frac{\partial\lambda_{t}\left(m_{it},u_{it}\right)}{\partial m_{it}}\left(\frac{s_{it}^{m}}{\partial\phi_{t}\left(m_{it},u_{it}\right)/\partial m_{it}-s_{it}^{m}}\right)$,
where $s_{it}^{m}=\exp(p_{t}^{m}+m_{it})/\exp\left(r_{it}\right).$
Following this equation, we estimate $\theta_{m}$ as follows: 
\begin{align*}
\tilde{\theta}_{m}=\text{median}\left\{ \frac{\left(\partial_{m}B_{\lambda}(m_{it},\hat{u}_{it})^{T}\tilde{c}_{t}\right)s_{it}^{m}}{\partial_{m}B_{\phi}(m_{it},\hat{u}_{it})^{T}\hat{\alpha}_{t}-s_{it}^{m}}\right\} ,
\end{align*}
where $\partial_{m}B_{\phi}(m_{it},\hat{u}_{it})$ is the derivatives
of bases $B_{\phi}(m_{it},\tau)$ so that $\partial_{m}B_{\phi}(m_{it},\hat{u}_{it})^{T}\hat{\alpha}_{t}$
estimates $\partial\phi_{t}\left(m_{it},u_{it}\right)/\partial m_{it}$.
Using the constant returns to scale, we obtain the normalized production
parameters as $\hat{\theta}_{j}=\tilde{\theta}_{j}/\tilde{b}_{t}$
for $j\in\{m,k,l\}$ where $\tilde{b}_{t}=\tilde{\theta}_{m}+\tilde{\theta}_{k}+\tilde{\theta}_{l}$.
Then, we estimate markups as follows: 
\begin{align*}
\hat{\mu}_{it}=\frac{\hat{\theta}_{m}}{s_{it}^{m}}.
\end{align*}
With the mean-zero restriction on $\omega_{it}$, the location parameter
$\hat{a}_{2t}=n^{-1}\sum_{i=1}^{n}\left[\tilde{\lambda}_{it}-\tilde{\theta}_{k}k_{it}-\tilde{\theta}_{l}l_{it}\right]$
is estimated. The estimated TFP, output, and price are 
\begin{align*}
\hat{\omega}_{it} & =\frac{\tilde{\lambda}_{it}-\tilde{\theta}_{k}k_{it}-\tilde{\theta}_{l}l_{it}-\hat{a}_{2t}}{\tilde{b}_{t}},\quad\hat{y}_{it}=\hat{\omega}_{it}+\hat{\theta}_{m}m_{it}+\hat{\theta}_{l}l_{it}+\hat{\theta}_{k}k_{it},\quad\text{and}\quad\hat{p}_{it}=r_{it}-\hat{y}_{it}.
\end{align*}

\paragraph{Step 4: Estimation of parametric CoPaTh-HSA demand system}

Our estimation steps of production function above do not assume
any parametric demand system. Thus, in theory, one can estimate a
fully nonparametric HSA demand system as described in Section \ref{subsec:demand}.
However, in our empirical application, we estimate a parametric HSA
demand system to obtain more stable estimates from a dataset with
a moderate sample size. In particular, we consider a HSA demand system
with the CoPaTh (constant pass-through) demand function with incomplete
pass-through by \citet{matsuyama2020constant}: 
\begin{align}
\mathfrak{s}_{t}^{*}(y_{it}-q_{t}(\mathbf{y}_{t},\epsilon_{t}),\epsilon_{it}):=\delta_{t}-\frac{1}{\beta_{t}}\ln\left(\frac{\exp(-\beta_{t}\left(y_{it}-q_{t}(\mathbf{y}_{t},\mathbf{\epsilon}_{t})\right)+\gamma_{t})+\epsilon_{it}}{1+\epsilon_{it}}\right)\label{eq:revenue1}
\end{align}
where the quantity index $q_{t}(\mathbf{y}_{t},\epsilon_{t})$ is
implicitly defined by the market share constraint (\ref{eq:market_share_constraint})
for a given output vector $\mathbf{y}_{t}=\left(y_{1t},...,y_{Nt}\right)$
and a given demand shock vector $\mathbf{\epsilon}_{t}=\left(\epsilon_{1t},...,\epsilon_{Nt}\right)$.
Appendix \ref{subsec:CoPaTh-HSA-Demand-System} derives the log-variable
version of the CoPaTh demand (\ref{eq:revenue1}) from \citet{matsuyama2020constant}'s
original formulation.


As explained in Section~\ref{subsec:demand}, we estimate the following
reduced-form revenue function instead of the structural form~(\ref{eq:revenue1}):
\begin{equation}
\varphi_{t}\!\left(y_{it},\epsilon_{it}\right)=\Phi_{t}+\delta_{t}-\frac{1}{\beta_{t}}\ln\!\left(\frac{\exp(-\beta_{t}y_{it}+\gamma_{t})+\epsilon_{it}}{1+\epsilon_{it}}\right),\label{eq:revenue-copath}
\end{equation}
where we impose the normalization $q_{t}(\mathbf{y}_{t},\boldsymbol{\epsilon}_{t})=0$,
as stated in Assumption \ref{A-demand}(c), by assuming that the observed
data correspond to the baseline aggregate state. Consequently, the
implied markup is 
\[
\frac{P_{it}}{MC_{it}}=\left(\frac{\partial\varphi_{t}(y_{it},\epsilon_{it})}{\partial y_{it}}\right)^{-1}=1+\epsilon_{it}\exp(\beta_{t}y_{it}-\gamma_{t}).
\]

Taking the limit as $\beta_{t}\to0$ on the right-hand side of~(\ref{eq:revenue-copath})
yields a linear revenue function $\varphi_{t}(y_{it},\epsilon_{it})=\alpha_{t}+\rho_{it}y_{it}$,
where $\alpha_{t}=\Phi_{t}+\delta_{t}$ and $\rho_{it}=\frac{1}{1+\epsilon_{it}}$.
Thus, the CoPaTh demand system in~(\ref{eq:revenue1}) nests the
generalized CES demand with heterogeneous markups defined in~(\ref{eq:demand}).
Under this special case, the markup is independent of $y_{it}$ and
pass-through is complete. Moreover, if $\epsilon_{it}=\epsilon$ is
common across all firms, then the system collapses to the standard
CES demand framework with a homogeneous markup.

With the estimated outputs $\hat{y}_{it}$ and markups $\hat{\mu}_{it}$
from Step 3, the composite nonlinear least square estimator of demand
parameters $(\beta_{t}$, $\gamma_{t}$, $\delta_{t})$ is defined
as, 
\[
{(\hat{\beta}_{t},\hat{\gamma}_{t},\hat{\delta}_{t})'}\in\arg\min_{\beta_{t},\delta_{t},\gamma_{t}}\sum_{i}\left(r_{it}-\left(\Phi_{t}+\delta_{t}-\frac{1}{\beta_{t}}\ln\left(\frac{\exp(-\beta_{t}\hat{y}_{it}+\gamma_{t})+\epsilon_{it}}{1+\epsilon_{it}}\right)\right)\right)^{2}
\]
\[
+\sum_{i}\left(\hat{u}_{it}-\left(\text{quantile}(\epsilon_{it})\right)\right)^{2},
\]
subject to 
\begin{align*}
\epsilon_{it} & =\frac{\hat{\mu}_{it}-1}{\exp(\beta_{t}\hat{y}_{it}-\gamma_{t})}\quad\text{and}\quad1=\sum_{i}\exp\left(\delta_{t}-\frac{1}{\beta_{t}}\ln\left(\frac{\exp(-\beta_{t}\hat{y}_{it}+\gamma_{t})+\epsilon_{it}}{1+\epsilon_{it}}\right)\right),
\end{align*}
where $\text{quantile}(\epsilon_{it})$ is the empirical quantile
of $\epsilon_{it}$ among all firms and $\Phi_{t}=\ln\left(\sum_{i}\exp(r_{it})\right)$
is the industry revenue. The estimation utilizes the theoretical relationship
between markups and demand shocks and the market share restriction.
\footnote{For model-consistency, we estimate the HSA demand system only using
the firms with estimated markups $\hat{\mu}_{it}>1$.}


\section{Simulation}

\label{sec:simulation}

This section presents the finite sample performance of our proposed
estimator, comparing to that of the conventional method, when firms
charge small but heterogeneous markups under the HSA demand system.
We consider a simple data-generating process (DGP) in which firms
set variable markups, calibrated to Chilean manufacturing data. Further
details of the DGP and simulation design are provided in the Appendix
\ref{app:simulation}.

Consider $N$ firms in a market and $t\in\{1,2,...,T\}$ period. Each
firm produces one variety of differentiated goods and faces the HSA-CoPaTh
demand function (\ref{eq:revenue1}). The demand shock $\epsilon_{it}$
follows an MA(1) process: $\epsilon_{it}=0.5\zeta_{it-1}+\zeta_{it}$,
where $\zeta_{it}\sim\text{Unif}[0,0.3]$. The production function
takes the Cobb-Douglas form (\ref{eq:CobbDouglas}) where $\omega_{it}$
follows an AR1 process $\omega_{it}=0.8\omega_{it-1}+\eta_{it},\eta_{it}\sim N\left(0,\left(0.05\right)^{2}\right)$.
Capital $k_{it}$ and labor $l_{it}$ are predetermined and follow
exogenous laws of motion explained in Appendix \ref{subsec:Data-Generating-Process}.

The production function parameters are set to $\left(\theta_{m},\theta_{k},\theta_{l}\right)=\left(0.4,0.3,0.3\right)$.
We choose the structural parameters of the HSA demand system, $(\Phi_{t},\delta_{t},\beta_{t})=(20,-6.5,0.21)$,
and allow parameter $\gamma_{t}$ to vary across simulations to satisfy
the normalization $q_{t}(\mathbf{y}_{t},\boldsymbol{\epsilon}_{t})=0$
in each simulation. Specifically, for each period, we find equilibrium
outputs $\{y_{it}^{m}\}_{i=1}^{N_{t}}$ and $\gamma_{t}$ in an MCE
by solving the first order conditions and the market share condition
analogous to (\ref{eq:MCE}): 
\begin{align*}
\Phi_{t}+\delta_{t}-\beta_{t}y_{it}^{m}+\gamma_{t}+\Xi_{it}-\frac{y_{it}^{m}}{\theta_{m}}+\frac{1}{\beta_{t}}\ln(1+\epsilon_{it})\\
-\left(1+\frac{1}{\beta_{t}}\right)\ln\left(\exp(-\beta_{t}y_{it}^{m}+\gamma_{t})+\epsilon_{it}\right) & =0\text{ for }i=1,..,N_{t}\\
\sum_{i=1}^{N_{t}}\exp\left(\delta_{t}-\frac{1}{\beta_{t}}\ln\left(\frac{\exp(-\beta_{t}y_{it}^{m}+\gamma_{t})+\epsilon_{it}}{1+\epsilon_{it}}\right)\right) & =1.
\end{align*}
where $\Xi_{it}=\ln\theta_{m}+(\theta_{k}k_{it}+\theta_{l}l_{it}+\omega_{it})/\theta_{m}$
and $p_{t}^{m}=0$. Appendix \ref{subsec:Derivations-of-Equilibrium}
show its derivation. We simulate 100 replications of $N=600$ firms
and $T=5$ periods, with the following summary statistics of the resulting
markups:

\begin{table}[H]
\begin{centering}
\begin{tabular}[t]{ccccccc}
\toprule 
 & Min.  & 1st Qu.  & Median  & Mean  & 3rd Qu.  & Max \tabularnewline
\midrule 
Markup  & 1.001  & 1.147  & 1.219  & 1.223  & 1.295  & 1.617 \tabularnewline
\bottomrule
\end{tabular}
\par\end{centering}
\caption{Summary statistics of Markups in simulated data ($t=5$)}
\end{table}


In addition to our proposed estimator, we also consider the widely-used
estimator proposed by \citet{ackerberg2015identification}(ACF) applied
with quantity data and the ACF estimator applied with revenue data.
\citet{gandhi2020identification}(GNR) showed the difficulty of identification
in the DGP that ACF assumed where a firm-level unobserved shock is
a scalar, TFP. The GNR criticism is not applicable for the current
DGP with two unobserved shocks. However, to show our point is different
from the GNR critique, we employ the ACF method with constant returns
to scale (CRS) restriction (i.e., $\theta_{m}+\theta_{k}+\theta_{l}=1$)
that \citet*{flynn2019measuring} proposed to address the GNR criticism.


Figure \ref{fig:ACF-Sim} shows the histograms of 100 estimates of
$(\theta_{m},\theta_{k},\theta_{l})$ from the ACF method with revenue
data and quantity data. While using quantity data yields estimates
that are tightly clustered around the true values, using revenue data
substantially biases the estimation of the production function. The
simulation result confirms the long criticism in the literature against
the ad hoc use of revenue data as output. 


\begin{figure}[H]
\begin{centering}
\caption{Production Function Estimation with ACF on Revenue Data and Quantity
Data }
\label{fig:ACF-Sim} \includegraphics[scale=0.3]{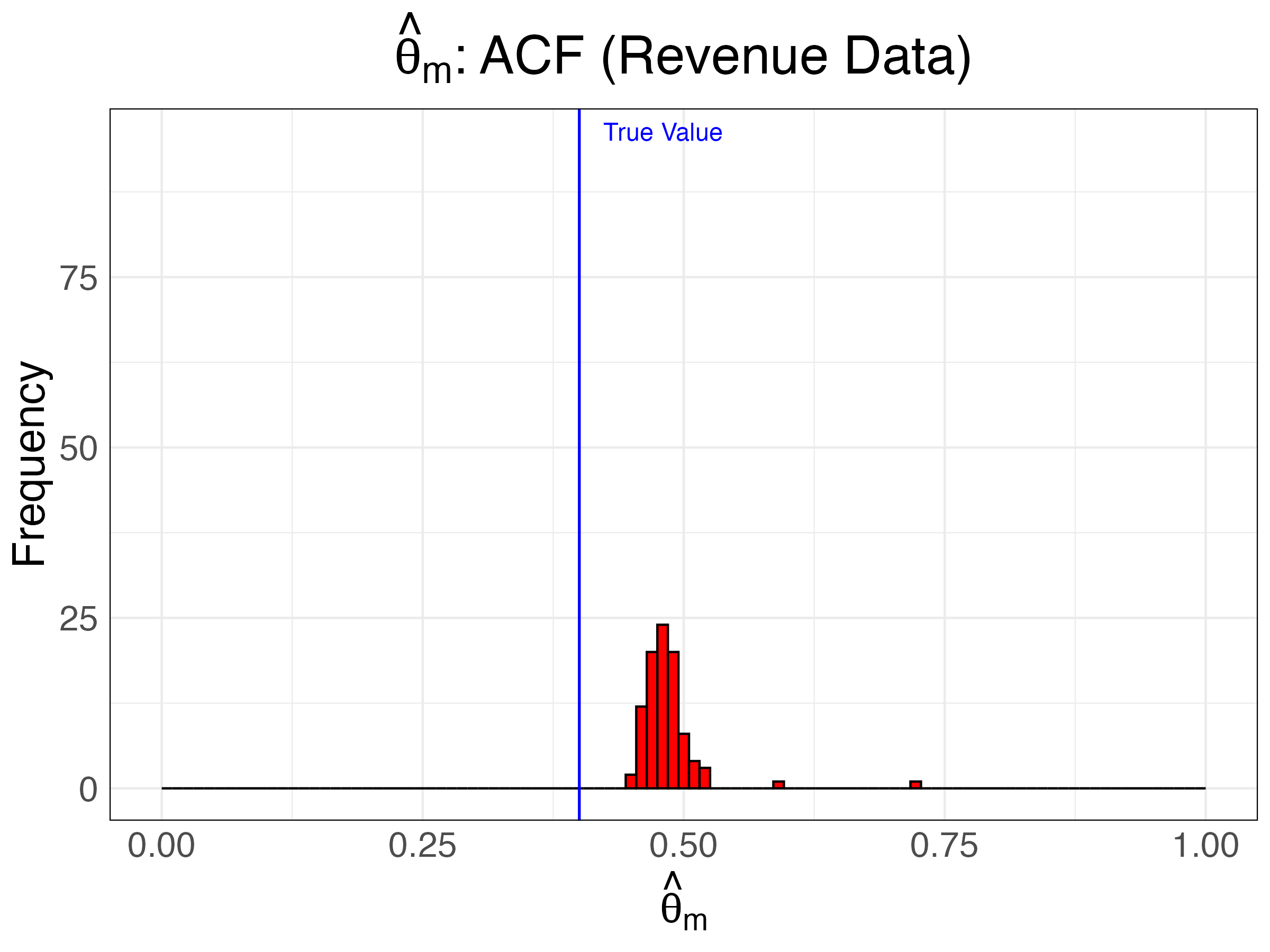}\includegraphics[scale=0.3]{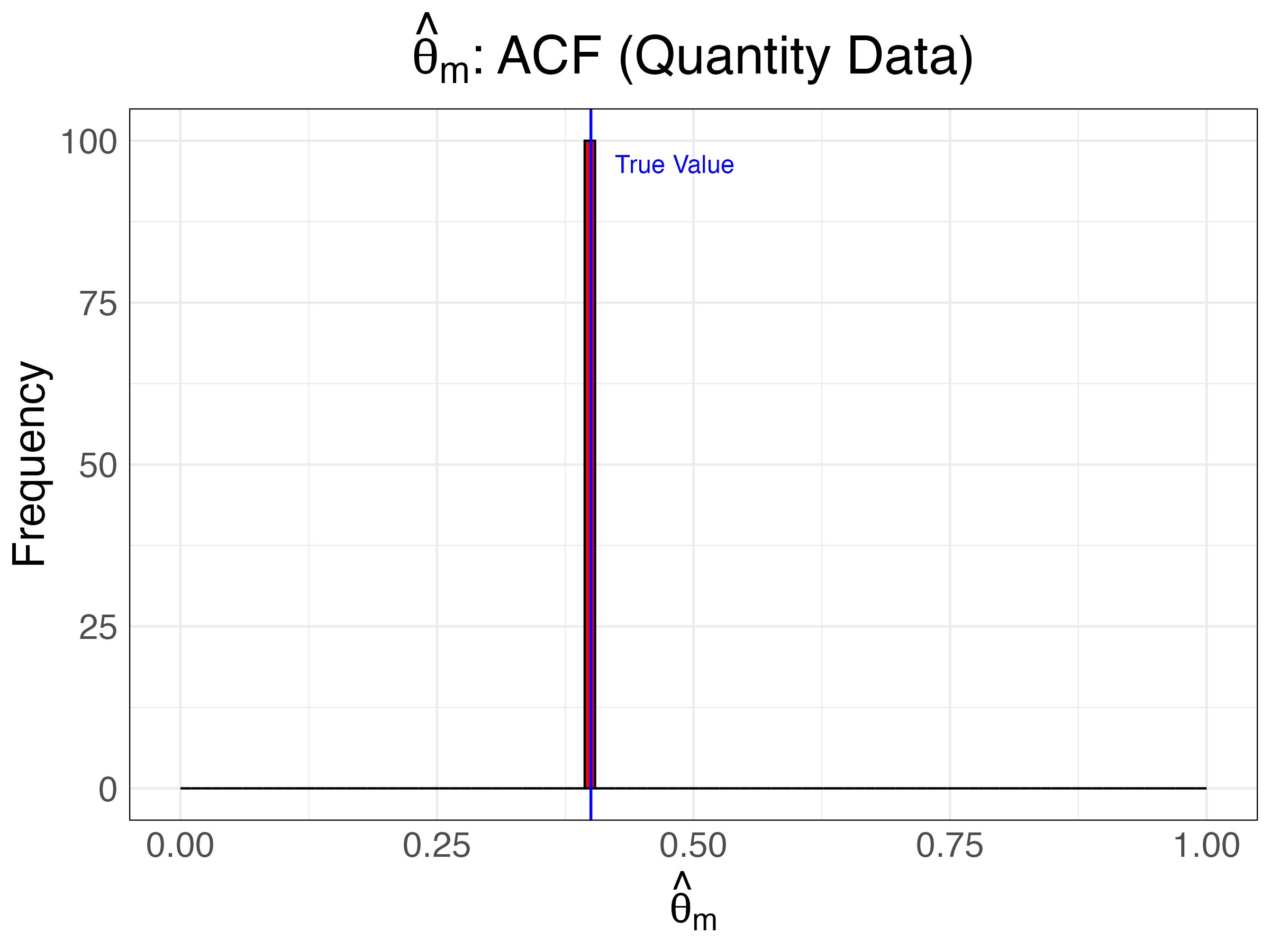} 
\par\end{centering}
\centering{}\includegraphics[scale=0.3]{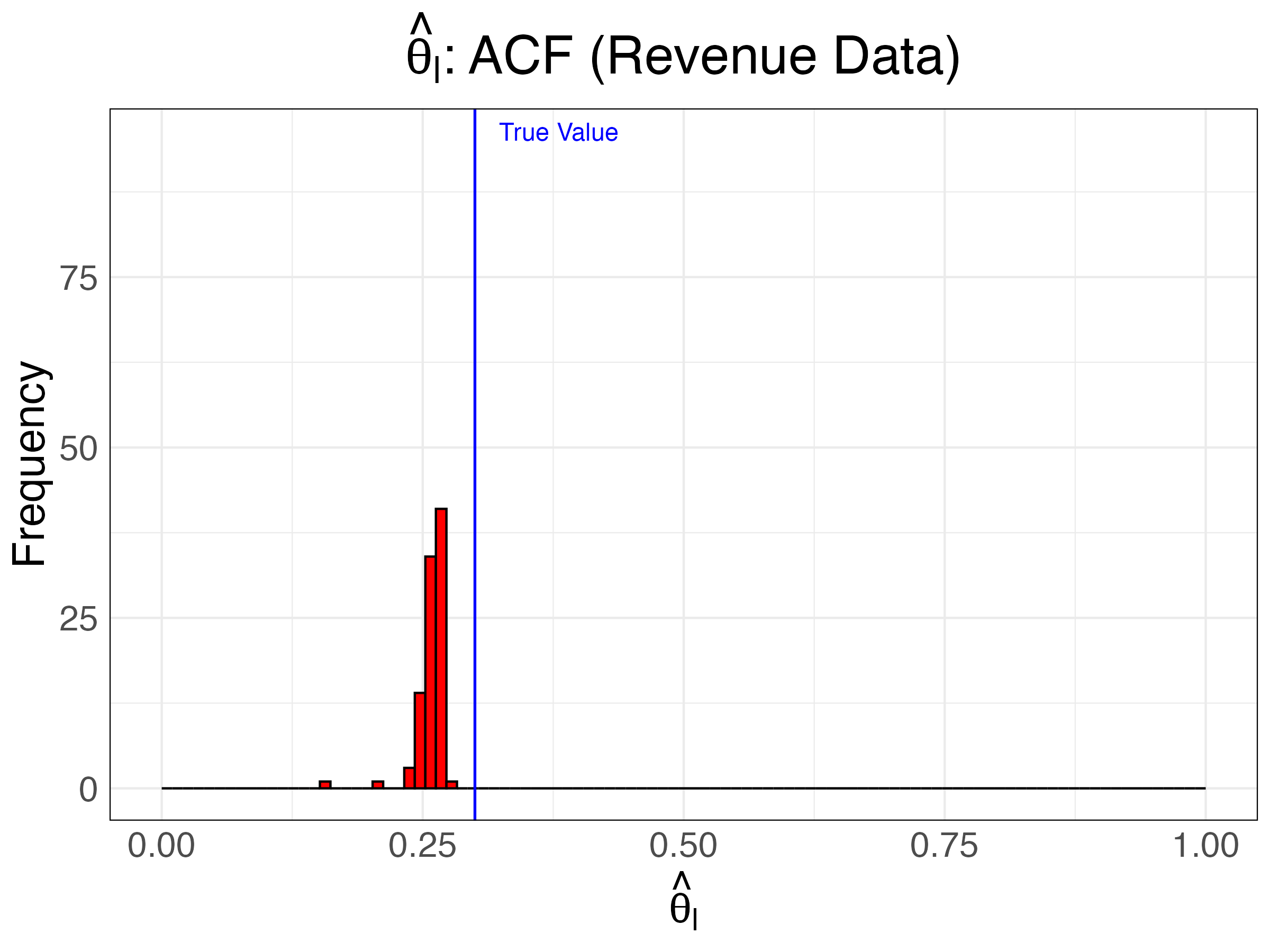}\includegraphics[scale=0.3]{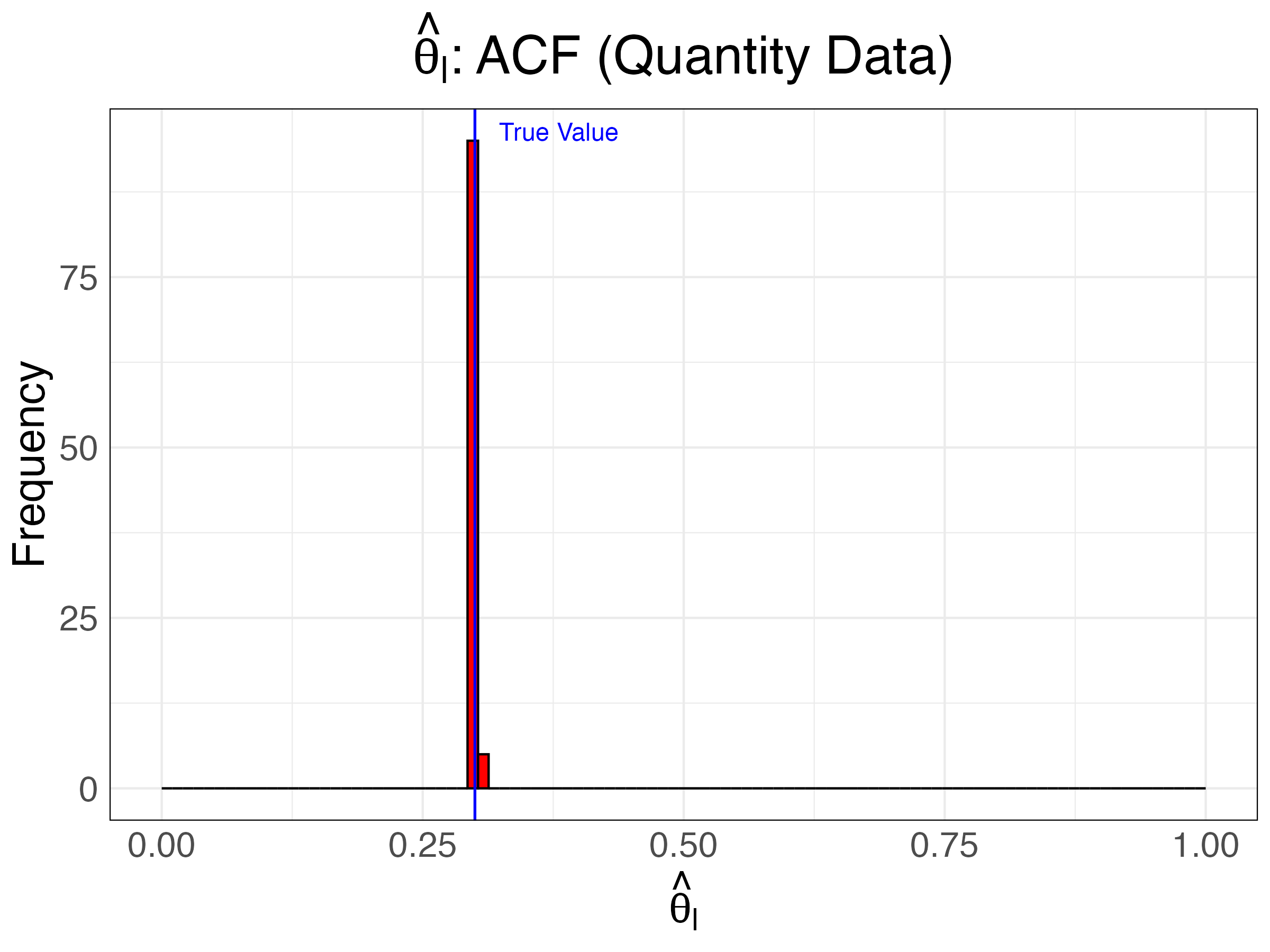}
\centering{}\includegraphics[scale=0.3]{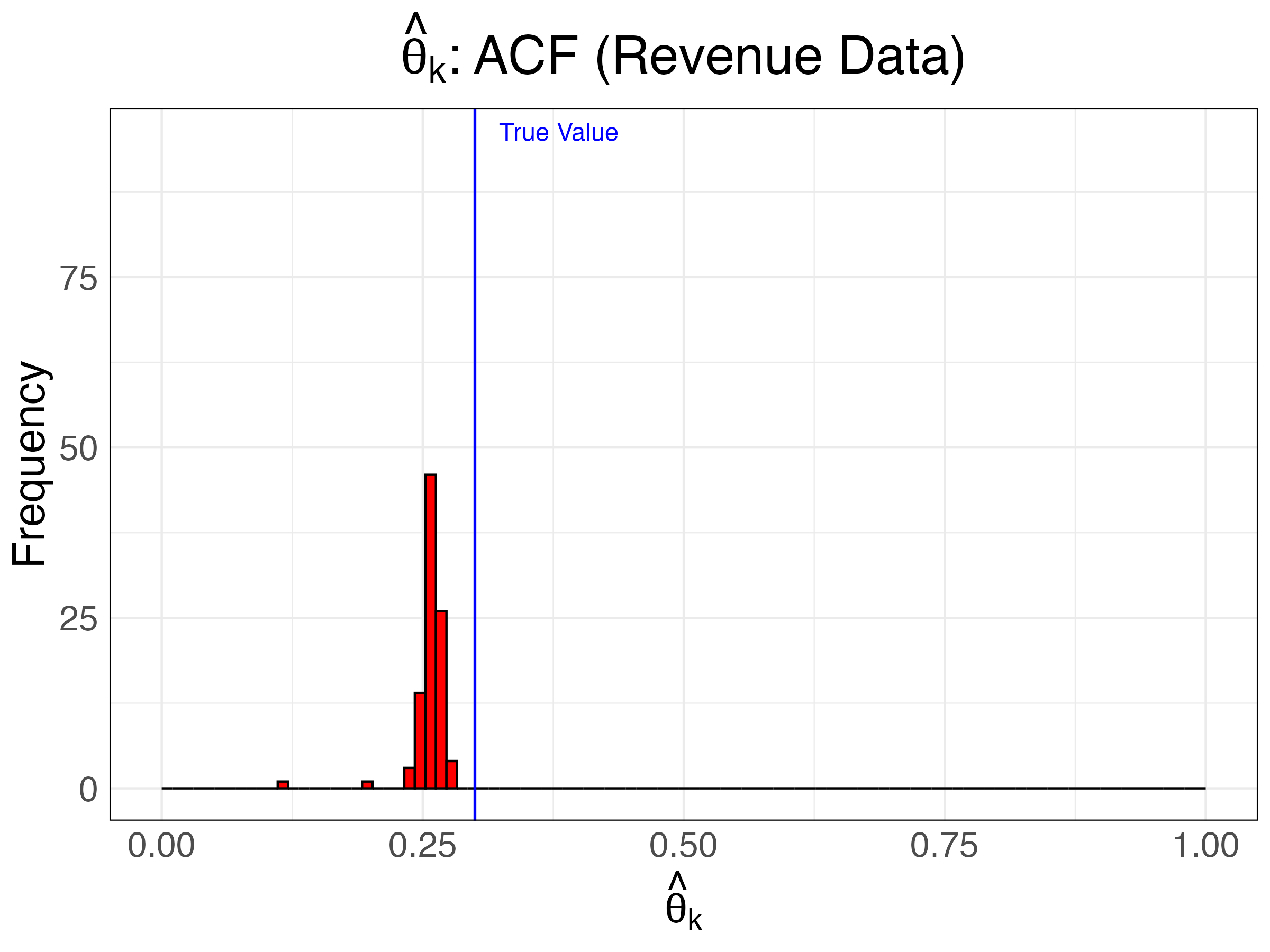}\includegraphics[scale=0.3]{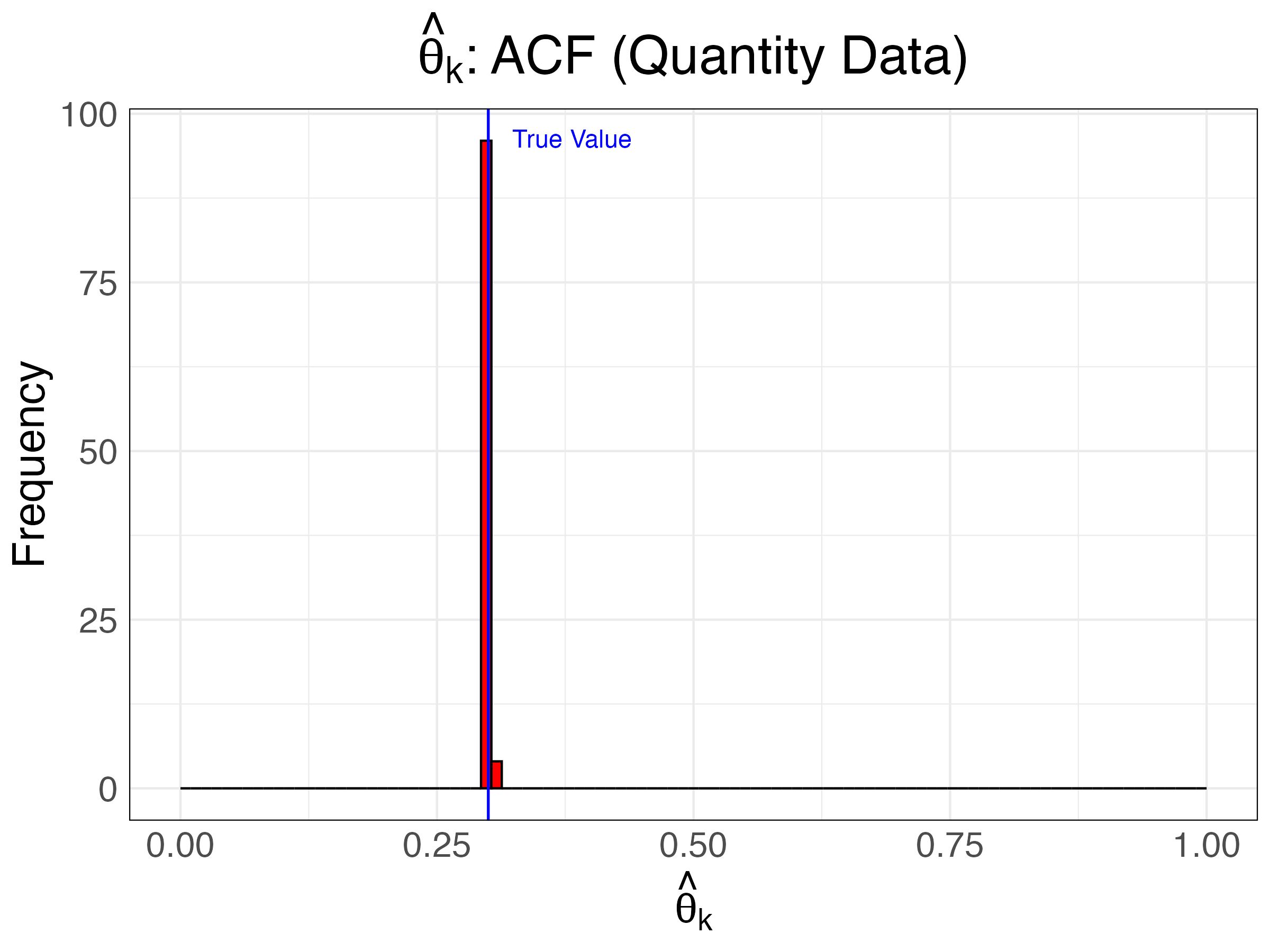} 
\end{figure}

Figure \ref{fig:CKS} shows the histograms of 100 estimates for $(\theta_{m},\theta_{k},\theta_{l})$
from our proposed estimator and $(\beta_{t},\delta_{t})$ at $t=T$
on the HSA demand system. Since $\gamma_{t}$ varies across simulation,
we report the histograms of estimation errors $\hat{\gamma}_{t}-\gamma_{t}$.
They are tightly clustered around their true values, suggesting that
our method recovers the structural parameters very well. Figure
\ref{fig:CKS-TFP-scatter} shows the scatter plot of true versus estimated
TFPs and markups for the first 20 Monte Carlo simulations. The strong
alignment of points along the 45-degree lines accompanying with the
low RMSEs and high correlations suggest that our method precisely
estimates TFPs and markups.

\begin{figure}[H]
\begin{centering}
\caption{Production Function and Demand System Estimation with Revenue Data}
\label{fig:CKS} \includegraphics[scale=0.3]{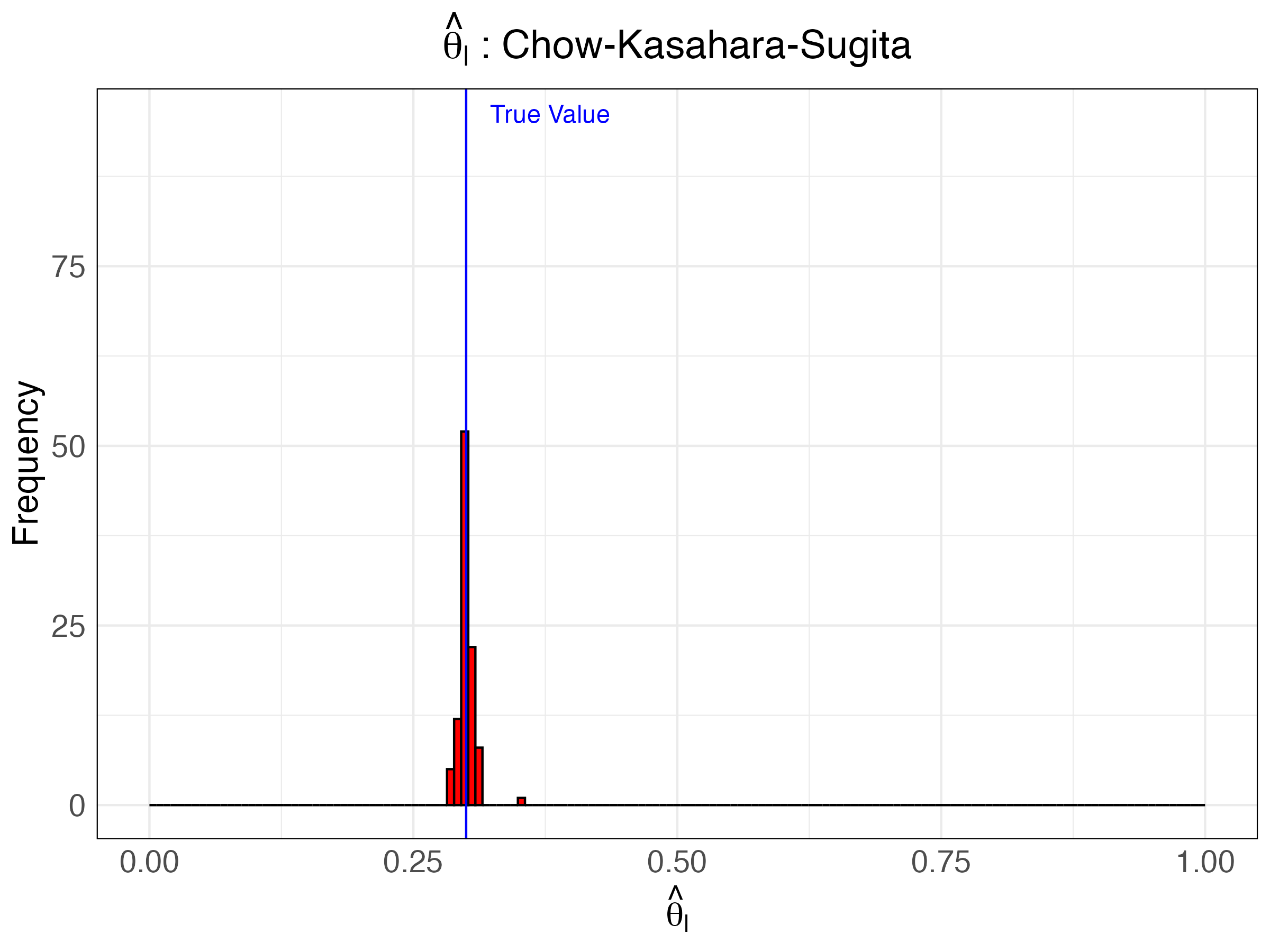}\includegraphics[scale=0.3]{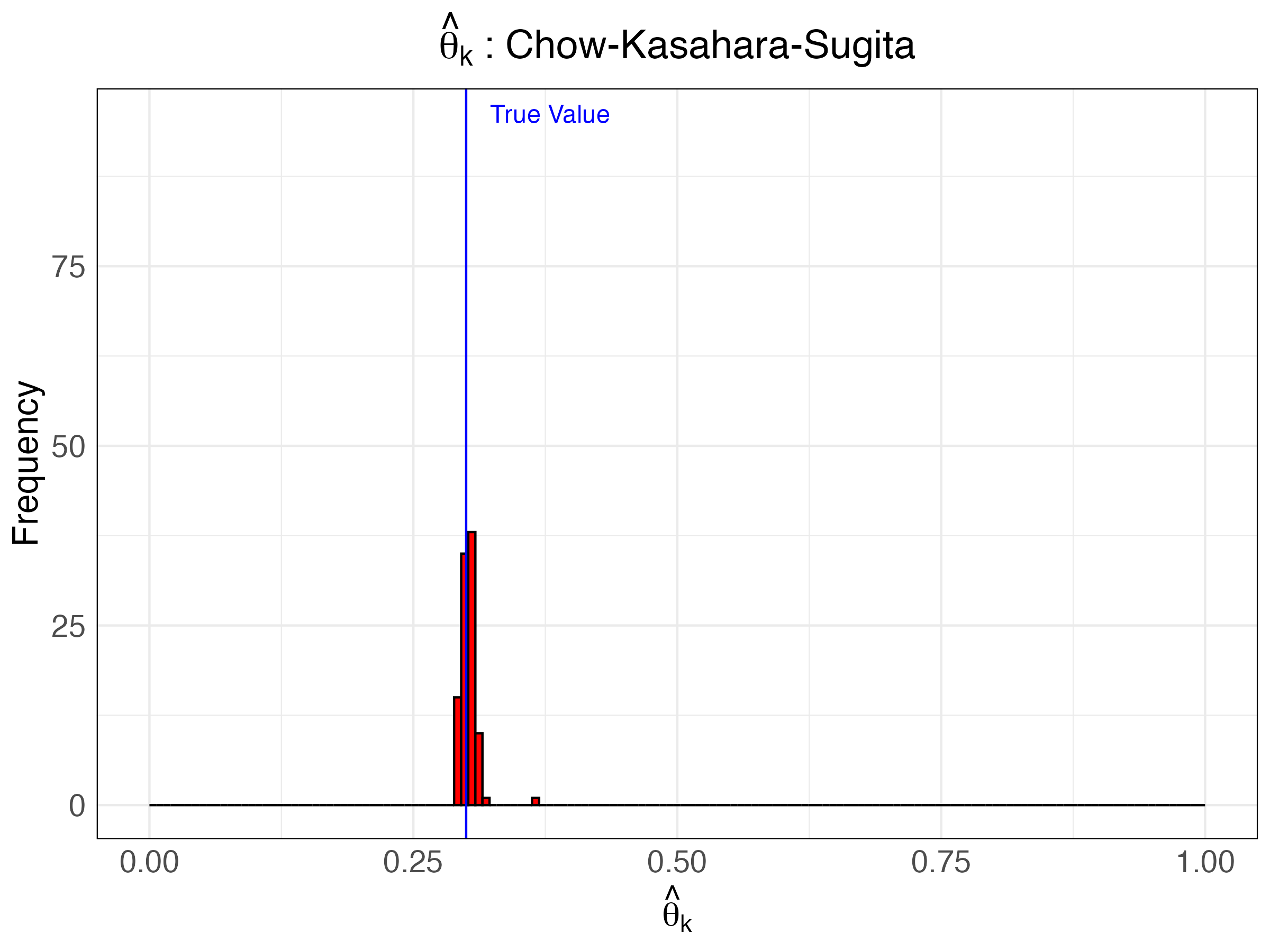} 
\par\end{centering}
\centering{} \includegraphics[scale=0.3]{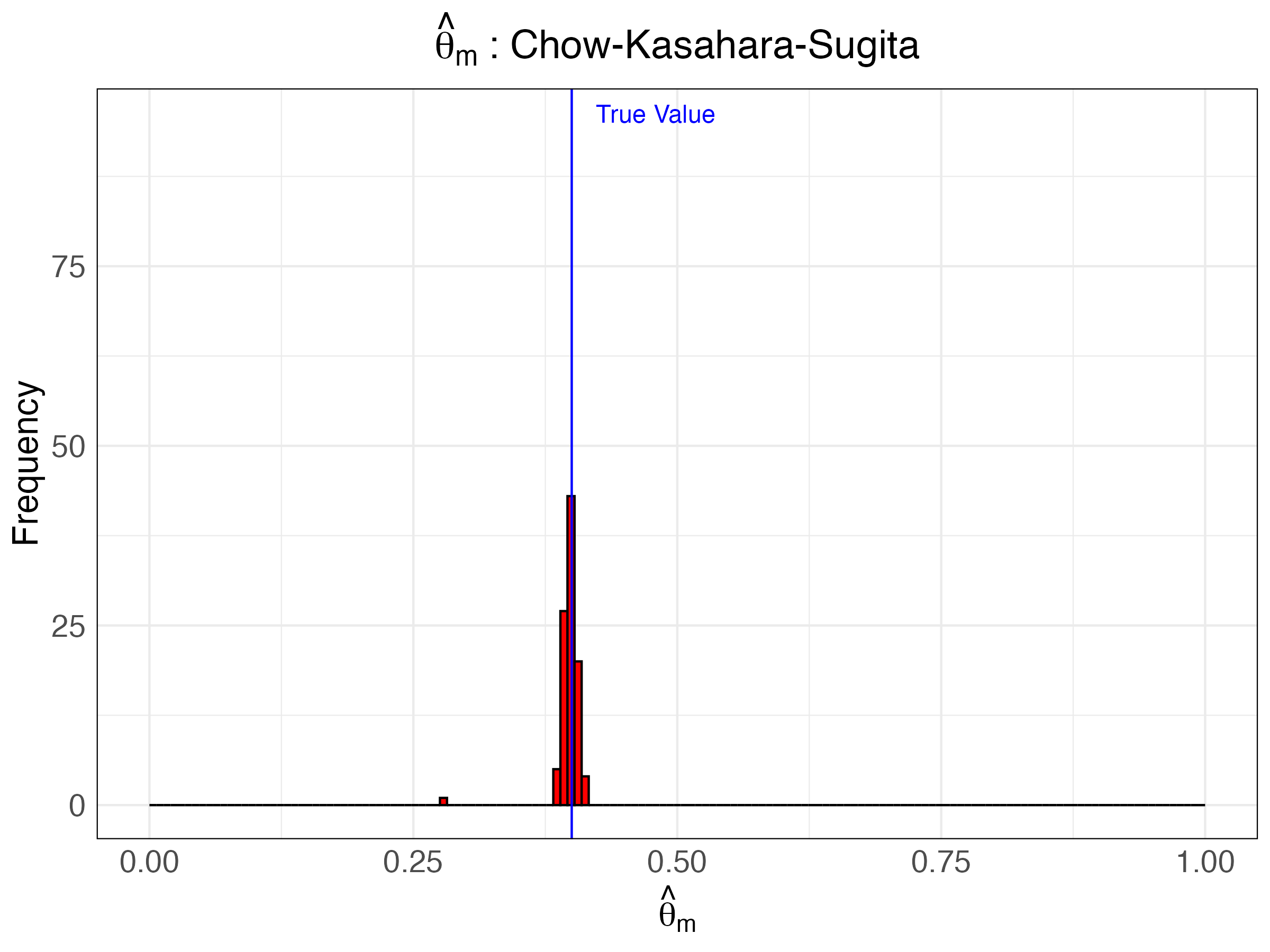}\includegraphics[scale=0.3]{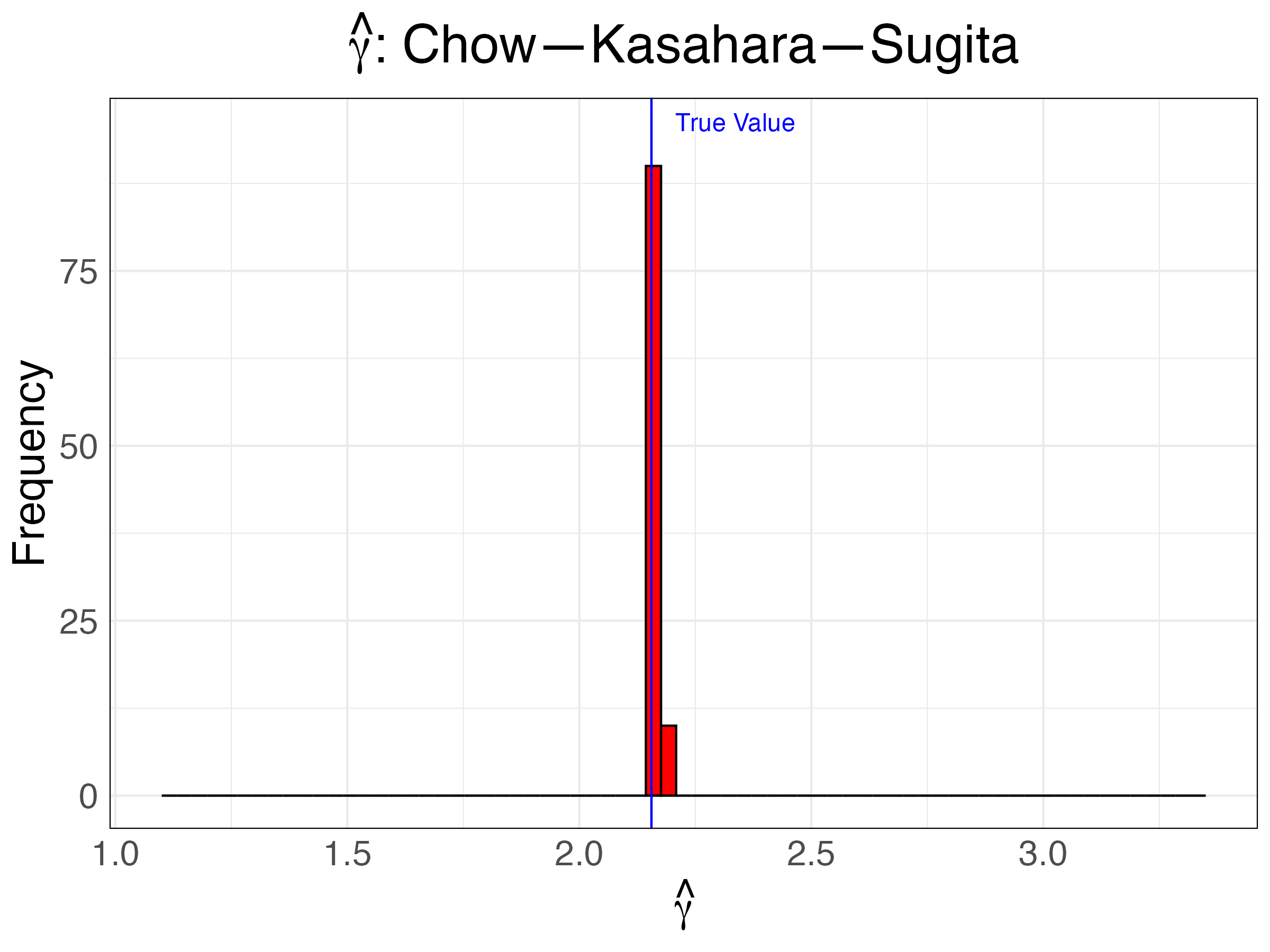}
\centering{}\includegraphics[scale=0.3]{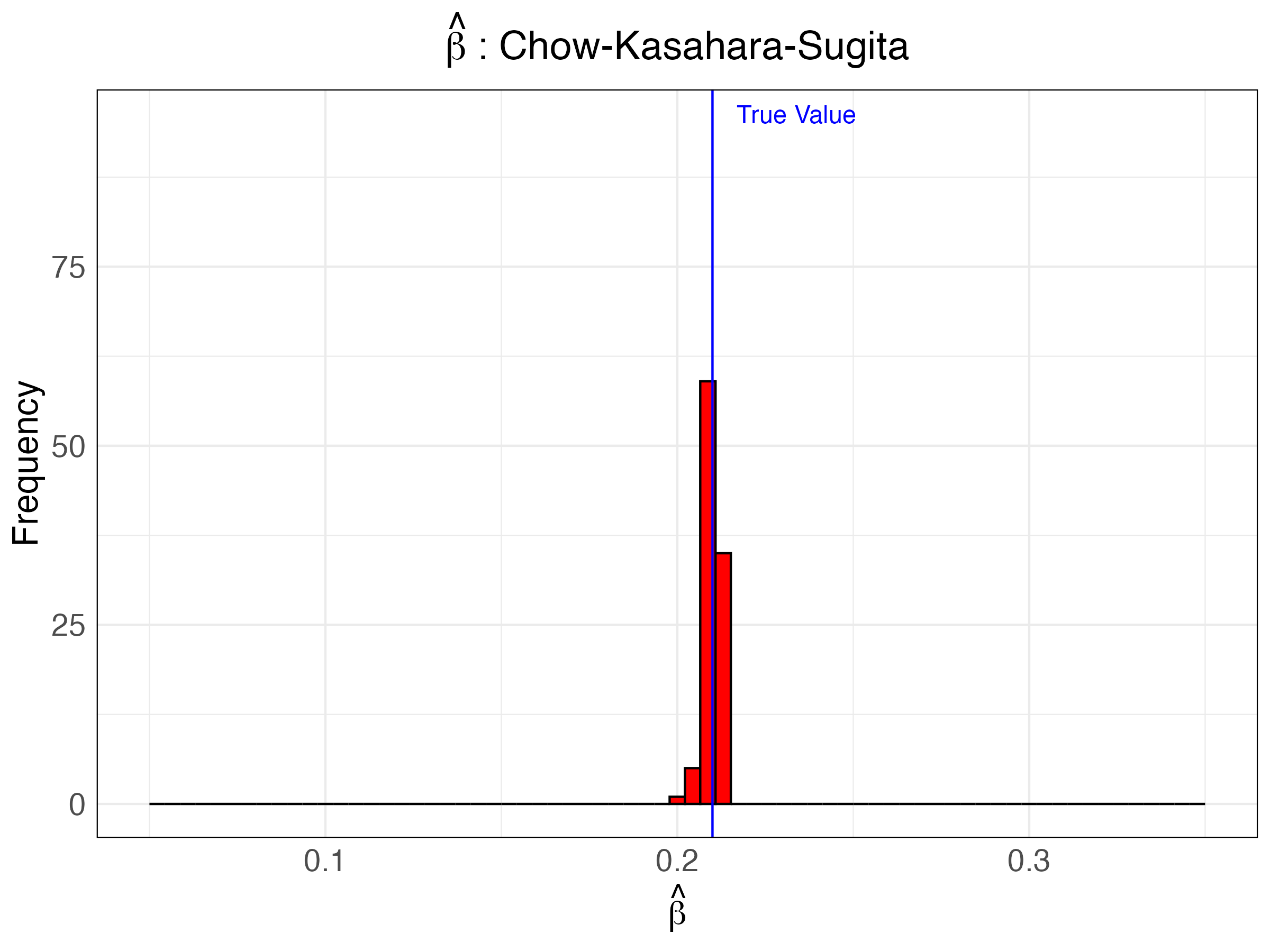}\includegraphics[scale=0.3]{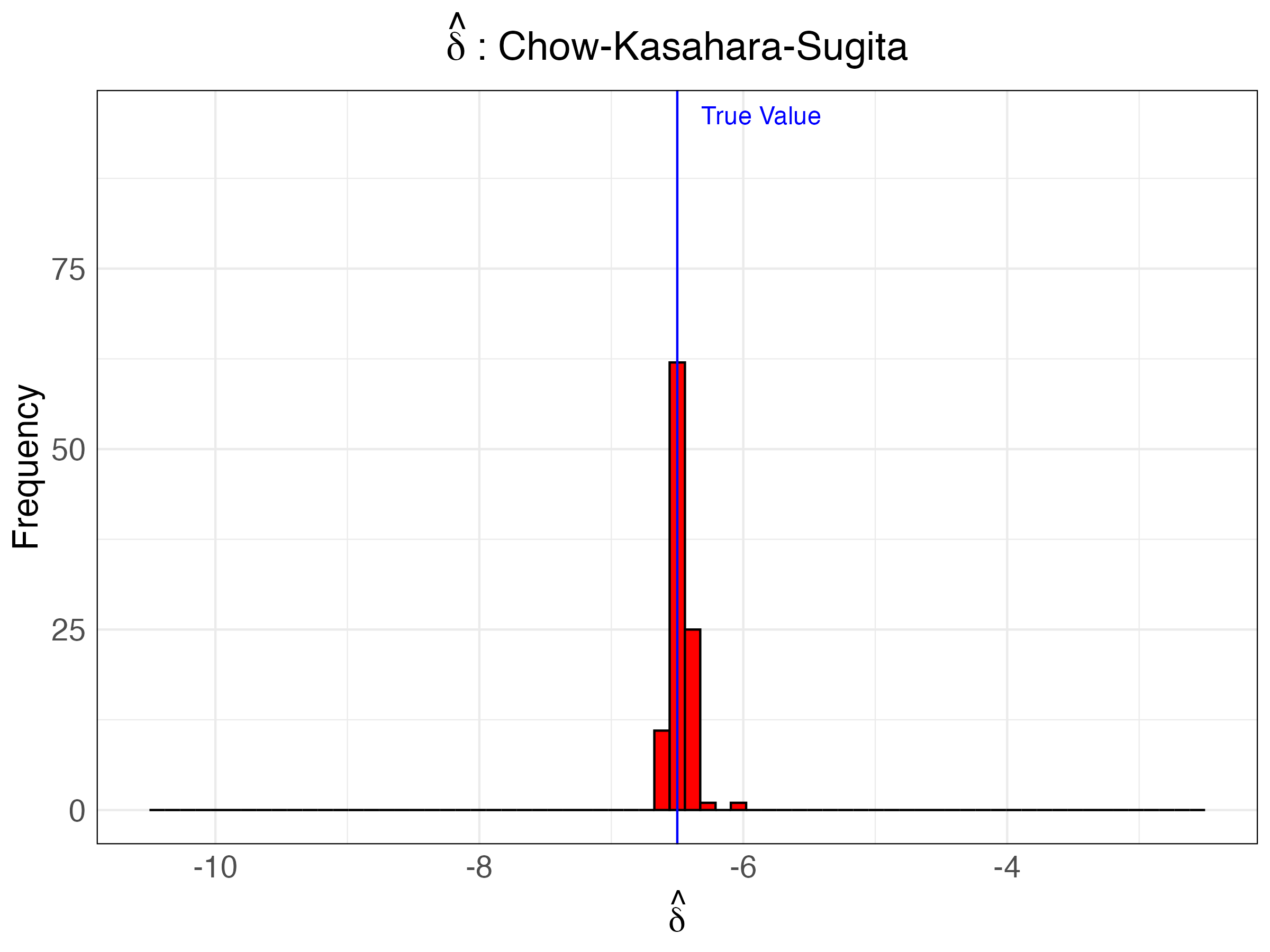}
\end{figure}

\begin{figure}[H]
\centering{}\caption{True and estimated TFPs and Markups for first 20 MC Simulations }
\label{fig:CKS-TFP-scatter} \includegraphics[scale=0.3]{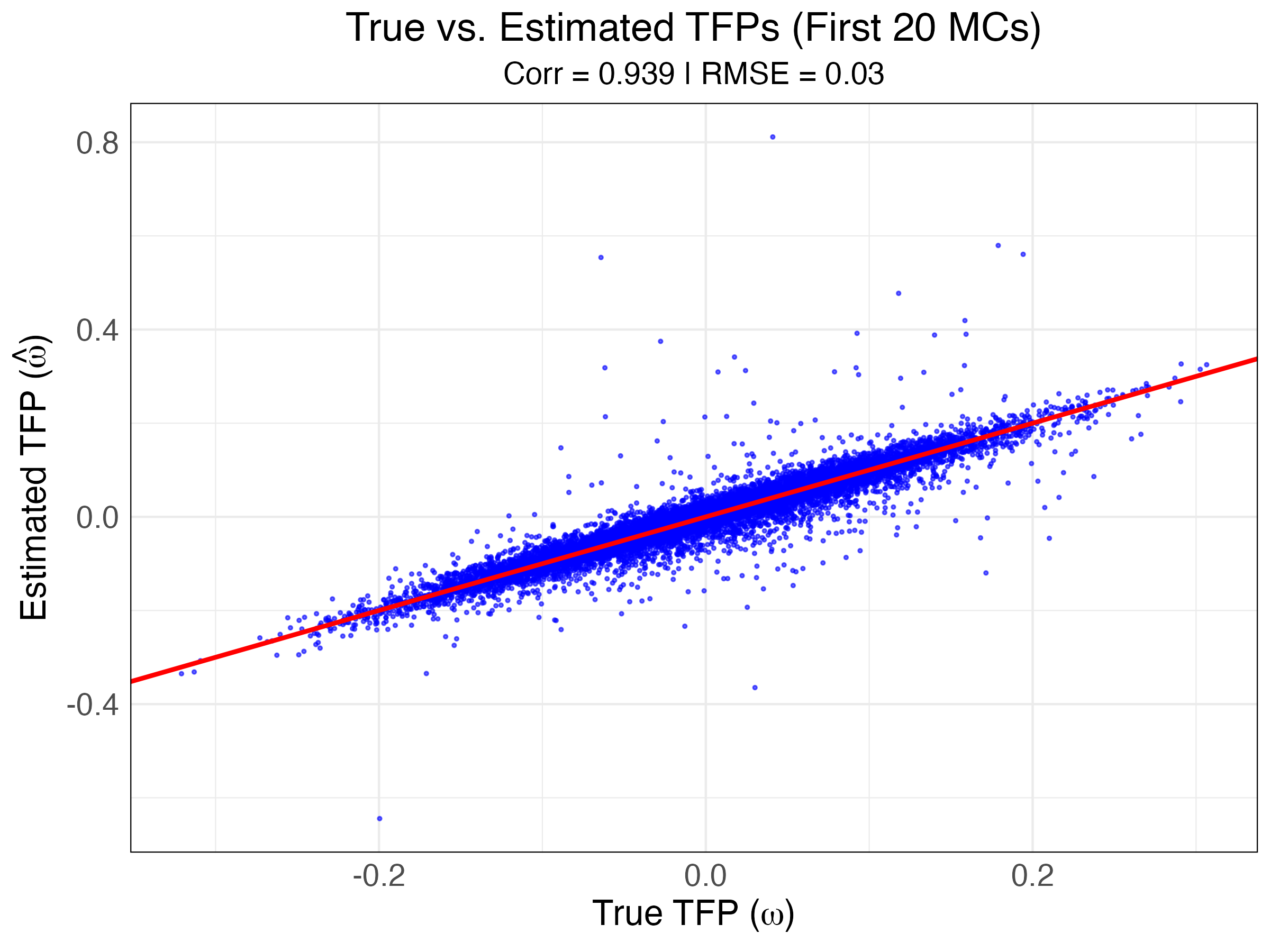}
\includegraphics[scale=0.3]{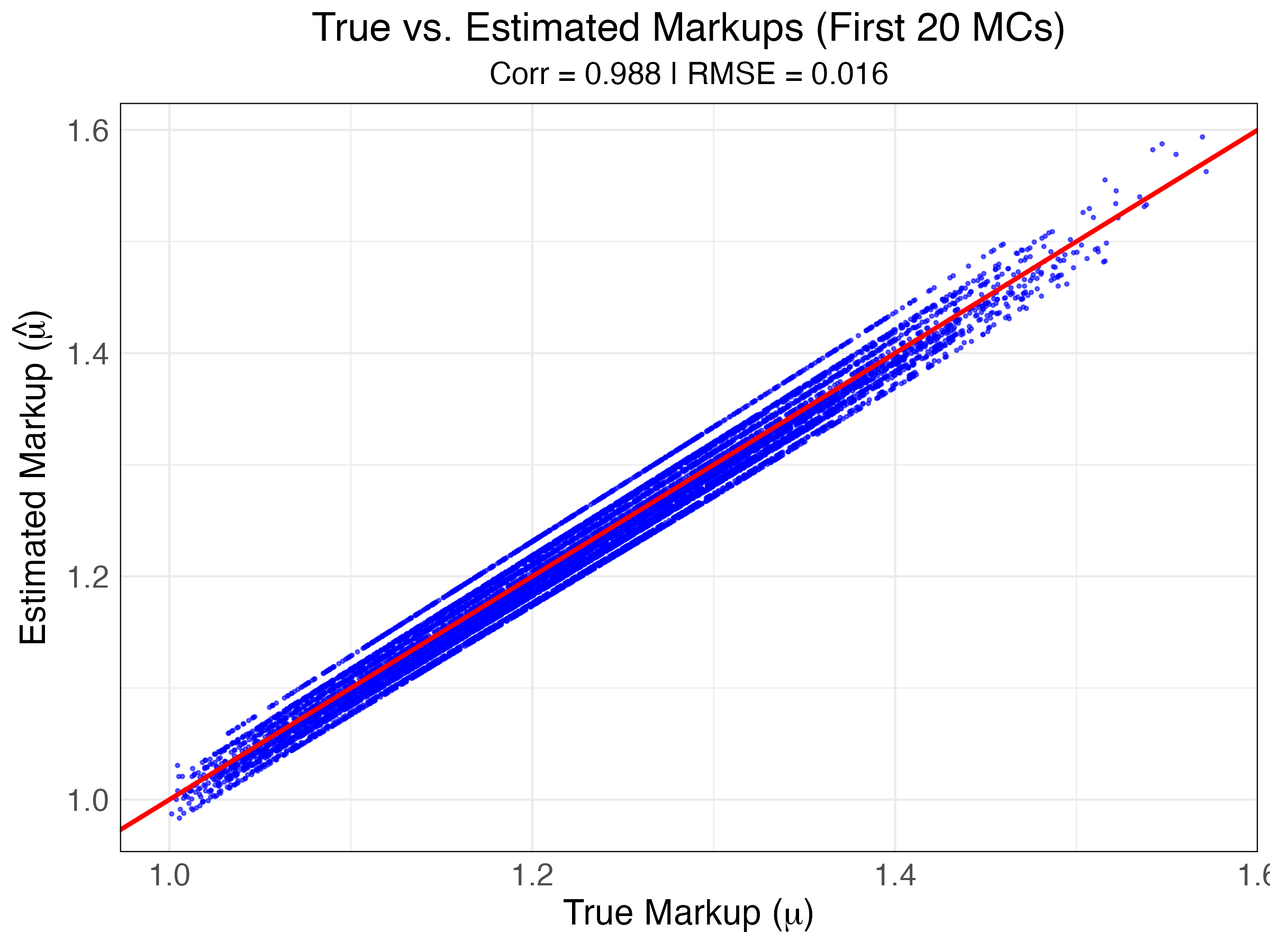} 
\end{figure}
\begin{flushleft}
\vspace{-0.4cm}
Notes: The red line indicates the 45-degree line.
\end{flushleft}

\section{Empirical Application: Chilean Manufacturing Sector}

\label{sec:Chilean}

The semiparametric estimator is applied to the Chilean manufacturing
plant dataset, derived from the census conducted by the Chilean Instituto
Nacional de Estadística, covering all plants with 10 or more employees
from 1993 to 1996. The primary objective of this empirical application
is to illustrate the practical feasibility of our method in a well-studied
empirical environment, rather than to uncover new stylized facts.
The Chilean manufacturing census has been used extensively in the
literature, making it a natural benchmark for assessing how our approach
performs relative to existing empirical frameworks. In addition, we
use this setting to examine whether the commonly imposed CES demand
specification is rejected by the data in favor of the more flexible
HSA demand system, and to quantify the welfare losses associated with
markups through the counterfactual experiments discussed in Section
\ref{subsec:demand}.

Following the standard approach in this literature, we define labor
input as the number of workers, material input as materials cost,
and revenue as income plus the value of capital produced for own use,
with all nominal values deflated using industry-specific deflators.
Capital input is constructed as the sum of deflated values for buildings,
machinery, and vehicles using the perpetual inventory method. Our
analysis focuses on the three largest manufacturing industries in
1996, corresponding to 2-digit SIC codes 31 (Food, Beverage, and Tobacco),
32 (Textiles, Apparel, and Leather Products), and 38 (Metal Products,
Electric and Non-electric Machinery, Transport Equipment, and Professional
Equipment). We exclude plants with non-positive capital, as well as
those with material cost-to-revenue ratios below zero, above one,
or in the bottom and top two percentiles of the distribution, in order
to remove observations that are inconsistent with the production model
or likely to reflect reporting errors and extreme measurement noise.

\subsection{Result}

\begin{table}[H]
\begin{centering}
\begin{tabular}[t]{lccccc}
\toprule 
Industry  & $n$  & $\hat{\theta}_{m}$  & $\hat{\theta}_{k}$  & $\hat{\theta}_{l}$  & $\hat{\bar{\mu}}$ \tabularnewline
\midrule 
31  & 736  & 0.852  & 0.012  & 0.136  & 1.392 \tabularnewline
 &  & (0.035)  & (0.009)  & (0.034)  & (0.059) \tabularnewline
 &  &  &  &  & \tabularnewline
32  & 463  & 0.755  & 0.070  & 0.174  & 1.501 \tabularnewline
 &  & (0.064)  & (0.039)  & (0.041)  & (0.126) \tabularnewline
 &  &  &  &  & \tabularnewline
38  & 391  & 0.685  & 0.043  & 0.272  & 1.661 \tabularnewline
 &  & (0.063)  & (0.035)  & (0.056)  & (0.155) \tabularnewline
\bottomrule
\end{tabular}
\par\end{centering}
\caption{Chilean Manufacturing plant estimation: Step 1, Step 2, and Step 3
(Industries 31, 32, and 38 in 1996). Standard errors in parentheses
with 100 non-parametric bootstrap iterations. }
\label{tab:chilean_estimates1} 
\end{table}

\begin{table}[H]
\begin{centering}
\begin{tabular}[t]{lcccc}
\toprule 
Industry  & $n$  & $\hat{\beta}$  & $\hat{\gamma}$  & $\hat{\delta}$ \tabularnewline
\midrule 
31  & 700  & 0.173  & 1.973  & -8.085 \tabularnewline
 &  & (0.005)  & (0.113)  & (0.397) \tabularnewline
 &  &  &  & \tabularnewline
32  & 408  & 0.087  & 0.979  & -6.913 \tabularnewline
 &  & (0.015)  & (0.236)  & (0.897) \tabularnewline
 &  &  &  & \tabularnewline
38  & 351  & 0.093  & 1.272  & -5.237 \tabularnewline
 &  & (0.014)  & (0.288)  & (0.670) \tabularnewline
\bottomrule
\end{tabular}
\par\end{centering}
\caption{Chilean Manufacturing plant estimation: Step 4 (Industries 31, 32,
and 38 in 1996). Standard errors in parentheses with 100 non-parametric
bootstrap iterations. }
\label{tab:chilean_estimates2} 
\end{table}

Tables \ref{tab:chilean_estimates1} and \ref{tab:chilean_estimates2}
report estimates of the structural parameters for the Chilean manufacturing
industries obtained using our method. Notably, the estimates of $\beta$
is statistically significantly different from zero. Since the HSA
demand system nests the CES demand system as the special case $\beta=0$,
this provides evidence against the CES specification in favor of the
more flexible HSA demand system in Chilean manufacturing industries. Appendix~\ref{app:robustness_rts} shows that these findings are robust
under decreasing returns to scale.



Figures \ref{fig:rvsfittedr} and \ref{fig:uvseps} show scatter plots
of observed revenue $r_{it}$ against fitted revenue from the HSA
demand system in Step 4, along with quantile--quantile plots comparing
the estimated demand shocks $\epsilon_{it}$ from Step 4 with the
corresponding nonparametric estimates $u_{it}$ from Step 1. Observed
revenue aligns closely with fitted revenue under the parametric HSA
restriction, and the quantiles of $\epsilon_{it}$ generally track
those of $u_{it}$ along the 45-degree line. While the fit is not
perfect, these patterns provide supportive evidence for the HSA demand
specification, particularly for Industries 31 and 32. Appendix~\ref{app:robustness_rts} shows that these patterns are preserved
under decreasing returns to scale.

\begin{figure}[H]
\caption{Observed revenue vs. fitted revenue from Step 4}
\label{fig:rvsfittedr} \centering{}\includegraphics[scale=0.4]{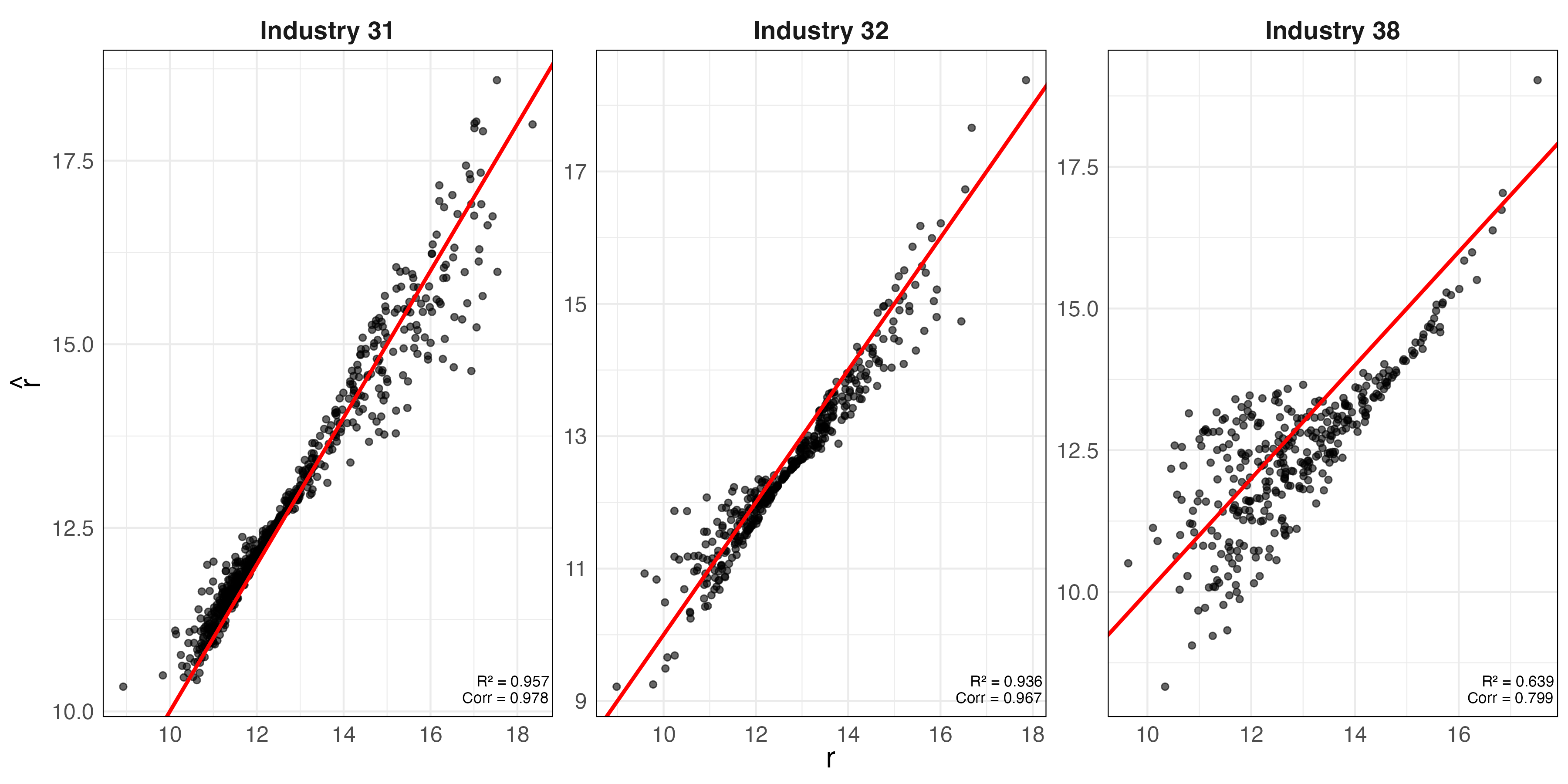} 
\begin{flushleft}
\vspace{-0.2cm}
Notes: The red line indicates the 45-degree line.
\end{flushleft}
\end{figure}

\begin{figure}[H]
\caption{Rank of demand shock from Step 1 vs. Step 4}
\label{fig:uvseps} \centering{}\includegraphics[scale=0.4]{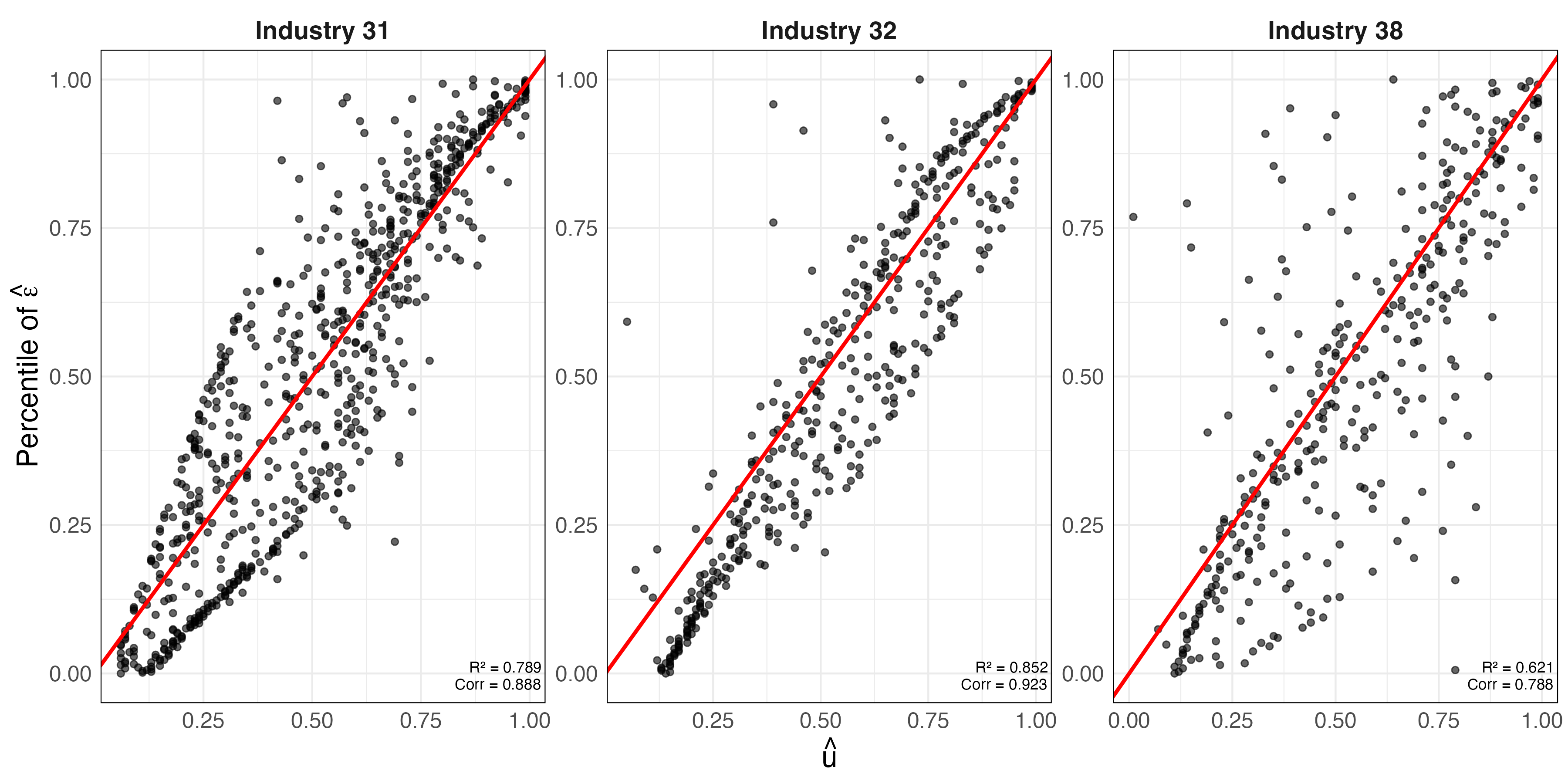} 
\begin{flushleft}
\vspace{-0.2cm}
Notes: The red line indicates the 45-degree line.
\end{flushleft}
\end{figure}




\subsection{Counterfactual Welfare Analysis}

Using the estimated HSA demand system, we quantify the consumer utility
loss attributable to firm's market power by calculating the compensation
variation for a counterfactual marginal cost pricing equilibrium (MCPE)
as described in Section \ref{subsec:demand}.

In our counterfactual analysis, we hold fixed the structural primitives:
the production function $f_{t}$, the demand shock distribution $G_{u}$,
and the structural taste parameter $\gamma_{t}$. The counterfactual
equilibrium under a marginal cost pricing equilibrium (MCPE) recomputes
the aggregate index  $\Delta q_{t}^c$ and firm-level outcomes,
ensuring that the equilibrium condition (\ref{eq:MCPE}). Specifically,
we implement the following procedure. 


First, we calculate a monopolistic competition equilibrium (MCE),
utilizing the estimated structural parameters from Tables \ref{tab:chilean_estimates1}
and \ref{tab:chilean_estimates2} and the firm-level states $(k_{it},l_{it},\hat{\omega}_{it},\hat{\epsilon}_{it})$.
Since the observed output values do not exactly satisfy the equilibrium
conditions (\ref{eq:MCE})---due to model misspecification and estimation
error for $\gamma_{t}$---we recompute the equilibrium output vector
$\{y_{it}^{m}\}_{i=1}^{n}$ and parameter $\gamma_{t}^{new}$ under
the normalization $q_{t}^{m}=0$ by jointly solving the conditions
in (\ref{eq:MCE}), which are simplified as: 
\begin{align*}
\Phi_{t}+\hat{\delta}_{t}-\hat{\beta}_{t}y_{it}^{m}+\gamma_{t}^{new}+\Xi_{it}-\frac{y_{it}^{m}}{\hat{\theta}_{m}}+\frac{1}{\hat{\beta}_{t}}\ln(1+\hat{\epsilon}_{it})\\
-\left(1+\frac{1}{\hat{\beta}_{t}}\right)\ln\left(\exp(-\hat{\beta}_{t}y_{it}^{m}+\gamma_{t}^{new})+\hat{\epsilon}_{it}\right) & =0\text{ for }i=1,..,N_{t}\\
\sum_{i=1}^{N_{t}}\exp\left(\hat{\delta}_{t}-\frac{1}{\hat{\beta}_{t}}\ln\left(\frac{\exp(-\hat{\beta}_{t}y_{it}^{m}+\gamma_{t}^{new})+\hat{\epsilon}_{it}}{1+\hat{\epsilon}_{it}}\right)\right) & =1
\end{align*}
where $\Xi_{it}=\ln\hat{\theta}_{m}+(\hat{\theta}_{k}k_{it}+\hat{\theta}_{l}l_{it}+\hat{\omega}_{it})/\hat{\theta}_{m}$
and $p_{t}^{m}$ is normalized to zero. The resulting output vector and
parameters are fully consistent with our HSA demand system, which
mitigates model misspecification bias and addresses estimation errors
in the counterfactual analysis.

Second, we consider a marginal cost pricing equilibrium (MCPE) for
given counterfactual log income $\Phi_{t}^{c}$. We find an output
vector $\mathbf{y}_{t}^{c}(\Phi_{t}^{c})$ and quantity index $\Delta{q}_{t}^{c}(\Phi_{t}^{c})$
by solving (\ref{eq:MCPE}), which are simplified as: 
\begin{align*}
\Phi_{t}^{c}+\hat{\delta}_{t}+\Xi_{it}-\frac{y_{it}^{c}(\Phi_{t}^{c})}{\hat{\theta}_{m}}-\frac{1}{\hat{\beta}_{t}}\ln\left(\frac{\exp(-\hat{\beta}_{t}\left(y_{it}^{c}(\Phi_{t}^{c})-\Delta{q}_{t}^{c}(\Phi_{t}^{c})\right)+\gamma_{t}^{new})+\hat{\epsilon}_{it}}{1+\hat{\epsilon}_{it}}\right) & =0\text{ for }i=1,..,N_{t}\\
\sum_{i=1}^{N_{t}}\exp\left(\hat{\delta}_{t}-\frac{1}{\hat{\beta}_{t}}\ln\left(\frac{\exp(-\hat{\beta}_{t}\left(y_{it}^{c}(\Phi_{t}^{c})-\Delta{q}_{t}^{c}(\Phi_{t}^{c})\right)+\gamma_{t}^{new})+\hat{\epsilon}_{it}}{1+\hat{\epsilon}_{it}}\right)\right) & =1.
\end{align*}
We assume the same income case $\Phi_{t}^{c}=\Phi_{t}$ for our benchmark,
while also analyzing how potential changes in income might affect
the equilibrium outcome.


To calculate the compensating variation for transitioning from an
MCE to an MCPE, we solve for the counterfactual income $\Phi_{t}^{c*}$
that results in a zero utility change, as defined in (\ref{eq:CV}):
\begin{align*}
\Delta\ln U^{c}(\Phi_{t}^{c*}) & =\Delta{q}_{t}^{c}(\Phi_{t}^{c*})\\
 & +\sum_{i}\int_{y_{it}^{m}}^{y_{it}^{c}(\Phi_{t}^{c*})-\Delta{q}_{t}^{c}(\Phi_{t}^{c*})}\exp\left(\hat{\delta}_{t}-\frac{1}{\hat{\beta}_{t}}\ln\left(\frac{\exp(-\hat{\beta}_{t}(\zeta-q_{t}^{m})+\gamma_{t}^{new})+\hat{\epsilon}_{it}}{1+\hat{\epsilon}_{it}}\right)\right)d\zeta=0.
\end{align*}
The compensating variation is then calculated as $CV_{t}:=\exp(\Phi_{t}^{c*})-\exp(\Phi_{t})$,
which quantifies the consumer's welfare change from firm's market
power.

\begin{table}[h]
\centering %
\begin{tabular}[t]{lccc}
\toprule 
Industry  & CV  & $\Delta\Pi$  & Overall \tabularnewline
\midrule 
31  & -14.7  & -11.5  & 3.20 \tabularnewline
 & (2.65)  & (2.43)  & (0.648) \tabularnewline
 &  &  & \tabularnewline
32  & -18.9  & -11.9  & 6.93 \tabularnewline
 & (7.74)  & (7.14)  & (1.95) \tabularnewline
 &  &  & \tabularnewline
38  & -13.3  & -6.12  & 7.14 \tabularnewline
 & (11.3)  & (7.82)  & (4.28) \tabularnewline
\bottomrule
\end{tabular}\caption{Compensating Variation, profit loss, and overall welfare change in
percentage of industry revenue $\exp(\Phi_{t})$ in the transition
from original equilibrium to MCPE of Chilean Industries 31, 32, and
38 in 1996 under HSA demand system.  Sign convention: $CV_{t}<0$
indicates consumers require less income under competition (gain from
eliminating market power); $\Delta\Pi_{t}<0$ is firms' profit loss;
Overall $=\Delta\Pi_{t}-CV_{t}>0$ is the net welfare gain. Standard errors in parentheses
with 100 non-parametric bootstrap iterations.}
\label{tab:welfare} 
\end{table}

Finally, we calculate firms' profit loss. In the case of $\Phi_{t}^{c}=\Phi_{t}$,
the total profit change (\ref{eq:profit change}) is expressed as
\begin{align}
\Delta\Pi:=\Pi^{c}-\Pi^{m} & =\sum_{i=1}^{N_{t}}\exp\left(-\frac{\hat{\theta}_{k}k_{it}+\hat{\theta}_{l}l_{it}+\hat{\omega}_{it}}{\hat{\theta}_{m}}\right)\left\{ \exp\left(\frac{y_{it}^{m}}{\hat{\theta}_{m}}\right)-\exp\left(\frac{y_{it}^{c}(\Phi_{t})}{\hat{\theta}_{m}}\right)\right\} \label{eq:}
\end{align}
since $\chi_{it}\left(y_{it}\right)=(y_{it}-\hat{\theta}_{k}k_{it}-\hat{\theta}_{l}l_{it}-\hat{\omega}_{it})/\hat{\theta}_{m}$.

From Table \ref{tab:welfare}, we found empirical evidence that under
our HSA demand system market power in these industries results in
consumer's welfare losses of approximately 10\%--15\% and profit
gains of approximately 4\%--11\%, with overall welfare losses of
3\%--6\% of industry revenue in the three largest Chilean manufacturing
industries in 1996. To put these findings in context, our estimates far exceed the classic Harberger-triangle calculation of less than 0.1\% of GDP \citep{harberger1954}. They are more consistent with recent general-equilibrium analyses that find substantially larger welfare costs. \citet{baqaee2020productivity} estimate that eliminating misallocation would raise output by approximately 15\%, while \citet{edmond2023costly} find welfare costs of variable markups of around 7.5\% in consumption-equivalent terms. Our estimates, though derived from a different structural framework, are comparable in magnitude---reinforcing the view that the welfare costs of imperfect competition are economically significant.  Appendix~\ref{app:robustness_rts} shows that these estimates are broadly robust under moderate decreasing returns to scale of 0.9.

\section{Concluding Remarks}

\label{sec:Concluding-Remarks}


This paper develops constructive nonparametric identification of production functions and markups from revenue data, simultaneously addressing the two fundamental challenges in production function estimation since \citet{ma44ecma} when revenue is used as output: input-TFP correlation and markup heterogeneity bias. By modeling revenue as a function of output, observed characteristics, and an unobserved demand shock, we identify various economic objects of interest under standard assumptions. Our semiparametric estimator, implementable with standard datasets, performs well in simulation. Applied to Chilean manufacturing plant data, we find evidence of CES demand misspecification and estimate that market power generates welfare losses of approximately 3\%--6\% of industry revenue in the three largest manufacturing industries in 1996.

While our identification results establish that firm-level revenue data suffice to recover production functions and consumer demand without physical quantity data, the proof relies on structural assumptions that merit discussion. First, we assume perfectly competitive input markets; monopsony power would make input prices firm-specific and endogenous, breaking the link between expenditure shares and marginal products. Second, we assume Hicks-neutral, scalar productivity; factor-augmenting technological change requires additional structure or data for identification. Third, we assume monopolistic competition; in oligopolistic markets, strategic interaction makes the first-order conditions underlying our markup identification insufficient without a fully specified equilibrium model. Relaxing any of these assumptions is an important direction for future work.

 \bibliographystyle{asa}
\bibliography{markup_new_intro_review}

\newpage{}

\appendix


\setcounter{page}{1} 
\global\long\def\thepage{\Alph{section}.\arabic{page}}%
\setcounter{equation}{0} 
\global\long\def\theequation{\Alph{section}.\arabic{equation}}%
\setcounter{assumption}{0} 
\global\long\def\theassumption{\Alph{section}.\arabic{assumption}}%
\setcounter{thm}{0} 
\global\long\def\thethm{\Alph{section}.\arabic{thm}}%
\setcounter{prop}{0} 
\global\long\def\theprop{\Alph{section}.\arabic{prop}}%
\setcounter{lem}{0} 
\global\long\def\thelem{\Alph{section}.\arabic{lem}}%
\setcounter{figure}{0} 
\global\long\def\thefigure{\Alph{section}.\arabic{figure}}%
\setcounter{table}{0} 
\global\long\def\thetable{\Alph{section}.\arabic{table}}%

\begin{center}
{\LARGE\textbf{Online Supplemental Appendix}} 
\par\end{center}

\section{Simulation}
\label{app:simulation}

\subsection{CoPaTh-HSA Demand System }
\label{subsec:CoPaTh-HSA-Demand-System}

We consider the ``Incomplete Constant (and Common) Pass-Through''
formulation of the CoPaTh-HSA demand system in \citet{matsuyama2020constant}.
With their original notions, the budget share function for product
$\omega$ is expressed as: 
\begin{equation}
S_{\omega}^{*}\left(\frac{Y}{Q(\mathbf{Y})}\right)=\gamma_{\omega}\beta_{\omega}\left[\left(1-\frac{1}{\sigma_{\omega}}\right)\left(\frac{Y/Q(\mathbf{Y})}{\gamma_{\omega}}\right)^{-\Delta}+\frac{1}{\sigma_{\omega}}\right]^{-1/\Delta}\label{eq:MU}
\end{equation}
where $Y$ is the level of output and $Q(\mathbf{Y})$ is the quantity
index which is a function of the output vector $\mathbf{Y}$. The
pass-through rate 
\[
\rho=\frac{\partial\ln P}{\partial\ln MC}=1+\frac{\partial\ln\mu}{\partial\ln MC}=\frac{1}{\Delta+1}
\]
is a function of parameter $\Delta\ge0$ where $\mu=P/MC$ is a markup.
When $\Delta=0$ and $\sigma_{\omega}=\sigma$, the demand system
is reduced to the conventional CES system.

We reformulate it in log variables: 
\begin{align*}
s_{\omega}^{*}(y) & =\ln S_{\omega}^{*}(Y)=\gamma_{\omega}\beta_{\omega}\left[\frac{1}{\sigma_{\omega}}+\left(1-\frac{1}{\sigma_{\omega}}\right)\left(\frac{\exp\left(y-q\right)}{\gamma_{\omega}}\right)^{-\Delta}\right]^{-1/\Delta}\\
 & =\ln\left(\gamma_{\omega}\beta_{\omega}\right)-\frac{1}{\Delta}\ln\left[\left(1-\frac{1}{\sigma_{\omega}}\right)\left(\frac{\exp\left(y-q\right)}{\exp\left(\ln\gamma_{\omega}\right)}\right)^{-\Delta}+\frac{1}{\sigma_{\omega}}\right]\\
 & =\ln\left(\gamma_{\omega}\beta_{\omega}\right)-\frac{1}{\Delta}\ln\left[\left(\frac{\sigma_{\omega}-1}{\sigma_{\omega}}\right)\exp\left(-\Delta\left(y-q\right)+\Delta\ln\gamma_{\omega}\right)+\frac{1}{\sigma_{\omega}}\right]\\
 & =\ln\left(\gamma_{\omega}\beta_{\omega}\right)-\frac{1}{\Delta}\ln\left[\frac{\exp\left(-\Delta\left(y-q\right)+\Delta\ln\gamma_{\omega}\right)+1/\left(\sigma_{\omega}-1\right)}{\sigma_{\omega}/\left(\sigma_{\omega}-1\right)}\right]
\end{align*}

Our formulation of the CoPaTh-HSA demand system of inverse demand
functions is 
\[
s^{*}\left(y_{it},\epsilon_{it}\right)=\delta_{t}-\frac{1}{\beta_{t}}\ln\left(\frac{\exp(-\beta_{t}(y_{it}-q_{t}(\mathbf{y}_{t},\mathbf{\epsilon}_{t}))+\gamma_{t})+\epsilon_{it}}{1+\epsilon_{it}}\right).
\]
While the original formulation has three firm specific demand shifters
$(\gamma_{\omega},\beta_{\omega},\sigma_{\omega})$, we only allow
one shifter.

The correspondence between current parameter notations and \citet{matsuyama2020constant}'s
is as follows: 
\begin{center}
\begin{tabular}{|c|c|c|c|c|c|c|}
\hline 
Current notations  & $i$  & $\delta_{t}$  & $\beta_{t}$  & $\epsilon_{it}$  & $1+\epsilon_{it}$  & $\gamma_{t}$\tabularnewline
\hline 
\hline 
\citet{matsuyama2020constant}'s  & $\omega$  & $\ln\gamma_{\omega}\beta_{\omega}$  & $\Delta=\frac{1-\rho}{\rho}$  & $\frac{1}{\sigma_{\omega}-1}$  & $\frac{\sigma_{\omega}}{\sigma_{\omega}-1}$  & $\Delta\ln\gamma_{\omega}$ \tabularnewline
notations  &  &  &  &  &  & \tabularnewline
\hline 
Range  &  & $(-\infty,\infty)$  & $(0,\infty)$  & $(0,\infty)$  & $(1,\infty)$  & $(-\infty,\infty)$\tabularnewline
\hline 
\end{tabular}
\par\end{center}

\subsection{Data Generating Process }

\label{subsec:Data-Generating-Process}

The demand shock $\epsilon_{it}$ follows an MA(1) process:

\[
\epsilon_{it}=\rho_{\epsilon}\xi_{it-1}+\xi_{it}
\]
where $\xi_{it}$and $\xi_{it-1}$ are independent uniform random
variables with supports {[}0,0.3{]}.

Capital and labor are predetermined and follows the following exogenous
laws of motion: 
\begin{align*}
k_{it} & =0.99k_{it-1}+0.11\omega_{it-1}+e_{kit},\,e_{kit}\sim N(0,0.25^{2}),\,k_{i0}\sim N(10,1)\\
l_{it} & =0.99l_{it-1}+0.11\omega_{it-1}+e_{lit},e_{lit}\sim N(0,0.25^{2}),\,l_{i0}\sim N(10,1).
\end{align*}

\paragraph{Summary statistics}

The following table shows the summary statics of endogenous variables
and exogenous variables. 
\begin{center}
\begin{tabular}[t]{cccccccc}
\toprule 
 & \multicolumn{7}{c}{Endogenous variables}\tabularnewline
\midrule 
$t=5$  &  & Mean  & Min  & P25  & Median  & P75  & Max\tabularnewline
Markup  &  & $1.223$  & $1.001$  & $1.147$  & $1.219$  & $1.295$  & $1.617$\tabularnewline
$m_{it}$  &  & $11.322$  & $8.690$  & $10.948$  & $11.318$  & $11.691$  & $14.161$\tabularnewline
$r_{it}$  &  & $13.437$  & $10.712$  & $13.056$  & $13.445$  & $13.823$  & $16.231$\tabularnewline
\midrule 
 & \multicolumn{7}{c}{Exogenous variables}\tabularnewline
\midrule 
$t=5$  &  & Mean  & Min  & P25  & Median  & P75  & Max\tabularnewline
$\omega_{it}$  &  & $0.000$  & $-0.338$  & $-0.056$  & $0.000$  & $0.056$  & $0.323$\tabularnewline
$k_{it}$  &  & $9.511$  & $4.770$  & $8.771$  & $9.513$  & $10.253$  & $15.055$\tabularnewline
$l_{it}$  &  & $9.505$  & $4.949$  & $8.770$  & $9.505$  & $10.246$  & $14.252$\tabularnewline
\bottomrule
\end{tabular}
\par\end{center}

\subsubsection{\citet{ackerberg2015identification} estimation method}

We estimate the production function with ACF using the R package \texttt{prodest}
by \citep{rovigatti2017production}. Scale parameters are normalized
under constant returns to scale (CRS), and location parameters are
normalized via the mean-zero restriction on the AR(1) TFP process
for the estimates. The initial values in optimization are set to the
estimated parameters from our method for empirical application and
the true parameters for simulation. 

\section{Calculations and Proofs}

\label{app:proof}

\subsection{A necessary and sufficient condition for Assumption \ref{assu: revenue monotonicity}}\label{app:proof-5}

We first derive some derivatives for preparation. From $\varphi_{t}(y_{it},z_{it},u_{it})=y_{it}+\psi_{t}(y_{it},z_{it},u_{it})$,
the demand elasticity is expressed as

\[
\frac{\partial\varphi_{t}}{\partial y_{it}}=1-\frac{1}{\sigma_{t}(y_{it},z_{it}^{d},u_{it})}\Leftrightarrow\sigma_{t}(y_{it},z_{it}^{d},u_{it})=\frac{1}{1-\partial\varphi_{t}(y_{it},z_{it},u_{it})/\partial y_{it}}
\]
Their derivatives are 
\[
\frac{\partial\sigma_{t}}{\partial y_{it}}=\frac{\partial^{2}\varphi_{t}/\partial y_{it}^{2}}{\left(1-\partial\varphi_{t}/\partial y_{it}\right)^{2}}\text{ and }\frac{\partial\sigma_{t}}{\partial u_{it}}=\frac{\partial^{2}\varphi_{t}/\partial y_{it}\partial u_{it}}{\left(1-\partial\varphi_{t}/\partial y_{it}\right)^{2}}.
\]

Denote the profit by 
\[
\pi_{t}(m_{it},\omega_{it},u_{it}):=\exp(\varphi_{t}(f_{t}(m,k_{it},l_{it},z_{it}^{s})+\omega_{it},z_{it}^{d},u_{it}),z_{it}^{d},u_{it})-\exp(p_{t}^{m}+m)
\]
The first order condition for (\ref{eq:profit_maximization}) is 
\begin{align*}
\frac{\partial\pi_{t}}{\partial m_{it}}= & \exp(\varphi_{t}(f_{t}(x_{it},z_{it}^{s})+\omega_{it},z_{it}^{d},u_{it}))\frac{\partial\varphi_{t}(f_{t}(x_{it},z_{it}^{s})+\omega_{it},z_{it}^{d},u_{it})}{\partial y_{it}}\frac{\partial f_{t}(x_{it},z_{it}^{s})}{\partial m_{it}}\\
- & \exp(p_{t}^{m}+m_{it})=0.
\end{align*}
Their cross derivatives are 
\begin{align*}
\frac{\partial^{2}\pi_{t}}{\partial m_{it}\partial\omega_{it}} & =\exp(r_{it})\frac{\partial f_{t}}{\partial m_{it}}\left[\left(\frac{\partial\varphi_{t}}{\partial y_{it}}\right)^{2}+\frac{\partial^{2}\varphi_{t}}{\partial y_{it}^{2}}\right]\\
\frac{\partial^{2}\pi_{t}}{\partial m_{it}\partial u_{it}} & =\exp(r_{it})\frac{\partial f_{t}}{\partial m_{it}}\left(\frac{\partial\varphi_{t}}{\partial y_{it}}\frac{\partial\varphi_{t}}{\partial u_{it}}+\frac{\partial^{2}\varphi_{t}}{\partial y_{it}u_{it}}\right).
\end{align*}
From the implicit function theorem, the derivatives of the material
demand function is 
\[
\frac{\partial\mathbb{M}_{t}}{\partial u_{it}}=-\frac{\partial^{2}\pi_{t}/\partial m_{it}\partial u_{it}}{\partial^{2}\pi_{t}/\partial m_{it}^{2}}\text{ and }\frac{\partial\mathbb{M}_{t}}{\partial\omega_{it}}=-\frac{\partial^{2}\pi_{t}/\partial m_{it}\partial\omega_{it}}{\partial^{2}\pi_{t}/\partial m_{it}^{2}}.
\]
Differentiating $m_{it}=\mathbb{M}_{t}\left(\text{\ensuremath{\mathbb{M}_{t}^{-1}}}(m_{it},w_{it},u_{it}),w_{it},u_{it}\right)$
by $u_{it}$, we obtain the derivatives of the inverse function as
\begin{align*}
\frac{\partial\text{\ensuremath{\mathbb{M}_{t}^{-1}}}(m_{it},w_{it},u_{it})}{\partial u_{it}} & =-\frac{\partial\mathbb{M}_{t}/\partial u_{it}}{\partial\mathbb{M}_{t}/\partial\omega_{it}}\\
 & =-\frac{\partial^{2}\pi_{t}/\partial m_{it}\partial u_{it}}{\partial^{2}\pi_{t}/\partial m_{it}\partial\omega_{it}}\\
 & =-\frac{\frac{\partial\varphi_{t}}{\partial y_{it}}\frac{\partial\varphi_{t}}{\partial u_{it}}+\frac{\partial^{2}\varphi_{t}}{\partial y_{it}u_{it}}}{\left(\frac{\partial\varphi_{t}}{\partial y_{it}}\right)^{2}+\frac{\partial^{2}\varphi_{t}}{\partial y_{it}^{2}}}
\end{align*}

Finally, we derive the derivative of $\phi_{t}$ with respect to $u_{it}$:
\begin{align*}
\frac{\partial\text{\ensuremath{\phi}}_{t}\left(x_{it},z_{it},u_{it}\right)}{\partial u_{it}} & =\frac{\partial\varphi_{t}}{\partial y_{it}}\frac{\partial\mathbb{M}_{t}^{-1}}{\partial u_{it}}+\frac{\partial\varphi_{t}}{\partial u_{it}}\\
 & =\left(\left(\frac{\partial\varphi_{t}}{\partial y_{it}}\right)^{2}+\frac{\partial^{2}\varphi_{t}}{\partial y_{it}^{2}}\right)^{-1}\left(\frac{\partial\varphi_{t}}{\partial u_{it}}\frac{\partial^{2}\varphi_{t}}{\partial y_{it}^{2}}-\frac{\partial\varphi_{t}}{\partial y_{it}}\frac{\partial^{2}\varphi_{t}}{\partial y_{it}\partial u_{it}}\right)\\
 & =\left(\left(\frac{\partial\varphi_{t}}{\partial y_{it}}\right)^{2}+\frac{\partial^{2}\varphi_{t}}{\partial y_{it}^{2}}\right)^{-1}\left(1-\frac{\partial\varphi_{t}}{\partial y_{it}}\right)^{2}\left(\frac{\partial\varphi_{t}}{\partial u_{it}}\frac{\partial\sigma_{t}}{\partial y_{it}}-\frac{\partial\varphi_{t}}{\partial y_{it}}\frac{\partial\sigma_{t}}{\partial u_{it}}\right)
\end{align*}
The assumption$\partial\mathbb{M}_{t}/\partial\omega_{it}>0$ and
$\partial f_{t}/\partial m_{it}>0$ implies $\left(\frac{\partial\varphi_{t}}{\partial y_{it}}\right)^{2}+\frac{\partial^{2}\varphi_{t}}{\partial y_{it}^{2}}>0$.
Therefore, we have 
\[
\frac{\partial\text{\ensuremath{\phi}}_{t}\left(x_{it},z_{it},\epsilon_{it}\right)}{\partial u_{it}}>0\Leftrightarrow\frac{\partial\varphi_{t}}{\partial u_{it}}\frac{\partial\sigma_{t}}{\partial y_{it}}>\frac{\partial\varphi_{t}}{\partial y_{it}}\frac{\partial\sigma_{t}}{\partial u_{it}}.
\]

\subsection{ Proof of Proposition \ref{P-3}}

\label{subsec:-proof-Pro6}

Suppose that Assumption \ref{A-period}(a) holds. Let $\text{var}(\eta_{t})$
and $\text{var}(\eta_{t+1})$ be the variance of $\eta_{t}$ and $\eta_{t+1}$
identified under the period-specific normalization in Assumption \ref{A-2}
for $t$ and $t+1$, respectively. From (\ref{eq:model}) and (\ref{eq:norm-t}),
$\text{var}(\eta_{t})=b_{t}^{2}\text{var}(\eta_{t}^{*})$ and $\text{var}(\eta_{t+1})=b_{t+1}^{2}\text{var}(\eta_{t+1}^{*})$.
From $\text{var}(\eta_{t}^{*})=\text{var}(\eta_{t+1}^{*})$, $b_{t+1}/b_{t}$
is identified as $b_{t+1}/b_{t}=\sqrt{{\text{var}(\eta_{t+1})}/{\text{var}(\eta_{t})}}$.

Let $\partial f_{t}(x_{t},z_{t}^{s})/\partial q$ and $\partial f_{t+1}(x_{t+1},z_{t+1}^{s})/\partial q$
be those elasticities identified under the period-specific normalization
in Assumption \ref{A-2} for $t$ and $t+1$, respectively, and $\partial f_{t}^{*}(x_{t},z_{t}^{s})/\partial q$
and $\partial f_{t+1}^{*}(x_{t+1},z_{t+1}^{s})/\partial q$ be the
true elasticities. From (\ref{eq:norm-t}), $\partial f_{t}(x_{t},z_{t}^{s})/\partial q=b_{t}\partial f_{t}^{*}(x_{t},z_{t}^{s})/\partial q$
and $\partial f_{t+1}(x_{t+1},z_{t+1}^{s})/\partial q=b_{t+1}\partial f_{t+1}^{*}(x_{t+1},z_{t+1}^{s})/\partial q$
hold.

Suppose that Assumption \ref{A-period}(b) holds. Then, $\partial f_{t}^{*}(x,z^{s})/\partial q=\partial f_{t+1}^{*}(x,z^{s})/\partial q$
for some input $q\in\{m,k,l\}$ and $x\in\mathcal{B}$. Then, $b_{t+1}/b_{t}$
is identified as $b_{t+1}/b_{t}=(\partial f_{t+1}(x,z^{s})/\partial q)/(\partial f_{t}(x,z^{s})/\partial q)$
for $x\in\mathcal{B}$.

Suppose that Assumption \ref{A-period}(c) holds, implying 
\[
1=\frac{\partial f_{t+1}^{*}(x,z^{s})/\partial m+\partial f_{t+1}^{*}(x,z^{s})/\partial k+\partial f_{t+1}^{*}(x,z^{s})/\partial l}{\partial f_{t}^{*}(x,z^{s})/\partial m+\partial f_{t}^{*}(x,z^{s})/\partial k+\partial f_{t}^{*}(x,z^{s})/\partial l}\,\text{for }\ensuremath{x\in\mathcal{B}}.
\]
Then, $b_{t+1}/b_{t}$ is identified as 
\[
\frac{b_{t+1}}{b_{t}}=\frac{\partial f_{t+1}(x,z^{s})/\partial m+\partial f_{t+1}(x,z^{s})/\partial k+\partial f_{t+1}(x,z^{s})/\partial l}{\partial f_{t}(x,z^{s})/\partial m+\partial f_{t}(x,z^{s})/\partial k+\partial f_{t}(x,z^{s})/\partial l}\quad\text{for \ensuremath{x\in\mathcal{B}}}.\qed
\]

\subsection{ Proof of Proposition \ref{P-location}}

\label{subsec:-Proof-pro8}

Let $\tilde{p}_{it}:=r_{it}-\tilde{\varphi}_{t}^{-1}(r_{it},z_{it}^{d},u_{it})$
and $\tilde{y}_{it}:=\tilde{\varphi}_{t}^{-1}(r_{it},z_{it}^{d},u_{it})$
be an output price and an output quantity identified under the normalization
in (\ref{eq:norm-tilde}) and Assumption \ref{A-2}, respectively.
Using these, we calculate an industry-level producer price index with
them: 
\[
P_{t}:=\frac{\sum_{i\in\tilde{N}}\exp(\tilde{p}_{it}+\tilde{y}_{i0})}{\sum_{i\in\tilde{N}}\exp(\tilde{p}_{i0}+\tilde{y}_{i0})}.
\]
From (\ref{eq:norm-t2}) and (\ref{eq:PPI}), $P_{t}$ is written
as 
\begin{align*}
P_{t} & =\frac{\sum_{i\in\tilde{N}}\exp(-(\tilde{a}_{1t}+\tilde{a}_{2t})+p_{it}^{*}+\tilde{a}_{1,0}+\tilde{a}_{2,0}+y_{i0}^{*})}{\sum_{i\in\tilde{N}}\exp(p_{i0}^{*}+y_{i0}^{*})}=\exp(\tilde{a}_{1,0}+\tilde{a}_{2,0}-(\tilde{a}_{1t}+\tilde{a}_{2t}))P_{t}^{*}.
\end{align*}
Therefore, $\tilde{a}_{1t+1}+\tilde{a}{}_{2t+1}-\tilde{a}_{1t}-\tilde{a}{}_{2t}$
is identified as: 
\begin{equation}
\tilde{a}_{1t+1}+\tilde{a}{}_{2t+1}-\tilde{a}_{1t}-\tilde{a}{}_{2t}=\ln P_{t+1}^{*}-\ln P_{t+1}-\left(\ln P_{t}^{*}-\ln P_{t}\right)\label{eq:a_diff}
\end{equation}
From (\ref{eq:growth}), we identify the output growth rate $\varphi_{t+1}^{*-1}(r_{it+1},z_{it+1}^{d},u_{it+1})-\varphi_{t}^{*-1}(r_{it},z_{it}^{d},u_{it})$.

Evaluating the second equation in (\ref{eq:growth}) at $x_{t+1}=x_{t}=\bar{x}$
and $z_{t+1}^{s}=z_{t}^{s}=\bar{z}^{s}$in Assumption \ref{A-location}(b),
we identify $\tilde{a}_{1t+1}-\tilde{a}_{1t}$ as: 
\begin{align*}
\tilde{a}_{1t+1}-\tilde{a}_{1t} & =\tilde{a}_{1,t+1}+f_{t+1}^{*}(\bar{x},\bar{z}^{s})-\left(\tilde{a}_{1,t}+f_{t}^{*}(\bar{x},\bar{z}^{s})\right)=\tilde{f}_{t+1}(\bar{x},\bar{z}^{s})-\tilde{f}_{t}(\bar{x},\bar{z}^{s}).
\end{align*}
From (\ref{eq:a_diff}), $\tilde{a}_{2t+1}-\tilde{a}_{2t}$ is also
identified as 
\[
\tilde{a}_{2t+1}-\tilde{a}_{2t}=\ln P_{t+1}^{*}-\ln P_{t+1}-\left(\ln P_{t}^{*}-\ln P_{t}\right)-\left(\tilde{f}_{t+1}(\bar{x},\bar{z}^{s})-\tilde{f}_{t}(\bar{x},\bar{z}^{s})\right).
\]
Therefore, from (\ref{eq:growth}), the true TFP growth rate $\omega_{it+1}^{*}-\omega_{it}^{*}$
is also identified. \qed 

\subsection{Derivations of Equilibrium Conditions for MCE and MCPE}

\label{subsec:Derivations-of-Equilibrium}

\paragraph{General HSA Demand System}

\subparagraph{MCE}

Define the inverse production function $m_{it}=\chi_{it}\left(y_{it}\right)$
such that $y_{it}=f_{t}\left(\chi_{it}\left(y_{it}\right),k_{it},l_{it},z_{it}^{s}\right)+\omega_{it}$
for given $(k_{it},l_{it},\omega_{it},z_{it}^{s})$; namely, $\chi_{it}(y_{it}):=f_{t}^{-1}\left(y_{it}-\omega_{it},k_{it},l_{it},z_{it}^{s}\right)$.
By using this, we rewrite the profit maximization problem with respect
to $m_{it}$ 
\[
\max_{m}\exp\left(\Phi_{t}+\mathfrak{s}_{t}\left(f_{t}\left(m,k_{it},l_{it},z_{it}^{s}\right)+\omega_{it}-\Delta q_{t}^{m},z_{it}^{d},u_{it}\right)\right)-\exp\left(p_{t}^{m}+m\right)
\]
 as the problem with respect to $y_{it}$:
\[
\max_{y}\exp\left(\Phi_{t}+\mathfrak{s}_{t}\left(y-\Delta q_{t}^{m},z_{it}^{d},u_{it}\right)\right)-\exp\left(p_{t}^{m}+\chi_{it}(y)\right)
\]

The first order condition is 
\[
\exp\left(\Phi_{t}+\mathfrak{s}_{t}\left(y_{it}^{m}-\Delta q_{t}^{m},z_{it}^{d},u_{it}\right)\right)\frac{\partial\mathfrak{s}_{t}\left(y_{it}^{m}-\Delta q_{t}^{m},z_{it}^{d},u_{it}\right)}{\partial y_{it}}=\exp\left(p_{t}^{m}+\chi_{it}(y_{it}^{m})\right)\frac{\partial\chi_{it}(y_{it}^{m})}{\partial y_{it}}.
\]
Noting that the marginal cost is 
\[
\frac{\partial\exp\left(p_{t}^{m}+\chi_{it}(y_{it}^{m})\right)}{\partial\exp(y_{it})}=\frac{\exp\left(p_{t}^{m}+\chi_{it}\left(y_{it}^{m}\right)\right)}{\exp\left(y_{it}^{m}\right)}\frac{\partial\chi_{it}(y_{it}^{m})}{\partial y_{it}},
\]
the system of the first order conditions and the market share condition
becomes:
\begin{align*}
\underbrace{\exp\left(\mathfrak{s}_{t}\left(y_{it}^{m}-\Delta q_{t}^{m},z_{it}^{d},u_{it}\right)+\Phi_{t}-y_{it}^{m}\right)\frac{\partial\mathfrak{s}_{t}\left(y_{it}^{m}-\Delta q_{t}^{m},z_{it}^{d},u_{it}\right)}{\partial y_{it}}}_{\text{Marginal Revenue}} & =\underbrace{\frac{\exp\left(p_{t}^{m}+\chi_{it}\left(y_{it}^{m}\right)\right)}{\exp\left(y_{it}^{m}\right)}\frac{\partial\chi_{it}(y_{it}^{m})}{\partial y_{it}}}_{\text{Marginal Cost}},\\
\sum_{i=1}^{N_{t}}\exp\left(\mathfrak{s}_{t}\left(y_{it}^{m}-\Delta q_{t}^{m},z_{it}^{d},u_{it}\right)\right) & =1.
\end{align*}

\subparagraph{MCPE}

The profit maximization problem with respect to $y_{it}$ is 
\[
\max_{y}\exp\left(p_{it}^{c}+y\right)-\exp\left(p_{t}^{m}+\chi_{it}\left(y\right)\right)
\]
where the firm takes $p_{it}^{c}$ as given. The first order condition
is 

\[
\exp\left(p_{it}^{c}+y_{it}^{c}\right)=\exp\left(p_{t}^{m}+\chi_{it}\left(y_{it}^{c}\right)\right)\frac{\partial\chi_{it}(y_{it}^{c})}{\partial y_{it}}
\]

From $p_{it}^{c}=\Phi_{t}^{c}+\mathfrak{s}_{t}\left(y_{it}^{c}-\Delta q_{t}^{c},z_{it}^{d},u_{it}\right)-y_{it}^{c}$
in equilibrium, the system of the first order conditions and the market
share condition for the MCPE is 
\begin{align*}
\underbrace{\exp\left(\mathfrak{s}_{t}\left(y_{it}^{c}-\Delta q_{t}^{c},z_{it}^{d},u_{it}\right)+\Phi_{t}-y_{it}^{c}\right)}_{\text{Price}} & =\underbrace{\frac{\exp\left(p_{t}^{m}+\chi_{it}\left(y_{it}^{c}\right)\right)}{\exp\left(y_{it}^{c}\right)}\frac{\partial\chi_{it}(y_{it}^{c})}{\partial y_{it}}}_{\text{Marginal Cost}},\\
\sum_{i=1}^{N_{t}}\exp\left(\mathfrak{s}_{t}\left(y_{it}^{c}-\Delta q_{t}^{c},z_{it}^{d},u_{it}\right)\right) & =1,
\end{align*}

\paragraph{CoPaTh-HSA Demand System and Cobb-Douglas Production Function}

Consider the CoPaTh-HSA Demand System (\ref{eq:revenue1}):
\[
\mathfrak{s}_{t}\left(y_{it}-q_{t}\left(\mathbf{y}_{t},\mathbf{\boldsymbol{\epsilon}}_{t}\right),\epsilon_{it}\right)=\delta_{t}-\frac{1}{\beta_{t}}\ln\left(\frac{\exp\left(-\beta_{t}\left(y_{it}-q_{t}\left(\mathbf{y}_{t},\mathbf{\boldsymbol{\epsilon}}_{t}\right)\right)+\gamma_{t}\right)+\epsilon_{it}}{1+\epsilon_{it}}\right).
\]

\subparagraph{MCE }

Note that $q_{t}\left(\mathbf{y}_{t}^{m},\mathbf{\boldsymbol{\epsilon}}_{t}\right)=0$
holds as normalization in the initial MCE. Since 
\begin{align*}
\frac{\partial\mathfrak{s}_{t}\left(y_{it}^{m},\epsilon_{it}\right)}{\partial y_{it}^{m}} & =\frac{\exp(-\beta_{t}y_{it}^{m}+\gamma_{t})}{\exp(-\beta_{t}y_{it}^{m}+\gamma_{t})+\epsilon_{it}}\\
\chi_{it}\left(y_{it}^{m}\right) & =\frac{y_{it}^{m}-\theta_{k}k_{it}-\theta_{l}l_{it}-\omega_{it}}{\theta_{m}},
\end{align*}
the first order condition (\ref{eq:MCE}) becomes 
\begin{align*}
 & \exp\left(\delta_{t}-\frac{1}{\beta_{t}}\ln\left(\frac{\exp(-\beta_{t}y_{it}^{m}+\gamma_{t})+\epsilon_{it}}{1+\epsilon_{it}}\right)+\Phi_{t}-y_{it}^{m}\right)\left(\frac{\exp(-\beta_{t}y_{it}^{m}+\gamma_{t})}{\exp(-\beta_{t}y_{it}^{m}+\gamma_{t})+\epsilon_{it}}\right)\\
= & \exp\left(p_{t}^{m}+\frac{y_{it}^{m}-\theta_{k}k_{it}-\theta_{l}l_{it}-\omega_{it}}{\theta_{m}}-y_{it}^{m}\right)\frac{1}{\theta_{m}}.
\end{align*}
Taking the log of both sides yields 
\begin{align*}
\delta_{t}-\frac{1}{\beta_{t}}\ln\left(\frac{\exp(-\beta_{t}y_{it}^{m}+\gamma_{t})+\epsilon_{it}}{1+\epsilon_{it}}\right)+\Phi_{t}-\beta_{t}y_{it}^{m}+\gamma_{t}\\
-\ln\left(\exp(-\beta_{t}y_{it}^{m}+\gamma_{t})+\epsilon_{it}\right)-p_{t}^{m}-\frac{y_{it}^{m}-\theta_{k}k_{it}-\theta_{l}l_{it}-\omega_{it}}{\theta_{m}} & +\ln\theta_{m}=0.
\end{align*}
Letting $\Xi_{it}:=\ln\theta_{m}+(\theta_{k}k_{it}+\theta_{l}l_{it}+\omega_{it})/\theta_{m}$
and $p_{t}^{m}=0$, the system of the first order conditions and the
market share condition is simplified as 
\begin{align*}
\Phi_{t}+\delta_{t}-\beta_{t}y_{it}^{m}+\gamma_{t}+\Xi_{it}-\frac{y_{it}^{m}}{\theta_{m}}+\frac{1}{\beta_{t}}\ln\left(1+\epsilon_{it}\right)\\
-\left(1+\frac{1}{\beta_{t}}\right)\ln\left(\exp(-\beta_{t}y_{it}^{m}+\gamma_{t})+\epsilon_{it}\right) & =0\text{ for }i=1,..,N_{t}\\
\sum_{i=1}^{N_{t}}\exp\left(\delta_{t}-\frac{1}{\beta_{t}}\ln\left(\frac{\exp(-\beta_{t}y_{it}^{m}+\gamma_{t})+\epsilon_{it}}{1+\epsilon_{it}}\right)\right) & =1.
\end{align*}

\subparagraph{MCPE}

The first order condition (\ref{eq:MCPE}) becomes

\begin{align*}
 & \exp\left(\delta_{t}-\frac{1}{\beta_{t}}\ln\left(\frac{\exp(-\beta_{t}\left(y_{it}^{c}-\Delta q_{t}^{c}\right)+\gamma_{t})+\epsilon_{it}}{1+\epsilon_{it}}\right)+\Phi_{t}-y_{it}^{c}\right)\\
= & \exp\left(p_{t}^{m}+\frac{y_{it}^{c}-\theta_{k}k_{it}-\theta_{l}l_{it}-\omega_{it}}{\theta_{m}}-y_{it}^{c}\right)\frac{1}{\theta_{m}}.
\end{align*}
Taking the log of both sides, the system of the first order conditions
and the market share condition is simplified as 

\begin{align*}
\Phi_{t}+\delta_{t}+\Xi_{it}-\frac{y_{it}^{c}}{\theta_{m}}-\frac{1}{\beta_{t}}\ln\left(\frac{\exp\left(-\beta_{t}\left(y_{it}^{c}-\Delta q_{t}^{c}\right)+\gamma_{t}\right)+\epsilon_{it}}{1+\epsilon_{it}}\right) & =0\text{ for }i=1,..,N_{t}\\
\sum_{i=1}^{N_{t}}\exp\left(\delta_{t}-\frac{1}{\beta_{t}}\ln\left(\frac{\exp\left(-\beta_{t}\left(y_{it}^{c}-\Delta q_{t}^{c}\right)+\gamma_{t}\right)+\epsilon_{it}}{1+\epsilon_{it}}\right)\right) & =1.
\end{align*}


\subsection{Proof for Proposition \ref{P-demand}}

\label{subsec:Demand}

The proof for Proposition \ref{P-demand} uses the following result
of \cite{matsuyama2017beyond}. 
\begin{thm}
\label{thm:1}(\citeauthor{matsuyama2017beyond}, 2017, Remark 3
and Proposition 1). Consider a mapping $\mathbf{S(Y)}:=(S_{1}(Y_{1}),...,S_{N}(Y_{N}))'$
from $\mathbb{R}_{+}^{N}$ to $\mathbb{R}_{+}^{N}$, which is differentiable
almost everywhere, is normalized by 
\begin{equation}
\sum_{i=1}^{N}S_{i}(Y_{i}^{*})=1,\label{eq:norm_s}
\end{equation}
for some point $\mathbf{Y}^{*}:=(Y_{1}^{*},...,Y_{N}^{*})$ and satisfies
the following conditions 
\begin{align}
S_{i}'(Y_{i})Y_{i} & <S_{i}(Y_{i})\text{ for }i=1,...,N,\nonumber \\
S_{i}'(Y_{i}) & S_{j}'(Y_{j})\ge0\text{ for }i,j=1,...,N,\label{eq:HSAcondition}
\end{align}
for all $\mathbf{Y}$ such that $\sum_{i=1}^{N}S_{i}(Y_{i})=1$. Then,
(a) for any such mapping, there exists a unique monotone, convex,
continuous, and homothetic rational preference that generates the
HSA demand system described by 
\begin{align}
P_{i} & =\frac{I}{Y_{i}}S_{i}\left(\frac{Y_{i}}{Q\left(\mathbf{Y}\right)}\right)\text{ for }i=1,..,N,\label{eq: demand-A1}
\end{align}
where $I:=\sum_{i=1}^{N}P_{i}Y_{i}$ and $Q(\mathbf{Y})$ is obtained
by solving 
\begin{equation}
\sum_{i=1}^{N}S_{i}\left(\frac{Y_{i}}{Q\left(\mathbf{Y}\right)}\right)=1.\label{eq: budget-A1}
\end{equation}
(b) This homothetic preference is described by a utility function
$U$ which is defined by 
\begin{equation}
\ln U(\mathbf{Y})=\ln Q(\mathbf{Y})+\sum_{i=1}^{N}\int_{c_{i}}^{Y_{i}/Q(Y)}\frac{S_{i}\left(\xi\right)}{\xi}d\xi,\label{eq:utility}
\end{equation}
where $\mathbf{c}=(c_{1},...,c_{N})$ is a vector of constants such
that $U(\mathbf{c})=1$. 
\end{thm}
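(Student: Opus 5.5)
The plan is to verify the HSA construction directly. I will build $Q$ from the adding-up constraint (\ref{eq: budget-A1}), define $U$ by (\ref{eq:utility}), and show that the normalized gradient of $\ln U$ reproduces the inverse demands (\ref{eq: demand-A1}). Homotheticity, monotonicity, convexity and uniqueness then follow from properties of this gradient.

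\textbf{Step 1: the quantity index.} Define $F(\mathbf{Y},Q):=\sum_{i}S_{i}(Y_{i}/Q)$. Then $\partial F/\partial Q=-Q^{-2}\sum_{i}S_{i}'(Y_{i}/Q)Y_{i}$. The second condition in (\ref{eq:HSAcondition}) says all $S_{i}'$ share a common sign on the relevant set. Hence $F(\mathbf{Y},\cdot)$ is monotone in $Q$, and a solution of (\ref{eq: budget-A1}) is unique wherever it exists. Existence near $\mathbf{Y}^{*}$ comes from (\ref{eq:norm_s}), which gives $Q(\mathbf{Y}^{*})=1$, together with the implicit function theorem. Uniqueness immediately yields $Q(\lambda\mathbf{Y})=\lambda Q(\mathbf{Y})$, because $Y_{i}/Q$ is invariant under scaling.

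\textbf{Step 2: homogeneity and the demand identity.} Since $Y_{i}/Q(\mathbf{Y})$ is scale invariant, (\ref{eq:utility}) gives $\ln U(\lambda\mathbf{Y})=\ln\lambda+\ln U(\mathbf{Y})$, so $U$ is homothetic. Differentiating (\ref{eq:utility}) with respect to $Y_{j}$ gives
\[
\frac{\partial\ln U}{\partial Y_{j}}=\frac{S_{j}(Y_{j}/Q)}{Y_{j}}+\frac{\partial\ln Q}{\partial Y_{j}}\Bigl(1-\sum_{i}S_{i}(Y_{i}/Q)\Bigr)=\frac{S_{j}(Y_{j}/Q)}{Y_{j}} .
\]
The bracket vanishes by (\ref{eq: budget-A1}), which is an envelope-type cancellation. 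Euler's relation then reads $\sum_{j}Y_{j}\,\partial\ln U/\partial Y_{j}=\sum_{j}S_{j}=1$. The consumer's first-order conditions give $P_{j}/I=(\partial U/\partial Y_{j})/\sum_{i}Y_{i}\,\partial U/\partial Y_{i}$. Combining these yields $P_{j}=(I/Y_{j})S_{j}(Y_{j}/Q)$, which is (\ref{eq: demand-A1}). Monotonicity of $U$ follows from $S_{j}>0$, and continuity follows from the construction. The constants $\mathbf{c}$ only shift $\ln U$ by a constant, and this shift is fixed by $U(\mathbf{c})=1$.

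\textbf{Step 3: convexity (the main obstacle).} I would show that the linearly homogeneous $U$ is quasi-concave, equivalently concave. The approach is to compute the Hessian of $\ln U$, using the implicit derivative $\partial\ln Q/\partial Y_{j}=S_{j}'(Y_{j}/Q)\,\big/\,\bigl(Q\sum_{i}S_{i}'Y_{i}/Q\bigr)$ from Step 1. The Hessian then splits into two parts:
\begin{itemize}
\item a diagonal part with entries $(S_{j}'Y_{j}/Q-S_{j})/Y_{j}^{2}$, which is negative by the first condition $S_{j}'Y_{j}<S_{j}$;
\item a rank-one part proportional to $a a'/\sum_{i}S_{i}'Y_{i}/Q$, where $a_{j}=S_{j}'/(QY_{j})$.
\end{itemize}
Concavity needs the quadratic form of the Hessian of $U$ (not only of $\ln U$) to be nonpositive on directions tangent to the indifference surface. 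The same-sign condition is exactly what controls the sign of the rank-one term relative to the diagonal term. I expect this sign bookkeeping to be the delicate part. If the direct computation becomes messy, the fallback is to check convexity of the upper contour sets through the expenditure function. Each own-price demand is decreasing because $S_{j}'Y_{j}<S_{j}$, and with common signs the implied Slutsky matrix is negative semidefinite.

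\textbf{Step 4: uniqueness.} Any homothetic preference generating (\ref{eq: demand-A1}) has a linearly homogeneous representation $\tilde U$ whose normalized gradient equals $S_{j}/Y_{j}$. Therefore $\nabla\ln\tilde U=\nabla\ln U$ on the connected domain $\mathbb{R}_{++}^{N}$, so $\ln\tilde U-\ln U$ is constant. The two utility functions thus represent the same preference $\succsim$.
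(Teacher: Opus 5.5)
The paper gives no proof of this statement. It is quoted from \citet{matsuyama2017beyond}, so your proposal has to stand on its own. Its skeleton is sound: the envelope cancellation giving $\partial\ln U/\partial Y_j=S_j/Y_j$, Euler's relation, and the gradient-integration argument for uniqueness are all correct. The gap is Step 3, and it carries weight twice. First, convexity is part of the conclusion. Second, your Step 2 only shows that the inverse demands satisfy the consumer's first-order conditions; they are the consumer's \emph{choices} (i.e., $U$ generates the system) only once quasi-concavity makes those conditions sufficient.

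As written, Step 3 is also mis-specified. Differentiating $S_j(Y_j/Q)/Y_j$ and using your expression for $\partial\ln Q/\partial Y_k$ gives
\[
\frac{\partial^{2}\ln U}{\partial Y_{j}\partial Y_{k}}=\delta_{jk}\,\frac{S_{j}'z_{j}-S_{j}}{Y_{j}^{2}}-\frac{S_{j}'S_{k}'}{Q\sum_{i}S_{i}'Y_{i}},\qquad z_{j}:=Y_{j}/Q,
\]
so the rank-one vector is $(S_j')_j$, not $S_j'/(QY_j)$. More importantly, the mechanism you describe does not cover both sign cases. You rely on a diagonal that is negative by $S_j'z_j<S_j$, with the same-sign condition handling the rest. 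When all $S_i'\le 0$, however, the rank-one term is positive semidefinite. The first condition is automatic in that case and gives no bound strong enough to offset it.

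The missing idea is a domination argument that uses the $S_j'$-part of the diagonal. Writing directions as $v_j=Y_jw_j$,
\[
v^{\top}\nabla^{2}\ln U\,v=-\sum_{j}\bigl(S_{j}-S_{j}'z_{j}\bigr)w_{j}^{2}-\frac{\bigl(\sum_{j}S_{j}'z_{j}w_{j}\bigr)^{2}}{\sum_{i}S_{i}'z_{i}}.
\]
If all $S_i'\ge 0$, both terms are nonpositive, the first by $S_j'z_j<S_j$. If all $S_i'\le 0$, put $\beta_j:=-S_j'z_j\ge 0$. The second term becomes $+(\sum_j\beta_jw_j)^2/\sum_i\beta_i$, which is at most $\sum_j\beta_jw_j^2$ by Cauchy--Schwarz. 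The $-\beta_jw_j^2$ part of the first term absorbs it, leaving at most $-\sum_jS_jw_j^2\le 0$.

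So $\ln U$ is concave on the whole orthant, and no restriction to tangent directions is needed. On tangent directions the Hessians of $U$ and $\ln U$ differ only by the factor $U$ anyway, so your concern about working with $\ln U$ does not arise. Hence $U$ is quasi-concave and, being linearly homogeneous, concave. Drop the Slutsky fallback; it is an assertion, not an argument.

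A smaller point concerns Step 1. The sign condition holds only on the adding-up surface. It therefore rules out a second root of $\sum_iS_i(Y_i/Q)=1$ only if $\sum_iS_i'z_i$ is nonzero and has the same sign at every root. This is true in the paper's use, where $S_{it}'>0$. Also, local solvability plus homogeneity gives existence of $Q$ only on a cone around $\mathbf{Y}^{*}$, not on all of $\mathbb{R}^{N}_{+}$.
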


\paragraph{Proof for Proposition \ref{P-demand}}
\begin{proof}
(a) For any $(\tilde{\mathbf{y}}_{t},\tilde{\mathbf{z}}_{t}^{d},\tilde{\mathbf{u}}_{t})$,
we identify $\Delta{q}_{t}(\tilde{\mathbf{y}}_{t},\tilde{\mathbf{z}}_{t}^{d},\tilde{\mathbf{u}}_{t})$
and $\mathfrak{s}_{t}^{*}(\tilde{y}_{it}-{q}_{t}(\tilde{\mathbf{y}}_{t},\tilde{\mathbf{z}}_{t}^{d},\tilde{\mathbf{u}}_{t}),\tilde{z}_{it}^{d},\tilde{u}_{it})$
from (\ref{eq: s-id})--(\ref{eq:market_share2}) as discussed in
the main text, where the identification of reduced-form functions
$\varphi_{t}\left(\cdot\right)$ and $\mathfrak{s}_{t}(\cdot)$ at
the baseline aggregate state $(\mathbf{y}_{t},\mathbf{z}_{t}^{d},\mathbf{u}_{t})$
in the data follows from Proposition \ref{P-CRS}. 

Fix $\tilde{\mathbf{z}}_{t}^{d}:=(\tilde{z}_{1t}^{d},\ldots,\tilde{z}_{Nt}^{d})$,
$\tilde{\mathbf{u}}_{t}:=(\tilde{u}_{1t},\ldots,\tilde{u}_{Nt})$,
and time $t$. For given $\tilde{\mathbf{Y}}\in\mathscr{Y}_{t}^{N}$,
define $\bs S_{t}(\tilde{\mathbf{Y}}):=\left(S_{1t}(\tilde{Y}_{1}),\ldots,S_{N_{t}t}(\tilde{Y}_{N_{t}})\right)$
with 
\begin{align}
S_{it}(\tilde{Y}_{it}) & :=\exp\left(\mathfrak{s}_{t}^{*}\left(\ln\tilde{Y}_{it}-q_{t}(\mathbf{y}_{t},\mathbf{z}_{t}^{d},\mathbf{u}_{t}),\tilde{z}_{it}^{d},\tilde{u}_{it}\right)\right)\label{eq:s-def}\\
 & =\exp\left(\mathfrak{s}_{t}\left(\ln\tilde{Y}_{it},\tilde{z}_{it}^{d},\tilde{u}_{it}\right)\right)\nonumber \\
 & =\exp\left(\varphi_{t}\left(\ln\tilde{Y}_{it},\tilde{z}_{it}^{d},\tilde{u}_{it}\right)-\Phi_{t}\right)\quad\text{for \ensuremath{i=1,...,N_{t}}},\nonumber 
\end{align}
where the second and third equalities follow from (\ref{eq:comparison})
and (\ref{eq:reduced-bs}), respectively. By definition, $\sum_{i=1}^{N_{t}}S_{it}(\tilde{Y}_{i})=1$
holds. Also, define 
\begin{align}
Q_{t}(\tilde{\mathbf{Y}}) & :=\exp\left(\Delta{q}_{t}(\ln\tilde{\mathbf{Y}},\tilde{\mathbf{z}}_{t}^{d},\tilde{\mathbf{u}}_{t})\right)\nonumber \\
 & =\exp\left({q}_{t}(\ln\tilde{\mathbf{Y}},\tilde{\mathbf{z}}_{t}^{d},\tilde{\mathbf{u}}_{t})\right)\label{eq:q-def}
\end{align}
where the second equality follows from the normalization $q_{t}(\mathbf{y}_{t},\mathbf{z}_{t}^{d},\mathbf{u}_{t})=0$
at the baseline aggregate state $(\mathbf{y}_{t},\mathbf{z}_{t}^{d},\mathbf{u}_{t})$.


From Assumption \ref{A-1}(b) and $\tilde{y}:=\ln\tilde{Y}$, 
\[
0<\frac{\partial\varphi_{t}\left(\ln\tilde{Y},\tilde{z}_{it}^{d},\tilde{u}_{it}\right)}{\partial\ln\tilde{Y}}=1+\frac{\partial\psi_{t}\left(\ln\tilde{Y},\tilde{z}_{it}^{d},\tilde{u}_{it}\right)}{\partial\ln\tilde{Y}}<1
\]
holds for all $i$ and $\tilde{y}$. The above inequality implies
\[
S_{it}'(\tilde{Y})>0\text{ and }S_{it}'(\tilde{Y})\tilde{Y}<S_{it}(\tilde{Y})\text{ for all }i\text{ and }\tilde{Y}
\]
because 
\begin{align*}
S_{it}'(\tilde{Y})\tilde{Y} & =\exp\left(\varphi_{t}\left(\ln\tilde{Y},\tilde{z}_{it}^{d},\tilde{u}_{it}\right)-\Phi_{t}\right)\frac{\partial\varphi_{t}\left(\ln\tilde{Y},\tilde{z}_{it}^{d},\tilde{u}_{it}\right)}{\partial\ln\tilde{Y}}\\
 & =S_{it}(\tilde{Y})\frac{\partial\varphi_{t}\left(\ln\tilde{Y},\tilde{z}_{it}^{d},\tilde{u}_{it}\right)}{\partial\ln\tilde{Y}}.
\end{align*}
Consequently, $\mathbf{S(\tilde{Y})}$ satisfies the inequalities
in (\ref{eq:HSAcondition}) for all $\tilde{\mathbf{Y}}$ satisfying
$\sum_{i=1}^{N_{t}}S_{it}(\tilde{Y}_{i})=1$. 

Therefore, from Theorem \ref{thm:1}(a), there exists a unique monotone,
convex, continuous, and homothetic rational preference that generates
the following HSA demand system:
\begin{align*}
\tilde{P}_{it} & =\frac{I_{t}}{\tilde{Y}_{it}}S_{it}\left(\frac{\tilde{Y}_{it}}{Q_{t}\left(\mathbf{\tilde{Y}}_{t}\right)}\right)\text{ for }i=1,..,N,\\
\sum_{i=1}^{N} & S_{it}\left(\frac{\tilde{Y}_{it}}{Q_{t}\left(\mathbf{\tilde{Y}}_{t}\right)}\right)=1
\end{align*}
where $\tilde{P}_{it}\tilde{Y}_{it}=\varphi_{t}\left(\ln\tilde{Y}_{it},\tilde{z}_{it}^{d},\tilde{u}_{it}\right)$
and $I_{t}=\sum_{i=1}^{N}\tilde{P}_{i}\tilde{Y}_{it}$ is the consumer's
budget. Taking the log of this demand system and using (\ref{eq:s-def})
and (\ref{eq:q-def}) with $\tilde{y}_{it}:=\ln\tilde{Y}_{it}$, we
obtain (\ref{eq:structural-bs}) and (\ref{eq:market_share_constraint})
as
\begin{align*}
\ln\left(\frac{\exp\left(\tilde{r}_{it}\right)}{\sum_{j=1}^{N_{t}}\exp\left(\tilde{r}_{jt}\right)}\right) & =\mathfrak{s}_{t}^{*}\left(\tilde{y}_{it}-q_{t}(\mathbf{\tilde{y}}_{t},\mathbf{\tilde{z}}_{t}^{d},\mathbf{\tilde{u}}_{t}),\tilde{z}_{it}^{d},\tilde{u}_{it}\right)\\
1 & =\sum_{i=1}^{N_{t}}\mathfrak{s}_{t}^{*}\left(\tilde{y}_{it}-q_{t}(\mathbf{\tilde{y}}_{t},\mathbf{\tilde{z}}_{t}^{d},\mathbf{\tilde{u}}_{t}),\tilde{z}_{it}^{d},\tilde{u}_{it}\right)
\end{align*}
where $\tilde{r}_{it}:=\varphi_{t}\left(\ln\tilde{Y}_{it},\tilde{z}_{it}^{d},\tilde{u}_{it}\right)$.

(b) Fix $\tilde{\mathbf{z}}_{t}^{d}:=(\tilde{z}_{1t}^{d},\ldots,\tilde{z}_{Nt}^{d})$,
$\tilde{\mathbf{u}}_{t}:=(\tilde{u}_{1t},\ldots,\tilde{u}_{Nt})$,
and time $t$. For $\tilde{\mathbf{Y}}_{t}\in\mathscr{Y}_{t}^{N}$
and $\tilde{\mathbf{y}}_{t}=\ln\tilde{\mathbf{Y}}_{t}$, let $\bar{U_{t}}(\tilde{\mathbf{Y}}_{t}):=U_{t}(\tilde{\mathbf{y}}_{t},\tilde{\mathbf{z}}_{t}^{d},\tilde{\mathbf{u}}_{t})$
be the utility function of the representative consumer. From Theorem
\ref{thm:1}(b) and using (\ref{eq:s-def}) and (\ref{eq:q-def}),
the homothetic preference is described as 
\begin{align*}
\ln\bar{U}_{t}(\tilde{\mathbf{Y}}_{t}) & =\ln Q_{t}(\tilde{\mathbf{Y}}_{t})+\sum_{i=1}^{N}\int_{c_{i}}^{\tilde{Y}_{it}/Q(\tilde{\mathbf{Y}}_{t})}\frac{S_{it}\left(\xi\right)}{\xi}d\xi.\\
 & =\Delta{q}_{t}(\ln\tilde{\mathbf{Y}}_{t},\tilde{\mathbf{z}}_{t}^{d},\tilde{\mathbf{u}}_{t})+\sum_{i=1}^{N}\int_{c_{i}}^{\tilde{Y}_{i}/\exp\left(\Delta{q}_{t}(\ln\tilde{\mathbf{Y}}_{t},\tilde{\mathbf{z}}_{t}^{d},\tilde{\mathbf{u}}_{t})\right)}\frac{\exp\left(\mathfrak{s}_{t}\left(\ln\xi,\tilde{z}_{it}^{d},\tilde{u}_{it}\right)\right)}{\xi}d\xi.
\end{align*}
 Applying a change in variable $\zeta=\ln\xi$ and $d\zeta=\frac{d\xi}{\xi}$,
we write 
\[
\ln U_{t}(\tilde{\mathbf{y}}_{t},\tilde{\mathbf{z}}_{t}^{d},\tilde{\mathbf{u}}_{t})=\Delta{q}_{t}(\tilde{\mathbf{y}}_{t},\tilde{\mathbf{z}}_{t}^{d},\tilde{\mathbf{u}}_{t})+\sum_{i=1}^{N}\int_{b_{i}}^{\tilde{y}_{it}-\Delta{q}_{t}(\tilde{\mathbf{y}}_{t},\tilde{\mathbf{z}}_{t}^{d},\tilde{\mathbf{u}}_{t})}\exp\left(\mathfrak{s}_{t}\left(\zeta,\tilde{z}_{it}^{d},\tilde{u}_{it}\right)\right)d\zeta,
\]
where $b_{i}=\ln c_{i}$ such that $\ln U_{t}(\mathbf{b},\tilde{\mathbf{z}}_{t}^{d},\tilde{\mathbf{u}}_{t})=0$
for $\mathbf{b}=(b_{1},...,b_{N_{t}})$ from Theorem \ref{thm:1}(b).

(c) The homothetic preference implies that the market share $P_{it}Y_{it}/I_{t}$
depends only on a price vector and is independent of income. Suppose
that $\{\tilde{y}_{it}\}_{i=1}^{N_{t}}$ be a log consumption vector
for given price and income. If the price remains the same and the
income is multiplied by $\exp(\lambda)$, then the log-consumption
vector becomes $\{\tilde{y}_{it}+\gamma\}_{i=1}^{N_{t}}$ to keep
the same market share for each product. This property implies that
\begin{align*}
1 & =\sum_{i=1}^{N_{t}}\exp\left(\mathfrak{s}_{t}(\tilde{y}_{it}-\Delta q_{t}(\tilde{\mathbf{y}}_{t},\tilde{\mathbf{z}}_{t}^{d},\tilde{\mathbf{u}}_{t}),\tilde{z}_{it}^{d},\tilde{u}_{it}\right)\\
 & =\sum_{i=1}^{N_{t}}\exp\left(\mathfrak{s}_{t}(\tilde{y}_{it}+\gamma-\Delta q_{t}(\tilde{\mathbf{y}}_{t}+\gamma,\tilde{\mathbf{z}}_{t}^{d},\tilde{\mathbf{u}}_{t}),\tilde{z}_{it}^{d},\tilde{u}_{it}\right).
\end{align*}
On the other hand, from 
\begin{align*}
1 & =\sum_{i=1}^{N_{t}}\exp\left(\mathfrak{s}_{t}(\tilde{y}_{it}-\Delta q_{t}(\tilde{\mathbf{y}}_{t},\tilde{\mathbf{z}}_{t}^{d},\tilde{\mathbf{u}}_{t}),\tilde{z}_{it}^{d},\tilde{u}_{it}\right)\\
 & =\sum_{i=1}^{N_{t}}\exp\left(\mathfrak{s}_{t}(\tilde{y}_{it}+\gamma-\Delta q_{t}(\tilde{\mathbf{y}}_{t},\tilde{\mathbf{z}}_{t}^{d},\tilde{\mathbf{u}}_{t})-\gamma,\tilde{z}_{it}^{d},\tilde{u}_{it}\right),
\end{align*}
we have $\Delta q_{t}(\tilde{\mathbf{y}}_{t}+\gamma,\tilde{\mathbf{z}}_{t}^{d},\tilde{\mathbf{u}}_{t})=\Delta q_{t}(\tilde{\mathbf{y}}_{t},\tilde{\mathbf{z}}_{t}^{d},\tilde{\mathbf{u}}_{t})+\gamma$
for any constant $\gamma$. Since the output $\tilde{y}_{it}=\varphi_{t}^{-1}(r_{it},\tilde{z}_{it}^{d},\tilde{u}_{it})$
is identified up to location, there is $a\in\mathbb{R}$ such that
$\tilde{y}_{it}=a+\tilde{y}_{it}^{*}$ where $\tilde{y}_{it}^{*}$
is the true output. Note that 
\begin{align*}
\tilde{y}_{it}-\Delta q_{t}(\tilde{\mathbf{y}}_{t},\tilde{\mathbf{z}}_{t}^{d},\tilde{\mathbf{u}}_{t}) & =a+\tilde{y}_{it}^{*}-\Delta q_{t}(a+\mathbf{y}_{t}^{*},\tilde{\mathbf{z}}_{t}^{d},\tilde{\mathbf{u}}_{t})\\
 & =a+\tilde{y}_{it}^{*}-\Delta q_{t}(\mathbf{y}_{t}^{*},\tilde{\mathbf{z}}_{t}^{d},\tilde{\mathbf{u}}_{t})-a\\
 & =\tilde{y}_{it}^{*}-\Delta q_{t}(\mathbf{y}_{t}^{*},\tilde{\mathbf{z}}_{t}^{d},\tilde{\mathbf{u}}_{t}).
\end{align*}
The utility is expressed as: 
\begin{align*}
\ln U_{t}(\tilde{\mathbf{y}}_{t},\tilde{\mathbf{z}}_{t}^{d},\tilde{\mathbf{u}}_{t}) & =\Delta q_{t}(\tilde{\mathbf{y}}_{t},\tilde{\mathbf{z}}_{t}^{d},\tilde{\mathbf{u}}_{t})+\sum_{i=1}^{N}\int_{\ln c_{i}}^{\tilde{y}_{it}-\Delta q_{t}(\tilde{\mathbf{y}}_{t},\tilde{\mathbf{z}}_{t}^{d},\tilde{\mathbf{u}}_{t})}\exp\left(\mathfrak{s}_{t}\left(\zeta,\tilde{z}_{it}^{d},\tilde{u}_{it}\right)\right)d\zeta.\\
 & =\Delta q_{t}(\mathbf{y}_{t}^{*}+a,\tilde{\mathbf{z}}_{t}^{d},\tilde{\mathbf{u}}_{t})+\sum_{i=1}^{N}\int_{\ln c_{i}}^{\tilde{y}_{it}^{*}-\Delta q_{t}(\mathbf{y}_{t}^{*},\tilde{\mathbf{z}}_{t}^{d},\tilde{\mathbf{u}}_{t})}\exp\left(\mathfrak{s}_{t}\left(\zeta,\tilde{z}_{it}^{d},\tilde{u}_{it}\right)\right)d\zeta\\
 & =a+\Delta q_{t}(\mathbf{y}_{t}^{*},\tilde{\mathbf{z}}_{t}^{d},\tilde{\mathbf{u}}_{t})+\sum_{i=1}^{N}\int_{\ln c_{i}}^{\tilde{y}_{it}^{*}-\Delta q_{t}(\mathbf{y}_{t}^{*},\tilde{\mathbf{z}}_{t}^{d},\tilde{\mathbf{u}}_{t})}\exp\left(\mathfrak{s}_{t}\left(\zeta,\tilde{z}_{it}^{d},\tilde{u}_{it}\right)\right)d\zeta\\
 & =a+\ln U(\mathbf{y}_{t}^{*},\tilde{\mathbf{z}}_{t}^{d},\tilde{\mathbf{u}}_{t}).
\end{align*}
Therefore, the log utility function is identified up to the location
normalization of $\varphi_{t}^{-1}(\cdot)$. The identified utility
function is a monotonic transformation of the true utility function,
which implies both utility functions represent the same consumer preference. 
\end{proof}

\subsection{Derivation of the Control Function under a Production Function that is Separable
in Materials}
\label{subsec:Derivation-of-the}

While the main text focuses on the Cobb--Douglas production function,
here we consider a more general production function that is separable
in materials:
\[
f_{t}(m_{it},k_{it},l_{it},z_{it}^{s})
=
f_{1t}\left(m_{it}\right)
+
f_{2t}\left(k_{it},l_{it},z_{it}^{s}\right).
\]

Define a composite variable
\[
\Omega_{it}
\equiv
f_{2t}\left(k_{it},l_{it},z_{it}^{s}\right)
+
\omega_{it}.
\]
The first-order condition for problem~(\ref{eq:profit_maximization}) can
then be written as
\[
\exp\!\left(
\varphi_{t}\!\left(
f_{1t}\left(m_{it}\right)+\Omega_{it},
z_{it}^{d},
u_{it}
\right)
\right)
\frac{
\partial
\varphi_{t}\!\left(
f_{1t}\left(m_{it}\right)+\Omega_{it},
z_{it}^{d},
u_{it}
\right)
}{
\partial y_{it}
}
f_{1t}'\!\left(m_{it}\right)
=
\exp\!\left(p_{t}^{m}+m_{it}\right).
\]

This condition implicitly defines the material demand function
\[
m_{it}
=
\mathbb{M}_{t}\!\left(
\Omega_{it},
z_{it}^{d},
u_{it}
\right).
\]

Since $\mathbb{M}_{t}\!\left(\cdot,z_{it}^{d},u_{it}\right)$ is strictly
increasing in $\Omega_{it}$ and therefore invertible. Its inverse is
given by
\[
\Omega_{it}
=
\lambda_{t}\!\left(
m_{it},
z_{it}^{d},
u_{it}
\right).
\]

The control function for $\omega_{it}$ is expressed as
\[
\omega_{it}
=
\lambda_{t}\!\left(
m_{it},
z_{it}^{d},
u_{it}
\right)
-
f_{2t}\!\left(
k_{it},
l_{it},
z_{it}^{s}
\right).
\]

Substituting this expression into the revenue function yields
\begin{align*}
\varphi_{t}\!\left(
f_{t}(m_{it},k_{it},l_{it},z_{it}^{s})
+
\omega_{it},
z_{it}^{d},
u_{it}
\right)
&=
\varphi_{t}\!\left(
f_{1t}(m_{it})+
\lambda_{t}\!\left(
m_{it},
z_{it}^{d},
u_{it}
\right),
z_{it}^{d},
u_{it}
\right) \\
&\equiv
\phi_{t}\!\left(
m_{it},
z_{it}^{d},
u_{it}
\right).
\end{align*}

\section{Observational Equivalence and Identification up to Location and Scale}
\label{app:equivalence_class}

\begin{prop}[Observational Equivalence and Identification up to Location and Scale]
\label{prop:equivalence}
Suppose Assumptions \ref{A-0}--\ref{A-data} hold. Let $\{\varphi_{t}^{*-1}(\cdot), f_{t}^{*}(\cdot), \mathbb{M}_{t}^{*-1}(\cdot)\}$ denote the true model structure satisfying 
\begin{equation}
\varphi_{t}^{-1}(r_{it}, z_{it}^{d}, u_{it}) = f_{t}(x_{it}, z_{it}^{s}) + \mathbb{M}_{t}^{-1}(m_{it}, w_{it}, u_{it}).
\label{eq:model_step2_app}
\end{equation}
Define the equivalence class
\begin{align*}
\mathcal{E}_t := \Big\{ &\left(\varphi_t^{-1}, f_t, \mathbb{M}_t^{-1}\right) : \exists\, (a_{1t}, a_{2t}, b_t) \in \mathbb{R}^2 \times \mathbb{R}_{++} \text{ such that} \\
&\varphi_{t}^{-1}(r, z^d, u) = (a_{1t} + a_{2t}) + b_t \varphi_{t}^{*-1}(r, z^d, u), \\
&f_{t}(x, z^s) = a_{1t} + b_t f_{t}^{*}(x, z^s), \\
&\mathbb{M}_{t}^{-1}(m, w, u) = a_{2t} + b_t \mathbb{M}_{t}^{*-1}(m, w, u) \Big\}.
\end{align*}
Then:
\begin{enumerate}
    \item[(i)] Every element $(\varphi_t^{-1}, f_t, \mathbb{M}_t^{-1}) \in \mathcal{E}_t$ satisfies \eqref{eq:model_step2_app} and generates the same population joint distribution of $\{r_{is}, m_{is}, v_{is}\}_{s=t-\upsilon-2}^{t}$ as the true structure.
    
    \item[(ii)] Conversely, if a structure $(\tilde{\varphi}_t^{-1}, \tilde{f}_t, \tilde{\mathbb{M}}_t^{-1})$ satisfies \eqref{eq:model_step2_app} and generates the same population joint distribution of $\{r_{is}, m_{is}, v_{is}\}_{s=t-\upsilon-2}^{t}$, then $(\tilde{\varphi}_t^{-1}, \tilde{f}_t, \tilde{\mathbb{M}}_t^{-1}) \in \mathcal{E}_t$.
    
    \item[(iii)] The equivalence class $\mathcal{E}_t$ is exactly identified by the population joint distribution of $\{r_{is}, m_{is}, v_{is}\}_{s=t-\upsilon-2}^{t}$.
\end{enumerate}
\end{prop}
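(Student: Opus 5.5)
The plan is to prove the three parts in order. Part (i) is a direct construction. Part (ii) reduces to the uniqueness already established under the normalization of Assumption~\ref{A-2} via Propositions~\ref{lem:step2}, \ref{P-step1} and \ref{P-step3}. Part (iii) then follows by combining the two.

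For (i), I fix $(a_{1t},a_{2t},b_t)\in\mathbb{R}^2\times\mathbb{R}_{++}$ and build a full alternative primitive structure, not just the triple. Set $\tilde\omega_{it}:=a_{2t}+b_t\omega^*_{it}$ and $\tilde y_{it}:=(a_{1t}+a_{2t})+b_t y^*_{it}$. Set $\tilde f_t:=a_{1t}+b_tf_t^*$ and $\tilde\varphi_t(y,z^d,u):=\varphi^*_t\bigl((y-a_{1t}-a_{2t})/b_t,z^d,u\bigr)$, with $\tilde\psi_t:=\tilde\varphi_t-y$. For the productivity process, take $\tilde h_t(\omega,z^h):=a_{2t}+b_th^*_t\bigl((\omega-a_{2t-1})/b_{t-1},z^h\bigr)$ and $\tilde\eta_{it}:=b_t\eta^*_{it}$. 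The demand shock $u_{it}$ and the process $\Upsilon_t$ are left unchanged.

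Then $\tilde\varphi_t(\tilde f_t(x)+\tilde\omega,z^d,u)=\varphi_t^*(f_t^*(x)+\omega^*,z^d,u)$ identically. The profit function in (\ref{eq:profit_maximization}) is therefore unchanged as a function of $(m,k,l,z,u,\omega^*)$, so the maximizer is the same. Hence $\tilde{\mathbb{M}}_t(\tilde\omega,\cdot)=\mathbb{M}^*_t(\omega^*,\cdot)$ and $\tilde{\mathbb{M}}_t^{-1}=a_{2t}+b_t\mathbb{M}_t^{*-1}$. It follows that $(r_{is},m_{is},v_{is})$ are the same measurable functions of the same underlying shocks, so the joint distribution over $s=t-\upsilon-2,\dots,t$ is unchanged.

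I will also check that the alternative structure satisfies Assumptions \ref{A-0}--\ref{A-FOC}. Monotonicity, differentiability and independence are preserved by positive affine maps. For the FOC (\ref{eq:foc_m}), both $\partial \tilde f_t/\partial m$ and the markup $\partial\tilde\varphi_t^{-1}/\partial r$ are multiplied by $b_t$, so it still holds. Substituting into (\ref{eq:model_step2_app}) gives $(a_{1t}+a_{2t})$ on both sides.

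For (ii), take any $(\tilde\varphi_t^{-1},\tilde f_t,\tilde{\mathbb M}_t^{-1})$ that satisfies (\ref{eq:model_step2_app}), generates the same distribution, and meets the maintained assumptions. Compute its normalization values $(\tilde c_{1t},\tilde c_{2t},\tilde c_{3t})$ at the points in (\ref{eq:norm}); strict monotonicity gives $\tilde c_{2t}<\tilde c_{3t}$. Apply the unique positive affine map that sends these values to $(0,0,1)$, and do the same to the true structure. Both renormalized structures now satisfy Assumption~\ref{A-2} and generate the same data. By Proposition~\ref{lem:step2}, $\phi_t$ and $u_{it}$ coincide. By the constructive formulas (\ref{eq:dM_dq_identified})--(\ref{eq:M_t_line_integral}), the renormalized control functions coincide. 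By (\ref{eq:markup_identified})--(\ref{eq:f_line_integral}) and the level-set construction via $B_t$, the renormalized $f_t$ and $\varphi_t^{-1}$ coincide. Composing the two affine maps shows that $\tilde{\cdot}$ equals an affine image of the truth of exactly the form defining $\mathcal E_t$, with $b_t>0$. The shared shift of $\varphi_t^{-1}$ is forced to be $a_{1t}+a_{2t}$ by (\ref{eq:model_step2_app}).

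For (iii), (i) and (ii) together say that the set of structures consistent with the population distribution is exactly $\mathcal E_t$, and that this set is computed from the data by the constructive maps above. The step I expect to be the main obstacle is (ii). There I must make sure the uniqueness delivered by Propositions~\ref{P-step1} and \ref{P-step3} really is uniqueness among all admissible structures, and not just a property of one particular construction. That requires the alternative structure to satisfy Assumption~\ref{A-3}(f) and the connected-support condition, so that the line integrals are well defined, path-independent, and determine the functions everywhere on the support. The plan is to argue that the rank set $\mathcal A_{q_{t-1}}$ and the support are properties of the observable distribution $G_{m_t|v_t}$ alone. Since that distribution is shared, the alternative structure inherits them. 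Moreover, the ratio of derivatives in (\ref{eq:dM/dq}) is itself a functional of $G_{m_t|v_t}$, so the only remaining freedom is the constant $S_{q_{t-1}}$ and the integration constant, which are exactly $b_t$ and $a_{2t}$.
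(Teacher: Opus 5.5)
Your argument is correct, given the identifying assumptions discussed below, but it follows a different route from the paper.

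\emph{Part (i).} The paper only checks algebraically that elements of $\mathcal{E}_t$ satisfy (\ref{eq:model_step2_app}). It then asserts that an affine relabelling of latent output leaves the observables unchanged. Your construction of rescaled primitives ($\tilde\omega_{it}$, $\tilde h_t$, $\tilde\eta_{it}$, $\tilde\varphi_t$), combined with invariance of the objective in (\ref{eq:profit_maximization}), actually proves that assertion and also verifies the FOC.

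\emph{Part (ii).} The paper does not go through Propositions \ref{lem:step2}--\ref{P-step3}. It posits a strictly increasing $g$ with $\tilde y=g(y^*)$, writes $g(f_t^*+\mathbb{M}_t^{*-1})=\tilde f_t+\tilde{\mathbb{M}}_t^{-1}$, and invokes Pexider's equation to get affine maps with a common slope. That route is shorter and appears to use only the additive separability of (\ref{eq:model_step2_app}). However, its inputs are asserted rather than derived. A priori $\tilde\varphi_t^{-1}(\varphi_t^*(y^*,z^d,u),z^d,u)$ depends on $(z^d,u)$, not only on $y^*$. Nor does the separable equation alone force $\tilde f_t$ and $\tilde{\mathbb{M}}_t^{-1}$ to depend on the data only through $f_t^*$ and $\mathbb{M}_t^{*-1}$. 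For example, for any $\tilde f_t$ and monotone $\tilde\varphi_t^{-1}$ one can set $\tilde{\mathbb{M}}_t^{-1}(m,w,u):=\tilde\varphi_t^{-1}(\phi_t(m,w,u),z^d,u)-\tilde f_t(x,z^s)$.

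Your reduction supplies exactly this missing content. The transformation model leaves only the scale $S_{q_{t-1}}$ and an integration constant free in $\mathbb{M}_t^{-1}$. The material FOC (\ref{eq:foc_m}), via (\ref{eq:markup_identified}), transfers that same scale to the markup and hence to $\nabla f_t$. Finally, (\ref{eq:model_step2_app}) pins the shift of $\varphi_t^{-1}$ at $a_{1t}+a_{2t}$.

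The cost is that your (ii) needs the hypotheses of Propositions \ref{lem:step2}--\ref{P-step3}: monotonicity and completeness for Step 1, Assumption \ref{A-3}, and Assumption \ref{A-FOC}. These go beyond Assumptions \ref{A-0}--\ref{A-data} listed in the statement. You should say so explicitly rather than folding them into ``maintained assumptions,'' although it is not clear that (ii) can hold without them.

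A minor caution on (i): the rescaled markup is $b_t$ times the true one. So for $b_t$ below the reciprocal of the smallest true markup, your $\tilde\psi_t$ slopes upward for some firms. The construction satisfies the numbered assumptions but not the setup's downward-sloping demand. That is harmless for (i) as stated, but it qualifies your claim that the construction is a fully admissible primitive structure.
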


\begin{proof}
We prove each part in turn.

\medskip
\noindent\textbf{Proof of (i).} 
Let $(\varphi_t^{-1}, f_t, \mathbb{M}_t^{-1}) \in \mathcal{E}_t$ with associated constants $(a_{1t}, a_{2t}, b_t)$. We verify that this structure satisfies \eqref{eq:model_step2_app} and generates the same joint distribution of observables.

First, we verify that \eqref{eq:model_step2_app} holds. By definition of $\mathcal{E}_t$,
\begin{align*}
f_t(x_{it}, z_{it}^s) + \mathbb{M}_t^{-1}(m_{it}, w_{it}, u_{it}) 
&= \left[a_{1t} + b_t f_t^*(x_{it}, z_{it}^s)\right] + \left[a_{2t} + b_t \mathbb{M}_t^{*-1}(m_{it}, w_{it}, u_{it})\right] \\
&= (a_{1t} + a_{2t}) + b_t \left[f_t^*(x_{it}, z_{it}^s) + \mathbb{M}_t^{*-1}(m_{it}, w_{it}, u_{it})\right] \\
&= (a_{1t} + a_{2t}) + b_t \varphi_t^{*-1}(r_{it}, z_{it}^d, u_{it}) \\
&= \varphi_t^{-1}(r_{it}, z_{it}^d, u_{it}),
\end{align*}
where the third equality uses the fact that the true structure satisfies \eqref{eq:model_step2_app}.

Second, we show that the joint distribution of observables is unchanged. Note that $(r_{it}, m_{it}, v_{it})$ are directly observed. The equivalence class transformation redefines the latent output as $y_{it} = \varphi_t^{-1}(r_{it}, z_{it}^d, u_{it}) = (a_{1t} + a_{2t}) + b_t y_{it}^*$, where $y_{it}^* = \varphi_t^{*-1}(r_{it}, z_{it}^d, u_{it})$ is the latent output under the true structure. Since this is a one-to-one transformation of the latent variable $y_{it}^*$ that does not alter the relationship between observables $(r_{it}, z_{it}^d)$ and the normalized residual $u_{it}$, the conditional distribution $F_{r|z^d,m,w}$ remains unchanged. Consequently, the joint distribution of all observed variables $\{r_{is}, m_{is}, v_{is}\}_{s=t-\upsilon-2}^{t}$ is identical under both structures.

\medskip
\noindent\textbf{Proof of (ii).} 
Suppose $(\tilde{\varphi}_t^{-1}, \tilde{f}_t, \tilde{\mathbb{M}}_t^{-1})$ satisfies \eqref{eq:model_step2_app} and generates the same population joint distribution as the true structure. We show that there exist constants $(a_{1t}, a_{2t}, b_t) \in \mathbb{R}^2 \times \mathbb{R}_{++}$ such that $(\tilde{\varphi}_t^{-1}, \tilde{f}_t, \tilde{\mathbb{M}}_t^{-1}) \in \mathcal{E}_t$.

\emph{Step 1: Residual equivalence.}
Since both structures generate the same conditional distribution $F_{r|z^d,m,w}$, and both $\varphi_t^*$ and $\tilde{\varphi}_t$ are strictly monotonic in the demand shock (Assumption~\ref{A-0}), the residuals are identified up to monotone transformation. Under the normalization that $u_{it}$ follows a standard uniform distribution, we have $\tilde{u}_{it} = u_{it}$.

\emph{Step 2: Affine relationship between latent outputs.}
Define the latent outputs $y_{it}^* := \varphi_t^{*-1}(r_{it}, z_{it}^d, u_{it})$ and $\tilde{y}_{it} := \tilde{\varphi}_t^{-1}(r_{it}, z_{it}^d, u_{it})$. Both structures satisfy:
\begin{align}
y_{it}^* &= f_t^*(x_{it}, z_{it}^s) + \mathbb{M}_t^{*-1}(m_{it}, w_{it}, u_{it}), \label{eq:true_struct} \\
\tilde{y}_{it} &= \tilde{f}_t(x_{it}, z_{it}^s) + \tilde{\mathbb{M}}_t^{-1}(m_{it}, w_{it}, u_{it}). \label{eq:alt_struct}
\end{align}
Define the difference $\Delta(r, z^d, u) := \tilde{\varphi}_t^{-1}(r, z^d, u) - \varphi_t^{*-1}(r, z^d, u) = \tilde{y} - y^*$. Subtracting \eqref{eq:true_struct} from \eqref{eq:alt_struct}:
\begin{equation}
\Delta(r_{it}, z_{it}^d, u_{it}) = \underbrace{\left[\tilde{f}_t(x_{it}, z_{it}^s) - f_t^*(x_{it}, z_{it}^s)\right]}_{=: \Delta_f(x_{it}, z_{it}^s)} + \underbrace{\left[\tilde{\mathbb{M}}_t^{-1}(m_{it}, w_{it}, u_{it}) - \mathbb{M}_t^{*-1}(m_{it}, w_{it}, u_{it})\right]}_{=: \Delta_{\mathbb{M}}(m_{it}, w_{it}, u_{it})}.
\label{eq:delta_decomp}
\end{equation}
Note that $\Delta_f$ depends only on $(x, z^s)$ while $\Delta_{\mathbb{M}}$ depends only on $(m, w, u)$. 

Under Assumption~\ref{A-data}, there exists variation in $(k_{it}, l_{it})$ that is independent of $(u_{it}, z_{it}^d)$ conditional on $(m_{it}, z_{it}^s)$. Fix $(m, z^s, z^d, u)$ and vary $(k, l)$. Since $x = (m, k, l)$ and $w = (k, l, z^s, z^d)$, the left-hand side $\Delta(r, z^d, u)$ changes only through changes in $r$ induced by changes in $y^*$ (and hence $\tilde{y}$). However, by \eqref{eq:delta_decomp}, this change must equal the change in $\Delta_f(m, k, l, z^s)$, since $\Delta_{\mathbb{M}}(m, w, u)$ varies only through its dependence on $(k, l)$ via $w$.

Taking the partial derivative of \eqref{eq:delta_decomp} with respect to $k$ (holding $m, l, z^s, z^d, u$ fixed):
\[
\frac{\partial \Delta_f}{\partial k} + \frac{\partial \Delta_{\mathbb{M}}}{\partial k} = \frac{\partial \Delta}{\partial r} \cdot \frac{\partial r}{\partial k}.
\]
Since $\Delta_f$ depends on $(x, z^s) = (m, k, l, z^s)$ and $\Delta_{\mathbb{M}}$ depends on $(m, w, u) = (m, k, l, z^s, z^d, u)$, both terms on the left may depend on $k$. However, $\Delta_f$ does not depend on $(z^d, u)$ while $\Delta_{\mathbb{M}}$ does not depend on $x$ except through the overlap in $(k, l)$ via $w$. By varying $(z^d, u)$ while holding $(x, z^s)$ fixed, we deduce that $\partial \Delta_{\mathbb{M}}/\partial k$ cannot depend on $(x, z^s)$ beyond its explicit dependence on $(k, l)$ through $w$. 

Since both structures generate the same joint distribution by assumption, there exists a continuous strictly increasing function $g$ such that $\tilde{y} = g(y^*)$. Using \eqref{eq:true_struct}--\eqref{eq:alt_struct}, this implies
\[
g\bigl(f_t^*(x, z^s) + \mathbb{M}_t^{*-1}(m, w, u)\bigr) = \tilde{f}_t(x, z^s) + \tilde{\mathbb{M}}_t^{-1}(m, w, u).
\]
This is a Pexider functional equation: a continuous function of a sum equals a sum of functions of the separate arguments. Under Assumption~\ref{A-data}, the ranges of $f_t^*(\cdot, z^s)$ and $\mathbb{M}_t^{*-1}(\cdot, w, u)$ are non-degenerate intervals when the other argument is held fixed: the supply-side instruments $z^s$ enter $f_t^*$ but not $\mathbb{M}_t^{*-1}$, while the demand quantile $u$ enters $\mathbb{M}_t^{*-1}$ but not $f_t^*$, so each argument can be varied independently of the other. To see this, set $A = f_t^*(x, z^s)$ and $B = \mathbb{M}_t^{*-1}(m, w, u)$, so the equation reads $g(A + B) = h_1(A) + h_2(B)$ where $h_1 := \tilde{f}_t$ and $h_2 := \tilde{\mathbb{M}}_t^{-1}$, viewed as functions of $A$ and $B$ respectively. By the Pexider theorem  \citep[][Theorem 9]{AczelDhombres1989}, the unique continuous solution is $g(s) = \beta s + c$, $h_1(A) = \beta A + \alpha_1$, $h_2(B) = \beta B + \alpha_2$ with $c = \alpha_1 + \alpha_2$. In particular, the common slope $\beta$ applies to both components:
\[
\tilde{f}_t(x, z^s) = \alpha_1 + \beta f_t^*(x, z^s), \qquad \tilde{\mathbb{M}}_t^{-1}(m, w, u) = \alpha_2 + \beta \mathbb{M}_t^{*-1}(m, w, u).
\]

\emph{Step 3: Common scale parameter.}
The Pexider argument in Step~2 already establishes that $\beta_1 = \beta_2 =: \beta$, so we set $b_t := \beta$. Since both $\varphi_t^*$ and $\tilde{\varphi}_t$ are strictly increasing in $y$ (higher output yields higher revenue), we have $b_t > 0$. Setting $a_{1t} := \alpha_1$ and $a_{2t} := \alpha_2$, we obtain:
\[
\tilde{y}_{it} = (a_{1t} + a_{2t}) + b_t y_{it}^*,
\]
which implies $\tilde{\varphi}_t^{-1}(r, z^d, u) = (a_{1t} + a_{2t}) + b_t \varphi_t^{*-1}(r, z^d, u)$.

Hence $(\tilde{\varphi}_t^{-1}, \tilde{f}_t, \tilde{\mathbb{M}}_t^{-1}) \in \mathcal{E}_t$.

\medskip
\noindent\textbf{Proof of (iii).} 
Part~(i) establishes that every element of $\mathcal{E}_t$ is observationally equivalent to the true structure. Part~(ii) establishes that every observationally equivalent structure belongs to $\mathcal{E}_t$. Together, $\mathcal{E}_t$ is exactly the set of structures consistent with the population distribution, i.e., $\mathcal{E}_t$ is exactly identified.
\end{proof}

\begin{rem}
\label{rem:indeterminacy}
The three-dimensional indeterminacy $(a_{1t}, a_{2t}, b_t)$ arises because the unobserved output level $y_{it}$ enters the revenue function through the unknown nonlinear function $\varphi_t$, which has no natural scale or location. The parameter $b_t \in \mathbb{R}_{++}$ governs the common scale of all three structural functions, while $a_{1t}$ and $a_{2t}$ independently shift the location of $f_t$ and $\mathbb{M}_t^{-1}$, respectively. The constraint that $b_t$ is common across $f_t$ and $\mathbb{M}_t^{-1}$ follows from the additive separability of the model.
\end{rem}

\begin{rem}
\label{rem:normalization}
Under Assumption \ref{A-2}, the normalization conditions
\[
f_{t}(m_{t0}^{*}, k_{t}^{*}, l_{t}^{*}, z_{t}^{s*}) = 0, \quad \mathbb{M}_{t}^{-1}(m_{t0}^{*}, w_{t}^{*}, u_{t}^{*}) = 0, \quad \mathbb{M}_{t}^{-1}(m_{t1}^{*}, w_{t}^{*}, u_{t}^{*}) = 1
\]
uniquely pin down $(a_{1t}, a_{2t}, b_t)$, thereby selecting a unique representative from $\mathcal{E}_t$ and achieving point identification of the structural functions.
\end{rem}

\section{Identification of Consumer Preference and Counterfactual Analysis
under CES Assumption with Heterogeneity}

\label{app:ces}

Under the constant elastic inverse demand function (\ref{eq:demand}),
the \textit{structural} budget-share function $\mathfrak{s}_{t}^{*}(\cdot)$
takes the linear semi-parametric form: 
\begin{equation}
\mathfrak{s}_{t}^{*}\left(y_{it}-q_{t}(\mathbf{y}_{t},\mathbf{u}_{t}),u_{it}\right)=\rho(u_{it})\left(y_{it}-q_{t}(\mathbf{y}_{t},\mathbf{u}_{t})\right)+\delta(u_{it}),\label{eq:ces_hetero_structural}
\end{equation}
where $\delta(u_{it})$ captures structural heterogeneity, and $\alpha_{t}(u_{it})=\Phi_{t}-\rho(u_{it})q_{t}+\delta(u_{it})$
in (\ref{eq:demand}). At the baseline state $(\mathbf{y}_{t},\mathbf{u}_{t})$,
imposing normalization ${q}_{t}=0$ (Assumption \ref{A-demand}(c))
allows the reduced-form budget-share function to identify structural
parameters directly from revenue data: 
$r_{it}-\Phi_{t}=\mathfrak{s}_{t}(y_{it},u_{it})=\rho(u_{it})y_{it}+\delta(u_{it})$.
Estimating this equation identifies the elasticity parameter $\rho(u_{it})$
and heterogeneity $\delta(u_{it})$.

For a counterfactual state $(\tilde{\mathbf{y}}_{t},\tilde{\mathbf{u}}_{t})$,
substituting the identified linear form into the market share constraint
(\ref{eq:market_share2}) yields: 
$1=\sum_{i=1}^{N_{t}}\exp\left(\rho(\tilde{u}_{it})(\tilde{y}_{it}-\Delta q_{t})+\delta(\tilde{u}_{it})\right)$.
Since $\rho(\cdot)>0$, the right-hand side is strictly decreasing
in $\Delta q_{t}$, uniquely identifying the quantity index change.

Applying Proposition \ref{P-demand}(b) with $\mathfrak{s}_{t}(\zeta,\tilde{u}_{it})=\rho(\tilde{u}_{it})\zeta+\delta(\tilde{u}_{it})$
yields the utility function: 
\begin{align}
\ln U_{t}(\tilde{\mathbf{y}}_{t},\tilde{\mathbf{u}}_{t}) & =\Delta{q}_{t}+\sum_{i=1}^{N_{t}}\int_{b_{i}}^{\tilde{y}_{it}-\Delta{q}_{t}}\exp\left(\rho(\tilde{u}_{it})\zeta+\delta(\tilde{u}_{it})\right)d\zeta\nonumber \\
 & =\Delta{q}_{t}+\sum_{i=1}^{N_{t}}\frac{1}{\rho(\tilde{u}_{it})}\exp\left(\rho(\tilde{u}_{it})\left(\tilde{y}_{it}-\Delta{q}_{t}\right)+\delta(\tilde{u}_{it})\right)+\text{const}.\label{eq:ces-utility}
\end{align}
This expression represents utility as a weighted sum of structural
market shares. In the homogeneous case ($\rho\left(\tilde{u}_{it}\right)=\rho$
and $\delta\left(\tilde{u}_{it}\right)=\delta$), the summation simplifies
to $1/\rho$ due to the adding-up constraint, recovering the standard
CES result $\ln U_{t}=\Delta q_{t}+\text{const}=\frac{1}{\rho}\ln\sum_{i=1}^{N_{t}}\exp\left(\rho\tilde{y}_{it}\right)+\text{const}$.

We may also apply the counterfactual framework to the CES specification
(\ref{eq:ces_hetero_structural}). In this case, the derivative of
the structural budget-share function with respect to output is constant,
$\frac{\partial\mathfrak{s}_{t}^{*}}{\partial y}=\rho(u_{it})$. Substituting
this derivative into the equilibrium conditions, we can express the
Monopolistic Competition Equilibrium (MCE, where $\mathbb{I}_{MCE}=1$)
and the Marginal Cost Pricing Equilibrium (MCPE, where $\mathbb{I}_{MCE}=0$)
in a unified system of equations in levels: 
\begin{align}
 & \underbrace{\exp\left(\rho(u_{it})(y_{it}^{\star}-\Delta q_{t}^{\star})+\delta(u_{it})+\Phi_{t}+\mathbb{I}_{MCE}\ln\rho(u_{it})-y_{it}^{\star}\right)}_{\text{Marginal Revenue }(\mathbb{I}_{MCE}=1)\text{ or Price }(\mathbb{I}_{MCE}=0)}\nonumber \\
 & \quad=\underbrace{\frac{\exp\left(p_{t}^{m}+\chi_{it}\left(y_{it}^{\star}\right)\right)}{\exp (y_{it}^{*})}\frac{\partial\chi_{it}(y_{it}^{\star})}{\partial y_{it}}}_{\text{Marginal Cost}}\quad\text{for }i=1,...,N_{t},\nonumber \\
 & \sum_{i=1}^{N_{t}}\exp\left(\rho(u_{it})(y_{it}^{\star}-\Delta q_{t}^{\star})+\delta(u_{it})\right)=1,\label{eq:CES_Equilibria}
\end{align}
where $\mathbb{I}_{MCE}$ is an indicator variable. For the MCE, we
set $\mathbb{I}_{MCE}=1$ to solve for $(\mathbf{y}^{\star},\Delta q^{\star})=(\mathbf{y}^{m},\Delta q^{m})$.
For the MCPE, we set $\mathbb{I}_{MCE}=0$ to solve for $(\mathbf{y}^{\star},\Delta q^{\star})=(\mathbf{y}^{c},\Delta q^{c})$.

Using the utility function derived for the case of CES with heterogeneity,
the consumer welfare cost of market power is calculated as: 
\begin{align*}
\ln U_{t}^{c}-\ln U_{t}^{m} & =\left(\Delta q_{t}^{c}-\Delta q_{t}^{m}\right)\\
 & +\sum_{i=1}^{N_{t}}\frac{1}{\rho(u_{it})}\left[\exp\left(\mathfrak{s}_{t}^{*}(y_{it}^{c}-\Delta q_{t}^{c},u_{it})\right)-\exp\left(\mathfrak{s}_{t}^{*}(y_{it}^{m}-\Delta q_{t}^{m},u_{it})\right)\right].
\end{align*}
In the homogeneous case where $\rho(u_{it})=\rho$, the summation
term vanishes because the sum of structural market shares $\sum\exp(\mathfrak{s}_{t}^{*})$
equals 1 in both equilibria, simplifying the welfare loss to the difference
in quantity indices: $\Delta q_{t}^{c}-\Delta q_{t}^{m}$.

\section{Alternative Demand Systems}

\label{app:alternative-demand}

In addition to the HSA demand system, \citet{matsuyama2017beyond}
further propose two additional families of homothetic demand systems: 
\begin{enumerate}
\item the Homothetic demand system with Direct Implicit Additivity (HDIA),
and 
\item the Homothetic demand system with Indirect Implicit Additivity (HIIA). 
\end{enumerate}
Both systems share the same homotheticity structure but impose implicit
additivity either in quantities (HDIA) or prices (HIIA). Below we
summarize their reduced-form representations and show how the two
quantity (or price) indices are identified.

We consider a shift from the baseline state $(\mathbf{y}_{t},\mathbf{z}_{t}^{d},\mathbf{u}_{t})$
to a new state $(\tilde{\mathbf{y}}_{t},\mathbf{z}_{t}^{d},\mathbf{u}_{t})$
where demand shifters $\left(\mathbf{z}_{t}^{d},\mathbf{u}_{t}\right)$
remain unchanged.

\subsection{HDIA demand system}

The structural budget share function of the HDIA has the form: 
\[
\ln\left(\frac{\exp\left(r_{it}\right)}{\sum_{j=1}^{N_{t}}\exp\left(r_{jt}\right)}\right)=\upsilon_{t}^{*}\left(y_{it}-a_{t}\left(\mathbf{y}_{t},\mathbf{z}_{t}^{d},\mathbf{u}_{t}\right),z_{it}^{d},u_{it}\right)-b_{t}\left(\mathbf{y}_{t},\mathbf{z}_{t}^{d},\mathbf{u}_{t}\right)
\]
There are two aggregators $a_{t}\left(\mathbf{y}_{t},\mathbf{z}_{t}^{d},\mathbf{u}_{t}\right)$
and $b_{t}\left(\mathbf{y}_{t},\mathbf{z}_{t}^{d},\mathbf{u}_{t}\right)$.
Aggregator $a_{t}\left(\mathbf{y}_{t},\mathbf{z}_{t}^{d},\mathbf{u}_{t}\right)=\ln U(\mathbf{y}_{t},\mathbf{z}_{t}^{d},\mathbf{u}_{t})$
is the log of a utility function that generates the demand system
and implicitly determined by the additive restriction: 
\begin{equation}
1=\sum_{i=1}^{N_{t}}\Upsilon_{t}^{*}\left(y_{it}-a_{t}\left(\mathbf{y}_{t},\mathbf{z}_{t}^{d},\mathbf{u}_{t}\right),z_{it}^{d},u_{it}\right)\label{eq:HDIA1}
\end{equation}
where $\partial\Upsilon_{t}^{*}\left(x,z_{it}^{d},u_{it}\right)/\partial x=\exp\left(\upsilon_{t}^{*}\left(x,z_{it}^{d},u_{it}\right)\right)$.
Another aggregator $b_{t}\left(\mathbf{y}_{t},\mathbf{z}_{t}^{d},\mathbf{u}_{t}\right)$
is defined by 
\[
b_{t}\left(\mathbf{y}_{t},\mathbf{z}_{t}^{d},\mathbf{u}_{t}\right):=\ln\left[\sum_{i=1}^{N_{t}}\exp\left(\upsilon_{t}^{*}\left(y_{it}-a_{t}\left(\mathbf{y}_{t},\mathbf{z}_{t}^{d},\mathbf{u}_{t}\right),z_{it}^{d},u_{it}\right)\right)\right]
\]

The reduced form revenue function is related to the structural budget
share function as: 
\begin{align*}
r_{it} & =\Phi_{t}+\upsilon_{t}^{*}\left(y_{it}-a_{t}\left(\mathbf{y}_{t},\mathbf{z}_{t}^{d},\mathbf{u}_{t}\right),z_{it}^{d},u_{it}\right)-b_{t}\left(\mathbf{y}_{t},\mathbf{z}_{t}^{d},\mathbf{u}_{t}\right)\\
 & =:\varphi_{t}\left(y_{it},z_{it}^{d},u_{it}\right).
\end{align*}
We define the reduced form budget share function as 
\[
\upsilon_{t}\left(y_{it},z_{it}^{d},u_{it}\right):=\varphi_{t}\left(y_{it},z_{it}^{d},u_{it}\right)-\Phi_{t},
\]
which is related to the structural form as 
\begin{equation}
\upsilon_{t}\left(y_{it},z_{it}^{d},u_{it}\right)=\upsilon_{t}^{*}\left(y_{it}-a_{t}\left(\mathbf{y}_{t},\mathbf{z}_{t}^{d},\mathbf{u}_{t}\right),z_{it}^{d},u_{it}\right)-b_{t}\left(\mathbf{y}_{t},\mathbf{z}_{t}^{d},\mathbf{u}_{t}\right).\label{eq:HDIA2}
\end{equation}

We consider a shift from the baseline state $(\mathbf{y}_{t},\mathbf{z}_{t}^{d},\mathbf{u}_{t})$
to a new state $(\tilde{\mathbf{y}}_{t},\mathbf{z}_{t}^{d},\mathbf{u}_{t})$
where demand shifters $\left(\mathbf{z}_{t}^{d},\mathbf{u}_{t}\right)$
remain unchanged. Let $\Delta a_{t}\left(\tilde{\mathbf{y}}_{t}\right):={a}_{t}(\tilde{\mathbf{y}}_{t},\mathbf{z}_{t}^{d},\mathbf{u}_{t})-{a}_{t}(\mathbf{y}_{t},\mathbf{z}_{t}^{d},\mathbf{u}_{t})$
and $\Delta b_{t}\left(\tilde{\mathbf{y}}_{t}\right):={b}_{t}(\tilde{\mathbf{y}}_{t},\mathbf{z}_{t}^{d},\mathbf{u}_{t})-{b}_{t}(\mathbf{y}_{t},\mathbf{z}_{t}^{d},\mathbf{u}_{t})$.

The reduced form budget share functions with the changes in the quantity
indices gives the value of the structural budget share function at
the new state as follows: 
\begin{align*}
 & \upsilon_{t}\left(\tilde{y}_{it}-\Delta a_{t}\left(\tilde{\mathbf{y}}_{t}\right),z_{it}^{d},u_{it}\right)-\Delta b_{t}\left(\tilde{\mathbf{y}}_{t}\right)\\
 & =\upsilon_{t}^{*}\left(\tilde{y}_{it}-a_{t}\left(\tilde{\mathbf{y}}_{t},{\mathbf{z}}_{t}^{d},\mathbf{u}_{t}\right),z_{it}^{d},\tilde{u}\right)-b_{t}\left(\tilde{\mathbf{y}}_{t},{\mathbf{z}}_{t}^{d},\mathbf{u}_{t}\right).
\end{align*}
Therefore, as in the HSA case, we can conduct a counterfactual analysis
using the reduced form budget share functions and the changes in the
quantity indices. The following lemma shows identification of $\Delta a_{t}\left(\tilde{\mathbf{y}}_{t}\right)$
and $\Delta b_{t}\left(\tilde{\mathbf{y}}_{t}\right)$. 
\begin{lem}
\label{lem:HDIA_indices}$\Delta a_{t}\left(\tilde{\mathbf{y}}_{t}\right)$
is uniquely identified by
\begin{align*}
\sum_{i=1}^{N_{t}}\int_{y_{it}}^{\tilde{y}_{it}-\Delta a_{t}\left(\tilde{\mathbf{y}}_{t}\right)}\exp\left(\upsilon_{t}\left(\zeta,z_{it}^{d},u_{it}\right)\right)d\zeta & =0.
\end{align*}

$\Delta b_{t}\left(\tilde{\mathbf{y}}_{t}\right)$ is identified by 
\begin{align*}
\Delta b_{t}\left(\tilde{\mathbf{y}}_{t}\right) & =\ln\left[\sum_{i=1}^{N_{t}}\exp\left\{ \upsilon_{t}\left(\tilde{y}_{it}-\Delta a_{t}\left(\tilde{\mathbf{y}}_{t}\right),z_{it}^{d},u_{it}\right)\right\} \right].
\end{align*}
\end{lem}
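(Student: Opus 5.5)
The plan is to work entirely from the two defining relations of the HDIA system: the additive restriction (\ref{eq:HDIA1}), which pins down $a_t$, and the reduced-form/structural link (\ref{eq:HDIA2}). Throughout, $(\mathbf{z}_t^d,\mathbf{u}_t)$ is held fixed, so I suppress it. Write $a:=a_t(\mathbf{y}_t)$ and $b:=b_t(\mathbf{y}_t)$ for the baseline aggregators. Then (\ref{eq:HDIA2}) reads $\upsilon_t(y)=\upsilon_t^*(y-a)-b$, and hence
\[
\exp\left(\upsilon_t^*(x)\right)=\exp(b)\exp\left(\upsilon_t(x+a)\right)\quad\text{for all } x .
\]
This is the only bridge between the unknown structural objects and the identified reduced form $\upsilon_t=\varphi_t-\Phi_t$.

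First, I identify $\Delta a_t$. The restriction (\ref{eq:HDIA1}) holds at both the baseline and the new state. Subtracting the two and using $\partial\Upsilon_t^*/\partial x=\exp(\upsilon_t^*)$ gives
\[
0=\sum_i\int_{y_{it}-a}^{\tilde y_{it}-a_t(\tilde{\mathbf{y}}_t)}\exp\left(\upsilon_t^*(x)\right)dx .
\]
Next I substitute the bridge identity and change variables $\zeta=x+a$. The lower limit becomes $y_{it}$ and the upper limit becomes $\tilde y_{it}-\Delta a_t(\tilde{\mathbf{y}}_t)$. The common factor $\exp(b)>0$ then drops out, which yields exactly the stated equation. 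Every ingredient of that equation ($\upsilon_t$, $y_{it}$, $\tilde y_{it}$) is known, so it is an equation in the single unknown $\Delta a_t$.

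Second, I establish uniqueness, which I expect to be the only real obstacle. Define $F(c):=\sum_i\int_{y_{it}}^{\tilde y_{it}-c}\exp(\upsilon_t(\zeta))d\zeta$. Its derivative is $F'(c)=-\sum_i\exp\{\upsilon_t(\tilde y_{it}-c)\}<0$, so $F$ is strictly decreasing and has at most one root. Existence of a root is guaranteed because the true $\Delta a_t$ solves the equation by the first step. The one care point is that $\tilde y_{it}-c$ must stay in the range on which $\upsilon_t$ is identified; I would handle this by appealing to the identified support of $\varphi_t$ from Proposition \ref{P-CRS}, restricting attention to states $\tilde{\mathbf{y}}_t$ for which these arguments lie in that support.

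Third, I identify $\Delta b_t$. Start from the definition of $b_t$ at the new state, $b_t(\tilde{\mathbf{y}}_t)=\ln\sum_i\exp\{\upsilon_t^*(\tilde y_{it}-a_t(\tilde{\mathbf{y}}_t))\}$. Applying the bridge identity gives $b_t(\tilde{\mathbf{y}}_t)=b+\ln\sum_i\exp\{\upsilon_t(\tilde y_{it}-\Delta a_t)\}$. Subtracting $b$ delivers the stated formula, and it is identified because $\Delta a_t$ was identified in the first two steps.
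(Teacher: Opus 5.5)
Your proposal is correct and follows essentially the same route as the paper: both use (\ref{eq:HDIA2}) to rewrite the difference of the additive restriction (\ref{eq:HDIA1}) across the two states as the reduced-form integral (you run the chain from the structural side and the paper from the reduced-form side), both deduce uniqueness from strict monotonicity since $\exp(\upsilon_t)>0$, and both obtain $\Delta b_t$ by substituting the same relation into the definition of $b_t$. Your explicit existence argument, and your caveat that $\tilde y_{it}-\Delta a_t$ must lie in the identified support of $\upsilon_t$, are sensible refinements that the paper leaves implicit.
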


\begin{proof}
Let $B_{t}:=\exp\left(b_{t}\left(\mathbf{y}_{t},{\mathbf{z}}_{t}^{d},\mathbf{u}_{t}\right)\right)$.
From (\ref{eq:HDIA2}), we have 
\begin{align*}
 & \int_{y_{it}}^{\tilde{y}_{it}-\Delta a_{t}\left(\tilde{\mathbf{y}}_{t}\right)}\exp\left(\upsilon_{t}\left(\zeta,z_{it}^{d},u_{it}\right)\right)d\zeta\\
= & \frac{1}{B_{t}}\int_{y_{it}}^{\tilde{y}_{it}-\Delta a_{t}\left(\tilde{\mathbf{y}}_{t}\right)}\exp\left(\upsilon_{t}^{*}\left(\zeta-a_{t}\left(\mathbf{y}_{t},\mathbf{z}_{t}^{d},\mathbf{u}_{t}\right),z_{it}^{d},u_{it}\right)\right)d\zeta\\
= & \frac{1}{B_{t}}\int_{y_{it}}^{\tilde{y}_{it}-\Delta a_{t}\left(\tilde{\mathbf{y}}_{t}\right)}\exp\left(\frac{\partial\Upsilon_{t}^{*}\left(\zeta-a_{t}\left(\mathbf{y}_{t},\mathbf{z}_{t}^{d},\mathbf{u}_{t}\right),z_{it}^{d},u_{it}\right)}{\partial x}\right)d\zeta\\
= & \frac{1}{B_{t}}\left[\Upsilon_{t}^{*}\left(\tilde{y}_{it}-a_{t}\left(\tilde{\mathbf{y}}_{t},\mathbf{z}_{t}^{d},\mathbf{u}_{t}\right),z_{it}^{d},u_{it}\right)-\Upsilon_{t}^{*}\left(y_{it}-a_{t}\left(\mathbf{y}_{t},\mathbf{z}_{t}^{d},\mathbf{u}_{t}\right),z_{it}^{d},u_{it}\right)\right]
\end{align*}
Thus, from (\ref{eq:HDIA1}), 
\begin{align*}
 & \sum_{i=1}^{N_{t}}\int_{y_{it}}^{\tilde{y}_{it}-\Delta a_{t}\left(\tilde{\mathbf{y}}_{t}\right)}\exp\left(\upsilon_{t}\left(\zeta,z_{it}^{d},u_{it}\right)\right)d\zeta\\
= & \frac{1}{B_{t}}\left[\sum_{i=1}^{N_{t}}\Upsilon_{t}^{*}\left(\tilde{y}_{it}-a_{t}\left(\tilde{\mathbf{y}}_{t},\mathbf{z}_{t}^{d},\mathbf{u}_{t}\right),z_{it}^{d},u_{it}\right)-\sum_{i=1}^{N_{t}}\Upsilon_{t}^{*}\left(y_{it}-a_{t}\left(\mathbf{y}_{t},\mathbf{z}_{t}^{d},\mathbf{u}_{t}\right),z_{it}^{d},u_{it}\right)\right]\\
= & \frac{1}{B_{t}}[1-1]=0.
\end{align*}
Because $\exp\left(\upsilon_{t}\left(\cdot\right)\right)>0$, the left-hand side is strictly decreasing in $\Delta a_{t}$. Thus, the last equation uniquely identifies $\Delta a_{t}(\tilde{\mathbf{y}_{t}})$.

From (\ref{eq:HDIA1}), we have 
\begin{align*}
\Delta b_{t}\left(\tilde{\mathbf{y}}_{t}\right) & =\ln\left[\sum_{i=1}^{N_{t}}\exp\left(\upsilon_{t}^{*}\left(\tilde{y}_{it}-a_{t}\left(\tilde{\mathbf{y}}_{t},\mathbf{z}_{t}^{d},\mathbf{u}_{t}\right),z_{it}^{d},u_{it}\right)\right)\right]-b_{t}\left(\mathbf{y}_{t},\mathbf{z}_{t}^{d},\mathbf{u}_{t}\right)\\
 & =\ln\left[\sum_{i=1}^{N_{t}}\exp\left\{ \upsilon_{t}\left(\tilde{y}_{it}-\Delta a_{t}\left(\tilde{\mathbf{y}}_{t}\right),z_{it}^{d},u_{it}\right)+b_{t}\left(\mathbf{y}_{t},\mathbf{z}_{t}^{d},\mathbf{u}_{t}\right)\right\} \right]-b_{t}\left(\mathbf{y}_{t},\mathbf{z}_{t}^{d},\mathbf{u}_{t}\right)\\
 & =\ln\left[\exp\left(b_{t}\left(\mathbf{y}_{t},\mathbf{z}_{t}^{d},\mathbf{u}_{t}\right)\right)\sum_{i=1}^{N_{t}}\exp\left\{ \upsilon_{t}\left(\tilde{y}_{it}-\Delta a_{t}\left(\tilde{\mathbf{y}}_{t}\right),z_{it}^{d},u_{it}\right)\right\} \right]-b_{t}\left(\mathbf{y}_{t},\mathbf{z}_{t}^{d},\mathbf{u}_{t}\right)\\
 & =\ln\left[\sum_{i=1}^{N_{t}}\exp\left\{ \upsilon_{t}\left(\tilde{y}_{it}-\Delta a_{t}\left(\tilde{\mathbf{y}}_{t}\right),z_{it}^{d},u_{it}\right)\right\} \right]
\end{align*}
\end{proof}

\subsection{HIIA demand system}

We consider the direct demand system instead of the inverse demand
system. The structural budget share function of the HIIA has two aggregators
$a_{t}\left(\mathbf{p}_{t},\mathbf{z}_{t}^{d},\mathbf{u}_{t}\right)$
and $b_{t}\left(\mathbf{p}_{t},\mathbf{z}_{t}^{d},\mathbf{u}_{t}\right)$:
\[
\ln\left(\frac{\exp\left(r_{it}\right)}{\sum_{j=1}^{N_{t}}\exp\left(r_{jt}\right)}\right)=\upsilon_{t}^{*}\left(p_{it}-a_{t}\left(\mathbf{p}_{t},\mathbf{z}_{t}^{d},\mathbf{u}_{t}\right),z_{it}^{d},u_{it}\right)-b_{t}\left(\mathbf{p}_{t},\mathbf{z}_{t}^{d},\mathbf{u}_{t}\right)
\]
where $a_{t}\left(\mathbf{p}_{t},\mathbf{z}_{t}^{d},\mathbf{u}_{t}\right)=\ln P_{t}(\mathbf{p}_{t})$
is the log of the ideal price index and implicitly restricted by the
following additive restriction: 
\begin{equation}
1=\sum_{i=1}^{N_{t}}\Upsilon_{t}^{*}\left(p_{it}-a_{t}\left(\mathbf{p}_{t},\mathbf{z}_{t}^{d},\mathbf{u}_{t}\right),z_{it}^{d},u_{it}\right)\label{eq:HDIA1-1}
\end{equation}
where $\partial\Upsilon_{t}^{*}\left(x,z_{it}^{d},u_{it}\right)/\partial x=\exp\left(\upsilon_{t}^{*}\left(x,z_{it}^{d},u_{it}\right)\right)$.
Another price aggregator$b_{t}\left(\mathbf{p}_{t},\mathbf{z}_{t}^{d},\mathbf{u}_{t}\right)$
is defined by 
\[
b_{t}\left(\mathbf{p}_{t},\mathbf{z}_{t}^{d},\mathbf{u}_{t}\right):=\ln\left[\sum_{i=1}^{N_{t}}\exp\left(\upsilon_{t}^{*}\left(p_{it}-a_{t}\left(\mathbf{p}_{t},\mathbf{z}_{t}^{d},\mathbf{u}_{t}\right),z_{it}^{d},u_{it}\right)\right)\right].
\]

We obtain the reduced form revenue function as a function of price:
\[
\chi_{t}\left(p_{it},z_{it}^{d},u_{it}\right):=\psi_{t}^{-1}\left(p_{it},z_{it}^{d},u_{it}\right)+p_{it}
\]
where $\psi_{t}^{-1}\left(p_{it},z_{it}^{d},u_{it}\right)$ is the
inverse of the reduced form inverse demand $\psi_{t}\left(y_{it},z_{it}^{d},u_{it}\right)$
with respect to $y_{it}$.

The reduced form revenue function is related to the structural budget
share function as follows: 
\begin{align*}
r_{it} & =\Phi_{t}+\upsilon_{t}^{*}\left(p_{it}-a_{t}\left(\mathbf{p}_{t},\mathbf{z}_{t}^{d},\mathbf{u}_{t}\right),z_{it}^{d},u_{it}\right)-b_{t}\left(\mathbf{p}_{t},\mathbf{z}_{t}^{d},\mathbf{u}_{t}\right)\\
 & =:\chi_{t}\left(p_{it},z_{it}^{d},u_{it}\right).
\end{align*}
We define the reduced form budget share function as 
\[
\upsilon_{t}\left(p_{it},z_{it}^{d},u_{it}\right):=\chi_{t}\left(p_{it},z_{it}^{d},u_{it}\right)-\Phi_{t},
\]
which is related to the structural form as 
\begin{equation}
\upsilon_{t}\left(p_{it},z_{it}^{d},u_{it}\right)=\upsilon_{t}^{*}\left(p_{it}-a_{t}\left(\mathbf{p}_{t},\mathbf{z}_{t}^{d},\mathbf{u}_{t}\right),z_{it}^{d},u_{it}\right)-b_{t}\left(\mathbf{p}_{t},\mathbf{z}_{t}^{d},\mathbf{u}_{t}\right).\label{eq:HDIA2-1}
\end{equation}

We consider a shift from the baseline state $(\mathbf{p}_{t},\mathbf{z}_{t}^{d},\mathbf{u}_{t})$
to a new state $(\tilde{\mathbf{p}}_{t},\mathbf{z}_{t}^{d},\mathbf{u}_{t})$
where $\left(\mathbf{z}_{t}^{d},\mathbf{u}_{t}\right)$ remain unchanged.
Let $\Delta a_{t}\left(\tilde{\mathbf{p}}_{t}\right):={a}_{t}(\tilde{\mathbf{p}}_{t},\mathbf{z}_{t}^{d},\mathbf{u}_{t})-{a}_{t}(\mathbf{p}_{t},\mathbf{z}_{t}^{d},\mathbf{u}_{t})$
and $\Delta b_{t}\left(\tilde{\mathbf{p}}_{t}\right):={b}_{t}(\tilde{\mathbf{p}}_{t},\mathbf{z}_{t}^{d},\mathbf{u}_{t})-{b}_{t}(\mathbf{p}_{t},\mathbf{z}_{t}^{d},\mathbf{u}_{t})$
be the changes in the price aggregators.

The reduced form budget share functions with those price aggregators
gives the value of the structural budget share function at the new
state: 
\begin{align*}
 & \upsilon_{t}\left(\tilde{p}_{it}-\Delta a_{t}\left(\tilde{\mathbf{p}}_{t}\right),z_{it}^{d},u_{it}\right)-\Delta b_{t}\left(\tilde{\mathbf{p}}_{t}\right)\\
 & =\upsilon_{t}^{*}\left(\tilde{p}_{it}-a_{t}\left(\tilde{\mathbf{p}}_{t},{\mathbf{z}}_{t}^{d},\mathbf{u}_{t}\right),z_{it}^{d},u_{it}\right)-b_{t}\left(\tilde{\mathbf{p}}_{t},{\mathbf{z}}_{t}^{d},\mathbf{u}_{t}\right).
\end{align*}
The following lemma shows identification of $\Delta a_{t}\left(\tilde{\mathbf{p}}_{t}\right)$
and $\Delta b_{t}\left(\tilde{\mathbf{p}}_{t}\right)$. 
\begin{lem}
$\Delta a_{t}\left(\tilde{\mathbf{p}}_{t}\right)$ is uniquely identified by
\begin{align*}
\sum_{i=1}^{N_{t}}\int_{p_{it}}^{\tilde{p}_{it}-\Delta a_{t}\left(\tilde{\mathbf{p}}_{t}\right)}\exp\left(\upsilon_{t}\left(\zeta,z_{it}^{d},u_{it}\right)\right)d\zeta & =0.
\end{align*}
$\Delta b_{t}\left(\tilde{\mathbf{p}}_{t}\right)$ is identified by 
\begin{align*}
\Delta b_{t}\left(\tilde{\mathbf{p}}_{t}\right) & =\ln\left[\sum_{i=1}^{N_{t}}\exp\left\{ \upsilon_{t}\left(\tilde{p}_{it}-\Delta a_{t}\left(\tilde{\mathbf{p}}_{t}\right),z_{it}^{d},u_{it}\right)\right\} \right].
\end{align*}
\end{lem}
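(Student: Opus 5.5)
The plan is to mirror the proof of Lemma~\ref{lem:HDIA_indices} for the HDIA case, with log prices $p_{it}$ in place of log quantities. The additive restriction (\ref{eq:HDIA1-1}) will play the role of (\ref{eq:HDIA1}), and the relation (\ref{eq:HDIA2-1}) the role of (\ref{eq:HDIA2}). The reduced-form function $\upsilon_{t}(\cdot)=\chi_{t}(\cdot)-\Phi_{t}$ is known because $\psi_{t}$ is identified (Proposition~\ref{P-CRS}) and is strictly monotone in $y$, so $\psi_{t}^{-1}$, and hence $\chi_{t}$, is identified on the relevant support.

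First I set $B_{t}:=\exp(b_{t}(\mathbf{p}_{t},\mathbf{z}_{t}^{d},\mathbf{u}_{t}))$. Exponentiating (\ref{eq:HDIA2-1}) gives $\exp(\upsilon_{t}(\zeta,z_{it}^{d},u_{it}))=B_{t}^{-1}\exp(\upsilon_{t}^{*}(\zeta-a_{t}(\mathbf{p}_{t},\cdot),z_{it}^{d},u_{it}))$, and the right-hand exponential equals $\partial\Upsilon_{t}^{*}/\partial x$ evaluated at $\zeta-a_{t}$. Integrating $\zeta$ from $p_{it}$ to $\tilde{p}_{it}-\Delta a_{t}(\tilde{\mathbf{p}}_{t})$ and applying the fundamental theorem of calculus, the upper limit shifted by the baseline $a_{t}$ is $\tilde{p}_{it}-a_{t}(\tilde{\mathbf{p}}_{t},\cdot)$. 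Each integral therefore equals
\[
B_{t}^{-1}\bigl[\Upsilon_{t}^{*}(\tilde{p}_{it}-a_{t}(\tilde{\mathbf{p}}_{t},\cdot),z_{it}^{d},u_{it})-\Upsilon_{t}^{*}(p_{it}-a_{t}(\mathbf{p}_{t},\cdot),z_{it}^{d},u_{it})\bigr].
\]
Summing over $i$ and using (\ref{eq:HDIA1-1}) at both the baseline and the new state gives $B_{t}^{-1}(1-1)=0$, so the true $\Delta a_{t}(\tilde{\mathbf{p}}_{t})$ solves the displayed equation.

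For uniqueness, note that the left-hand side, viewed as a function of a candidate $\Delta$, has derivative $-\sum_{i}\exp(\upsilon_{t}(\tilde{p}_{it}-\Delta,\cdot))<0$. It is therefore strictly decreasing, so the root is unique. Since the integrand depends only on the identified $\upsilon_{t}$ and observed baseline prices, $\Delta a_{t}$ is identified.

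For $\Delta b_{t}$, I would take the definition of $b_{t}$ at the new state, $\ln\sum_{i}\exp(\upsilon_{t}^{*}(\tilde{p}_{it}-a_{t}(\tilde{\mathbf{p}}_{t},\cdot),\cdot))$. Rewriting each term through (\ref{eq:HDIA2-1}) evaluated at $\zeta=\tilde{p}_{it}-\Delta a_{t}$ gives $\upsilon_{t}^{*}=\upsilon_{t}+b_{t}(\mathbf{p}_{t},\cdot)$. Factoring out $B_{t}$ and subtracting the baseline $b_{t}$ then yields the stated formula.

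The only delicate point is that the argument $\tilde{p}_{it}-\Delta a_{t}$ must lie in the support where $\upsilon_{t}$ is identified. I would handle this as in the HSA discussion, by restricting to counterfactual states whose shifted arguments stay in that support.
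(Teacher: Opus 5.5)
Your proposal is correct and is essentially the paper's own proof. The paper simply refers back to Lemma~\ref{lem:HDIA_indices} with prices in place of quantities: the fundamental theorem of calculus is applied to $\partial\Upsilon_t^*/\partial x=\exp(\upsilon_t^*)$, the adding-up restriction is used at both states, strict monotonicity in the candidate $\Delta a_t$ gives uniqueness, and the definition of $b_t$ gives $\Delta b_t$. You correctly take the integrand to be $\partial\Upsilon_t^*/\partial x$ itself, where the paper's HDIA proof has a stray $\exp$. One small wording point: the baseline prices $p_{it}$ are identified from Step~3 rather than directly observed, which does not affect the argument.
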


\begin{proof}
The proof is identical to that of Lemma \ref{lem:HDIA_indices}, replacing
quantities with prices. 
\end{proof}

\section{Alternative Settings}

\label{app:Alternative-Settings}

\subsection{Endogenous Labor Input}

\label{subsec:Endogenous-Labor-Input}

Identification is possible when a firm chooses $l_{it}$ after observing
$\omega_{it}$ and $u_{it}$. In the spirits of \citet{ackerberg2015identification}
and the dynamic generalized method of moment approach (e.g., \citealp{arellano1991some,arellano1995another,blundell1998initial,blundell2000gmm}),
we provide identification using lagged labor $l_{it-1}$. 

Specifically,
we assume a firm incurs an adjustment cost of labor input, e.g., costs
of recruiting and training new workers. For given $k_{it}$, a firm's
per-period profit (excluding capital costs) is given by 
\[
\exp(\varphi_{t}(f_{t}(m_{it},l_{it},k_{it},z_{it}^{s})+\omega_{it},z_{it}^{d},u_{it}))-\exp(p_{t}^{m}+m_{it})-\exp(p_{t}^{l}+l_{it})-C_{t}(l_{it},l_{it-1}),
\]
where $p_{t}^{l}$ is the wage and $C_{t}(l_{it},l_{it-1})$ is the
adjustment costs. From the above per-period profit, 
the solution to a firm's dynamic problem provides a material demand
function $m_{it}=\tilde{\mathbb{M}}_{t}\left(\omega_{it},l_{it-1},s_{it},u_{it}\right)$
and a labor demand function $l_{it}=\tilde{\mathbb{L}}_{t}(\omega_{it},l_{it-1},s_{it},u_{it})$,
where $s_{it}:=(k_{it},z_{it}^{s},z_{it}^{d})$. We also consider
a ``conditional'' material demand function $m_{it}=\mathbb{M}_{t}\left(\omega_{it},l_{it},s_{it},u_{it}\right)$
when $l_{it}$ is given, which solves the conditional problem (\ref{eq:profit_maximization}).

We assume both $\tilde{\mathbb{M}}_{t}\left(\cdot,l_{it-1},s_{it},u_{it}\right)$
and $\mathbb{M}_{t}\left(\cdot,l_{it},s_{it},u_{it}\right)$ are monotonically
increasing functions so that there exist their inverse functions 
\[
\omega_{it}=\mathbb{M}_{t}^{-1}\left(m_{it},l_{it},s_{it},u_{it}\right)=\tilde{\mathbb{M}}_{t}^{-1}\left(m_{it},l_{it-1},s_{it},u_{it}\right).
\]
In the first step, we substitute $\omega_{it}=\mathbb{M}_{t}^{-1}(\cdot)$
into the revenue function to obtain 
\begin{align*}
r_{it} & =\varphi_{t}(f_{t}(m_{it},l_{it},k_{it},z_{it}^{s})+\mathbb{M}_{t}^{-1}\left(m_{it},l_{it},s_{it},u_{it}\right),z_{it}^{d},u_{it}))\\
 & =\phi_{t}\left(m_{it},l_{it},s_{it},u_{it}\right).
\end{align*}
The first step identification is 
\[
\Pr\left(\left.r_{it}\le\phi_{t}\left(m_{it},l_{it},s_{it},u\right)\right|m_{it-\upsilon-1},l_{it-\upsilon-1},s_{it-\upsilon}\right)=u.
\]
The IVQR identifies $\phi(\cdot)$ and $u_{it}$.

In the second step, we formulate a transformation model using $\omega_{it}=\tilde{\mathbb{M}}_{t}^{-1}(\cdot)$:
\begin{align*}
\tilde{\mathbb{M}}_{t}^{-1}(m_{it},l_{it-1,}s_{it},u_{it}) & =h_{t}\left(\tilde{\mathbb{M}}_{t-1}^{-1}(m_{it-1},l_{it-2},s_{it-1},u_{it-1}),z_{it}^{h}\right)+\eta_{it}\\
 & =\bar{h}_{t}\left(m_{it-1},l_{it-2},s_{it-1},u_{it-1},z_{it}^{h}\right)+\eta_{it}
\end{align*}
Since $\eta_{it}$ is independent of $v_{it}:=(k_{it},l_{it-1,},s_{it},u_{it},m_{it-1},k_{it-1},l_{it-2},s_{it-1},u_{it-1},z_{it-1}^{h})$,
the conditional CDF of $m_{it}$ on $v_{it-1}$ becomes 
\begin{align*}
G_{m_{t}|v_{t}}(m|v_{t}) & =G_{\eta_{t}|v_{t}}\left(\tilde{\mathbb{M}}_{t}^{-1}(m_{it},l_{it-1,}s_{it},u_{it})-\bar{h}_{t}\left(m_{it-1},l_{it-2},s_{it-1},u_{it-1},z_{it-1}^{h}\right)|v_{t}\right)\\
 & =G_{\eta_{t}}\left(\tilde{\mathbb{M}}_{t}^{-1}(m_{it},l_{it-1,}s_{it},u_{it})-\bar{h}_{t}\left(m_{it-1},l_{it-2},s_{it-1},u_{it-1},z_{it-1}^{h}\right)\right).
\end{align*}
Following the same logic of the main text, we can identify $\tilde{\mathbb{M}}_{t}^{-1}(\cdot)$
and $\omega_{it}$ under scale and location normalization. Once we
identify $u_{it}$ and $\omega_{it}$, we can also identify $\omega_{it}=\mathbb{M}_{t}^{-1}\left(m_{it},l_{it},s_{it},u_{it}\right)$,
e.g., by the conditional expectation $\mathbb{M}_{t}^{-1}\left(m_{it},l_{it},s_{it},u_{it}\right)=E\left[\omega_{it}|m_{it},l_{it},s_{it},u_{it}\right]$.

Differentiating $\varphi_{t}^{-1}(\phi_{t}(m_{it},w_{it},u_{it}),z_{it}^{d},u_{it})=f_{t}(x_{it},z_{it}^{s})+\mathbb{M}_{t}^{-1}\left(m_{it},w_{it},u_{it}\right)$
with respect to $q_{it}^{s}\in\{m_{it},k_{it},l_{it},z_{it}^{s}\}$
and $q_{it}^{d}\in\{z_{it}^{d},u_{it}\}$ gives the same equations
as (\ref{eq:phi_qt}) and (\ref{eq:phi_zt}). Therefore, Proposition
\ref{P-step3} holds with the same proof as before.

\paragraph{Role of Adjustment Costs}

The role of adjustment costs is to create variations in $l_{it}$
for given $(m_{it},k_{it},z_{it},u_{it})$. When $l_{it}$ is a fully
flexible input without adjustment costs and chosen at the same timing
as $m_{it}$, the material demand function and the labor demand function
become $m_{it}=\mathbb{M}_{t}^{F}\left(\omega_{it},k_{it},z_{it},u_{it}\right)$
and $l_{it}=\mathbb{L}_{t}^{F}(\omega_{it},k_{it},z_{it},u_{it})$,
respectively. Once $(m_{it},k_{it},z_{it},u_{it})$ are conditioned,
$\omega_{it}$is also conditioned so that $l_{it}$ loses its variation
and we cannot identify $\phi_{t}\left(m_{it},l_{it},k_{it},z_{it},u_{it}\right)$.

\subsection{Endogenous Firm Characteristics}

\label{subsec:Endogenous-Firm-Characteristics}

Firm characteristics $(z_{it}^{s},z_{it}^{d})$ may correlate with
$\zeta_{it}$ and $\eta_{it}$. In step 1, we can use $(z_{it-\upsilon-1}^{s},z_{it-\upsilon-1}^{d})$
in place of $(z_{it-\upsilon}^{s},z_{it-\upsilon}^{d})$ as instrument
variables to construct the moment condition similar to (\ref{eq: IVQR moment condition}).

In step 2, we consider the nonparametric control function approach
by \citet{imbens2009identification} using instrument variables in
the triangular model setting. We assume there exist instrument variables
$\varsigma_{it}=(\varsigma_{it}^{s},\varsigma_{it}^{d})$ , unknown
functions $\Gamma_{kt},$ and unobservable scalars $\kappa_{it}^{k}$
such that 
\begin{align*}
z_{it}^{k} & =\Gamma_{kt}(\varsigma_{it},\varpi_{it},\kappa_{it}^{k}),(k=s,d)
\end{align*}
where $\varpi_{it}:=(l_{it},k_{it},u_{it},w_{it-1},u_{it-1},z_{it-1}^{h})$.
\begin{assumption} \label{assu:IN}For $k=s,d$, (i) $\varsigma_{it}^{k}\Perp(\eta_{it},\kappa_{it}^{k})$
(ii) $\kappa_{it}^{k}$ is a scalar and $\kappa_{it}^{k}\Perp\left(\varsigma_{it},\varpi_{it}\right)$.(iii)
$\Gamma_{kt}$ is strictly increasing in $\kappa_{it}^{k}$ (iv) The
CDF of $\kappa_{it}^{k}$, $F_{\kappa_{t}^{k}}\left(\kappa_{it}^{k}\right)$,
is strictly increasing on the support of $\kappa_{it}^{k}$. \end{assumption}
Let $F_{z_{t}^{k}|\varsigma_{t},\varpi_{t}}(z_{t}^{k}|\varsigma_{t},\varpi_{t})$
be the CDF of $z_{it}^{k}$ conditional on $(\varsigma_{it},\varpi_{it})=(\varsigma_{t},\varpi_{t})$.
Define $\xi_{it}^{k}:=F_{z_{t}^{k}|\varsigma_{t},\varpi_{t}}(z_{it}^{k}|\varsigma_{it},\varpi_{it})$
and $\xi_{it}:=(\xi_{it}^{s},\xi_{it}^{d})$. \citet{imbens2009identification}
showed $\xi_{it}$can be used as control variables, that is, $\eta_{it}$
becomes independent of $v_{it}$ conditional on $\xi_{it}$. 
\begin{lem}
\label{lem:control variable}(\citealp{imbens2009identification},Theorem
1) $\eta_{it}\Perp v_{it}|\xi_{it}$. 
\end{lem}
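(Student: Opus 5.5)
The plan is to follow the argument of \citet{imbens2009identification}: first show that the control variable $\xi_{it}$ is a one-to-one transformation of the first-stage disturbances $\kappa_{it}:=(\kappa_{it}^{s},\kappa_{it}^{d})$. Then show that $\eta_{it}$ is independent of the exogenous arguments $(\varsigma_{it},\varpi_{it})$ once we condition on $\kappa_{it}$. Finally, note that $v_{it}$ is a measurable function of $(\varsigma_{it},\varpi_{it},\kappa_{it})$.

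\emph{Step 1 ($\xi$ recovers $\kappa$).} Fix $k\in\{s,d\}$ and a value $(\varsigma,\varpi)$. By Assumption~\ref{assu:IN}(iii), $\Gamma_{kt}(\varsigma,\varpi,\cdot)$ is strictly increasing, so
\[
F_{z_{t}^{k}|\varsigma_{t},\varpi_{t}}\bigl(\Gamma_{kt}(\varsigma,\varpi,\kappa)\bigm|\varsigma,\varpi\bigr)=\Pr\bigl(\kappa_{it}^{k}\le\kappa\bigm|\varsigma_{it}=\varsigma,\varpi_{it}=\varpi\bigr).
\]
By Assumption~\ref{assu:IN}(ii), $\kappa_{it}^{k}$ is independent of $(\varsigma_{it},\varpi_{it})$, so the right-hand side equals $F_{\kappa_{t}^{k}}(\kappa)$. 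Hence $\xi_{it}^{k}=F_{\kappa_{t}^{k}}(\kappa_{it}^{k})$. By Assumption~\ref{assu:IN}(iv) this map is strictly increasing, hence invertible. Therefore $\sigma(\xi_{it})=\sigma(\kappa_{it})$, and it suffices to prove $\eta_{it}\Perp v_{it}\mid\kappa_{it}$.

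\emph{Step 2 (conditional independence of $\eta$ from exogenous variables).} I would show that $(\eta_{it},\kappa_{it})\Perp(\varsigma_{it},\varpi_{it})$ jointly. Given this, for any Borel set $A$,
\[
\Pr(\eta_{it}\in A\mid\varsigma_{it},\varpi_{it},\kappa_{it})=\Pr(\eta_{it}\in A\mid\kappa_{it}),
\]
which is exactly $\eta_{it}\Perp(\varsigma_{it},\varpi_{it})\mid\kappa_{it}$.

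Establishing the joint independence is the main obstacle. Assumption~\ref{assu:IN}(i) only gives $\varsigma_{it}\Perp(\eta_{it},\kappa_{it})$, and (ii) only gives $\kappa_{it}\Perp(\varsigma_{it},\varpi_{it})$. What is still needed is that $\varpi_{it}=(l_{it},k_{it},u_{it},w_{it-1},u_{it-1},z_{it-1}^{h})$ is jointly independent of $(\eta_{it},\kappa_{it})$ together with $\varsigma_{it}$. My first attempt would obtain this from the timing structure. Each component of $\varpi_{it}$ is predetermined or a demand shock, and Assumption~\ref{A-0}(a), Assumption~\ref{A-1}(d) and Assumption~\ref{A-3}(b) give its independence from $\eta_{it}$. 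I would then combine this with (i)--(ii) to obtain the joint statement, which is the role $\varpi$ plays as included exogenous regressors in Imbens and Newey. If that combination does not close, I would state explicitly that (i)--(ii) are read as $(\eta_{it},\kappa_{it}^{k})\Perp(\varsigma_{it},\varpi_{it})$, matching the original theorem.

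\emph{Step 3 (conclusion).} The components of $v_{it}$ that are not in $\varpi_{it}$ are $z_{it}^{s}$ and $z_{it}^{d}$, together with their lags, which sit inside $w_{it-1}\subset\varpi_{it}$. The current values equal $\Gamma_{kt}(\varsigma_{it},\varpi_{it},\kappa_{it}^{k})$, so $v_{it}=g(\varsigma_{it},\varpi_{it},\kappa_{it})$ for a measurable $g$. Conditional on $\kappa_{it}$, $v_{it}$ is therefore a function of $(\varsigma_{it},\varpi_{it})$ alone, and Step 2 yields $\eta_{it}\Perp v_{it}\mid\kappa_{it}$. By Step 1 this is equivalent to $\eta_{it}\Perp v_{it}\mid\xi_{it}$.
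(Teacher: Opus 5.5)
Your route is essentially the paper's: both show $\xi_{it}^{k}=F_{\kappa_{t}^{k}}(\kappa_{it}^{k})$, write $v_{it}$ as a function of $(\varsigma_{it},\varpi_{it},\kappa_{it})$, and rest on the joint independence of $(\varsigma_{it},\varpi_{it})$ from $(\kappa_{it}^{s},\kappa_{it}^{d},\eta_{it})$. The only difference is execution: the paper computes $E[a(v_{it})\mid\xi_{it},\eta_{it}]$ as an integral against $F_{\varsigma_{t},\varpi_{t}}$ and then factorizes, while you use conditional-independence calculus.

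The paper simply asserts that joint independence without deriving it from Assumption~\ref{assu:IN}, so your fallback reading is exactly what it uses. Two caveats apply. First, the fallback must be stated jointly in $(\kappa_{it}^{s},\kappa_{it}^{d},\eta_{it})$, not separately for each $k$. Second, your timing-based first attempt cannot supply it, because pairwise independences do not yield joint independence. Moreover, Assumption~\ref{A-3}(b) is precisely what is dropped in this endogenous-characteristics setting, where it is replaced by Assumption~\ref{A-3-endogenous}.
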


\begin{proof}
The following proof follows \citet{imbens2009identification}. From
the monotonicity of $\Gamma_{kt}$ in Assumption \ref{assu:IN} (iii),
we can define the inverse function of $\Gamma_{kt}$such that $\kappa_{it}^{k}=\Gamma_{kt}^{-1}\left(\varsigma_{it},\varpi_{it},z_{it}^{k}\right)$.For
given $(z_{t}^{k},\varsigma_{t},\varpi_{t})$ 
\begin{align*}
F_{z_{t}^{k}|\varsigma_{t},\varpi_{t}}(z_{t}^{k}|\varsigma_{t},\varpi_{t}) & =\Pr\left(\Gamma_{kt}(\varsigma_{it},\varpi_{it},\kappa_{it}^{k})\le z_{t}^{k}|\varsigma_{it}=\varsigma_{t},\varpi_{it}=\varpi_{t}\right)\\
 & =\Pr\left(\kappa_{it}^{k}\le\Gamma_{kt}^{-1}\left(\varsigma_{t},\varpi_{t},z_{t}^{k}\right)|\varsigma_{it}=\varsigma_{t},\varpi_{it}=\varpi_{t}\right)\text{ }\\
 & =F_{\kappa_{t}^{k}}\left(\Gamma_{kt}^{-1}\left(\varsigma_{t},\varpi_{t},z_{t}^{k}\right)\right).\,\,\,\text{(from \ensuremath{\kappa_{it}^{k}\Perp\left(\varsigma_{it},\varpi_{it}\right)})}
\end{align*}
Therefore, we have 
\begin{align*}
\xi_{it}^{k} & =F_{z_{t}^{k}|\varsigma_{t},\varpi_{t}}(z_{it}^{k}|\varsigma_{it},\varpi_{it})=F_{\kappa_{t}^{k}}\left(\Gamma_{kt}^{-1}\left(\varsigma_{it},\varpi_{it},z_{it}^{k}\right)\right)=F_{\kappa_{t}^{k}}\left(\kappa_{it}^{k}\right).
\end{align*}

Consider an arbitrary point $(\xi_{t},\eta_{t})$ on the support of
$(\xi_{it},\eta_{it})$. Let $(\kappa_{t}^{s},\kappa_{t}^{d})=\left(F_{\kappa_{t}^{s}}^{-1}(\xi_{t}^{s}),F_{\kappa_{t}^{d}}^{-1}(\xi_{t}^{d})\right)$.
Since $F_{\kappa_{t}^{k}}$ is strictly increasing, the conditional
expectations given $\xi_{it}=\xi_{t}$ are identical to those given
$(\kappa_{it}^{s},\kappa_{it}^{d})=(\kappa_{t}^{s},\kappa_{t}^{d})$.
For any bounded function $a(v_{it})$ of $v_{it}$, the independence
of $(\varsigma_{it},\varpi_{it})$ and $\left(\kappa_{it}^{s},\kappa_{it}^{d},\eta_{it}\right)$
implies 
\begin{align*}
 & E\left[a(v_{it})|\xi_{it}=\xi_{t},\eta_{it}=\eta_{t}\right]\\
 & =E\left[a(v_{it})|\kappa_{it}^{s}=\kappa_{t}^{s},\kappa_{it}^{d}=\kappa_{t}^{d},\eta_{it}=\eta_{t}\right]\\
 & =\int a\left(\Gamma_{st}(\varsigma_{it}\varpi_{it},\kappa_{t}^{s}),\Gamma_{dt}(\varsigma_{it},\varpi_{it},\kappa_{t}^{d}),\varpi_{it}\right)F_{\varsigma_{t},\varpi_{t}}\left(d(\varsigma_{it},\varpi_{it})\right)\\
 & =E\left[a(v_{it})|\kappa_{it}^{s}=\kappa_{t}^{s},\kappa_{it}^{d}=\kappa_{t}^{d}\right]\\
 & =E\left[a(v_{it})|\xi_{it}=\xi_{t}\right].
\end{align*}
For any bounded functions $a(v_{it})$ and $b(\eta_{it})$, we have
\begin{align*}
E\left[a(v_{it})b(\eta_{it})|\xi_{it}=\xi_{t}\right] & =E\left[E\left[a(v_{it})b(\eta_{it})|\xi_{it}=\xi_{t},\eta_{it}=\eta_{t}\right]|\xi_{it}=\xi_{t}\right]\\
 & =E\left[b(\eta_{it})E\left[a(v_{it})|\xi_{it}=\xi_{t},\eta_{it}=\eta_{t}\right]|\xi_{it}=\xi_{t}\right]\\
 & =E\left[b(\eta_{it})E\left[a(v_{it})|\xi_{it}=\xi_{t}\right]|\xi_{it}=\xi_{t}\right]\\
 & =E\left[b(\eta_{it})|\xi_{it}=\xi_{t}\right]E\left[a(v_{it})|\xi_{it}=\xi_{t}\right]
\end{align*}
Thus, $\eta_{it}\Perp v_{it}|\xi_{it}$. 
\end{proof}
Let $\mathfrak{X}:=\{\xi_{it}\}_{i=1}^{n}$ be the set of $\xi_{it}$
constructed for each observation. We update Assumption \ref{A-3}
so that for every $\xi_{it}\in\mathfrak{X}$, the conditional distribution
$G_{\eta_{t}|\xi_{t}}(\cdot|\xi_{it})$ of $\eta_{it}$ on $\xi_{it}$can
satisfy the requirement for $G_{\eta_{t}}(\cdot)$ in Assumption \ref{A-3}.

\begin{assumption} \label{A-3-endogenous} (a) For every $\xi_{it}\in\mathfrak{X}$,
the conditional distribution $G_{\eta_{t}|\xi}(\cdot|\xi_{it})$ of
$\eta_{it}$ conditional on $\xi_{it}$ is absolutely continuous with
a density function $g_{\eta_{t}|\xi_{t}}(\cdot|\xi_{it})$ that is
continuous on its support. (b) $E[\eta_{it}|m_{it-1},w_{it-1},u_{it-1},z_{it-1}^{h}]=0$.
(c) $v_{it}$ is continuously distributed on $\mathcal{V}$. (d) Support
$\varOmega$ of $\omega_{it}$ is an interval $[\text{\underbar{\ensuremath{\omega}}},\bar{\omega}]\subset\mathbb{R}$,
where $\text{\underbar{\ensuremath{\omega}}}<0$ and $1<\bar{\omega}$.
(e) $h(\cdot)$ is continuously differentiable with respect to $\left(\omega,z_{h}\right)$
on $\Omega\times\mathcal{Z}_{h}$. (f) The set $\mathcal{A}_{q_{t-1}}:=\{(m_{it-1},w_{it-1},u_{it-1},z_{it-1}^{h})\in\mathcal{M}\times\mathcal{W}\times[0,1]\times\mathcal{Z}_{h}:\partial G_{m_{t}|v_{t},\xi_{t}}(m_{it}|v_{it},\xi_{it})/\partial q_{it-1}\neq0\text{ for all }(m_{it},w_{it},u_{t})\in\mathcal{M}\times\mathcal{W}\times[0,1]\text{ and for every}\xi_{it}\in\mathfrak{X}\}$
is nonempty for some $q_{it-1}\in\{m_{it-1},k_{it-1},l_{it-1},z_{it-1}^{s},z_{it-1}^{d},u_{it-1},z_{it-1}^{h}\}$.
\end{assumption}

From Lemma \ref{lem:control variable}, the conditional distribution
of $m_{it}$ given $(v_{it},\xi_{it})$ satisfies 
\begin{align*}
G_{m_{t}|v_{t},\xi_{t}}(m_{it}|v_{it},\xi_{it}) & =G_{\eta_{t}|v_{t},\xi_{t}}\left(\mathbb{M}_{t}^{-1}(m_{it},w_{it},u_{it})-\bar{h}_{t}\left(m_{it-1},w_{it-1},u_{it-1},z_{it-1}^{h}\right)|v_{it},\xi_{it}\right)\\
 & =G_{\eta_{t}|\xi_{t}}\left(\mathbb{M}_{t}^{-1}(m_{it},w_{it},u_{it})-\bar{h}_{t}\left(m_{it-1},w_{it-1},u_{it-1},z_{it-1}^{h}\right)|\xi_{it}\right),
\end{align*}
Taking the derivatives of both sides with respect to $q_{it}\in\{m_{it},k_{it},l_{it},z_{it}^{s},z_{it}^{d},u_{it}\}$
and $q_{it-1}\in\{m_{it-1},k_{it-1},l_{it-1},z_{it-1}^{s},z_{it-1}^{d},z_{it-1}^{h},u_{it-1}\}$,
we obtain 
\begin{align}
\frac{\partial G_{m_{t}|v_{t},\xi_{t}}\left(m_{it}|v_{it},\xi_{it}\right)}{\partial q_{it}} & =\frac{\partial\mathbb{M}_{t}^{-1}(m_{it},w_{it},u_{it})}{\partial q_{it}}g_{\eta_{t}|\xi_{t}}\left(\eta_{it}|\xi_{it}\right),\label{eq:dG2-1}\\
\frac{\partial G_{m_{t}|v_{t},\xi_{t}}\left(m_{it}|v_{it},\xi_{it}\right)}{\partial q_{it-1}} & =-\frac{\partial\bar{h}_{t}\left(m_{it-1},w_{it-1},u_{it-1},z_{it-1}^{h}\right)}{\partial q_{it-1}}g_{\eta_{t}|\xi_{t}}\left(\eta_{it}|\xi_{it}\right),\label{eq:dG3-1}
\end{align}
where $\eta_{it}=\mathbb{M}_{t}^{-1}(m_{it},w_{it},u_{it})-\bar{h}_{t}\left(m_{it-1},w_{it-1},u_{it-1},z_{it-1}^{h}\right)$.Using
Assumption \ref{A-3-endogenous}(f), we can choose $q_{it-1}\in\{m_{it-1},k_{it-1},l_{it-1},z_{it-1}^{s},z_{it-1}^{d},z_{it-1}^{h},u_{it-1}\}$
and $(\tilde{m}_{it-1},\tilde{w}_{it-1},\tilde{u}_{it-1},\tilde{z}_{it-1}^{h})\in\mathcal{A}_{q_{t-1}}$
such that $\partial G_{m_{t}|v_{t},\xi_{t}}\left(m_{t}|k_{t},l_{t},z_{t},u_{t},\tilde{m}_{it-1},\tilde{w}_{it-1},\tilde{u}_{it-1},\tilde{z}_{it-1}^{h},\xi_{it}\right)/\partial q_{t-1}\neq0$
for every $\xi_{it}\in\mathfrak{X}$ and for all $(m_{t},k_{t},l_{t},z_{t},u_{t})\in\mathcal{M}\times\mathcal{K}\times\mathcal{L}\times\mathcal{Z}\times[0,1]$.

Dividing (\ref{eq:dG2-1}) by (\ref{eq:dG3-1}), we derive

\begin{align}
\frac{\partial\mathbb{M}_{t}^{-1}(m_{it},w_{it},u_{it})}{\partial q_{it}} & =-\frac{\partial\bar{h}_{t}\left(\tilde{m}_{it-1},\tilde{w}_{it-1},\tilde{u}_{it-1},\tilde{z}_{it-1}^{h}\right)}{\partial q_{it-1}}\nonumber \\
 & \times\frac{\partial G_{m_{t}|v_{t},\xi_{t}}\left(m_{it}|w_{it},u_{it},\tilde{m}_{it-1},\tilde{w}_{it-1},\tilde{u}_{it-1},\tilde{z}_{it-1}^{h},\xi_{it}\right)/\partial q_{t}}{\partial G_{m_{t}|v_{t},\xi_{t}}\left(m_{it}|w_{it},u_{it},\tilde{m}_{it-1},\tilde{w}_{it-1},\tilde{u}_{it-1},\tilde{z}_{it-1}^{h},\xi_{it}\right)/\partial q_{t-1}}.\label{eq:dM/dq-1}
\end{align}
We have obtained equation (\ref{eq:dM/dq-1}) similar to (\ref{eq:dM/dq})
again. Therefore, following the same steps in the proof for Proposition
\ref{P-step1} by replacing $G_{m_{t}|v_{t}}(\cdot|\cdot)$ with $G_{m_{t}|v_{t},\xi_{t}}(\cdot|\cdot,\xi_{it})$,
we can identify $\mathbb{M}_{t}^{-1}(\cdot)$ up to scale and location.
From $\omega_{it}=\mathbb{M}_{t}^{-1}(m_{it},w_{it},u_{it})$ and
$E[\eta_{it}|m_{it-1},w_{it-1},u_{it-1},z_{it-1}^{h}]=0$, we can
identify $\bar{h}_{t}(m_{it-1},w_{it-1},u_{it-1},z_{it-1}^{h})=E\left[\omega_{it}|m_{it-1},w_{it-1},u_{it-1},z_{it-1}^{h}\right]$
and $\eta_{it}=\omega_{it}-\bar{h}_{t}(m_{it-1},w_{it-1},u_{it-1},z_{it-1}^{h})$.
Thus, we can identify the distribution of $\eta_{it}$, $G_{\eta_{t}}(\cdot)$.

Once $\phi_{t}(m_{it},w_{it},u_{it})$ and $\mathbb{M}_{t}^{-1}(m_{it},w_{it},u_{it})$
are identified, the step 3 can identify the same objects as before.
\subsection{Discrete Firm Characteristics}

\label{subsec:Discrete_z}

\subsubsection{Exogenous Characteristics}
\label{subsec:Exogenous-Characteristics}

Suppose $z_{it}^{s}$, $z_{it}^{d}$ and $z_{it}^{h}$ are discrete variables and have finite support $\mathcal{Z}_{s}:=\{z_{s}^{1},...,z_{s}^{J_{s}}\}$, $\mathcal{Z}_{d}:=\{z_{d}^{1},...,z_{d}^{J_{d}}\}$ and $\mathcal{Z}_{h}:=\{z_{h}^{1},...,z_{h}^{J_{h}}\}$. In Step 1, the identification of the IVQR model does not require the continuity of firm characteristics. Therefore, this section proves Propositions \ref{P-step1} and \ref{P-step3}. The following assumption modifies Assumption \ref{A-1} for discrete $z_{it}^{s}$ and $z_{it}^{d}$. 

\begin{assumption} \label{A-1-discrete} (a) For every $z^{s}\in\mathcal{Z}_{s}$, $f_{t}(\cdot,z^{s})$ is continuously differentiable with respect to $(m,k,l)$ on $\mathcal{M}\times\mathcal{K}\times\mathcal{L}$ and strictly increasing in $m$. (b) For every $\left(z^{d},u\right)\in\mathcal{Z}_{d}\times[0,1]$, $\varphi_{t}(\cdot,z^{d},u)$ is strictly increasing and invertible with its inverse $\varphi_{t}^{-1}(r,z^{d},u_{t})$, which is continuously differentiable with respect to $(r,u)$ on $\mathcal{R}\times[0,1]$. (c) For every $(k,l,z^{s},z^{d},u)\in\mathcal{K}\times\mathcal{L}\times\mathcal{Z}_{s}\times\mathcal{Z}_{d}\times[0,1]$, $\mathbb{M}_{t}(\cdot,k,l,z^{s},z^{d},u)$ is strictly increasing and invertible with its inverse $\mathbb{M}_{t}^{-1}(m,k,l,z^{s},z^{d},u)$, which is continuously differentiable with respect to the continuous arguments $(m,k,l,u)$ on $\mathcal{M}\times\mathcal{K}\times\mathcal{L}\times[0,1]$ for every fixed $(z^{s},z^{d})\in\mathcal{Z}_{s}\times\mathcal{Z}_{d}$. (d) $(\zeta_{it},...,\zeta_{it-\upsilon})$ are independent from $\eta_{it}$.
\end{assumption} 

The following assumption modifies Assumption \ref{A-3} for discrete $z_{it}^{s}$ and $z_{it}^{d}$, incorporating the relaxed support condition.

\begin{assumption} \label{A-3-discrete}
(a) The distribution $G_{\eta_{t}}(\cdot)$ of $\eta$ is absolutely continuous with a density function $g_{\eta_{t}}(\cdot)$ that is continuous on its support. 
(b) $\eta_{it}$ is independent of $v_{it}:=(w_{it},u_{it},m_{it-1},w_{it-1},u_{it-1},z_{it-1}^{h})'$ with $E[\eta_{it}|v_{it}]=0$. 
(c) Let $\mathbf{x}_{it} := (m_{it}, k_{it}, l_{it}, u_{it})$. For every fixed tuple of discrete variables $\mathbf{z} := (z_{it}^s, z_{it}^d, z_{it-1}^s, z_{it-1}^d, z_{it-1}^h)$, the conditional support of the continuous variables $(\mathbf{x}_{it}, \mathbf{x}_{it-1})$, denoted $\mathcal{X}(\mathbf{z})$, is an open, connected subset of Euclidean space. Furthermore, the normalization points defined in the proof lie in the interior of these supports.
(d) The support $\varOmega$ of $\omega$ is an interval $[\text{\underbar{\ensuremath{\omega}}},\bar{\omega}]\subset\mathbb{R}$ where $\text{\underbar{\ensuremath{\omega}}}<0$ and $1<\bar{\omega}$.
(e) For every $z^{h}\in\mathcal{Z}_{h}$, $h(\cdot,z^{h})$ is continuously differentiable with respect to $\omega$ on $\Omega$. 
(f) The set $\mathcal{A}_{q_{t-1}}$ (defined analogously to Assumption \ref{A-3}(f) for the discrete case) is nonempty for some $q_{it-1}\in\{m_{it-1},k_{it-1},l_{it-1},u_{it-1}\}$.
(g) For each $(m_{t-1},w_{t-1},u_{t-1},z_{t-1}^{h})\in\text{Proj}_{v}(\mathcal{M}\times \mathcal{V})$, it is possible to find $(\tilde{m}_{t},\tilde{w}_{t},\tilde{u}_{t})$ in the support such that $\partial G_{m_{t}|v_{t}}(\tilde{m}_{t}|\tilde{w}_{t},\tilde{u}_{t},m_{t-1},w_{t-1},u_{t-1},z_{t-1}^{h})/\partial m_{t}>0$.
\end{assumption} 

The following proposition establishes the identification of $\mathbb{M}_{t}^{-1}(\cdot)$. 
\begin{prop}
\label{P-step1-discrete} Suppose that Assumptions \ref{A-data}, \ref{A-2}, \ref{A-1-discrete}, and \ref{A-3-discrete} hold. Then, we can identify $\mathbb{M}_{t}^{-1}(\cdot)$ up to scale and location, and identify $G_{\eta_{t}}(\cdot)$ up to scale. 
\end{prop}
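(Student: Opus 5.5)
I would follow the proof of Proposition~\ref{P-step1}, but run it cell by cell: fix each configuration of the discrete variables $\mathbf{z}=(z_{it}^s,z_{it}^d,z_{it-1}^s,z_{it-1}^d,z_{it-1}^h)$ and use only the continuous arguments $\mathbf{x}_{it}=(m_{it},k_{it},l_{it},u_{it})$ and $\mathbf{x}_{it-1}$. By Assumption~\ref{A-3-discrete}(b),
\[
G_{m_t|v_t}(m_{it}|v_{it})=G_{\eta_t}\bigl(\mathbb{M}_t^{-1}(m_{it},w_{it},u_{it})-\bar h_t(m_{it-1},w_{it-1},u_{it-1},z_{it-1}^h)\bigr).
\]
Within a cell, differentiate with respect to a current continuous argument $q_{it}\in\{m,k,l,u\}$ and a lagged continuous argument $q_{it-1}\in\{m_{it-1},k_{it-1},l_{it-1},u_{it-1}\}$. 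This gives the analogues of (\ref{eq:dG2})--(\ref{eq:dG3}). Assumption~\ref{A-3-discrete}(f) supplies a lagged point at which the $q_{it-1}$-derivative is nonzero. Taking the ratio as in (\ref{eq:dM/dq}) then identifies $\partial\mathbb{M}_t^{-1}(\cdot,z^s,z^d)/\partial q_{it}$ up to the common factor $-\partial\bar h_t/\partial q_{it-1}$ evaluated at that lagged point. The key remark is that this factor does not depend on the current discrete values $(z^s,z^d)$, since $\bar h_t$ depends only on lagged variables. So if the same lagged reference point is combined with every current cell $(z^s,z^d)$ in the support, all current derivative fields share one unknown scalar.

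Next, I would fix that scalar with the normalization $\mathbb{M}_t^{-1}(m^*_{t1},w^*_t,u^*_t)-\mathbb{M}_t^{-1}(m^*_{t0},w^*_t,u^*_t)=1$ in the reference cell $(z^{s*},z^{d*})$, exactly as for $S_{q_{t-1}}$. Then, inside each cell, I would integrate the identified gradient along a path in the open connected support $\mathcal{X}(\mathbf{z})$ from Assumption~\ref{A-3-discrete}(c), using (\ref{eq:M_t_line_integral}). This recovers $\mathbb{M}_t^{-1}(\cdot,z^s,z^d)$ up to a cell-specific additive constant $c(z^s,z^d)$, with $c(z^{s*},z^{d*})=0$.

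The main obstacle is pinning down the level constants $c(z^s,z^d)$ across discrete cells, because a line integral cannot cross a discrete jump. For this I would use the lagged equation. First apply the same argument to period $t-1$ to get $\mathbb{M}_{t-1}^{-1}$ up to its own constants. Then compare conditional distributions across cells at a common lagged state $v^{-}:=(m_{t-1},w_{t-1},u_{t-1},z_{t-1}^h)$, i.e. holding the lagged variables fixed so that $\bar h_t(v^-)$ is the same in both cells. Assumption~\ref{A-3-discrete}(g) guarantees a point where the conditional density is positive, so $G_{\eta_t}$ is strictly increasing there and can be inverted. For current states $(\tilde m,\tilde w,\tilde u)$ in cell $(z^s,z^d)$ and $(m',w',u')$ in the reference cell, the equality $G_{m_t|v_t}(\tilde m|\tilde w,\tilde u,v^-)=G_{m_t|v_t}(m'|w',u',v^-)$ implies $\mathbb{M}_t^{-1}(\tilde m,\tilde w,\tilde u)=\mathbb{M}_t^{-1}(m',w',u')$. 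That equation identifies $c(z^s,z^d)$. Hence $\mathbb{M}_t^{-1}$ is identified on the whole support, up to the stated scale and location normalization.

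Finally, with $\omega_{it}=\mathbb{M}_t^{-1}(m_{it},w_{it},u_{it})$ in hand, I would set $\bar h_t(v^-)=E[\omega_{it}\mid v^-]$ using $E[\eta_{it}|v_{it}]=0$, define $\eta_{it}=\omega_{it}-\bar h_t$, and recover $G_{\eta_t}$ as its distribution. The result is identified up to the scale inherited from the normalization.
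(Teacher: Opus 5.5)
Your argument is correct, but it reaches the cross-cell constants by a different route from the paper. The first half coincides with the paper's proof: the ratio of derivatives of $G_{m_t|v_t}$ within a discrete cell, the common scale fixed by the normalization in the reference cell, and line integration inside each open connected cell. You also note that the factor $-\partial\bar h_t/\partial q_{it-1}$ is common to all current cells only if one lagged reference point is used with every $(z^s,z^d)$. The paper's single scalar $\tilde S_{q_{t-1}}$ tacitly relies on this, so stating it explicitly is an improvement.

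The real difference is how the cell constants $c(z^s,z^d)$ are pinned down.
\begin{itemize}
\item The paper also recovers $\nabla\bar h_t$, which is where Assumption~\ref{A-3-discrete}(g) enters. It integrates this to get $\bar h_t$ up to lagged-cell constants $c_2(\cdot)$. It then uses only the mean restriction $E[\eta_{it}\mid\text{discrete variables}]=0$. This gives the linear system $0=\widetilde H_t+c_0-c_2$, solved by evaluating first at the reference current cell and then at a base lagged cell. That stage needs only a conditional-mean restriction, but the relevant discrete combinations must have positive probability.
\item You instead use full independence $\eta_{it}\Perp v_{it}$ at a fixed lagged state $v^-$. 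Then $\bar h_t(v^-)$ cancels, so you need neither the $c_2$'s nor $\nabla\bar h_t$. You recover $\bar h_t$ afterwards as a conditional mean. This is more direct, and it adds no support requirement beyond the common lagged point you already need for the scale.
\end{itemize}

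Two small points to tighten. First, you assume rather than show that a matching pair exists. As you use it, (g) gives positive density at one point in some cell, not a matched pair across two given cells. The fix is as follows. Since $\eta_{it}\Perp v_{it}$ and $\eta_{it}$ is atomless, $m\mapsto G_{m_t|v_t}(m\mid w,u,v^-)$ is continuous and takes every value in $(0,1)$ in every cell. Take any level $\alpha$ at which the $\alpha$-quantile of $\eta_{it}$ is unique; all but countably many levels qualify. The two conditional $\alpha$-quantiles then give equal values of $\mathbb{M}_t^{-1}$, which makes (g) unnecessary at this step. Second, identifying $\mathbb{M}_{t-1}^{-1}$ first is superfluous, because holding $v^-$ fixed already removes $\bar h_t$.
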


\begin{proof}
Choose normalization points $(m_{t1}^{*},k_{t}^{*},l_{t}^{*},u_{t}^{*})$ and $(m_{t0}^{*},k_{t}^{*},l_{t}^{*},u_{t}^{*})$ as well as $\left(m_{t-1}^{*},k_{t-1}^{*},l_{t-1}^{*},u_{t-1}^{*}\right)$ in the interior of the continuous support such that, for $\left(z_{t}^{s},z_{t-1}^{s},z_{t}^{d},z_{t-1}^{d},z_{t-1}^{h}\right)\in\mathcal{Z}_{s}^{2}\times\mathcal{Z}_{d}^{2}\times\mathcal{Z}_{h}$,
\begin{align}
\mathbb{M}_{t}^{-1}(m_{t0}^{*},k_{t}^{*},l_{t}^{*},z_{t}^{s},z_{t}^{d},u_{t}^{*}) & =c_{0}(z_{t}^{s},z_{t}^{d}),\,\mathbb{M}_{t}^{-1}(m_{t1}^{*},k_{t}^{*},l_{t}^{*},z_{t}^{s},z_{t}^{d},u_{t}^{*})=c_{1}(z_{t}^{s},z_{t}^{d})\text{, }\label{eq:norm_dis}\\
\text{and } & \ensuremath{\bar{h}_{t}(m_{t-1}^{*},k_{t-1}^{*},l_{t-1}^{*},z_{t-1}^{s},z_{t-1}^{d},u_{t-1}^{*},z_{t-1}^{h})=c_{2}(z_{t-1}^{s},z_{t-1}^{d},z_{t-1}^{h}),}
\end{align}
where $\{c_{0}(z_{t}^{s},z_{t}^{d}),c_{1}(z_{t}^{s},z_{t}^{d})\}$ and $\{c_{2}(z_{t-1}^{s},z_{t-1}^{d},z_{t-1}^{h})\}$ are unknown constants. Without loss of generality, let $\left(z_{t}^{s*},z_{t}^{d*}\right)$ in Assumption \ref{A-2} be $z_{t}^{s*}=z_{s}^{1}$ and $z_{t}^{d*}=z_{d}^{1}.$ Thus, the normalization in Assumption \ref{A-2} implies $c_{0}(z_{s}^{1},z_{d}^{1})=0\text{ and }c_{1}(z_{s}^{1},z_{d}^{1})=1$.

Following the derivation in the proof of Proposition \ref{P-step1}, the partial derivative of the control function with respect to continuous arguments $q_{it} \in \{m_{it}, k_{it}, l_{it}, u_{it}\}$ is identified locally as:
\begin{align}
\frac{\partial\mathbb{M}_{t}^{-1}(m_{it},w_{it},u_{it})}{\partial q_{it}} & =\tilde{S}_{q_{t-1}}T_{q_{t}q_{t-1}}(m_{it},w_{it},u_{it}),\label{eq:dM/dm_id_dis}
\end{align}
where $\tilde{S}_{q_{t-1}}$ is a scalar identified by integration over $m$ at the normalization point (analogous to $S_{q_{t-1}}$ in the main proof) and
\begin{align*}
T_{q_{t}q_{t-1}}(m_{it},w_{it},u_{it}) & :=\frac{\partial G_{m_{t}|v_{t}}\left(m_{it}|w_{it},u_{it},\tilde{m}_{it-1},\dots\right)/\partial q_{it}}{\partial G_{m_{t}|v_{t}}\left(m_{it}|w_{it},u_{it},\tilde{m}_{it-1},\dots\right)/\partial q_{it-1}}.
\end{align*}

To recover the level of $\mathbb{M}_{t}^{-1}$ from these partial derivatives, we invoke the connected support condition in Assumption \ref{A-3-discrete}(c). Let $\mathbf{x}_{0}^* := (m_{t0}^{*},k_{t}^{*},l_{t}^{*},u_{t}^{*})$ be the reference point for the continuous variables. For any fixed discrete characteristics $(z_{t}^{s},z_{t}^{d})$ and any target point $\mathbf{x} := (m_{it}, k_{it}, l_{it}, u_{it})$ in the continuous support, there exists a piecewise smooth path $\gamma: [0,1] \to \mathcal{X}(\mathbf{z})$ such that $\gamma(0) = \mathbf{x}_{0}^*$ and $\gamma(1) = \mathbf{x}$. By the Fundamental Theorem of Line Integrals, we identify:

\begin{equation}
\mathbb{M}_{t}^{-1}(\mathbf{x}, z_{t}^{s}, z_{t}^{d}) = c_{0}(z_{t}^{s},z_{t}^{d}) + \int_{\gamma} \nabla_{\mathbf{x}} \mathbb{M}_{t}^{-1}(\mathbf{z}(\tau), z_{t}^{s}, z_{t}^{d}) \cdot d\mathbf{z}(\tau), \label{eq:M_dis_line}
\end{equation}
where $\nabla_{\mathbf{x}} \mathbb{M}_{t}^{-1}$ is the vector of identified partial derivatives from (\ref{eq:dM/dm_id_dis}). This generalizes the previous Manhattan-path integral to any path within the connected support.

Similarly, we identify the gradient of $\bar{h}_t$ with respect to its continuous arguments and recover its level via path integration:
\begin{equation}
\bar{h}_{t}(m_{t-1},w_{t-1},u_{t-1},z_{t-1}^{h})=c_{2}(z_{t-1}^{s},z_{t-1}^{d},z_{t-1}^{h})+\Lambda_{\bar{h}_{t}}(m_{t-1},w_{t-1},u_{t-1},z_{t-1}^{h}),\label{eq:hbar_dis}
\end{equation}
where $\Lambda_{\bar{h}_{t}}$ is the identified line integral of $\nabla \bar{h}_t$ starting from the normalization point $(m_{t-1}^{*}, \dots)$.

Thus, we have identified $\mathbb{M}_{t}^{-1}(\cdot)$ and $\bar{h}_{t}\left(\cdot\right)$ up to the unknown constants $\{ c_{0}(z_{t}^{s},z_{t}^{d}),c_{2}(z_{t-1}^{s},z_{t-1}^{d},z_{t-1}^{h})\}$.

Define $\widetilde{H}_{t}$ as the conditional expectation of the identified components:
\begin{align*}
 \widetilde{H}_{t}(z_{t}^{s},z_{t}^{d},z_{t-1}^{s},z_{t-1}^{d},z_{t-1}^{h})
 :=E[\Lambda_{m}(\mathbf{x}_{it},z_{t}^{s},z_{t}^{d})-\Lambda_{\bar{h}_{t}}(\mathbf{x}_{it-1},z_{t-1}^{h})|z_{t}^{s},z_{t}^{d},z_{t-1}^{s},z_{t-1}^{d},z_{t-1}^{h}].
\end{align*}
Using the mean independence condition $E[\eta_{it}|z_{t}^{s}, \dots]=0$, we obtain the equation linking the constants:
\begin{align}
0 &= \widetilde{H}_{t}(z_{t}^{s},z_{t}^{d},z_{t-1}^{s},z_{t-1}^{d},z_{t-1}^{h})+c_{0}(z_{t}^{s},z_{t}^{d})-c_{2}(z_{t-1}^{s},z_{t-1}^{d},z_{t-1}^{h}).\label{eq:H_evaluation}
\end{align}
Evaluating (\ref{eq:H_evaluation}) at the normalization point $(z_{s}^{1},z_{d}^{1})$ where $c_{0}(z_{s}^{1},z_{d}^{1})=0$, we identify:
\[
c_{2}(z_{t-1}^{s},z_{t-1}^{d},z_{t-1}^{h})=\widetilde{H}_{t}(z_{s}^{1},z_{d}^{1},z_{t-1}^{s},z_{t-1}^{d},z_{t-1}^{h}).
\]
Next, evaluating (\ref{eq:H_evaluation}) at the base lag $(z_{s}^{1},z_{d}^{1},z_{h}^{1})$, we identify $c_{0}(z_{t}^{s},z_{t}^{d})$:
\begin{align*}
c_{0}(z_{t}^{s},z_{t}^{d}) & =c_{2}(z_{s}^{1},z_{d}^{1},z_{h}^{1})-\widetilde{H}_{t}(z_{t}^{s},z_{t}^{d},z_{s}^{1},z_{d}^{1},z_{h}^{1}).
\end{align*}
With all constants identified, $\mathbb{M}_{t}^{-1}(\cdot)$, $\bar{h}_{t}(\cdot)$, and the distribution of $\eta_{it}$ are fully identified.
\end{proof}
\begin{prop}
\label{P-step3-discrete} Suppose that Assumptions \ref{A-data}, \ref{A-2}, \ref{A-1-discrete}, \ref{A-3-discrete}, and \ref{A-FOC} hold. Then, we can identify $\varphi_{t}^{-1}(\cdot)$ and $f_{t}(\cdot)$ up to scale and location and each firm's markup $\partial\varphi_{t}^{-1}(\bar{r}_{it},z_{it})/\partial r_{it}$ up to scale.
\end{prop}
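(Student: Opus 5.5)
The plan mirrors the proof of Proposition~\ref{P-step3}, carried out separately within each cell of the discrete characteristics. The starting point is Proposition~\ref{P-step1-discrete}, together with the Step~1 IVQR result, which does not require continuity of $(z^s,z^d)$. From these, $\phi_t(m,k,l,z^s,z^d,u)$, $u_{it}$, and $\mathbb{M}_t^{-1}(m,k,l,z^s,z^d,u)$ are identified, with a single common scale $b_t$ and location pinned down by Assumption~\ref{A-2}. The constants $c_0(z^s,z^d)$ were recovered across cells in that proposition, so the control function is on a common normalization for all discrete values.

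Fix a cell $(z^s,z^d)\in\mathcal{Z}_s\times\mathcal{Z}_d$. Within it, differentiate the identity
\[
\varphi_t^{-1}(\phi_t(m,w,u),z^d,u)=f_t(m,k,l,z^s)+\mathbb{M}_t^{-1}(m,w,u)
\]
with respect to the continuous arguments $q\in\{m,k,l\}$. This gives \eqref{eq:phi_qt}, which is legitimate by Assumption~\ref{A-1-discrete}. Combining the $q=m$ equation with the first-order condition \eqref{eq:foc_m} yields the markup formula \eqref{eq:markup_identified} firm by firm. It uses only $\partial\phi_t/\partial m$, $\partial\mathbb{M}_t^{-1}/\partial m$ and the observed material share, so the markup is identified up to scale. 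I then define $\mu_t(m,w,u)$ as in \eqref{eq:markup_function} and substitute it back into \eqref{eq:phi_qt}. This identifies $\partial f_t/\partial q$ for $q\in\{m,k,l\}$ on each cell via \eqref{eq:output_elasticities_pro2}. The step requires $\partial\phi_t/\partial m>s^m_{it}$, which is implied by positivity of $\partial\mathbb{M}_t^{-1}/\partial m$ and of the markup.

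Next comes recovery of levels, which I expect to be the main obstacle. Line integration over the connected continuous support $\mathcal{X}(\mathbf{z})$ (Assumption~\ref{A-3-discrete}(c)) recovers $f_t(\cdot,z^s)$ only up to a cell-specific constant $d(z^s)$. The constant is known to be $0$ only at $z^s=z_s^1$. To pin down $d(z^s)$ I would use the fact that $\varphi_t^{-1}(r,z^d,u)$ does not depend on $z^s$. Take two firms, or two points $(m,k,l,z^s,z^d,u)$ and $(m',k',l',z_s^1,z^d,u)$, that yield the same revenue $r$ under the same $(z^d,u)$. Equating the right-hand sides of the identity then gives
\[
d(z^s)=f_t^{0}(m',k',l',z_s^1)+\mathbb{M}_t^{-1}(m',\dots)-f_t^{0}(m,k,l,z^s)-\mathbb{M}_t^{-1}(m,\dots),
\]
where $f_t^0$ denotes the integrated function with zero constant. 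This requires an overlap condition: the revenue ranges $\phi_t(\cdot,z^s,z^d,u)$ across different $z^s$ must intersect for some $(z^d,u)$. I would state it as an implicit support requirement, or chain it through intermediate cells. This plays the role that line integration across $z^s$ played in the continuous case.

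Finally, with $f_t$ and $\mathbb{M}_t^{-1}$ identified everywhere, output is given by $\varphi_t^{-1}(r,z^d,u)=f_t(x,z^s)+\mathbb{M}_t^{-1}(m,w,u)$ for any $(x,z^s)$ in $B_t(r,z^d,u)$, exactly as in Proposition~\ref{P-step3}. Since $\varphi_t^{-1}(\cdot,z^d,u)$ is identified for every discrete $z^d$ separately, no further constants appear. The common scale and location inherited from Assumption~\ref{A-2} deliver the statement's qualifiers: levels up to scale and location, and markups up to scale.
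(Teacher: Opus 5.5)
Your argument is correct. Through the derivative step it coincides with the paper's: the markup comes from \eqref{eq:markup_identified}, the elasticities from \eqref{eq:output_elasticities_pro2}, and within-cell line integration leaves a constant $c_f(z^s)$. The routes differ only in how that cross-cell constant is pinned down.

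The paper integrates the identified markup in revenue at the reference demand state. It sets $r_{base}=\phi_t(m_{t0}^*,k_t^*,l_t^*,z_t^{s*},z_t^{d*},u_t^*)$ and $r_{target}=\phi_t(m_{t0}^*,k_t^*,l_t^*,z_t^{s},z_t^{d*},u_t^*)$, uses $\varphi_t^{-1}(r_{base},z_t^{d*},u_t^*)=0$, and obtains $c_f(z^s)=\int_{r_{base}}^{r_{target}}\partial\varphi_t^{-1}(s,z_t^{d*},u_t^*)/\partial r\,ds-\mathbb{M}_t^{-1}(m_{t0}^*,k_t^*,l_t^*,z_t^s,z_t^{d*},u_t^*)$. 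You instead match two support points in different $z^s$ cells that have equal revenue and equal $(z^d,u)$, using only that $\varphi_t^{-1}$ does not depend on $z^s$.

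Both arguments rest on this same invariance, and both need the same kind of support overlap. The paper's integrand is identified at revenue $s$ only if $s$ is attained in the support at $(z_t^{d*},u_t^*)$. So the revenue images of the cells at that state must chain together between $r_{base}$ and $r_{target}$; the paper leaves this implicit, and you state it explicitly.

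Your route is slightly more flexible in three ways:
\begin{itemize}
\item the overlap may occur at any $(z^d,u)$, not only at $(z_t^{d*},u_t^*)$;
\item the reference input point $(m_{t0}^*,k_t^*,l_t^*)$ need not lie in every cell's support;
\item the FOC-based markup is not used for this step.
\end{itemize}
It also yields overidentifying restrictions, since every matched pair must return the same $d(z^s)$. The paper's route buys a single closed-form expression for $c_f(z^s)$ tied directly to the normalization points.
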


\begin{proof}
From (\ref{eq:phi_qt}) and (\ref{eq:foc_m}), the markup $\partial\varphi_{t}^{-1}(r_{it},z_{it}^{d},u_{it})/\partial r_{it}$ is identified as (\ref{eq:markup_identified}). From $\phi_{t}$ and (\ref{eq:markup_identified}), the markup function $\mu_{t}(m_{it},w_{it},u_{it})$ is also identified as a function of $(m_{it},w_{it},u_{it})$ as (\ref{eq:markup_function}). Substituting (\ref{eq:markup_function}) into (\ref{eq:phi_qt}), we identify the partial derivatives $\partial f_{t}(x_{it},z_{it}^{s})/\partial q_{it}$ for $q_{it}^{s}\in\{m_{it},k_{it},l_{it}\}$ as (\ref{eq:output_elasticities_pro2}).

Define $c_{f}(z_{t}^{s}):=f_{t}(m_{t0}^{*},k_{t}^{*},l_{t}^{*},z_{t}^{s})$ for the normalization point $(m_{t0}^{*},k_{t}^{*},l_{t}^{*})$ defined in Assumption \ref{A-2}. To recover the level of the production function, we invoke the connected support condition for the inputs $(m_t, k_t, l_t)$ for any fixed $z_t^s$. Let $\mathbf{x}_{prod} := (m_t, k_t, l_t)$ and $\mathbf{x}_{prod}^* := (m_{t0}^{*}, k_{t}^{*}, l_{t}^{*})$. For any target inputs $\mathbf{x}_{prod}$ in the support, there exists a path $\gamma: [0,1] \to \mathcal{X}_{prod}(z_t^s)$ connecting $\mathbf{x}_{prod}^*$ to $\mathbf{x}_{prod}$. By the Fundamental Theorem of Line Integrals, we identify $f_t$ up to the constant $c_f(z_t^s)$:
\[
f_{t}(\mathbf{x}_{prod},z_{t}^{s})=c_{f}(z_{t}^{s})+\Lambda_{f}(\mathbf{x}_{prod},z_{t}^{s})
\]
where $\Lambda_{f}$ is the identified path integral of the marginal products:
\begin{align*}
\Lambda_{f}(\mathbf{x}_{prod},z_{t}^{s}) & = \int_{\gamma} \nabla_{\mathbf{x}} f_{t}(\mathbf{z}(\tau), z_{t}^{s}) \cdot d\mathbf{z}(\tau) \\
& = \int_{0}^{1} \left[ \sum_{q \in \{m, k, l\}} \frac{\partial f_{t}(\gamma(\tau), z_t^s)}{\partial q_{it}} \frac{d\gamma_{q}(\tau)}{d\tau} \right] d\tau.
\end{align*}

To identify the unknown constants $c_{f}(z_{t}^{s})$, we use the normalization in Assumption \ref{A-2}. Recall that at the global normalization point $(z_t^{s*}, z_t^{d*})$:
\begin{align}
& \varphi_{t}^{-1}\left(\phi_{t}\left(m_{t0}^{*},k_{t}^{*},l_{t}^{*},z_{t}^{s*},z_{t}^{d*},u_{t}^{*}\right),z_{t}^{d*},u_{t}^{*}\right)\nonumber \\
= & f_{t}(m_{t0}^{*},k_{t}^{*},l_{t}^{*},z_{t}^{s*})+\mathbb{M}_{t}^{-1}(m_{t0}^{*},k_{t}^{*},l_{t}^{*},z_{t}^{s*},z_{t}^{d*},u_{t}^{*})=0.\label{eq:normalization}
\end{align}
For an arbitrary $z_t^s$, the relationship holds as:
\begin{align*}
& \varphi_{t}^{-1}\left(\phi_{t}\left(m_{t0}^{*},k_{t}^{*},l_{t}^{*},z_{t}^{s},z_{t}^{d*},u_{t}^{*}\right),z_{t}^{d*},u_{t}^{*}\right)\\
= & f_{t}(m_{t0}^{*},k_{t}^{*},l_{t}^{*},z_{t}^{s})+\mathbb{M}_{t}^{-1}(m_{t0}^{*},k_{t}^{*},l_{t}^{*},z_{t}^{s},z_{t}^{d*},u_{t}^{*})\\
= & c_{f}(z_{t}^{s})+\mathbb{M}_{t}^{-1}(m_{t0}^{*},k_{t}^{*},l_{t}^{*},z_{t}^{s},z_{t}^{d*},u_{t}^{*}).
\end{align*}
The LHS can be expressed by integrating the identified markup term $\partial\varphi_{t}^{-1}/\partial r_{it}$. Let $r_{base} = \phi_{t}(m_{t0}^{*},k_{t}^{*},l_{t}^{*},z_{t}^{s*},z_{t}^{d*},u_{t}^{*})$ and $r_{target} = \phi_{t}(m_{t0}^{*},k_{t}^{*},l_{t}^{*},z_{t}^{s},z_{t}^{d*},u_{t}^{*})$. Then:
\begin{align*}
& \int_{r_{base}}^{r_{target}}\frac{\partial\varphi_{t}^{-1}(s,z_{t}^{d*},u_{t}^{*})}{\partial r_{it}}ds\\
= & \varphi_{t}^{-1}(r_{target},z_{t}^{d*},u_{t}^{*}) - \varphi_{t}^{-1}(r_{base},z_{t}^{d*},u_{t}^{*})\\
= & \varphi_{t}^{-1}(r_{target},z_{t}^{d*},u_{t}^{*}) \quad \text{ (from \ref{eq:normalization})}\\
= & c_{f}(z_{t}^{s})+\mathbb{M}_{t}^{-1}(m_{t0}^{*},k_{t}^{*},l_{t}^{*},z_{t}^{s},z_{t}^{d*},u_{t}^{*}).
\end{align*}
Since $\phi_{t}(\cdot)$, the markup $\partial\varphi_{t}^{-1}(\cdot)/\partial r_{it}$, and the control function $\mathbb{M}_{t}^{-1}(\cdot)$ are already identified, $c_{f}(z_{t}^{s})$ is identified as:
\begin{align*}
c_{f}(z_{t}^{s}) & =\int_{r_{base}}^{r_{target}}\frac{\partial\varphi_{t}^{-1}(s,z_{t}^{d*},u_{t}^{*})}{\partial r_{it}}ds -\mathbb{M}_{t}^{-1}(m_{t0}^{*},k_{t}^{*},l_{t}^{*},z_{t}^{s},z_{t}^{d*},u_{t}^{*}).
\end{align*}
Thus, $f_{t}(\cdot)$ is identified.

Finally, for any given $(r_{t},z_{t}^{d})\in\mathcal{R}\times\mathcal{Z}_{d}$, the set $B_{t}\left(r_{t},z_{t}^{d},u_{t}\right):=\left\{ \left(x_{t},z_{t}^{s}\right)\in\mathcal{X}\times\mathcal{Z}_{s}:\phi_{t}\left(x_{t},z_{t}^{s},z_{t}^{d},u_{t}\right)=r_{t}\right\} $ is non-empty. The output quantity $\varphi_{t}^{-1}(r_{t},z_{t},u_{t})$ is identified by summing the identified components:
\[
\varphi_{t}^{-1}(r_{t},z_{t}^{d},u_{t})=f_{t}(x_{t},z_{t}^{s})+\mathbb{M}_{t}^{-1}(m_{t},w_{t},u_{t})\text{ for any }\left(x_{t},z_{t}^{s}\right)\in B_{t}(r_{t},z_{t}^{d},u_{t}).
\]
The output price for individual firms is then identified as $p_{it} = r_{it}-\varphi_{t}^{-1}(r_{it},z_{it}^{d},u_{it})$.
\end{proof}

\subsubsection{Endogenous Characteristics}

Firm characteristics $(z_{it}^{s},z_{it}^{d})$ may correlate with
$u_{it}$ and $\eta_{it}$. In step 1, we can use $(z_{it-\upsilon}^{s},z_{it-\upsilon}^{d})$
instead of $(z_{it-\upsilon-1}^{s},z_{it-\upsilon-1}^{d})$ as instrument
variables to construct the moment condition similar to (\ref{eq: IVQR moment condition}).
In step 2, we consider the control variable approach as in subsection
\ref{subsec:Endogenous-Firm-Characteristics}. 
\begin{prop}
\label{P-step1-discrete-1} Suppose that Assumptions \ref{A-data},
\ref{A-2}, \ref{assu:IN}, \ref{A-3-endogenous}, \ref{A-1-discrete},
and \ref{A-3-discrete} hold. Then, we can identify $\mathbb{M}_{t}^{-1}(\cdot)$
up to scale and location, and identify $G_{\eta_{t}}(\cdot)$ up to
scale. 
\end{prop}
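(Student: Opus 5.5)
The plan combines two earlier arguments: the control-variable construction of Lemma~\ref{lem:control variable}, which removes the endogeneity of $(z_{it}^s,z_{it}^d)$ with respect to $\eta_{it}$, and the discrete-characteristics argument of Proposition~\ref{P-step1-discrete}. The latter handles the fact that derivatives with respect to $z$ are unavailable, so levels across discrete cells must be pinned down by unknown constants. Step~1 (IVQR with the lagged instruments $(z_{it-\upsilon}^s,z_{it-\upsilon}^d)$) is taken as given, so $u_{it}$ and $u_{it-1}$ are treated as known.

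\emph{First step: control variables.} Form $\xi_{it}=(\xi_{it}^s,\xi_{it}^d)$ with $\xi_{it}^k=F_{z_t^k|\varsigma_t,\varpi_t}(z_{it}^k|\varsigma_{it},\varpi_{it})$. Since $z^k$ is discrete, this conditional c.d.f.\ is a step function, and $\kappa^k$ is only set-identified within cells. I would therefore use the generalized control $\xi_{it}^k$ as the pair of left and right limits of the conditional c.d.f., or equivalently condition on the cell $\{\kappa^k\in(F(z^{k-}),F(z^k)]\}$. I would then check that the proof of Lemma~\ref{lem:control variable} still delivers $\eta_{it}\Perp v_{it}\mid\xi_{it}$. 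That proof uses only the independence of $(\varsigma_{it},\varpi_{it})$ from $(\kappa_{it},\eta_{it})$ together with monotonicity of $\Gamma_{kt}$, so conditioning on an interval event in place of a point event should go through.

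\emph{Second step: local derivatives.} Fix the discrete cell and $\xi_{it}$. Conditional on $\xi_{it}$, write
\[
G_{m_t|v_t,\xi_t}(m_{it}|v_{it},\xi_{it})=G_{\eta_t|\xi_t}\bigl(\mathbb{M}_t^{-1}(m_{it},w_{it},u_{it})-\bar h_t(\cdot)\,\big|\,\xi_{it}\bigr).
\]
Differentiate only with respect to the continuous arguments $q_{it}\in\{m,k,l,u\}$ and $q_{it-1}\in\{m_{t-1},k_{t-1},l_{t-1},u_{t-1}\}$, as in (\ref{eq:dG2-1})--(\ref{eq:dG3-1}). Take the ratio using Assumption~\ref{A-3-endogenous}(f), restricted to continuous $q_{it-1}$ as in Assumption~\ref{A-3-discrete}(f). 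Because $g_{\eta_t|\xi_t}$ cancels, the ratio does not depend on $\xi_{it}$, and this yields (\ref{eq:dM/dm_id_dis}) with $G_{m_t|v_t,\xi_t}$ in place of $G_{m_t|v_t}$. The scale $\tilde S_{q_{t-1}}$ is fixed by integrating over $m$ between $m_{t0}^*$ and $m_{t1}^*$ in the base cell $(z_s^1,z_d^1)$, using $c_1-c_0=1$. Line integration along paths in the connected supports $\mathcal X(\mathbf z)$ (Assumption~\ref{A-3-discrete}(c)) then recovers $\mathbb{M}_t^{-1}$ up to the cell constants $c_0(z^s,z^d)$, and similarly recovers $\bar h_t$ up to $c_2(z_{t-1}^s,z_{t-1}^d,z_{t-1}^h)$. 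One caveat: the scale could differ across cells. To avoid this I would either reuse the same lagged cell and the same $(\tilde m_{t-1},\dots)$ for every current cell, which works because the factor $\partial\bar h_t/\partial q_{t-1}$ is common, or invoke Assumption~\ref{A-3-discrete}(g).

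\emph{Third step: pin down the constants.} This is the main obstacle. The proof of Proposition~\ref{P-step1-discrete} used $E[\eta_{it}\mid z_t,z_{t-1}]=0$, but that restriction fails here because current $z_t$ is endogenous. Instead I would use the weaker restriction in Assumption~\ref{A-3-endogenous}(b), $E[\eta_{it}\mid m_{it-1},w_{it-1},u_{it-1},z_{it-1}^h]=0$. Averaging the identified pieces $\Lambda_m-\Lambda_{\bar h}$ conditional on the lagged variables gives
\[
0=\widetilde H_t(z_{t-1})+E\bigl[c_0(z_t^s,z_t^d)\mid z_{t-1},\dots\bigr]-c_2(z_{t-1}).
\]
This is a linear system in the finitely many unknowns $c_0$ and $c_2$. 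It can be solved if the conditional distribution of the current cell $(z_t^s,z_t^d)$ given lagged continuous states varies enough to make the matrix of cell probabilities full rank. This is a discrete completeness condition, which I expect to extract from the variation of $\varsigma_{it}$ in the first-stage equations $\Gamma_{kt}$. Alternatively, conditioning additionally on $\xi_{it}$ makes $\eta_{it}$ mean-independent of $z_t$ within each $\xi$-cell, which recovers the pairwise evaluations in (\ref{eq:H_evaluation}).

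Once $c_0(z_s^1,z_d^1)=0$ fixes the base, the remaining constants follow. Then $\bar h_t=E[\omega_{it}\mid\text{lags}]$, $\eta_{it}=\omega_{it}-\bar h_t$, and $G_{\eta_t}$ is identified up to the common scale.
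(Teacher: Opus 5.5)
Your first two steps coincide with the paper's proof. The paper simply reruns the proof of Proposition~\ref{P-step1-discrete} with $G_{m_t|v_t,\xi_t}(\cdot|\cdot,\xi_{it})$ in place of $G_{m_t|v_t}$, obtaining $\mathbb{M}_t^{-1}$ and $\bar h_t$ up to the cell constants $c_0(\cdot)$ and $c_2(\cdot)$. It does not take up the step-function issue for $\xi_{it}$ under discrete $z$ that you raise.

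The gap is in your third step, which does not close as written. Your main route needs the cell probabilities $\Pr\{(z_{it}^s,z_{it}^d)=z\mid m_{it-1},w_{it-1},u_{it-1},z_{it-1}^h\}$ to vary enough with the continuous lags to give a full-rank linear system. That condition is not among the hypotheses, and nothing in Assumption~\ref{assu:IN} delivers it. If the conditional law of the current cell does not move with the continuous lags, each lagged cell contributes a single equation and the system for $(c_0,c_2)$ is underdetermined. Your fallback does not recover (\ref{eq:H_evaluation}) either. By Lemma~\ref{lem:control variable}, $E[\eta_{it}\mid v_{it},\xi_{it}]=E[\eta_{it}\mid\xi_{it}]$, which is an unknown function of $\xi_{it}$, not zero, whenever $\eta_{it}$ and $\kappa_{it}$ are correlated. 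Differencing it out across current cells would require the same $\xi$ value to occur in different cells, which is a further unstated overlap condition.

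The paper closes this step directly. It invokes $E[\eta_{it}\mid v_{it},\xi_{it}]=0$ and iterated expectations to get $E[\eta_{it}\mid z_{it}^s,z_{it}^d,z_{it-1}^s,z_{it-1}^d,z_{it-1}^h]=0$. Then (\ref{eq:H_evaluation}) holds verbatim, and evaluating it at the base current cell and then at the base lagged cell identifies $c_2(\cdot)$ and $c_0(\cdot)$ exactly as in the exogenous case. Under the proposition as literally stated, this route is available to you. The hypotheses include Assumption~\ref{A-3-discrete}, whose part (b) gives $E[\eta_{it}\mid v_{it}]=0$ with $(z_{it}^s,z_{it}^d)$ a component of $v_{it}$, and that alone suffices for the iterated-expectations step. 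So your claim that this restriction ``fails here'' contradicts a maintained hypothesis, and the rank condition is unnecessary.

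Your underlying objection is nevertheless well taken. The paper attributes the restriction to Assumption~\ref{A-3-endogenous}(b), which conditions only on lagged variables. Any restriction implying $E[\eta_{it}\mid z_{it}^s,z_{it}^d]=0$ makes the current characteristics mean-exogenous. A result that genuinely allows mean-endogenous $z_{it}$ would have to drop that part of Assumption~\ref{A-3-discrete} and state your rank (or overlap) condition as an explicit hypothesis.
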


\begin{proof}
Following the same steps in the proof for Proposition \ref{P-step1-discrete}
by replacing $G_{m_{t}|v_{t}}(\cdot|\cdot)$ with $G_{m_{t}|v_{t},\xi_{t}}(\cdot|\cdot,\xi_{it})$,
we can identify $\mathbb{M}_{t}^{-1}(m_{it},w_{it},u_{it})$ and $\bar{h}_{t}\left(m_{t-1},w_{t-1},u_{t-1},z_{t-1}^{h}\right)$
up to 
\[
\left\{ c_{0}(z_{t}^{s},z_{t}^{d}),c_{2}(z_{t-1}^{s},z_{t-1}^{d},z_{t-1}^{h})\right\} _{z\in\mathcal{Z}}.
\]
Define 
\begin{align*}
 & \widetilde{H}_{t}(z_{t}^{s},z_{t}^{d},z_{t-1}^{s},z_{t-1}^{d},z_{t-1}^{h})\\
 & :=E[\Lambda_{m}(x_{it},z_{t}^{s},z_{t}^{d},u_{it})-\Lambda_{\bar{h}_{t}}(x_{it-1},z_{t}^{s},z_{t}^{d},u_{it-1},z_{t-1}^{h})|z_{t}^{s},z_{t}^{d},z_{t-1}^{s},z_{t-1}^{d},z_{t-1}^{h}].
\end{align*}
Applying the law of iterated expectations with $E\left[\eta_{it}|v_{it},\xi_{it}\right]=0$
from Assumption \ref{A-3-endogenous}(b), we have 
\begin{align*}
E\left[\eta_{it}|z_{t}^{s},z_{t}^{d},z_{t-1}^{s},z_{t-1}^{d},z_{t-1}^{h}\right] & =E\left[E\left[\eta_{it}|v_{it},\xi_{it}\right]|z_{t}^{s},z_{t}^{d},z_{t-1}^{s},z_{t-1}^{d},z_{t-1}^{h}\right]=0.
\end{align*}
From this, we have (\ref{eq:H_evaluation}) as follows: 
\begin{align*}
0 & =E\left[\eta_{it}|z_{t}^{s},z_{t}^{d},z_{t-1}^{s},z_{t-1}^{d},z_{t-1}^{h}\right]\\
 & =E\left[\mathbb{M}_{t}^{-1}(x_{it},z_{t}^{s},z_{t}^{d},u_{it})-\bar{h}_{t}\left(x_{it-1},z_{t-1}^{s},z_{t-1}^{d},u_{it-1},z_{t-1}^{h}\right)|z_{t}^{s},z_{t}^{d},z_{t-1}^{s},z_{t-1}^{d},z_{t-1}^{h}\right]\\
 & =\widetilde{H}_{t}(z_{t}^{s},z_{t}^{d},z_{t-1}^{s},z_{t-1}^{d},z_{t-1}^{h})+c_{0}(z_{t}^{s},z_{t}^{d})-c_{2}(z_{t-1}^{s},z_{t-1}^{d},z_{t-1}^{h}).
\end{align*}
Therefore, following the same steps in the proof for Proposition \ref{P-step1-discrete},
we can identify $\mathbb{M}_{t}^{-1}(\cdot)$ up to scale and location,
and identify $G_{\eta_{t}}(\cdot)$ up to scale. 
\end{proof}
In step 3, Proposition \ref{P-step3-discrete} holds with the same
proof.

\subsection{Identification with Persistent Demand Shocks}
\label{app:persistent_demand}

In Section~\ref{subsec:Setting}, identification relies on the limited
persistence of $\epsilon_{it}$ (Assumption~\ref{A-0}) to justify
using lagged inputs as instruments. If $\epsilon_{it}$ is persistent
(e.g., $\epsilon_{it}=\rho_{\epsilon}\epsilon_{it-1}+\zeta_{it}$),
lagged inputs become endogenous as they depend on past demand shocks
correlated with current demand.

This subsection establishes that the revenue function remains nonparametrically
identified under persistent demand shocks, provided one observes a
\emph{lagged supply-side shifter} $z_{it-1}^{h}$ (e.g., R\&D investment
or technology adoption). Let $w_{it}:=(k_{it},l_{it},z_{it}^{s},z_{it}^{d})$
be the vector of controls. We replace the baseline restrictions with
the following assumption:

\begin{assumption}[Supply-side instrument for persistent demand]\label{ass:supplyIV_persistent}
Suppose that the followings hold: (a) {Relevance.} Conditional on
$w_{it}$, $z_{it-1}^{h}$ shifts the distribution of current physical
productivity $\omega_{it}$, and consequently shifts the flexible
input $m_{it}$ via the input demand function. (b) {Exclusion and
Orthogonality.} $z_{it-1}^{h}$ is excluded from the inverse demand
function (conditional on controls) and is independent of the \emph{current}
demand innovation $\zeta_{it}$ and the past demand shock $\epsilon_{it-1}$ conditional on $w_{it}$: $(\zeta_{it}, \epsilon_{it-1})\ \perp\!\!\!\perp\ z_{it-1}^{h}\ \big|\ w_{it}$.
(c) {Completeness.} The family of conditional distributions of $m_{it}$
given $(z_{it-1}^{h},w_{it})$ is complete. That is, for any measurable
function $g(m,w)$, $E[g(m_{it},w_{it})\mid z_{it-1}^{h},w_{it}]=0$
almost surely implies $g(m_{it},w_{it})=0$ almost surely. 
\end{assumption}

Assumption \ref{ass:supplyIV_persistent}(b) assumes that the instrument $z_{it-1}^{h}$ is pre-determined relative to demand; alternatively, we may take the extra lag, i.e., replacing $z_{it-1}^{h}$ with $z_{it-2}^{h}$.

Under Assumption~\ref{ass:supplyIV_persistent} and the strict monotonicity
of the revenue function (Assumption~\ref{A-1}), the function $\phi_{t}(\cdot)$
is uniquely identified by the IVQR moment condition using $(z_{it-1}^{h},w_{it})$:
\[
\Pr \!\left[r_{it}\le\phi_{t}(m_{it},w_{it},u)\ \big|\ z_{it-1}^{h},w_{it}\right]=u,\quad\forall\,u \in[0,1].
\]
Intuitively, $z_{it-1}^{h}$ isolates variation in input choice driven
by supply/productivity shifts that are orthogonal to the current demand
innovation $\zeta_{it}$, even if $\epsilon_{it}$ itself is serially
correlated.

This extension replaces the requirement of transitory demand shocks
with the requirement of a valid supply-side instrument. The validity
of Assumption~\ref{ass:supplyIV_persistent} requires that the firm's
choice of $z_{it-1}^{h}$ is not systematically driven by the persistent
component of demand ($\epsilon_{it-1}$). Therefore, this strategy
is most credible when $z_{it-1}^{h}$ reflects supply-side forces
(e.g., technological opportunities or cost incentives) rather than
responses to medium-run demand fluctuations.


\subsection{Unobserved Quality Heterogeneity}

\label{app:quality}

This subsection introduces persistent latent product quality as a
structural source of persistent demand heterogeneity. The goal is
\emph{not} to identify the baseline model under a persistent $\epsilon_{it}$
process. Rather, we maintain the baseline requirement that $\epsilon_{it}$
captures high-frequency demand fluctuations (and hence satisfies the
limited-persistence logic used in Step~1), while allowing persistent
movements in demand to operate through a persistent quality component
that is explicitly modeled. Put differently, persistent demand heterogeneity
is accommodated by $\delta_{it}$ below, while $\epsilon_{it}$ is
interpreted as the remaining transitory demand shock.

Let $\Delta_{it}$ denote the unobserved quality of firm $i$'s product
at time $t$, and let $\delta_{it}:=\ln\Delta_{it}$. Define quality-adjusted
price and quantity by 
\[
\tilde{P}_{it}:=\frac{P_{it}}{\Delta_{it}}\qquad\text{and}\qquad\tilde{Y}_{it}:=Y_{it}\Delta_{it}.
\]
In logs, $\tilde{p}_{it}=p_{it}-\delta_{it}$ and $\tilde{y}_{it}=y_{it}+\delta_{it}$.
The representative consumer derives utility from quality-adjusted
units, so inverse demand is written as 
\[
\tilde{P}_{it}=\Psi_{t}(\tilde{Y}_{it},z_{it}^{d},\epsilon_{it}),
\]
with $\Psi_{t}$ strictly decreasing in $\tilde{Y}_{it}$. In observed
variables this implies 
\[
p_{it}=\psi_{t}(y_{it}+\delta_{it},z_{it}^{d},\epsilon_{it})+\delta_{it},
\]
where $\psi_{t}(\cdot):=\ln\Psi_{t}(\exp(\cdot))$. Hence revenue
satisfies 
\begin{equation}
r_{it}=p_{it}+y_{it}=\varphi_{t}(y_{it}+\delta_{it},z_{it}^{d},\epsilon_{it}):=\psi_{t}(y_{it}+\delta_{it},z_{it}^{d},\epsilon_{it})+(y_{it}+\delta_{it}).\label{eq:quality_revenue_revised}
\end{equation}

On the production side, physical output is 
\[
y_{it}=f_{t}(x_{it},z_{it}^{s})+\omega_{it},\qquad x_{it}=(m_{it},k_{it},l_{it}).
\]
Substituting into~\eqref{eq:quality_revenue_revised} yields 
\begin{equation}
r_{it}=\varphi_{t}\!\Big(f_{t}(x_{it},z_{it}^{s})+\underbrace{(\omega_{it}+\delta_{it})}_{\nu_{it}},\,z_{it}^{d},\,\epsilon_{it}\Big),\label{eq:quality_revenue_composite}
\end{equation}
where $\nu_{it}:=\omega_{it}+\delta_{it}$ is \emph{revenue productivity}.
In this extension, the objects identified by the revenue-based strategy
naturally correspond to revenue productivity $\nu_{it}$, rather than
physical productivity $\omega_{it}$ alone.

To preserve the control-function structure used in Section~3, we
make explicit the information set underlying input choice. We assume
the firm observes the composite revenue-productivity state $\nu_{it}$
(or equivalently observes $\omega_{it}$ and $\delta_{it}$) when
choosing the flexible input, so the intermediate input policy takes
the form 
\[
m_{it}=\mathbb{M}_{t}(\nu_{it},w_{it},u_{it}),\qquad w_{it}:=(k_{it},l_{it},z_{it}^{s},z_{it}^{d}),
\]
and $\epsilon_{it}$ (equivalently $u_{it}$) remains the demand-side
shock entering the revenue function monotonically as in Assumption~2.
With this timing, Step~1 continues to recover $\varphi_{t}(\cdot)$
and $u_{it}$ (under the same orthogonality conditions as in the baseline
for $u_{it}$), and Step~2--3 recover the law of motion for the
\emph{identified} productivity state (now $\nu_{it}$).

Assume physical productivity and quality each follow AR(1) processes:
\[
\omega_{it}=\rho_{\omega}\omega_{it-1}+\eta_{it}^{\omega},\qquad\delta_{it}=\rho_{\delta}\delta_{it-1}+\eta_{it}^{\delta},
\]
with innovations $(\eta_{it}^{\omega},\eta_{it}^{\delta})$. If $\rho_{\omega}=\rho_{\delta}=\rho$,
then $\nu_{it}=\omega_{it}+\delta_{it}$ also follows AR(1): 
\begin{equation}
\nu_{it}=\rho\nu_{it-1}+(\eta_{it}^{\omega}+\eta_{it}^{\delta}).\label{eq:nu_ar1}
\end{equation}
In this case, the Markov structure exploited in Section~3 applies
directly to $\nu_{it}$, and the multi-step identification strategy
carries through with $\nu_{it}$ interpreted as the productivity state.

If instead $\rho_{\omega}\neq\rho_{\delta}$, then $\nu_{it}$ need
not be AR(1). A natural formulation is to treat $(\omega_{it},\delta_{it})$
as a two-dimensional Markov state. The revenue equation~\eqref{eq:quality_revenue_composite}
depends on these states only through $\nu_{it}$, so revenue data
alone generally do not allow one to separately identify $\omega_{it}$
and $\delta_{it}$ without additional structure (e.g., quality proxies,
direct price/quantity information beyond revenue, or restrictions
that link quality to observables). For expositional clarity and to
remain close to Section~3, we focus on the empirically relevant case
in which $\nu_{it}$ admits a parsimonious first-order representation
such as~\eqref{eq:nu_ar1}.

This extension rationalizes persistent demand heterogeneity through
a persistent quality component while keeping $\epsilon_{it}$ as the
high-frequency demand shock required for Step~1. Under this interpretation,
the baseline method identifies (i) the revenue function and demand
shock $u_{it}$ and (ii) the dynamics of revenue productivity $\nu_{it}$.
Separating physical productivity $\omega_{it}$ from quality $\delta_{it}$
is not identified from revenue alone and would require additional
assumptions or measurements.

\subsection{Identification with Heterogeneous Material Prices}
\label{subsec:Identification-Expenditure}

In the main text, we treat deflated expenditures on materials as measures
of inputs. This abstracts from unobserved heterogeneity in material
prices. For instance, geographically segmented input markets may induce
systematic price differences across regions. This Appendix shows that
the identification strategy extends to such environments when material
prices vary across firms as a function of observed characteristics.

Suppose that econometricians observe only material expenditure,
$m_{it}^{*} = m_{it} + p_{it}^{m}$, but not $m_{it}$ and $p_{it}^{m}$
separately, and that the material price $p_{it}^{m}$ varies across firms
and is determined by exogenous observables $z_{it}^{m}$:

\begin{equation}
p_{it}^{m} = p_{t}^{m}\left(z_{it}^{m}\right).
\label{eq:material_price_eq}
\end{equation}
One motivation for specification \eqref{eq:material_price_eq} is the
presence of regional input markets. When input markets are geographically
segmented, material prices may differ across regions. In this case,
$z_{it}^{m}$ can be specified as a region dummy indicating the location
of firm $i$.

Substituting $m_{it} = m_{it}^{*} - p_{t}^{m}\left(z_{it}^{m}\right)$
into $f_{t}$, we define the transformed production function $f_{t}^{*}$
as
\begin{align*}
f_{t}\left(m_{it}, k_{it}, l_{it}, z_{it}^{s}\right)
&= f_{t}\left(m_{it}^{*} - p_{t}^{m}\left(z_{it}^{m}\right),
k_{it}, l_{it}, z_{it}^{s}\right) \\
&:= f_{t}^{*}\left(m_{it}^{*}, k_{it}, l_{it}, z_{it}^{s}, z_{it}^{m}\right).
\end{align*}
 Because this transformation is additive in $m_{it}$, the elasticities
of $f_t$ and $f_t^{*}$ coincide. Hence, identification of $f_t^{*}$ is
sufficient for identification of $f_t$.

Since $z_{it}^{m}$ is exogenously given for each firm, the firm's
profit-maximization problem can be written as a choice of $m_{it}^{*}$
rather than $m_{it}$:
\begin{align*}
\max_{m_{it}} \;
&\exp\left(
\varphi_{t}\left(
f_{t}\left(m_{it}, k_{it}, l_{it},z_{it}^{s}\right) + \omega_{it},
z_{it}^{d}, u_{it}
\right)
\right)
- \exp\left(m_{it} + p_{t}^{m}\left(z_{it}^{m}\right)\right) \\
= \max_{m_{it}^{*}} \;
&\exp\left(
\varphi_{t}\left(
f_{t}^{*}\left(m_{it}^{*}, k_{it}, l_{it}, z_{it}^{s}, z_{it}^{m}\right)
+ \omega_{it}, z_{it}^{d}, u_{it}
\right)
\right)
- \exp\left(m_{it}^{*}\right).
\end{align*}
Since $z_{it}^{m}$ is observed and exogenous, it can be treated in the
same way as $z_{it}^{s}$ throughout the identification argument.
Conditional on $(z_{it}^{s},z_{it}^{m})$, the model is isomorphic to
the baseline case in the main text, and all steps of the three-step
identification strategy apply without modification.

\section{Robustness to Decreasing Returns to Scale}
\label{app:robustness_rts}

In the main text, we impose constant returns to scale (CRS), $\theta_{m}+\theta_{k}+\theta_{l}=1$, as a scale normalization for the Cobb--Douglas production function. To assess the sensitivity of our empirical results to this restriction, we estimate the model under decreasing returns to scale (DRS), setting $\theta_{m}+\theta_{k}+\theta_{l}=0.9$ and $\theta_{m}+\theta_{k}+\theta_{l}=0.95$. All other aspects of the estimation procedure remain identical to the baseline.

\begin{table}[H]
\begin{centering}
\begin{tabular}[t]{lccccc}
\toprule 
Industry  & $n$  & $\hat{\theta}_{m}$  & $\hat{\theta}_{k}$  & $\hat{\theta}_{l}$  & $\hat{\bar{\mu}}$ \tabularnewline
\midrule 
31  & 736  & 0.766  & 0.011  & 0.122  & 1.252 \tabularnewline
 &  & (0.032)  & (0.008)  & (0.031)  & (0.053) \tabularnewline
 &  &  &  &  & \tabularnewline
32  & 463  & 0.680  & 0.063  & 0.157  & 1.351 \tabularnewline
 &  & (0.058)  & (0.035)  & (0.037)  & (0.114) \tabularnewline
 &  &  &  &  & \tabularnewline
38  & 391  & 0.617  & 0.039  & 0.245  & 1.495 \tabularnewline
 &  & (0.057)  & (0.032)  & (0.051)  & (0.139) \tabularnewline
\bottomrule
\end{tabular}
\par\end{centering}
\caption{Chilean Manufacturing plant estimation under $\theta_{m}+\theta_{k}+\theta_{l}=0.9$: Step 1, Step 2, and Step 3
(Industries 31, 32, and 38 in 1996). Standard errors in parentheses
with 100 non-parametric bootstrap iterations. }
\label{tab:estimates1_09} 
\end{table}

\begin{table}[H]
\begin{centering}
\begin{tabular}[t]{lcccc}
\toprule 
Industry  & $n$  & $\hat{\beta}$  & $\hat{\gamma}$  & $\hat{\delta}$ \tabularnewline
\midrule 
31  & 623  & 0.360  & 3.383  & -8.667 \tabularnewline
 &  & (0.005)  & (0.205)  & (0.205) \tabularnewline
 &  &  &  & \tabularnewline
32  & 354  & 0.094  & 0.911  & -6.912 \tabularnewline
 &  & (0.018)  & (0.265)  & (0.702) \tabularnewline
 &  &  &  & \tabularnewline
38  & 318  & 0.070  & 0.703  & -6.009 \tabularnewline
 &  & (0.016)  & (0.291)  & (0.437) \tabularnewline
\bottomrule
\end{tabular}
\par\end{centering}
\caption{Chilean Manufacturing plant estimation under $\theta_{m}+\theta_{k}+\theta_{l}=0.9$: Step 4 (Industries 31, 32,
and 38 in 1996). Standard errors in parentheses with 100 non-parametric
bootstrap iterations. }
\label{tab:estimates2_09} 
\end{table}

\begin{table}[H]
\begin{centering}
\begin{tabular}[t]{lccccc}
\toprule 
Industry  & $n$  & $\hat{\theta}_{m}$  & $\hat{\theta}_{k}$  & $\hat{\theta}_{l}$  & $\hat{\bar{\mu}}$ \tabularnewline
\midrule 
31  & 736  & 0.809  & 0.012  & 0.129  & 1.322 \tabularnewline
 &  & (0.034)  & (0.009)  & (0.032)  & (0.056) \tabularnewline
 &  &  &  &  & \tabularnewline
32  & 463  & 0.718  & 0.067  & 0.166  & 1.426 \tabularnewline
 &  & (0.061)  & (0.037)  & (0.039)  & (0.120) \tabularnewline
 &  &  &  &  & \tabularnewline
38  & 391  & 0.651  & 0.041  & 0.258  & 1.578 \tabularnewline
 &  & (0.060)  & (0.034)  & (0.053)  & (0.147) \tabularnewline
\bottomrule
\end{tabular}
\par\end{centering}
\caption{Chilean Manufacturing plant estimation under $\theta_{m}+\theta_{k}+\theta_{l}=0.95$: Step 1, Step 2, and Step 3
(Industries 31, 32, and 38 in 1996). Standard errors in parentheses
with 100 non-parametric bootstrap iterations. }
\label{tab:estimates1_095} 
\end{table}

\begin{table}[H]
\begin{centering}
\begin{tabular}[t]{lcccc}
\toprule 
Industry  & $n$  & $\hat{\beta}$  & $\hat{\gamma}$  & $\hat{\delta}$ \tabularnewline
\midrule 
31  & 668  & 0.183  & 1.976  & -8.068 \tabularnewline
 &  & (0.006)  & (0.112)  & (0.408) \tabularnewline
 &  &  &  & \tabularnewline
32  & 386  & 0.088  & 0.916  & -6.918 \tabularnewline
 &  & (0.017)  & (0.259)  & (0.765) \tabularnewline
 &  &  &  & \tabularnewline
38  & 336  & 0.066  & 0.695  & -6.515 \tabularnewline
 &  & (0.015)  & (0.285)  & (0.464) \tabularnewline
\bottomrule
\end{tabular}
\par\end{centering}
\caption{Chilean Manufacturing plant estimation under $\theta_{m}+\theta_{k}+\theta_{l}=0.95$: Step 4 (Industries 31, 32,
and 38 in 1996). Standard errors in parentheses with 100 non-parametric
bootstrap iterations. }
\label{tab:estimates2_095} 
\end{table}

The output elasticity and average markup estimates under DRS are proportionally scaled down relative to the CRS baseline. The HSA demand parameter $\hat{\beta}$ remains statistically significantly different from zero across all specifications and industries.

Figures \ref{fig:rvsfittedr_drs} and \ref{fig:uvseps_drs} present the analogues of Figures \ref{fig:rvsfittedr} and \ref{fig:uvseps} from the main text under the concerning DRS specifications. As in the CRS case, observed revenue aligns closely with fitted revenue from the HSA demand system, and the quantile--quantile plots of demand shocks generally track the 45-degree line, confirming that the model fit is robust to the choice of returns to scale (RTS).

\begin{figure}[H]
\caption{Observed revenue vs.\ fitted revenue from Step 4 under DRS}
\label{fig:rvsfittedr_drs} \centering{}\includegraphics[scale=0.4]{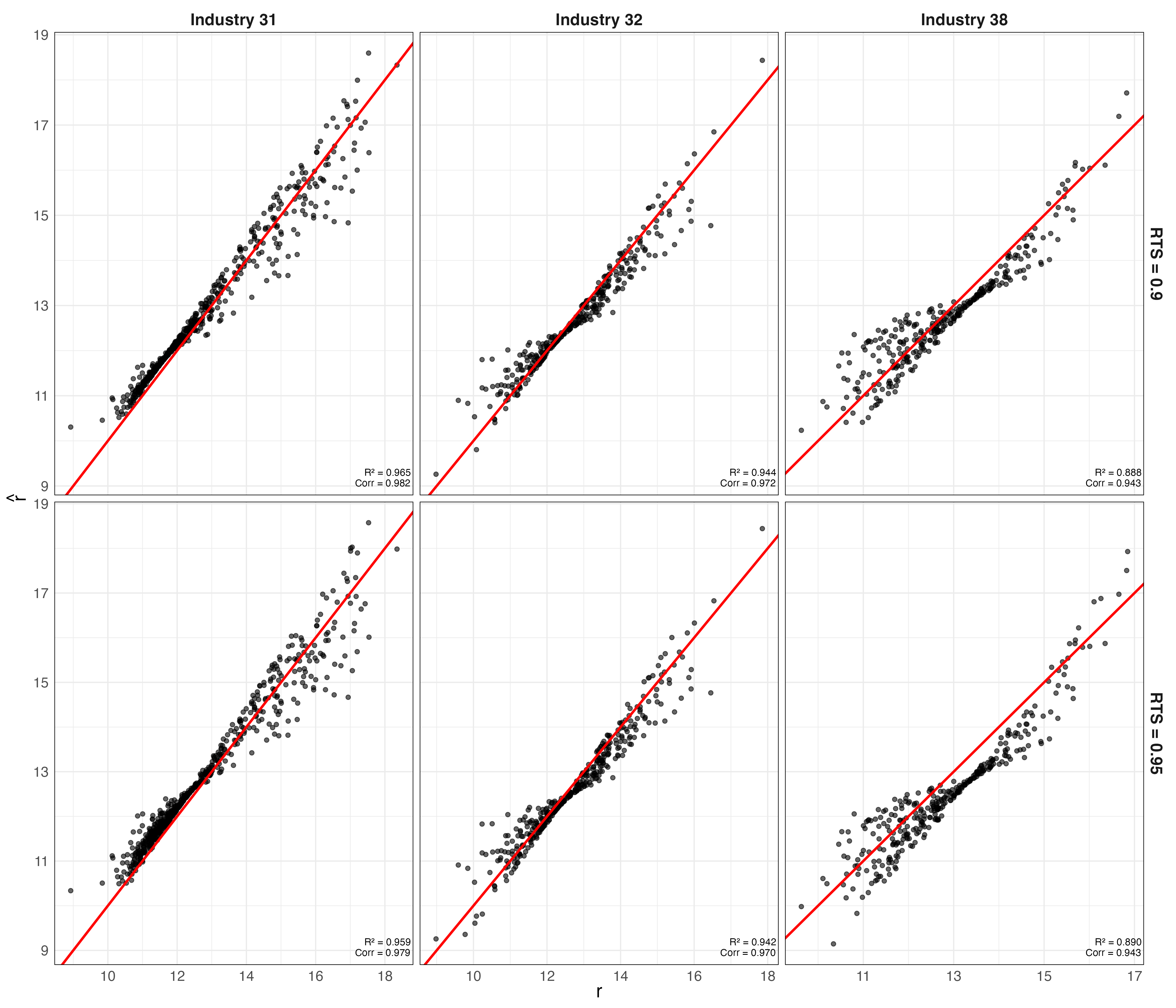} 
\begin{flushleft}
\vspace{-0.2cm}
Notes: The red line indicates the 45-degree line. Top panels: RTS $= 0.9$; bottom panels: RTS $= 0.95$.
\end{flushleft}
\end{figure}

\begin{figure}[H]
\caption{Rank of demand shock from Step 1 vs.\ Step 4 under DRS}
\label{fig:uvseps_drs} \centering{}\includegraphics[scale=0.4]{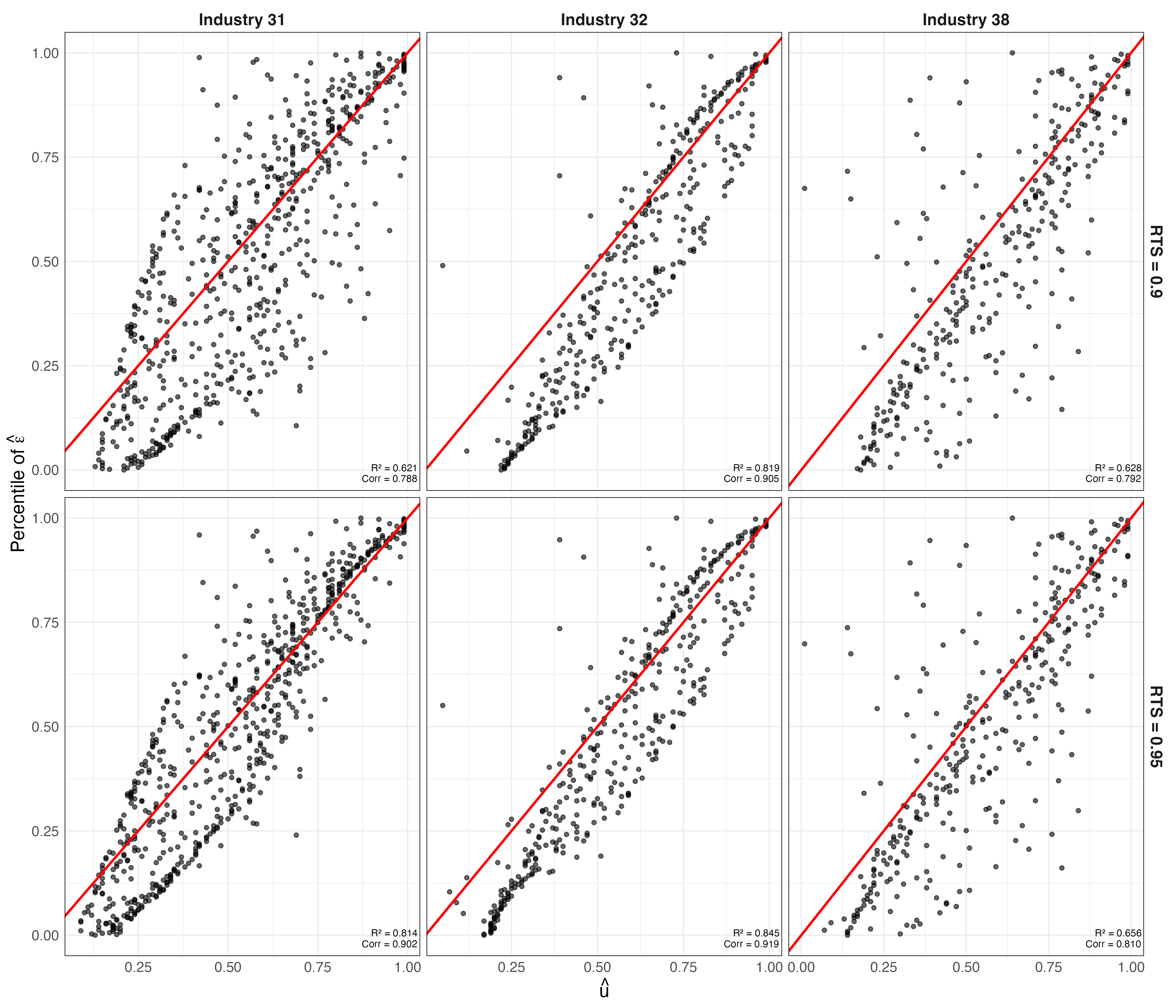} 
\begin{flushleft}
\vspace{-0.2cm}
Notes: The red line indicates the 45-degree line. Top panels: RTS $= 0.9$; bottom panels: RTS $= 0.95$.
\end{flushleft}
\end{figure}

Tables \ref{tab:welfare_09} and \ref{tab:welfare_095} report the counterfactual welfare results under DRS. The overall welfare losses from market power range from approximately 3\%--10\% of industry revenue, broadly consistent with the 3\%--6\% range found under CRS.

\begin{table}[H]
\centering %
\begin{tabular}[t]{lccc}
\toprule 
Industry  & CV  & $\Delta\Pi$  & Overall \tabularnewline
\midrule 
31  & -9.97  & -6.66  & 3.31 \tabularnewline
 & (3.09)  & (2.30)  & (0.98) \tabularnewline
 &  &  & \tabularnewline
32  & -19.81  & -12.32  & 7.49 \tabularnewline
 & (4.44)  & (3.92)  & (1.40) \tabularnewline
 &  &  & \tabularnewline
38  & -20.00  & -10.26  & 9.74 \tabularnewline
 & (5.97)  & (3.35)  & (2.96) \tabularnewline
\bottomrule
\end{tabular}\caption{Compensating Variation, profit loss, and overall welfare change in
percentage of industry revenue $\exp(\Phi_{t})$ in the transition
from original equilibrium to MCPE of Chilean Industries 31, 32, and
38 in 1996 under HSA demand system with $\theta_{m}+\theta_{k}+\theta_{l}=0.9$. Standard errors in parentheses
with 100 non-parametric bootstrap iterations.}
\label{tab:welfare_09} 
\end{table}

\begin{table}[H]
\centering %
\begin{tabular}[t]{lccc}
\toprule 
Industry  & CV  & $\Delta\Pi$  & Overall \tabularnewline
\midrule 
31  & -11.02  & -7.93  & 3.10 \tabularnewline
 & (3.31)  & (2.76)  & (0.87) \tabularnewline
 &  &  & \tabularnewline
32  & -22.58  & -14.77  & 7.81 \tabularnewline
 & (6.03)  & (5.44)  & (1.70) \tabularnewline
 &  &  & \tabularnewline
38  & -14.19  & -7.58  & 6.61 \tabularnewline
 & (5.49)  & (3.64)  & (2.38) \tabularnewline
\bottomrule
\end{tabular}\caption{Compensating Variation, profit loss, and overall welfare change in
percentage of industry revenue $\exp(\Phi_{t})$ in the transition
from original equilibrium to MCPE of Chilean Industries 31, 32, and
38 in 1996 under HSA demand system with $\theta_{m}+\theta_{k}+\theta_{l}=0.95$. Standard errors in parentheses
with 100 non-parametric bootstrap iterations.}
\label{tab:welfare_095} 
\end{table}

\end{document}